
\documentclass[aps,pra,twocolumn,nofootinbib, superscriptaddress,10pt]{revtex4-2}
\usepackage{graphicx} 
\usepackage{subfigure}
\usepackage{amsmath,amsfonts,amssymb,amsthm,bm}
\usepackage{bbm}
\usepackage{url}
\usepackage{lmodern}
\usepackage{mathrsfs}
\usepackage{appendix}
\usepackage[dvipsnames]{xcolor}
\usepackage[colorlinks=true,citecolor=blue,linkcolor=blue]{hyperref}
\usepackage{csquotes}
\MakeOuterQuote{"}

\newtheorem{theorem}{Theorem}

\newtheorem{lemma}{Lemma}

\newtheorem{proposition}{Proposition}

\newtheorem{corollary}{Corollary}
\theoremstyle{plain}

\newcommand{\Tensor}[2]{#1^{\otimes #2}}
\newcommand{\dagtensor}[2]{#1^{\dagger \otimes #2}}

\newcommand{\diag}{\operatorname{diag}}
\newcommand{\rank}{\operatorname{rank}\;}
\newcommand{\supp}{\operatorname{supp}}
\newcommand{\tr}{\operatorname{tr}}

\newcommand{\Cl}{\mathrm{Cl}}
\newcommand{\CSS}{\mathrm{CSS}}

\newcommand{\haar}{\mathrm{Haar}}

\newcommand{\rme}{\operatorname{e}}
\newcommand{\rmi}{\mathrm{i}}

\newcommand{\rmH}{\mathrm{H}}

\newcommand{\rmU}{\mathrm{U}}
\newcommand{\rmW}{\mathrm{W}}

\newcommand{\SWAP}{\mathrm{SWAP}}

\newcommand{\bfa}{\mathbf{a}}
\newcommand{\bfb}{\mathbf{b}}
\newcommand{\bfc}{\mathbf{c}}

\newcommand{\bfp}{\mathbf{p}}
\newcommand{\bfq}{\mathbf{q}}
\newcommand{\bfu}{\mathbf{u}}

\newcommand{\bfw}{\mathbf{w}}
\newcommand{\bfx}{\mathbf{x}}
\newcommand{\bfy}{\mathbf{y}}
\newcommand{\bfz}{\mathbf{z}}

\newcommand{\bmtheta}{\bm{\theta}}

\newcommand{\caB}{\mathcal{B}}
\newcommand{\caD}{\mathcal{D}}
\newcommand{\caE}{\mathcal{E}}
\newcommand{\caH}{\mathcal{H}}
\newcommand{\caL}{\mathcal{L}}
\newcommand{\caM}{\mathcal{M}}
\newcommand{\caN}{\mathcal{N}}
\newcommand{\caO}{\mathcal{O}}
\newcommand{\caP}{\mathcal{P}}
\newcommand{\caR}{\mathcal{R}}
\newcommand{\caS}{\mathcal{S}}
\newcommand{\caT}{\mathcal{T}}
\newcommand{\caU}{\mathcal{U}}
\newcommand{\caV}{\mathcal{V}}
\newcommand{\caW}{\mathcal{W}}

\newcommand{\bcaP}{\overline{\mathcal{P}}}

\newcommand{\bbC}{\mathbb{C}}

\newcommand{\bbE}{\mathbb{E}}
\newcommand{\bbF}{\mathbb{F}}

\newcommand{\bbU}{\mathbb{U}}
\newcommand{\bbZ}{\mathbb{Z}}

\newcommand{\bbone}{\mathbbm{1}}

\newcommand{\scrS}{\mathscr{S}}
\newcommand{\scrT}{\mathscr{T}}
\newcommand{\scrW}{\mathscr{W}}

\newcommand{\tbfu}{\tilde{\bfu}}

\newcommand{\tcaU}{\tilde{\mathcal{U}}}
\newcommand{\tbbU}{\tilde{\mathbb{U}}}

\newcommand{\tM}{\tilde{M}}
\newcommand{\tS}{\tilde{S}}

\newcommand{\tXi}{\tilde{\Xi}}

\newcommand{\frq}{\mathfrak{q}}

\newcommand{\romanNum}[1]{\uppercase\expandafter{\romannumeral#1}}

\newcommand{\lref}[1]{Lemma~\ref{#1}}
\newcommand{\lsref}[1]{Lemmas~\ref{#1}}
\newcommand{\Lref}[1]{Lemma~\ref{#1}}
\newcommand{\Lsref}[1]{Lemmas~\ref{#1}}

\newcommand{\thref}[1]{Theorem~\ref{#1}}
\newcommand{\thsref}[1]{Theorems~\ref{#1}}
\newcommand{\Thref}[1]{Theorem~\ref{#1}}
\newcommand{\Thsref}[1]{Theorems~\ref{#1}}

\newcommand{\pref}[1]{Proposition~\ref{#1}}
\newcommand{\psref}[1]{Propositions~\ref{#1}}
\newcommand{\Pref}[1]{Proposition~\ref{#1}}
\newcommand{\Psref}[1]{Propositions~\ref{#1}}

\newcommand{\coref}[1]{Corollary~\ref{#1}}

\def\eqref#1{\textup{(\ref{#1})}}
\newcommand{\eref}[1]{Eq.~\textup{(\ref{#1})}}
\newcommand{\eqsref}[2]{Eqs.~(\ref{#1}) and (\ref{#2})}

\newcommand{\Eref}[1]{Equation~\textup{(\ref{#1})}}
\newcommand{\Eqsref}[2]{Equations~(\ref{#1}) and (\ref{#2})}

\newcommand{\tref}[1]{Table~\ref{#1}}

\newcommand{\sref}[1]{Sec.~\ref{#1}}

\newcommand{\fref}[1]{Fig.~\ref{#1}}
\newcommand{\Fref}[1]{Figure~\ref{#1}}

\newcommand{\aref}[1]{Appendix~\ref{#1}}
\newcommand{\asref}[1]{Appendices~\ref{#1}}

\def\<{\langle}  
\def\>{\rangle}  

\newcommand{\rcite}[1]{Ref.~\cite{#1}}
\newcommand{\rscite}[1]{Refs.~\cite{#1}}

\begin{document}
\title{Nonstabilizerness Enhances Thrifty Shadow Estimation}
\author{Datong Chen}
\author{Huangjun Zhu}
\email{zhuhuangjun@fudan.edu.cn}
\affiliation{State Key Laboratory of Surface Physics, Department of Physics, and Center for Field Theory and Particle Physics, Fudan University, Shanghai 200433, China}

\affiliation{Institute for Nanoelectronic Devices and Quantum Computing, Fudan University, Shanghai 200433, China}

\affiliation{Shanghai Research Center for Quantum Sciences, Shanghai 201315, China}

\date{\today}

\begin{abstract}
Shadow estimation is a powerful approach for estimating the expectation values of many observables.  Thrifty shadow estimation is a simple variant that is proposed to reduce the experimental overhead by reusing random circuits repeatedly. Although this idea is so simple, its performance is quite elusive. In this work we show that thrifty shadow estimation is effective on average whenever the unitary ensemble forms a 2-design, in sharp contrast with the previous expectation. In thrifty shadow estimation based on the Clifford group, the variance is inversely correlated with the degree of nonstabilizerness of the state and observable, which is a key resource in quantum information processing. For fidelity estimation, it decreases exponentially with the stabilizer 2-R\'{e}nyi entropy of the target state, which endows the stabilizer 2-R\'{e}nyi entropy with a clear operational meaning. In addition,
we propose a simple circuit to enhance 
the efficiency, which requires only 
 one layer of $T$ gates and is particularly appealing in the NISQ era.
\end{abstract}

\maketitle

\emph{Introduction.}---Efficient learning of many-body quantum systems is crucial to many tasks in quantum information processing \cite{Eisert2020Certification,Kliesch2021Tomography,Andreas2023RandomMeasurement}. However, traditional tomographic approaches are too prohibitive to be applied to large and intermediate quantum systems due to the exponential growth in the resource cost \cite{Haffner2005WState,Haah2017Tomography,Kliesch2021Tomography}. Recently, a powerful alternative approach known as shadow estimation \cite{Huang2020Shadow} was proposed to circumvent this limitation and has found a wide spectrum of applications, including fidelity estimation \cite{Huang2020Shadow,Elben2020CrossPlatform}, entanglement detection \cite{Brydges2019Entanglement,Elben2020Entanglement,Zhou2020Entanglement}, Hamiltonian learning \cite{Anshu2020HamiltonianLearning,Hadfield2022HamiltonianLearning,Huang2023HamiltonianLearning}, and machine learning \cite{Huang2021ML, Jerbi2023ML}.  Many applications have been demonstrated in various experimental platforms \cite{Struchalin2021Experiment,Zhang2021Experiment,Stricker2022Experiment,William2022Experiment}.

The basic idea of shadow estimation is quite simple: First apply a random unitary transformation chosen from a given unitary ensemble and then perform the computational basis measurement. After $N$ repetitions, we can estimate the expectation values of many observables based on the measurement outcomes. Popular choices for the unitary ensemble include the Clifford group and the local Clifford group. The Clifford group is particularly attractive 
because the resulting estimation protocol
can be implemented efficiently, and the overhead in classical data processing is small \cite{Aaronson2004StabFormalism}. Moreover, this protocol can achieve constant sample complexity in fidelity estimation, irrespective of the system size. Technically, this merit is tied to the fact that the Clifford group forms a unitary 3-design \cite{Kueng20153design,Webb20163Design,Zhu20173Design}. In addition, the Clifford group plays crucial roles in many other tasks in quantum information processing, including quantum computation \cite{Bravyi2005MagicState, Briegel2009QComputation} and quantum error correction \cite{Gottesman1997StabilizerCode,Campbell2017FaultTolerant}.

To implement the original shadow estimation protocol, it is necessary to adjust the experimental configuration to apply a new random unitary transformation before every measurement. This procedure is 
quite resource-consuming and not  friendly to many platforms. To address this drawback, an alternative approach known as thrifty (or multi-shot) shadow estimation was proposed recently \cite{Helsen2023MultiShot,Zhou2023MultiShot,Seif2023MultiShot,Vermersch2024MultiShot,Struchalin2021Experiment,William2022Experiment}: each random unitary can be applied several times repeatedly.  Although this idea is so simple, the performance of thrifty shadow estimation is quite elusive even for the most common unitary ensembles, such as the Clifford group. According to Ref.~\cite{Helsen2023MultiShot}, thrifty shadow is maximally effective when the unitary ensemble forms a 4-design, but  maximally ineffective for the Clifford group, which is not a 4-design \cite{Zhu2016Fail4Design,Gross2021Duality}. However, the analysis of the Clifford group is based on special examples and thus cannot capture the whole picture.

In this work, we show that thrifty shadow estimation has a much wider scope of applications than expected. Notably, it is effective on average even with a unitary 2-design.
When the Clifford group is chosen as the unitary ensemble, the variance in thrifty shadow estimation is inversely correlated with the degree of nonstabilizerness of the state and observable under consideration, which is a key resource in quantum information processing \cite{Veitch2014StabResource,Bravyi2016TCount,Howard2017RoM,Hinsche2023THard}. For fidelity estimation, the variance decreases exponentially with the stabilizer $2$-R\'{e}nyi entropy introduced in \rcite{Lorenzo2022SRE}, and thrifty shadow estimation is effective for most states of interest. Surprisingly, it is easier to estimate the fidelity of quantum states that are more difficult to prepare. In addition, we show that nonstabilizerness in the unitary ensemble can further reduce the variance in thrifty shadow estimation. By adding one layer of $T$ gates, we propose a simple circuit that can significantly boost the efficiency. Compared with popular interleaved Clifford circuits \cite{Leone2021Interleaved,Haferkamp2022Interleaved,Haferkamp2023Interleaved}, our circuit is much shallower and more appealing to the NISQ era, but  can achieve almost the same performance given the total number of $T$ gates.

\emph{Setup.}---Consider an $n$-qubit quantum system with Hilbert space $\caH$, which has dimension $d=2^n$. Let $\{|\bfb\>\}_{\bfb\in \{0,1\}^n}$ be the computational basis in $\caH$. Denote by $\caL(\caH)$, $\caL^\rmH(\caH)$, $\caL^\rmH_0(\caH)$, and  $\caD(\caH)$ the sets of linear operators, Hermitian operators, traceless Hermitian operators, and density operators on $\caH$, respectively.  
Denote by $\bbone$  the identity operator on  $\caH$. 


In standard shadow estimation \cite{Huang2020Shadow},  a random unitary transformation $U$ from a given ensemble $\caU$ is chosen and applied to the quantum state $\rho$ of interest. Then a projective measurement on the computational basis is performed, which leads to a record $U^\dag |\bfb\>\<\bfb| U$ associated with the measurement outcome  $\bfb$. The average over the records can be viewed as a post-processing channel $\caM$, which depends on the unitary ensemble $\caU$.
If the ensemble $\{U^\dag|\bfb\>\}_{U\sim\caU, \bfb}$ is informationally complete, then the  channel $\caM$ is invertible, and its inverse $\caM^{-1}$ is termed the reconstruction map. 
The estimator  $\hat{\rho} = \caM^{-1}(U^\dag|\bfb\>\<\bfb| U)$  associated with the unitary $U$ and outcome $\bfb$ is called 
a snapshot of the state $\rho$, from which we can predict the properties of many observables.

In thrifty shadow estimation \cite{Helsen2023MultiShot, Zhou2023MultiShot}, each unitary $U$ sampled  is reused $R$ times, and a  snapshot of the state $\rho$ has the form  $\hat{\rho}_R = \sum_{i=1}^R \caM^{-1}(U^\dag|\bfb_i\>\<\bfb_i|U)/R$, where $\bfb_i$ denotes the outcome of the $i$th measurement. Given  $O\in \caL^\rmH(\caH)$, the variance of  the estimator $\tr(O\hat{\rho}_R)$ reads
\begin{equation}\label{eq:VarMultiShot}
V_R(O,\rho) =\frac{1}{R} V(O,\rho) + \frac{R-1}{R}V_* (O,\rho),
\end{equation}
where $V(O,\rho)$ is the variance in the original estimation protocol, and  $V_*(O,\rho)$ characterizes the overhead introduced by thrifty shadow estimation as a tradeoff for reusing circuits. The total sample cost is proportional to $RV_R(O,\rho)$, which is larger than $V(O,\rho)$. Thrifty shadow estimation is effective when  $RV_*(O,\rho)\lesssim V(O,\rho)$.

It is known that $V(O,\rho)$ and $V_* (O,\rho)$ only depend on the traceless part of $O$ \cite{Huang2020Shadow, Zhou2023MultiShot}, so we assume  $O$ is a traceless observable, that is, $O\in \caL^\rmH_0(\caH)$, in the rest of this paper. Then  $V_* (O,\rho)$ can be expressed as follows,
\begin{equation}\label{eq:DefV*}
V_*(O,\rho) = \tr\bigl\{\Omega(\caU)[\caM^{-1}(O)\otimes\rho]^{\otimes 2}\bigr\}-[\tr(O\rho)]^2,
\end{equation}
where 
\begin{equation}\label{eq:OmegaU}
\Omega(\caU) := \sum_{\bfa,\bfb} \bbE_{U\sim\caU} \dagtensor{U}{4} \bigl[\Tensor{(|\bfa\>\<\bfa|)}{2}\otimes \Tensor{(|\bfb\>\<\bfb|)}{2}\bigr] \Tensor{U}{4}
\end{equation}
is the 4th \emph{cross moment operator}, or cross moment operator for short. 
It determines the performance of thrifty shadow estimation; see \aref{app:Omega} for more details.

Henceforth we denote by $\wp=\wp(\rho) = \tr(\rho^2)$ the purity of $\rho$, by $\mathring{\rho}$ the traceless part of $\rho$, and by $V_*(O)=\max_\rho V_*(O,\rho)$ the maximum variance associated with $O$. When $\rho=|\phi\>\<\phi|$ is a pure state, $V(O,\rho)$ and $V_*(O,\rho)$
can be abbreviated as $V(O,\phi)$ and $V_*(O,\phi)$.

\emph{Average performance of thrifty shadow estimation based on 2-designs.}---Given any observable $O$ on $\caH$, we can define an ensemble as follows,
\begin{equation}\label{eq:DefcaEO}
\caE(O) := \{UOU^\dag \in \caE(O)\;|\;U\in\rmU(d)\}, 
\end{equation}
which inherits the Haar measure on the unitary group $\rmU(d)$. Here the unitary group $\rmU(d)$ can also be replaced by a unitary 2-design. Define the state ensemble $\caE(\rho)$ similarly. Define $V(\caE(O),\rho)$ as the  variance averaged over $\caE(O)$; define $V_*(\caE(O),\rho)$, $V(O,\caE(\rho))$, and $V_*(O,\caE(\rho))$ accordingly. 
\begin{theorem}\label{thm:AverageShadow}
Suppose $\caU$ is a unitary 2-design on $\caH$, $\rho\in \caD(\caH)$, and $O\in\caL^\rmH_0(\caH)$. Then 
\begin{align}
V(\caE(O),\rho) &= V(O,\caE(\rho)) = \left(1+\frac{d-\wp}{d^2-1}\right)\|O\|_2^2, \label{eq:VOrhoAvg}\\
V_*(\caE(O),\rho) &= V_*(O,\caE(\rho)) = \frac{d\wp-1}{d^2-1}\|O\|_2^2<\frac{\|O\|_2^2}{d}.  \label{eq:V*OrhoAvg}
\end{align}
\end{theorem}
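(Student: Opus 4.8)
The plan is to reduce both averaged variances to explicit moment integrals and then exploit the extra Haar (2-design) average over the orbit to collapse the higher moments of $\caU$ down to the second moment, which is fixed by the 2-design property.

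First I would record the explicit forms. Since $\caU$ is a 2-design, the measurement channel takes the standard (global-Clifford) form, so the reconstruction map is $\caM^{-1}(X)=(d+1)X-\tr(X)\bbone$, and for traceless $O$ this gives simply $\caM^{-1}(O)=(d+1)O$. Writing out the estimator, the single-shot variance becomes
\[
V(O,\rho)=(d+1)^2\,\bbE_{U\sim\caU}\sum_{\bfb}\<\bfb|U\rho U^\dag|\bfb\>\,\<\bfb|UOU^\dag|\bfb\>^2-[\tr(O\rho)]^2,
\]
a third moment in $U$, while the overhead from \eqref{eq:DefV*} is the fourth-moment quantity $V_*(O,\rho)=(d+1)^2\tr[\Omega(\caU)(O\otimes\rho\otimes O\otimes\rho)]-[\tr(O\rho)]^2$. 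Neither is determined by the 2-design property of $\caU$ alone; this is precisely the obstacle that the orbit average is designed to overcome.

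The key step is to insert the orbit average and twirl the observable. For $V(\caE(O),\rho)=\bbE_W V(WOW^\dag,\rho)$ with $W$ Haar, the only $W$-dependence in the first term sits in $\<\bfb|UWOW^\dag U^\dag|\bfb\>^2=\tr[(U^\dag|\bfb\>\<\bfb|U)^{\otimes2}(WOW^\dag)^{\otimes2}]$, and the two-design twirl
\[
\bbE_W (WOW^\dag)^{\otimes2}=\frac{\|O\|_2^2}{d^2-1}\Bigl(\SWAP-\tfrac1d\,\bbone^{\otimes2}\Bigr),
\]
whose coefficients are fixed by $\tr O=0$ and $\tr O^2=\|O\|_2^2$, renders this expression independent of $U$ and $\bfb$ (both the $\SWAP$ and $\bbone$ contractions with the rank-one projector give $1$). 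Summing against $\<\bfb|U\rho U^\dag|\bfb\>$ then yields $\tr\rho=1$, so after restoring the prefactor the first term collapses to $(d+1)\|O\|_2^2/d$; the second term is just $\bbE_W[\tr(WOW^\dag\rho)]^2=\tr[\rho^{\otimes2}\bbE_W(WOW^\dag)^{\otimes2}]$. Assembling these reproduces \eqref{eq:VOrhoAvg}.

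For $V_*$ I would perform the same twirl inside $\Omega(\caU)$: after the $W$-average the operator $(WOW^\dag)^{\otimes2}$ sitting in the two observable slots of $O\otimes\rho\otimes O\otimes\rho$ is replaced by $\frac{\|O\|_2^2}{d^2-1}(\SWAP-\bbone/d)$ on those slots. The crucial point is that the $\SWAP$ now contracts the two rank-one projectors in $\Omega(\caU)$ through $\tr(|\bfa\>\<\bfa|\,|\bfb\>\<\bfb|)=\delta_{\bfa\bfb}$, fusing two of the four projector slots and thereby lowering the effective moment order of $\caU$ from four to two. Concretely the $\SWAP$ piece produces $\sum_{\bfa}\bbE_U\<\bfa|U\rho U^\dag|\bfa\>^2$, evaluated by the two-design twirl of $\rho^{\otimes2}$ to give $(1+\wp)/(d+1)$, while the $\bbone$ piece produces $\sum_{\bfa,\bfb}\bbE_U\<\bfa|U\rho U^\dag|\bfa\>\<\bfb|U\rho U^\dag|\bfb\>=1$. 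Substituting and simplifying yields \eqref{eq:V*OrhoAvg}, and the bound $V_*<\|O\|_2^2/d$ follows from $\wp\le1$. The $\caE(\rho)$ versions follow from the identical mechanism with the roles of $O$ and $\rho$ interchanged: averaging $\rho$ over its orbit sends the single linear $\rho$-factor to $\bbone/d$ and the paired $\rho$-factors to a combination of $\bbone^{\otimes2}$ and $\SWAP$, reproducing the same closed forms. I expect the only genuine difficulty to be the tensor-index bookkeeping in $\Omega(\caU)$—in particular verifying that the $\SWAP$ introduced by the twirl lands on the correct pair of slots so that the projector contraction truly demotes the fourth moment to a second moment; once that reduction is in place, everything is pinned down by the 2-design.
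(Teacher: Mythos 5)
Your proposal is correct and follows essentially the same route as the paper's own proof: twirl the Haar-averaged factor with the 2-design (Schur--Weyl) formula, observe that the resulting $\SWAP$ piece contracts the projectors in $\Omega(\caU)$ via $\tr(|\bfa\>\<\bfa|\,|\bfb\>\<\bfb|)=\delta_{\bfa\bfb}$ so that only second moments of $\caU$ survive, and evaluate the two remaining pieces exactly as the paper does (the identity piece giving $1$ for the $\caE(O)$ case and $[\tr O]^2=0$ for the $\caE(\rho)$ case). All intermediate values you quote, including $(1+\wp)/(d+1)$ for the fused term, match the paper's computation and yield \eqref{eq:VOrhoAvg} and \eqref{eq:V*OrhoAvg}.
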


\Thref{thm:AverageShadow} shows that on average $V_*(O,\rho)$ is smaller than $V(O,\rho)/d$, so thrifty shadow is very effective in reducing the circuit sample complexity, especially when $d$ is large. This result is in sharp contrast with  \rscite{Helsen2023MultiShot, Zhou2023MultiShot}.

\emph{Thrifty shadow estimation based on  4-designs.}---As a benchmark, the proposition below clarifies the variance in the original estimation protocol, in which 3-design is enough. The upper bound in \eref{eq:VarShadowBound} is known before \cite{Huang2020Shadow}.
\begin{proposition}\label{pro:VarNoiseShadow}
Suppose $\caU$ is a unitary 3-design on $\caH$, $\rho\in \caD(\caH)$,  and $O\in \caL^\rmH_0(\caH)$.  Then
\begin{align}\label{eq:VarShadowBound}
\frac{d+1}{d+2}\|O\|_2^2\leq 	V(O,\rho)\leq 3\|O\|_2^2.
\end{align}	
If  $O=|\phi\>\<\phi|-\bbone/d$ for some pure state $|\phi\>$ in $\caH$, then 	
\begin{equation}\label{eq:VarFShadow}
\frac{d}{d+2}\leq V(O,\rho) =-F^2+\frac{d(2F + 1)}{d+2}\le \frac{2d(d+1)}{(d+2)^2} ,
\end{equation}
where $F = \<\phi|\rho|\phi\>$ is the fidelity between $\rho$ and $|\phi\>$. The lower bound is saturated
when $F=0$, and the upper bound is saturated when $F = d/(d+2)$.
\end{proposition}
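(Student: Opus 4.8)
The plan is to compute $V(O,\rho)$ in closed form by exploiting the third-moment structure that a $3$-design reproduces, and then optimize. Since a $3$-design reproduces the Haar average of the map $X\mapsto\Tensor{U}{3} X\,\dagtensor{U}{3}$, the measurement channel is the global one, $\caM(X)=[X+\tr(X)\bbone]/(d+1)$, whose inverse acts on a traceless observable as $\caM^{-1}(O)=(d+1)O$. Using self-adjointness of $\caM^{-1}$, the snapshot obeys $\tr(O\hat\rho)=(d+1)\<\bfb|UOU^\dag|\bfb\>$ while the outcome $\bfb$ occurs with probability $\<\bfb|U\rho U^\dag|\bfb\>$, so the second moment of the estimator becomes
\begin{equation}
\bbE\bigl[\tr(O\hat\rho)^2\bigr]=(d+1)^2\,\tr\bigl[D_3\,\Phi_3(\rho\otimes O\otimes O)\bigr],
\end{equation}
where $D_3=\sum_\bfb(|\bfb\>\<\bfb|)^{\otimes3}$ is the totally diagonal operator and $\Phi_3$ is the $3$-fold Haar twirl. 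This reduction is the crux of the argument.

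I would then evaluate the twirl by Schur--Weyl duality. As a self-adjoint idempotent, $\Phi_3$ is the orthogonal projection onto the commutant $\mathrm{span}\{W_\pi:\pi\in S_3\}$, where $W_\pi$ are the permutation operators; thus $\Phi_3(X)=\sum_\pi c_\pi W_\pi$ is fixed by the system $\tr(W_\tau X)=\sum_\pi c_\pi\,d^{\#\mathrm{cyc}(\tau\pi)}$. Two facts collapse everything. First, every $W_\pi$ fixes the diagonal vectors $|\bfb\>^{\otimes3}$, so $\tr(D_3W_\pi)=d$ for all $\pi$. Second, summing the defining system over $\tau$ and using $\sum_{\tau}d^{\#\mathrm{cyc}(\tau)}=d(d+1)(d+2)$ gives $\sum_\pi c_\pi=\bigl[\sum_\tau\tr(W_\tau X)\bigr]/[d(d+1)(d+2)]$, whence $\tr[D_3\Phi_3(X)]=\frac{1}{(d+1)(d+2)}\sum_{\tau\in S_3}\tr(W_\tau X)$. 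Evaluating the six permutation traces of $X=\rho\otimes O\otimes O$ by cycle type and using $\tr O=0$ leaves only the transposition $(23)$ and the two $3$-cycles, giving $\sum_\tau\tr(W_\tau X)=\|O\|_2^2+2\tr(\rho O^2)$. Assembling the pieces yields the closed form
\begin{equation}\label{eq:Vclosed}
V(O,\rho)=\frac{(d+1)\bigl(\|O\|_2^2+2\tr(\rho O^2)\bigr)}{d+2}-[\tr(\rho O)]^2 .
\end{equation}

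The two general bounds follow from \eqref{eq:Vclosed}. The leading term is exactly $\frac{d+1}{d+2}\|O\|_2^2$, so for the lower bound it suffices that $\frac{2(d+1)}{d+2}\tr(\rho O^2)-[\tr(\rho O)]^2\ge0$; this holds because $\tr(\rho O^2)\ge0$, Cauchy--Schwarz gives $[\tr(\rho O)]^2\le\tr(\rho O^2)$, and $2(d+1)/(d+2)\ge1$ for all $d$. For the upper bound I drop the nonpositive last term and use $\tr(\rho O^2)\le\|\rho\|_\infty\|O\|_2^2\le\|O\|_2^2$, obtaining $V(O,\rho)\le\frac{3(d+1)}{d+2}\|O\|_2^2<3\|O\|_2^2$.

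Finally, for $O=|\phi\>\<\phi|-\bbone/d$ I would substitute $\|O\|_2^2=(d-1)/d$, $\tr(\rho O)=F-1/d$, and $\tr(\rho O^2)=(1-2/d)F+1/d^2$ into \eqref{eq:Vclosed}; the algebra collapses to $V(O,\rho)=-F^2+d(2F+1)/(d+2)$. Viewing this as a concave quadratic in $F$ on $[0,1]$, the interior critical point $F=d/(d+2)$ gives the maximum $2d(d+1)/(d+2)^2$, while the minimum sits at the endpoint $F=0$ (since $V|_{F=1}\ge V|_{F=0}$ for $d\ge2$), giving $d/(d+2)$; this pins down both saturation conditions. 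The main obstacle is the third-moment evaluation in the second paragraph, where the permutation/cycle-type bookkeeping must be kept exact, since every subsequent bound rides on the closed form \eqref{eq:Vclosed}.
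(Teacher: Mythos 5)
Your proposal is correct and takes essentially the same route as the paper: your identity $\tr[D_3\Phi_3(X)]=\frac{1}{(d+1)(d+2)}\sum_{\tau\in S_3}\tr(W_\tau X)$ is precisely the paper's evaluation via the symmetric projector, since $6P_{[3]}=\sum_\tau W_\tau$, and both yield the same closed form $V(O,\rho)=\frac{d+1}{d+2}\bigl[\|O\|_2^2+2\tr(\rho O^2)\bigr]-[\tr(\rho O)]^2$. From there your bounds coincide with the paper's (Cauchy--Schwarz $[\tr(\rho O)]^2\le\tr(\rho O^2)$ for the lower bound, $\tr(\rho O^2)\le\|O\|_2^2$ for the upper bound, and the concave quadratic in $F$ for the fidelity case).
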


Next, we turn to thrifty shadow estimation based on a 4-design, in which case  $V_*(O,\rho)=\caO(\|O\|_2^2/d)$ by \rcite{Helsen2023MultiShot}. Here we provide a more precise upper bound.
\begin{theorem}\label{thm:VarHaar}
Suppose $\caU$ is  a unitary 4-design on $\caH$ and $O\in \caL^\rmH_0(\caH)$. Then
\begin{equation}\label{eq:VarFHaarBound}
V_*(O) \le \frac{4}{d} \|O\|_2^2.
\end{equation}
If  $O=|\phi\>\<\phi|-\bbone/d$ for some pure state $|\phi\>$ in $\caH$, then 
\begin{equation}\label{eq:VarFHaar}
V_*(O)=V_*(O,\phi) = \frac{4(d-1)}{(d+2)(d+3)}<\frac{4}{d}.
\end{equation}
\end{theorem}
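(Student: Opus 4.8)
The plan is to use the $4$-design hypothesis to turn the ensemble average in the cross moment operator into a Haar average, and then to evaluate it with Schur--Weyl duality and the Weingarten calculus. The integrand defining $\Omega(\caU)$ in \eqref{eq:OmegaU} is a balanced degree-$4$ polynomial in $U$ and $U^\dag$, so a $4$-design reproduces the Haar value: $\Omega(\caU)=\Omega:=\bbE_{U\sim\haar}\dagtensor{U}{4}X\Tensor{U}{4}$, where $X=\Delta_{12}\otimes\Delta_{34}$, $\Delta=\sum_\bfc\Tensor{(|\bfc\>\<\bfc|)}{2}$, and the subscripts record that the first $\Delta$ ties tensor factors $1,2$ to a common computational-basis label while the second ties factors $3,4$ to another. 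A $4$-design is in particular a $3$-design, so the reconstruction map acts as $\caM^{-1}(O)=(d+1)O$ on a traceless $O$; thus $A:=\caM^{-1}(O)$ is traceless, a fact I will lean on repeatedly. With $M:=A\otimes\rho\otimes A\otimes\rho$ (observable in factors $1,3$, state in factors $2,4$), \eqref{eq:DefV*} reads $V_*(O,\rho)=\tr(\Omega M)-[\tr(O\rho)]^2$.

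By Schur--Weyl duality $\Omega$ lies in the commutant of $\Tensor{\rmU(d)}{4}$, so $\Omega=\sum_{\pi\in S_4}c_\pi W_\pi$ for permutation operators $W_\pi$, whence $V_*(O,\rho)=\sum_\pi c_\pi\,\tr(W_\pi M)-[\tr(O\rho)]^2$, and I would compute the two ingredients separately. First, the moments $m_\pi:=\tr(W_\pi M)$ follow from the cycle rule: a fixed point on a state-factor contributes a harmless $\tr(\rho)=1$, but a fixed point on an observable-factor forces a factor $\tr(A)=0$. Tracelessness therefore annihilates the identity, every transposition except $(13)$, and every $3$-cycle that fixes an observable-factor, leaving only $(13)$, the three double transpositions, four $3$-cycles, and the six $4$-cycles; each surviving $m_\pi$ equals $(d+1)^2$ times one of $\tr(O^2)$, $[\tr(O\rho)]^2$, $\wp\,\tr(O^2)$, $\tr(O^2\rho)$, $\tr(O\rho O\rho)$, $\tr(O^2\rho^2)$. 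Second, the coefficients $c_\pi$ are fixed by pairing $\Omega$ with each $W_\sigma$: since $W_\sigma$ commutes with $\Tensor{U}{4}$ one has $\tr(\Omega W_\sigma)=\tr(X W_\sigma)$, giving the linear system $\sum_\pi c_\pi\,d^{\#\mathrm{cyc}(\pi\sigma)}=\tr(X W_\sigma)$ (equivalently, inverting the Gram matrix $[d^{\#\mathrm{cyc}(\pi\sigma)}]$ via the degree-$4$ Weingarten matrix). Here $\tr(X W_\sigma)$ counts labelings constant on the cycles of $\sigma$ subject to factors $1,2$ equal and factors $3,4$ equal, so it equals $d^2$ when every cycle of $\sigma$ stays inside $\{1,2\}$ or $\{3,4\}$ (that is, $\sigma\in\{e,(12),(34),(12)(34)\}$) and $d$ otherwise.

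Assembling $V_*(O,\rho)=\sum_\pi c_\pi m_\pi-[\tr(O\rho)]^2$ with the Weingarten weights then produces a compact rational function of $d$, $\wp$, and the invariants $\tr(O^2),[\tr(O\rho)]^2,\tr(O^2\rho),\tr(O\rho O\rho),\tr(O^2\rho^2)$. Because $\tr(X W_\sigma)$ takes only the two values $d^2,d$ and the Weingarten row sums equal the known constant $\sum_\pi\mathrm{Wg}(\pi)=1/[d(d+1)(d+2)(d+3)]$, I expect the $24\times 24$ system to collapse to these few terms, each of order $\|O\|_2^2/d$, consistent with the claimed bounds.

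To finish I would specialize the invariant formula. For the general bound I would maximize over $\rho$ using $\wp\le1$, $|\tr(O\rho)|\le\|O\|_2$, $\tr(O^2\rho)\le\|O\|_\infty^2\le\|O\|_2^2$, and analogous Cauchy--Schwarz estimates for $\tr(O\rho O\rho)$ and $\tr(O^2\rho^2)$, checking that the total never exceeds $4\|O\|_2^2/d$. For $O=|\phi\>\<\phi|-\bbone/d$ every invariant reduces to a function of $F=\<\phi|\rho|\phi\>$, $\wp$, and $\<\phi|\rho^2|\phi\>$ (using $O^2=(1-2/d)|\phi\>\<\phi|+\bbone/d^2$), and maximizing this reduced expression should pin the optimum at $\rho=|\phi\>\<\phi|$ and yield $V_*(O,\phi)=4(d-1)/[(d+2)(d+3)]$, which is then compared with $4/d$ by cross-multiplication. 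The main obstacle is twofold: carrying out the $S_4$ Weingarten bookkeeping cleanly enough to land the exact invariant formula, and---the more delicate point---the optimization over $\rho$, where I must rule out the indefinite-sign cross terms $\tr(O\rho O\rho)$ and $\tr(O^2\rho^2)$ driving $V_*(O,\rho)$ above its value at the pure target state.
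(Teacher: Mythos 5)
Your strategy is in essence the paper's own (\aref{app:ProofHaar}): use the 4-design property to replace $\Omega(\caU)$ by $\Omega(\haar)$, expand $\Omega(\haar)$ in permutation operators, and evaluate $\tr[\Omega(\haar)(O\otimes\rho)^{\otimes2}]$ against the invariants $\tr(O^2)$, $[\tr(O\rho)]^2$, $\tr(O^2\rho)$, $\tr(O\rho O\rho)$, $\tr(O^2\rho^2)$, $\wp$. Your setup facts are correct: the list of $14$ permutations surviving tracelessness, the identification of the corresponding moments, and the values $\tr(XW_\sigma)=d^2$ for $\sigma\in\{e,(12),(34),(12)(34)\}$ and $d$ otherwise. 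The paper solves your linear system not by raw Weingarten inversion but by exploiting bi-invariance of $\Omega(\haar)$ under the group $G=\{e,(12),(34),(12)(34)\}$ together with Schur's lemma on the multiplicity-free decomposition of $\caH_G$ (\pref{pro:OmegaHaar}), which yields constant coefficients on the three relevant $G$-orbits of permutations (\lref{lem:giHaar}); that is a cleaner route to the same coefficients you would get.

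However, the proposal has genuine gaps: it is a plan whose decisive steps are all deferred. First, you never produce the coefficients $c_\pi$, so neither inequality in the theorem is actually established; "I expect the system to collapse" is not a proof. Second, your heuristic that each surviving term is "of order $\|O\|_2^2/d$" is false for $(12)(34)$: its contribution is $\frac{(d^2+4d+2)(d+1)}{d(d+2)(d+3)}[\tr(O\rho)]^2=O(\|O\|_2^2)$, and it becomes harmless only after near-cancellation with the subtracted $[\tr(O\rho)]^2$, leaving $\frac{2[\tr(O\rho)]^2}{d(d+2)(d+3)}$; so the bookkeeping you wave at genuinely matters, and term-by-term Cauchy--Schwarz on uncomputed coefficients cannot substitute for it. (The paper instead groups terms into orbit sums $\xi_i=\tr[\caR_i(O\otimes\rho)^{\otimes2}]$, uses $\xi_2\ge0$ to discard the negatively weighted orbit, and bounds $\xi_1,\xi_3$ by \lref{lem:TrRiOrho}.) Third, and most seriously, the equality $V_*(O)=V_*(O,\phi)$ requires proving that the maximum of $V_*(O,\rho)$ over \emph{all} states $\rho$ is attained at the target state; you explicitly flag this as an unresolved obstacle and give no argument for it. The paper closes it by writing, for $O=|\phi\>\<\phi|-\bbone/d$,
\begin{equation*}
V_*(O,\rho)=\frac{1-d+(d^2+d+2)F^2-2(3d+1)F+2(d^2-1)\<\phi|\rho^2|\phi\>+(d+1)^2\wp(\rho)}{d(d+2)(d+3)},
\end{equation*}
and then using $\<\phi|\rho^2|\phi\>\le\<\phi|\rho|\phi\>=F$ followed by $F,\wp(\rho)\le1$; this monotone substitution is exactly the missing ingredient that rules out the indefinite-sign cross terms you were worried about. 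Until the coefficients are computed and this optimization is carried out, the theorem is not proved.
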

Thrifty shadow estimation based on a unitary $4$-design  can significantly reduce the number of circuits required to achieve a given precision. Unfortunately, it is not easy to construct a 4-design with a simple structure.

\emph{Connection with nonstabilizerness.}---Now, we turn to thrifty shadow estimation based on the Clifford group  $\Cl_n$ on $\caH$ (see \aref{app:Clifford}), which is only a 3-design, and highlight the role of nonstabilizerness in reducing the circuit sample complexity.  Nonstabilizerness is a key resource in quantum information processing \cite{Veitch2014StabResource,Bravyi2016TCount,Howard2017RoM,Hinsche2023THard}, and various measures have been proposed, such as the stabilizer rank~\cite{Bravyi2016TCount,Bravyi2016TCount2},   robustness of magic~\cite{Howard2017RoM}, and stabilizer $\alpha$-R\'{e}nyi entropy ($\alpha$-SRE)~\cite{Lorenzo2022SRE}.  The $\alpha$-SRE is particularly appealing because of its simplicity, but lacks 
a clear operational meaning so far. Here we shall establish a precise connection between the 2-SRE and the performance of thrifty shadow estimation, thereby endowing the 2-SRE with a clear operational interpretation.

Let $\bcaP_n=\{I,X,Y,Z\}^{\otimes n}$ be the set of  $n$-qubit Pauli operators. The characteristic function $\Xi_O$ of an observable $O$ on $\caH$ is a function on $\bcaP_n$ with $\Xi_O(P)=\tr(OP)$ for $P\in \bcaP_n$ \cite{Gross2006CharFunction,Dai2022CharFunction}, and
can be regarded as a vector composed of $d^2$ entries.  The characteristic function $\Xi_\rho$ of a state $\rho$ is defined in the same way. As a generalization, here we define the cross characteristic function $\Xi_{\rho,O}$ and twisted cross characteristic function $\tXi_{\rho,O}$ as follows,
\begin{align}
\Xi_{\rho,O}(P) = \tr(\rho P)\tr(OP), \;\; \tXi_{\rho,O}(P)=\tr(\rho PO P). 
\end{align}
 Then, the 2-SRE of a state $\rho$ is defined as
\begin{equation}\label{eq:DefM2}
\tM_2(\rho) := -\log_2 \frac{\|\Xi_\rho\|_4^4}{\|\Xi_\rho\|_2^2} = -\log_2 \frac{\|\Xi_\rho\|_4^4}{d\wp(\rho)},
\end{equation}
where $\wp(\rho)$ is the purity of $\rho$. When $\rho=|\phi\>\<\phi|$ is a pure state, the argument $\rho$ can be abbreviated as $\phi$, and $\tM_2(\rho)$ can be written as $M_2(\phi)$.

\emph{Nonstabilizerness from states and observables.}---Define
\begin{align}\label{eq:VTriangleDef}
V_\triangle(O,\rho):=\frac{d+1}{d(d+2)}\left( \|\Xi_{\rho, O}\|_2^2 + \tXi_{\rho, O}\cdot\Xi_{\rho, O}\right),
\end{align}
where $\tXi_{\rho, O}\cdot\Xi_{\rho, O}$ denotes the inner product between the two cross characteristic functions regarded as vectors. 
\begin{theorem}\label{thm:VarCl}
Suppose $\caU=\Cl_n$, $\rho\in \caD(\caH)$, and $O\in \caL^\rmH_0(\caH)$. Then
\begin{align}\label{eq:VarClUpper}
V_*(O,\rho) &= V_\triangle(O,\rho)-\frac{[\tr(O\rho)]^2}{d+2}\leq V_\triangle(O,\rho)\nonumber\\
&\leq \frac{2(d+1)}{d(d+2)} \|\Xi_{\rho, O}\|_2^2 \leq  \frac{2(d+1)}{d+2} \|O\|_2^2.
\end{align}
\end{theorem}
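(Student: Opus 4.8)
The plan is to compute the left-most quantity $V_*(O,\rho)$ exactly from its defining formula \eqref{eq:DefV*} and then read off the three inequalities in turn. First I would invoke the Clifford reconstruction map $\caM^{-1}(X)=(d+1)X-\tr(X)\bbone$, which for traceless $O$ reduces to $\caM^{-1}(O)=(d+1)O$, so that $[\caM^{-1}(O)\otimes\rho]^{\otimes 2}=(d+1)^2[O\otimes\rho]^{\otimes 2}$ and hence $V_*(O,\rho)=(d+1)^2\tr\{\Omega(\Cl_n)[O\otimes\rho]^{\otimes 2}\}-[\tr(O\rho)]^2$. The key simplification is to unfold the trace against $\Omega(\Cl_n)$ from \eqref{eq:OmegaU}: writing $\sum_{\bfa}(|\bfa\>\<\bfa|)^{\otimes 2}=\tfrac1d\sum_{Z}Z\otimes Z$ as a sum over the $d$ diagonal ($Z$-type) Pauli operators, the whole expression collapses to $\bbE_{U\sim\Cl_n}f(U)^2$ with $f(U)=\tfrac1d\sum_{Z}\Xi_{\rho,O}(U^\dag Z U)$, the sign in $U^\dag Z U=\pm P$ canceling inside $\Xi_{\rho,O}$.

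The heart of the calculation is the Clifford average $\bbE_U f(U)^2=\tfrac1{d^2}\sum_{Z,Z'}\bbE_U\Xi_{\rho,O}(U^\dag ZU)\Xi_{\rho,O}(U^\dag Z'U)$. Here I would pass to the symplectic picture, identifying Paulis with vectors in $\bbF_2^{2n}$ and using that $U\mapsto S_U$ sends the uniform Clifford measure onto the uniform measure on $\mathrm{Sp}(2n,\bbF_2)$; since the diagonal Paulis form a commuting (Lagrangian) set, the pair $(U^\dag ZU,U^\dag Z'U)$ always lands on a commuting pair. Splitting the double sum by the relation between $Z$ and $Z'$: terms with $Z=\bbone$ or $Z'=\bbone$ vanish because $\Xi_{\rho,O}(\bbone)=\tr(O)=0$; the $d-1$ diagonal terms $Z=Z'\neq\bbone$ each give $\tfrac1{d^2-1}\|\Xi_{\rho,O}\|_2^2$ by transitivity of $\mathrm{Sp}$ on nonzero vectors; and the $(d-1)(d-2)$ terms with distinct nonidentity $Z,Z'$ all reduce, by transitivity of $\mathrm{Sp}(2n,\bbF_2)$ on ordered pairs of linearly independent isotropic vectors (Witt's extension theorem), to one common average over all distinct commuting Pauli pairs.

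To evaluate that last average I would insert the commutation indicator $\tfrac12(1+(-1)^{\<w,w'\>})$. The unsigned part factorizes into $\bigl(\sum_{P}\tr(\rho P)\tr(OP)\bigr)^2=d^2[\tr(O\rho)]^2$ by Pauli completeness, while the signed part equals $d\,\tXi_{\rho,O}\cdot\Xi_{\rho,O}$ once one recognizes the symplectic Fourier identity $\tXi_{\rho,O}(P_w)=\tfrac1d\sum_{w'}(-1)^{\<w,w'\>}\Xi_{\rho,O}(P_{w'})$, which follows by expanding $O$ in the Pauli basis inside $\tr(\rho P_wOP_w)$ and using $P_wP_{w'}P_w=(-1)^{\<w,w'\>}P_{w'}$. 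Dividing by the orbit size $|\{(w,w'):w,w'\neq0,\ w\neq w',\ \<w,w'\>=0\}|=\tfrac12(d^2-1)(d^2-4)$ and assembling the three contributions yields, after the $(d+1)^2$ prefactor and the subtraction of $[\tr(O\rho)]^2$, exactly $V_*(O,\rho)=V_\triangle(O,\rho)-[\tr(O\rho)]^2/(d+2)$.

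Finally the three inequalities are short. The first drops the nonnegative term $[\tr(O\rho)]^2/(d+2)$. For the second, the matrix $H_{w,w'}=\tfrac1d(-1)^{\<w,w'\>}$ of the Fourier identity is orthogonal (since $\sum_{w'}(-1)^{\<u,w'\>}=d^2\delta_{u,0}$), so $\|\tXi_{\rho,O}\|_2=\|\Xi_{\rho,O}\|_2$ and Cauchy--Schwarz gives $\tXi_{\rho,O}\cdot\Xi_{\rho,O}\le\|\Xi_{\rho,O}\|_2^2$, hence $V_\triangle(O,\rho)\le\tfrac{2(d+1)}{d(d+2)}\|\Xi_{\rho,O}\|_2^2$. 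For the third, $|\tr(\rho P)|\le1$ for every Pauli $P$, whence $\|\Xi_{\rho,O}\|_2^2=\sum_P\tr(\rho P)^2\tr(OP)^2\le\sum_P\tr(OP)^2=d\|O\|_2^2$. I expect the combinatorial step---establishing transitivity of the symplectic action on isotropic pairs and counting the orbit correctly---to be the main obstacle, together with the bookkeeping that identifies the signed Pauli sum with $\tXi_{\rho,O}\cdot\Xi_{\rho,O}$.
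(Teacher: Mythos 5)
Your proposal is correct, and it reaches \eqref{eq:VarClUpper} by a genuinely different route from the paper's. The paper works in the operator picture on $\caH^{\otimes 4}$: it expands the cross moment operator $\Omega(\Cl_n)$ in the commutant of the fourth Clifford tensor power spanned by the operators $R(\caT)$, $\caT\in\Sigma_{4,4}$, following \rcite{Gross2021Duality}, applies Schur's lemma to the representation of $\Cl_n\times G$, and extracts the coefficients $g_i(\Cl_n)$ (\lref{lem:giCl}), so that in \eref{eq:DefV*gi} only $\caR_1$ and $\caR_4$ survive; the identities $\tr[\caR_1(O\otimes\rho)^{\otimes 2}]=[\tr(O\rho)]^2$ and $\tr[\caR_4(O\otimes\rho)^{\otimes 2}]=\tfrac{1}{d}\bigl(\|\Xi_{\rho,O}\|_2^2+\tXi_{\rho,O}\cdot\Xi_{\rho,O}\bigr)$ from \lref{lem:TrRiOrho} then give the equality, and \pref{pro:VTriangle} supplies the inequalities. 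You instead collapse the defining trace to a scalar Clifford average $\bbE_{U}f(U)^2$ with $f(U)=\tfrac1d\sum_{Z}\Xi_{\rho,O}(U^\dag Z U)$ and evaluate it entirely in phase space, using transitivity of $\mathrm{Sp}(2n,\bbF_2)$ on nonzero vectors, transitivity on ordered pairs of distinct nonzero vectors with vanishing symplectic product (Witt extension, with the correct orbit size $\tfrac12(d^2-1)(d^2-4)$), and the conjugation identity $PP'P=(-1)^{\<w,w'\>}P'$, which yields your Fourier identity and identifies the signed double sum with $d\,\tXi_{\rho,O}\cdot\Xi_{\rho,O}$; I checked that assembling your three contributions reproduces $V_\triangle(O,\rho)-[\tr(O\rho)]^2/(d+2)$ exactly. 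Your inequality arguments (orthogonality of the matrix $\tfrac1d(-1)^{\<w,w'\>}$, Cauchy--Schwarz, and $|\tr(\rho P)|\le1$) are in substance the same as the paper's \lref{lem:CrossChar}, \lref{lem:CharOrhoNorm}, and \pref{pro:VTriangle}. The trade-off: your argument is elementary and self-contained, needing only symplectic transitivity rather than the classification of the commutant, but it crucially exploits that the computational basis is stabilized by the diagonal Pauli group, so that $\sum_{\bfa}(|\bfa\>\<\bfa|)^{\otimes 2}$ collapses onto Pauli operators; the paper's heavier machinery computes the full operator $\Omega(\Cl_n,\caB)$ for an \emph{arbitrary} orthonormal basis $\caB$, which is precisely what gets reused to prove \thsref{thm:VarUkl}--\ref{thm:VartUk} for the $T$-gate-enhanced ensembles $\bbU_{k,l}$ and $\tbbU_k$, where the conjugated basis operators are no longer Pauli operators and your basis-specific collapse would not carry over without substantial extra work.
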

The upper bound  $V_\triangle(O,\rho)$ for  $V_*(O,\rho)$ is almost tight, while the second bound is usually around two times of $V_*(O,\rho)$, as illustrated in \fref{fig:CompareUpper} in \aref{app:BoundsAdd}. \Thref{thm:VarCl} shows that $\rho$ and $O$ play symmetric roles in $V_*(O,\rho)$. Given the purity $\wp(\rho)$  and Hilbert-Schmidt norm $\|O\|_2$, the variance  $V_*(O,\rho)$ is mainly determined by the degree of concentration of the cross characteristic function $\Xi_{\rho,O}$: the higher degree of concentration, the larger the variance. This variance is large [comparable to $V(O,\rho)$] only when both  $\Xi_\rho$ and $\Xi_O$ are highly concentrated and aligned in a certain sense. In a generic situation, these conditions cannot be satisfied, and  $V_*(O,\rho)$ is much smaller than $V(O,\rho)$, as illustrated in \fref{fig:RandomState}. The following corollary further corroborates this intuition. 
\begin{corollary}\label{cor:VarCl}
Suppose $\caU=\Cl_n$, $\rho\in \caD(\caH)$, and $O\in \caL^\rmH_0(\caH)$. Then
\begin{align}
V_*(O,\rho) &\le \frac{2(d+1)}{d(d+2)} \sqrt{2^{-\tM_2(\rho)}d\wp\|\Xi_O\|_4^4},\\
V_*(O,\rho)&\leq \frac{2(d+1)}{d+2}\|\Xi_{\mathring{\rho}}\|_\infty^2\|O\|_2^2,\\
V_*(O) &\le \frac{2(d+1)}{d(d+2)}
\|\Xi_O^2\|_{[d]},
\end{align}
where  $\|\Xi_O^2\|_{[d]}$ is the sum of the $d$ largest entries of the squared characteristic function $\Xi_O^2$. 
\end{corollary}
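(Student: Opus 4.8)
The plan is to derive all three inequalities from the single intermediate bound $V_*(O,\rho) \le \frac{2(d+1)}{d(d+2)}\|\Xi_{\rho,O}\|_2^2$ furnished by \Thref{thm:VarCl}, so that the whole argument reduces to estimating $\|\Xi_{\rho,O}\|_2^2$ in three different ways. The starting observation is that the cross characteristic function factorizes entrywise, $\Xi_{\rho,O}(P)=\tr(\rho P)\tr(OP)=\Xi_\rho(P)\Xi_O(P)$, so $\|\Xi_{\rho,O}\|_2^2 = \sum_P \Xi_\rho(P)^2\,\Xi_O(P)^2$. I will also use the elementary identity $\|\Xi_O\|_2^2 = d\|O\|_2^2$, which follows from expanding $O$ in the orthogonal Pauli basis with $\tr(PQ)=d\delta_{PQ}$.

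For the first inequality I would apply the Cauchy--Schwarz inequality to the two nonnegative vectors $(\Xi_\rho(P)^2)_P$ and $(\Xi_O(P)^2)_P$, which gives $\|\Xi_{\rho,O}\|_2^2 \le \|\Xi_\rho\|_4^2\,\|\Xi_O\|_4^2 = \sqrt{\|\Xi_\rho\|_4^4\,\|\Xi_O\|_4^4}$. Rewriting $\|\Xi_\rho\|_4^4 = 2^{-\tM_2(\rho)}d\wp$ directly from the definition of the $2$-SRE in \eref{eq:DefM2} and substituting into the intermediate bound then yields the claim.

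The second inequality exploits that $O$ is traceless, so $\Xi_O(\bbone)=0$ and the sum over $P$ is effectively restricted to non-identity Paulis, on which $\Xi_\rho$ and $\Xi_{\mathring{\rho}}$ coincide (they differ only by the identity component). Pulling out the largest factor gives $\|\Xi_{\rho,O}\|_2^2 \le \|\Xi_{\mathring{\rho}}\|_\infty^2 \sum_P \Xi_O(P)^2 = \|\Xi_{\mathring{\rho}}\|_\infty^2\,\|\Xi_O\|_2^2 = d\,\|\Xi_{\mathring{\rho}}\|_\infty^2\,\|O\|_2^2$, and the intermediate bound finishes it.

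The third inequality is the only genuinely nontrivial step, and I expect it to be the main obstacle. Here I must maximize $\sum_P \Xi_\rho(P)^2\,\Xi_O(P)^2$ over all $\rho\in\caD(\caH)$. I would relax the problem by keeping only two consequences of $\rho$ being a density operator: $|\Xi_\rho(P)|=|\tr(\rho P)|\le 1$ for every $P$, and $\sum_P \Xi_\rho(P)^2 = d\wp \le d$. Writing $w_P := \Xi_\rho(P)^2$, the objective becomes the linear functional $\sum_P w_P\,\Xi_O(P)^2$ subject to $0\le w_P\le 1$ and $\sum_P w_P\le d$. This is a fractional-knapsack linear program whose optimum places unit weight on the $d$ largest values of $\Xi_O(P)^2$, so the maximum equals the sum of the $d$ largest entries of $\Xi_O^2$, namely $\|\Xi_O^2\|_{[d]}$. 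Since the relaxation only enlarges the feasible region, this is a valid upper bound on $\max_\rho\|\Xi_{\rho,O}\|_2^2$, and inserting it into the intermediate bound gives $V_*(O)\le \frac{2(d+1)}{d(d+2)}\|\Xi_O^2\|_{[d]}$. The care needed is to verify that the two relaxed constraints are genuinely implied by $\rho\in\caD(\caH)$ and that the relaxation preserves the direction of the inequality.
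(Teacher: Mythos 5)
Your proposal is correct and follows essentially the same route as the paper: the paper also funnels all three bounds through the intermediate estimate $V_*(O,\rho)\le V_\triangle(O,\rho)\le \frac{2(d+1)}{d(d+2)}\|\Xi_{\rho,O}\|_2^2$ (Theorem~\ref{thm:VarCl} together with Lemma~\ref{lem:CharOrhoNorm} and Proposition~\ref{pro:VTriangle}), then uses Cauchy--Schwarz plus the definition of $\tM_2(\rho)$ for the first bound, tracelessness of $O$ plus H\"older for the second, and for the third exactly your relaxation ($\Xi_\rho^2(P)\le 1$ entrywise and $\|\Xi_\rho^2\|_1=d\wp\le d$, hence the sum is dominated by the $d$ largest entries of $\Xi_O^2$). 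Your "fractional-knapsack" phrasing is just a more explicit rendering of the paper's one-line justification of that last step.
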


\begin{figure}[t]
	\centering
	\includegraphics[width=0.4\textwidth]{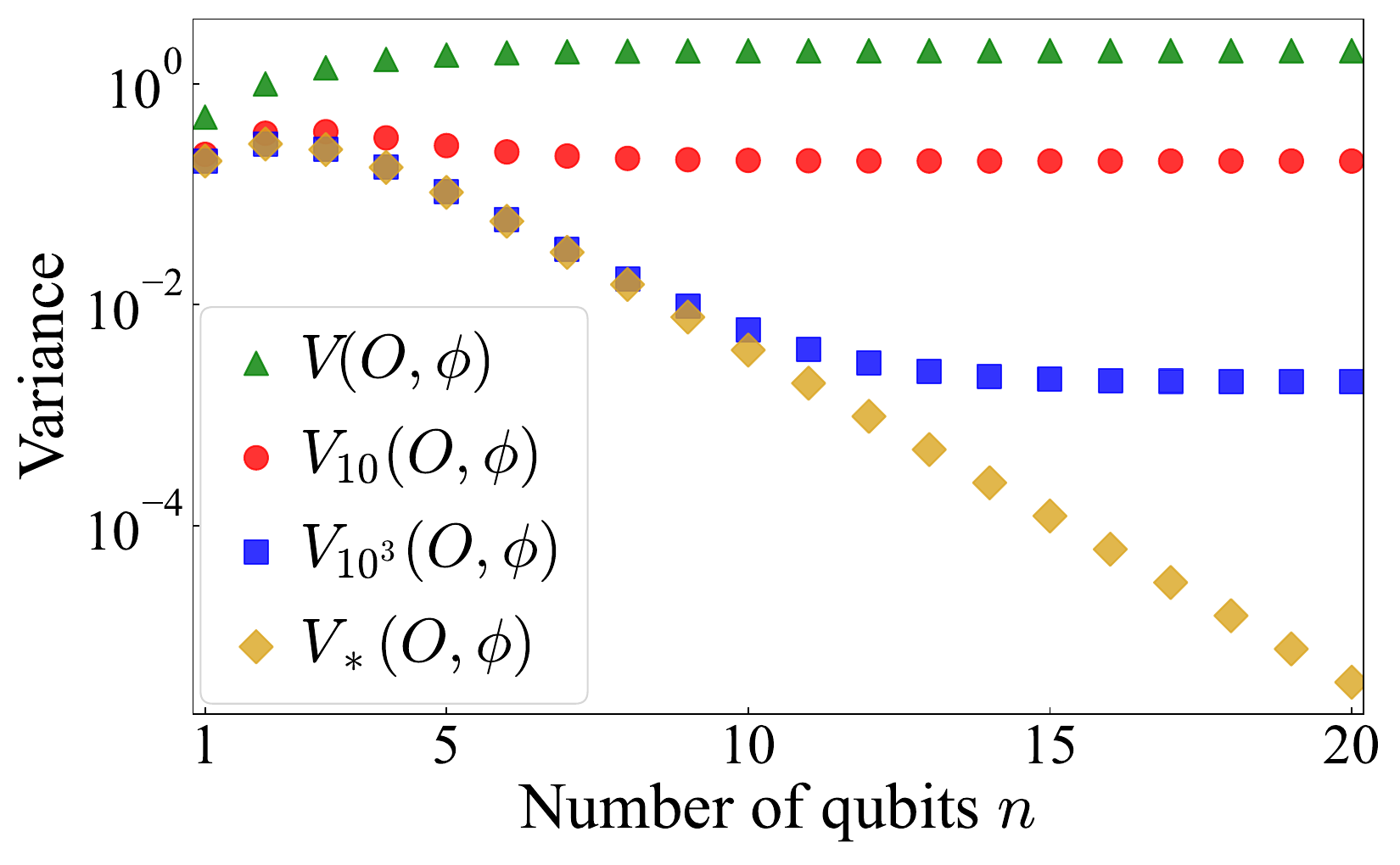}
	\caption{Mean variances in fidelity estimation based on thrifty shadow and Clifford measurements. Here  $O=|\phi\>\<\phi|-\bbone/d$, and each data point is the average over the Haar random pure state $|\phi\>$. The variance $V(O,\phi)$ is determined by \pref{pro:VarNoiseShadow} with $F=1$; the mean value of 
$V_*(O,\phi)$ is determined by \pref{pro:AverageVF} in \aref{app:BoundsAdd}, which coincides with the variance in \eref{eq:VarFHaar}; $V_{10}(O,\phi)$ and $V_{10^3}(O,\phi)$ are determined  by \eref{eq:VarMultiShot} with $R=10$ and $R=1000$, respectively.}
	\label{fig:RandomState}
\end{figure}

It is instructive to consider the estimation of a nontrivial Pauli operator, that is, $O=P$ for $P\in \bcaP_n$ and $P\neq \bbone$. 
Now, $\Xi_O$ is concentrated at a single point labeled by $P$,
\begin{equation}
V(P,\rho) = d+1-\Xi_\rho(P), \quad V_*(P,\rho) = d\Xi_\rho^2(P).
\end{equation} 
Both $V(P,\rho)$ and $V_*(P,\rho)$ are determined by $\Xi_\rho(P)$. When  $\Xi_\rho^2(P)=\caO(1)$, $V_*(P,\rho)$ is comparable to $V(P,\rho)$, and thrifty shadow estimation is not so effective. Nevertheless, the average of $V_*(P,\rho)$ over all nontrivial Pauli operators is smaller than $\wp(\rho)$, which is much smaller than the average of $V(P,\rho)$, so thrifty shadow estimation is effective to estimating most Pauli operators.

For a pure state $\rho$,
the degree of concentration of $\Xi_\rho$ is negatively correlated with the degree of nonstabilizerness. Notably, $\Xi_\rho$ has at least $d$ nonzero entries, and the lower bound is saturated if and only if $\rho$ is a stabilizer state. So it is not surprising that $V_*(O,\rho)$ is tied to nonstabilizerness. This connection is most prominent in fidelity estimation in which $O=|\phi\>\<\phi|-\bbone/d$ is determined by a  pure state $|\phi\>\in \caH$. \Thref{thm:VarFCl} below establishes a precise connection between $V_*(O,\rho)$ and  2-SRE $M_2(\phi)$ and offers a simple operational interpretation of $M_2(\phi)$. 

\begin{theorem}\label{thm:VarFCl}
Suppose $\caU=\Cl_n$ and  $O=|\phi\>\<\phi|- \bbone/d$ with $|\phi\>\in \caH$. Then 
\begin{align}
V_*(O) &<  \frac{2^{1-M_2(\phi)/2}(d+1)}{d+2},\\ 
V_*(O,\phi) &= \frac{2^{1-M_2(\phi)}(d+1)-4}{d+2}<2^{1-M_2(\phi)}\label{eq:VarFCl}.
\end{align}
\end{theorem}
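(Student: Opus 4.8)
The plan is to derive both claims from the exact variance formula and bounds already established in \Thref{thm:VarCl} and \Coref{cor:VarCl}, after reducing every quantity that appears to the single scalar $\|\Xi_\phi\|_4^4$, which by \eref{eq:DefM2} equals $d\,2^{-M_2(\phi)}$ for a pure state (here $\wp=1$ forces $\|\Xi_\phi\|_2^2=d$). The key preliminary observation is that for $O=|\phi\>\<\phi|-\bbone/d$ one has $\Xi_O(P)=\Xi_\phi(P)$ for every nonidentity $P\in\bcaP_n$ and $\Xi_O(\bbone)=0$, since $\tr(P)=0$ for $P\neq\bbone$ while the traceless part merely kills the identity component. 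Consequently $\|\Xi_O\|_4^4=\|\Xi_\phi\|_4^4-1=d\,2^{-M_2(\phi)}-1$.

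For the first inequality I would start from the third bound in \Coref{cor:VarCl}, $V_*(O)\le \tfrac{2(d+1)}{d(d+2)}\|\Xi_O^2\|_{[d]}$, and control the sum of the $d$ largest entries of $\Xi_O^2$ by Cauchy--Schwarz: if $x_1,\dots,x_d$ denote those entries, then $\sum_i x_i\le\sqrt{d}\,(\sum_i x_i^2)^{1/2}\le\sqrt{d}\,\|\Xi_O\|_4^2$, because $\sum_i x_i^2\le\sum_P\Xi_O^4(P)=\|\Xi_O\|_4^4$. Substituting $\|\Xi_O\|_4^4=d\,2^{-M_2(\phi)}-1<d\,2^{-M_2(\phi)}$ gives $\|\Xi_O^2\|_{[d]}<d\,2^{-M_2(\phi)/2}$, and feeding this back into the corollary yields $V_*(O)<2^{1-M_2(\phi)/2}(d+1)/(d+2)$.

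For the exact formula I would invoke the identity $V_*(O,\phi)=V_\triangle(O,\phi)-[\tr(O\phi)]^2/(d+2)$ from \Thref{thm:VarCl}, with $\tr(O\phi)=(d-1)/d$. At $\rho=|\phi\>\<\phi|$ both cross functions simplify: $\Xi_{\phi,O}(P)=\Xi_\phi(P)\Xi_O(P)$ equals $\Xi_\phi^2(P)$ for $P\neq\bbone$ and vanishes at $\bbone$, while the twisted function, using the Pauli relation $P^2=\bbone$ and the reality of $\<\phi|P|\phi\>$, evaluates to $\tXi_{\phi,O}(P)=\<\phi|P|\phi\>^2-\tfrac1d\<\phi|P^2|\phi\>=\Xi_\phi^2(P)-1/d$ for every $P$. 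Hence $\|\Xi_{\phi,O}\|_2^2=\|\Xi_\phi\|_4^4-1$ and $\tXi_{\phi,O}\cdot\Xi_{\phi,O}=\sum_{P\neq\bbone}(\Xi_\phi^2(P)-1/d)\Xi_\phi^2(P)=(\|\Xi_\phi\|_4^4-1)-\tfrac1d(\|\Xi_\phi\|_2^2-1)$, both expressible through $\|\Xi_\phi\|_4^4=d\,2^{-M_2(\phi)}$ and $\|\Xi_\phi\|_2^2=d$.

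Assembling $V_\triangle(O,\phi)$ from these two pieces and subtracting $(d-1)^2/[d^2(d+2)]$ is then a routine algebraic reduction that collapses to $V_*(O,\phi)=[2^{1-M_2(\phi)}(d+1)-4]/(d+2)$; the closing strict inequality follows from $2^{1-M_2(\phi)}-V_*(O,\phi)=[2^{1-M_2(\phi)}+4]/(d+2)>0$. The main obstacle is not any single estimate but the consistent bookkeeping of the identity Pauli, which is absent from $\Xi_O$ yet present in $\Xi_\phi$ and enters $\Xi_{\phi,O}$ and $\tXi_{\phi,O}$ asymmetrically (dropping out of the former but surviving as $(d-1)/d$ in the latter); these boundary terms must be tracked carefully through the $2$- and $4$-norm identities to land exactly on the claimed rational function of $2^{-M_2(\phi)}$ rather than an approximation.
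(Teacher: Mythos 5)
Your proposal is correct and takes essentially the same route as the paper, which derives both claims as immediate corollaries of \thref{thm:VarCl} and \pref{pro:VTriangle}: your computation of $\|\Xi_{\phi,O}\|_2^2$ and $\tXi_{\phi,O}\cdot\Xi_{\phi,O}$ reproduces \lref{lem:CharOrhoNormF} and \eref{eq:VTriangleFExact}, and your $\|\cdot\|_{[d]}$/Cauchy--Schwarz chain reproduces \eref{eq:VTriangleFUpper}. The only cosmetic difference is that you apply Cauchy--Schwarz to $\Xi_O$ and obtain strictness from $\|\Xi_O\|_4^4 = d\,2^{-M_2(\phi)}-1 < d\,2^{-M_2(\phi)}$, whereas the paper works with $\Xi_\phi$ and gets strictness from $\|\Xi_{\rho,O}\|_2^2 = \|\Xi_{\rho,\phi}\|_2^2-1 < \|\Xi_\phi^2\|_{[d]}$; the two bookkeepings of the identity-Pauli term are equivalent.
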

Whenever $M_2(\phi)\gg1$, we have $V_*(O,\phi)\ll V(O,\phi)$, and reusing circuits is effective in reducing the  variance. Notably, generic $n$-qubit pure states have nearly maximal 2-SRE~\cite{Lorenzo2022SRE,Gu2024Pseudomagic}, and the 2-SREs of various many-body quantum states are $\caO(n)$, such as ground states of  quantum Ising chains \cite{Oliviero2022IsingSRE,Haug2023IsingSRE,Tarabunga2023IsingSRE} and most hypergraph states~\cite{Liu2022HypergraphSRE,Chen2024HypergraphSRE}.   Usually, it is more resource-consuming to prepare quantum states with larger 2-SREs. Fortunately,  such states are easier to verify 
with respect to the circuit sample complexity, in sharp contrast with the conclusion in \rcite{Leone2023NonstabilizerHard}. This result may have profound implications for quantum information processing.

\begin{figure}[t]
    \centering
    \includegraphics[width=0.45\textwidth]{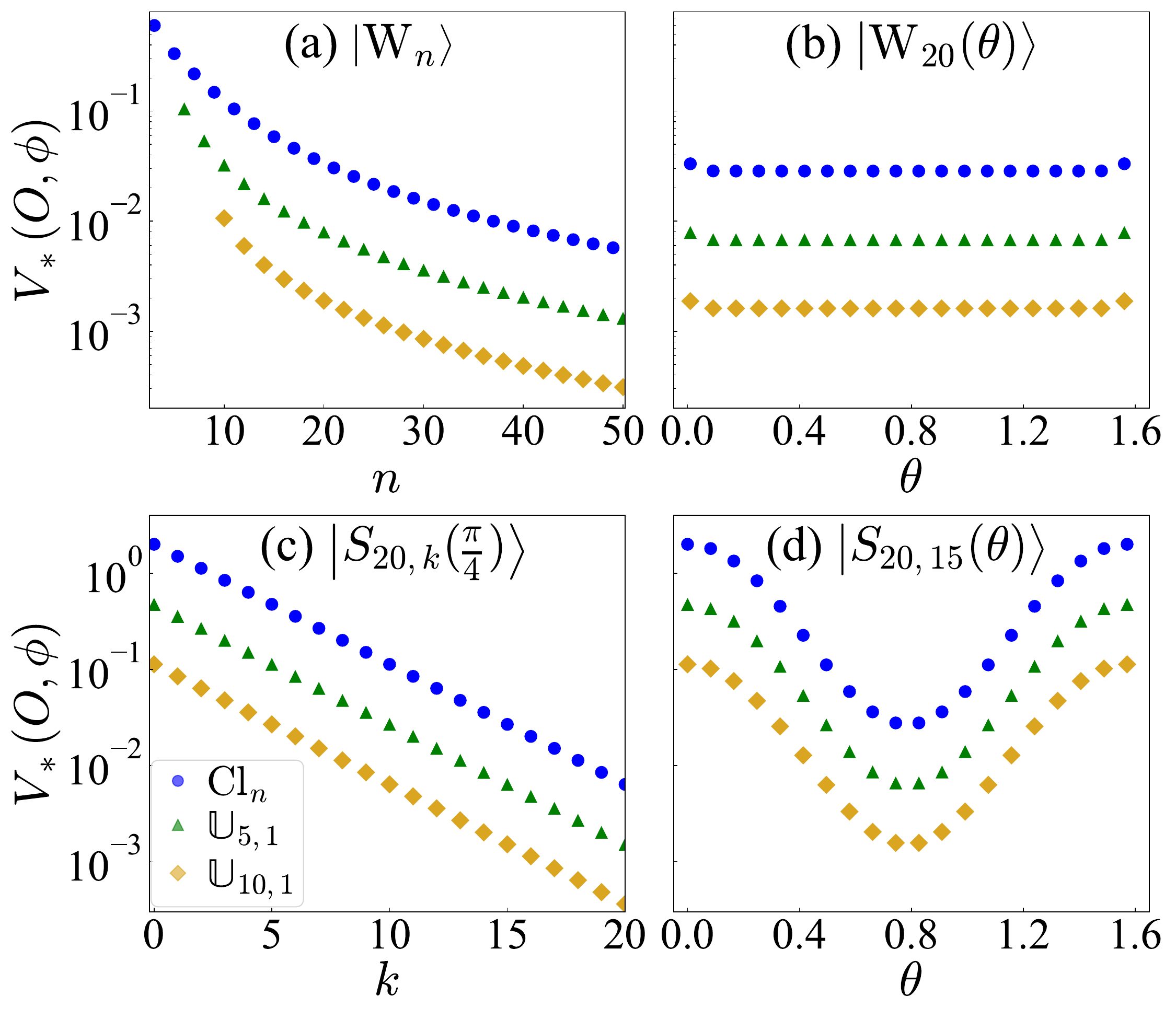}
    \caption{The variance $V_*(O,\phi)$ in fidelity estimation based on thrifty shadow and the three unitary ensembles $\Cl_n$, $\bbU_{5, 1}$, and $\bbU_{10, 1}$. 
 Here $O=|\phi\>\<\phi|-\bbone/d$, $|\phi\>$ is shown in each plot, and  $V_*(O,\phi)$ is determined by \thref{thm:VarFCl} and \pref{pro:VarUklFphi}  in \aref{app:BoundsAdd}, where $M_2(\phi)$ is determined in \pref{pro:CharNormPhasedW} and \eref{eq:SRESnk}.}
    \label{fig:VarTypes}
\end{figure}

When the actual state coincides with the target state, that is, $\rho=|\phi\>\<\phi|$, the variance $V_*(O,\rho)$ decreases monotonically with $M_2(\phi)$ by \eref{eq:VarFCl}. When $M_2(\phi)=0$, which holds for stabilizer states, $V_*(O,\rho)$ attains its maximum and  equals 
 $V(O,\rho)$. In this case, thrifty shadow estimation is not effective, but stabilizer states can be verified efficiently using  Pauli measurements \cite{Hayashi2015GraphVerify,Pallister2018Verification,Zhu2019Verification,Zhu2020StabVerify}.
When  $M_2(\phi) = \log_2 [(d+3)/4]$, the Clifford orbit of $|\phi\>$ forms a 4-design~\cite{Zhu2016Fail4Design}, and $V_*(O,\rho)$ is equal to the counterpart based on a 4-design ensemble as shown in \eref{eq:VarFHaar}. When $M_2(\phi) > \log_2 [(d+3)/4]$, $V_*(O,\rho)$ gets even smaller, and the Clifford group is even better than 4-design ensembles.

To  manifest the role of 2-SRE in fidelity estimation based on thrifty shadow estimation, as examples of the target state $|\phi\>$ we consider  the following states,
\begin{align}
|\rmW_n(\theta)\> &= \frac{1}{\sqrt{n}}\sum_{j=1}^n e^{\rmi j\theta}X_j \Tensor{|0\>}{n}, \label{eq:WState}\\
|S_{n, k}(\theta)\> &= \Tensor{|0\>}{n-k}\otimes \Tensor{\left[\frac{1}{\sqrt{2}}\left(|0\>+e^{\rmi\theta}|1\>\right)\right]}{k}. \label{eq:MagicState}
\end{align}
Here $|\rmW_n(\theta)\>$ is a generalized W state \cite{Haffner2005WState,Jovan2023WSRE,Catalano2024GeneralW, Yi2023GeneralW} and $X_j$ denotes the Pauli operator with $X$ acting on the $j$th qubit; when $\theta=0$, $|\rmW_n(\theta)\>$ reduces to the  W state, denoted by  $|\rmW_n\>$. 
The state  $|S_{n, k}(\theta)\>$ was  considered in \rcite{Bravyi2005MagicState}. The  2-SREs of these states can be found in \aref{app:Stab2Entropy}.  \Fref{fig:VarTypes} illustrates the dependence of the variance $V_R(O,\phi)$ with $O=|\phi\>\<\phi|-\bbone/d$ on various relevant parameters, demonstrating the effectiveness of thrifty shadow in fidelity estimation.

\begin{figure}[b]
	\centering
	\includegraphics[width=0.4\textwidth]{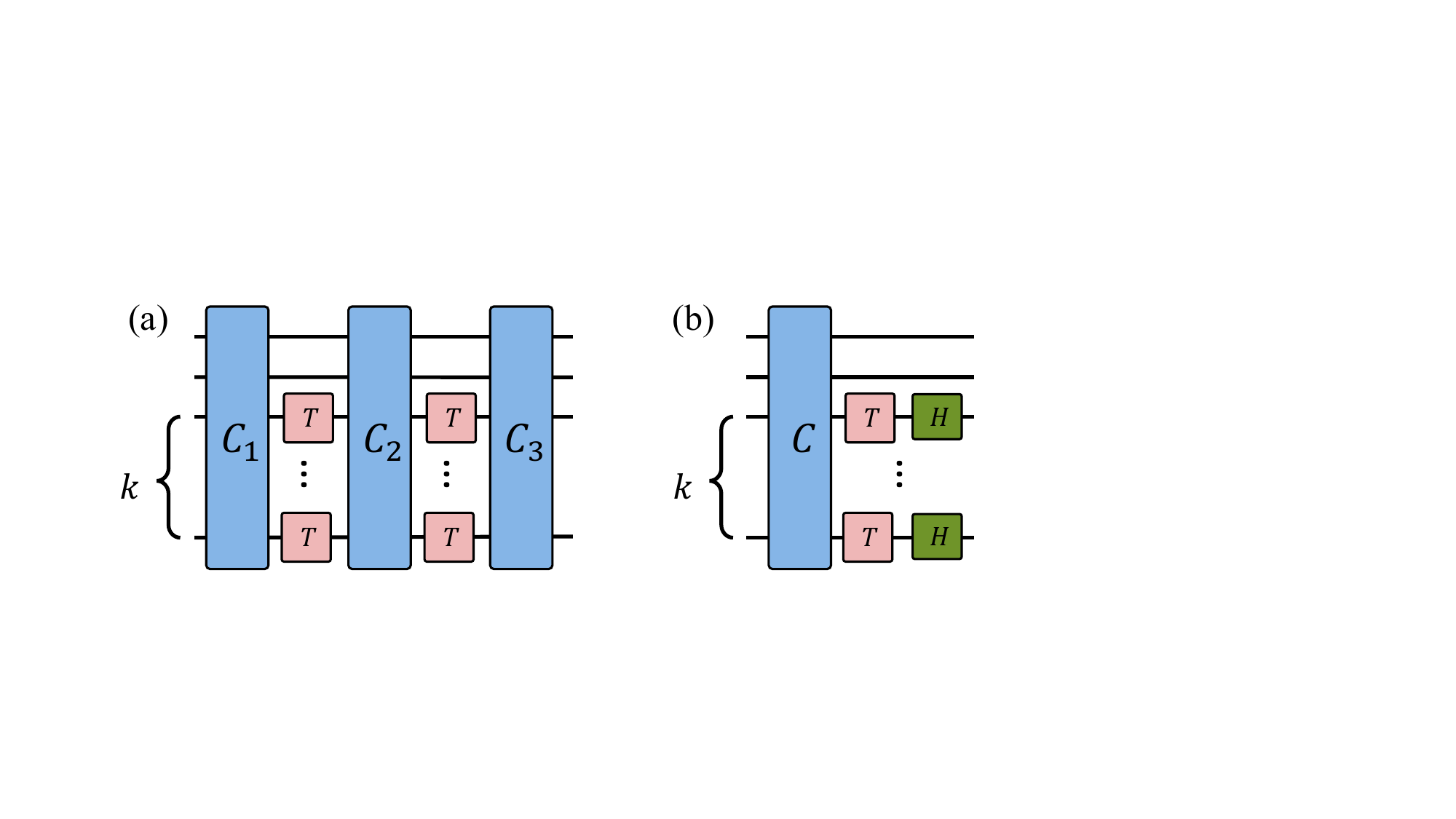}
	\caption{(a) Clifford circuit with  $l=2$ interleaved layers of $T$ gates, which corresponds to the ensemble $\bbU_{k, l}$. (b) A simple circuit underlying the ensemble  $\tbbU_k$, which is equally effective as the interleaved Clifford circuit in thrifty shadow estimation. 
	}
	\label{fig:ModelCircuits}
\end{figure}

\emph{Nonstabilizerness from unitary ensembles.}---Next, we clarify the impact of nonstabilizerness in the unitary ensemble on thrifty shadow estimation and propose a simple recipe to boost efficiency. 
First, we consider Clifford circuits interleaved by $T:=\diag(1,\rme^{\pi\rmi/4})$ gates (also called $\pi/8$ gates) as shown in \fref{fig:ModelCircuits}(a) \cite{Haferkamp2023Interleaved,Haferkamp2022Interleaved,Leone2021Interleaved,Helsen2023MultiShot}. Denote by  $\bbU_{k, l}$ the unitary ensemble generated by Clifford circuits with  $l$ interleaved layers composed of $k$ $T$ gates each. Most previous works focus on the special case with $k=1$. Let $|+\>=(|0\>+|1\>)/\sqrt{2}$ and $\gamma\equiv 2^{-M_2(T|+\>)}=3/4$.

\begin{theorem}\label{thm:VarUkl}
Suppose $\caU=\bbU_{k, l}$ with $0\leq k\leq n$ and $l\geq 0$, $\rho\in \caD(\caH)$, and $O\in \caL^\rmH_0(\caH)$.
Then
\begin{gather}
\left|V_*(O, \rho)-\gamma^{kl}V_\triangle(O,\rho)\right|\le \frac{6}{d}\|O\|_2^2.\label{eq:VarUklUpper}
\end{gather}
\end{theorem}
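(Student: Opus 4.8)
The plan is to evaluate the cross moment operator $\Omega(\bbU_{k,l})$ by a transfer-operator decomposition and then read off $V_*(O,\rho)$ from \eref{eq:DefV*}. Writing a generic circuit in $\bbU_{k,l}$ as $U=C_l K C_{l-1} K\cdots K C_0$ with independent Haar-random Cliffords $C_0,\dots,C_l\in\Cl_n$ and fixed $T$-layers $K$ (acting as $T$ on the $k$ selected qubits and trivially elsewhere), and averaging each $C_i$ independently, I would obtain $\Omega(\bbU_{k,l})=(\Phi\circ\Psi)^{l}\circ\Phi(M)$, where $\Phi(X):=\bbE_{C\sim\Cl_n}\dagtensor{C}{4}X\Tensor{C}{4}$ is the fourth-order Clifford twirl (the Hermitian projection onto the commutant of $\Tensor{\Cl_n}{4}$), $\Psi(X):=\dagtensor{K}{4}X\Tensor{K}{4}$, and $M=\sum_{\bfa,\bfb}\Tensor{(|\bfa\>\<\bfa|)}{2}\otimes\Tensor{(|\bfb\>\<\bfb|)}{2}$ as in \eref{eq:OmegaU}. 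Since $\bbU_{k,l}\supseteq\Cl_n$ it is a unitary $2$-design, so the reconstruction map is unchanged and $\caM^{-1}(O)=(d+1)O$ for traceless $O$; hence $V_*(O,\rho)$ is a \emph{fixed} linear functional of $\Omega(\bbU_{k,l})$ for all $k,l$. The base case $l=0$ is \Thref{thm:VarCl}, which gives $V_*(O,\rho)=V_\triangle(O,\rho)-[\tr(O\rho)]^2/(d+2)$; with $[\tr(O\rho)]^2\le\wp\|O\|_2^2\le\|O\|_2^2$ this already establishes \eref{eq:VarUklUpper} at $l=0$, where $\gamma^{kl}=1$.

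For $l\ge1$ I would analyze the single-layer transfer map $G:=\Phi\circ\Psi$ on the commutant of $\Tensor{\Cl_n}{4}$, which splits into a \emph{Haar part} spanned by the permutation operators $\{R_\pi\}_{\pi\in S_4}$ and a complementary \emph{stabilizer defect} $\caV_D$ that encodes the failure of $\Cl_n$ to be a $4$-design. Contracting the Haar part against $[\caM^{-1}(O)\otimes\rho]^{\otimes2}$ and deleting the disconnected permutation (which supplies the subtracted $[\tr(O\rho)]^2$) reproduces the $4$-design variance $V_*^{(4)}(O,\rho)\le4\|O\|_2^2/d$ of \Thref{thm:VarHaar}, while contracting $\caV_D$ gives the Clifford excess, which at $l=0$ equals $V_\triangle(O,\rho)$ up to an $\caO(\|O\|_2^2/d)$ term. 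Because each $R_\pi$ commutes with $\Tensor{K}{4}$ we have $\Psi(R_\pi)=R_\pi$, so $G$ fixes the Haar part and maps permutations to permutations; the heart of the argument is the claim that $G$ acts on $\caV_D$ with eigenvalue $\gamma^{k}=(3/4)^{k}$.

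To motivate the per-gate factor I would transport $\Psi$ onto the inputs: for any $Z$, $\tr\{[\caM^{-1}(O)\otimes\rho]^{\otimes2}\Psi(Z)\}=\tr\{[\caM^{-1}(KOK^\dagger)\otimes K\rho K^\dagger]^{\otimes2}Z\}$, so one $T$-layer simply rotates the characteristic-function weights $\Xi_{\mathring\rho}$ and $\Xi_O$ by $T$ on the active qubits. Since the relevant defect weight factorizes over these qubits and $T$ fixes $Z$ while rotating the $X$-$Y$ plane, I expect each active $T$ gate to rescale the reprojected fourth-moment weight by exactly $\gamma=\|\Xi_{T|+\>}\|_4^4/\|\Xi_{T|+\>}\|_2^2=3/4=2^{-M_2(T|+\>)}$, giving $\gamma^{k}$ per layer and $x_l=\gamma^{kl}x_0$ for the defect coefficient after $l$ layers. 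The part of $\Psi(\caV_D)$ leaving the commutant is annihilated by the next $\Phi$, but the part landing on permutations leaks into the Haar sector; as the Haar sector never feeds back into $\caV_D$, this obeys $y_l=y_0+c\sum_{j=0}^{l-1}\gamma^{kj}x_0$ with $\sum_{j\ge0}\gamma^{kj}\le(1-\gamma)^{-1}=4$, so the accumulated leakage stays $\caO(\|O\|_2^2/d)$ uniformly in $l$. Collecting terms gives $V_*(O,\rho)-\gamma^{kl}V_\triangle(O,\rho)=(1-\gamma^{kl})V_*^{(4)}(O,\rho)-\gamma^{kl}[\tr(O\rho)]^2/(d+2)+(\text{leakage})$, and bounding the three pieces by $4\|O\|_2^2/d$, $\|O\|_2^2/d$, and $\|O\|_2^2/d$ respectively yields \eref{eq:VarUklUpper}. (As a sanity check, $\gamma^{kl}\to0$ as $kl\to\infty$, so $V_*\to V_*^{(4)}$, consistent with $\bbU_{k,l}$ approaching a $4$-design.)

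I expect the main obstacle to be the eigenvalue identity $G|_{\caV_D}=\gamma^{k}$. The commutant basis (permutations together with the stabilizer defect) is \emph{not} orthogonal, so naively projecting $\Psi(\sum_{P}\Tensor{P}{4})$ back onto $\sum_{P}\Tensor{P}{4}$ does not produce $\gamma$ directly; one must diagonalize $G$ with the correct defect operators and, in particular, track how $G$ couples the two constituents of $V_\triangle$, namely the straight term $\|\Xi_{\rho,O}\|_2^2$ and the twisted term $\tXi_{\rho,O}\cdot\Xi_{\rho,O}$, confirming that the operative eigenvalue is exactly $3/4$. The secondary difficulty is making the leakage constant $c$ explicit and $\caO(\|O\|_2^2/d)$ so that the error genuinely does not accumulate with $l$; both steps require the detailed fourth-moment data of the Clifford group beyond the $3$-design level.
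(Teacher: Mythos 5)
Your overall architecture mirrors the paper's actual proof: the paper also writes $\Omega(\bbU_{k,l})$ as an alternation of Clifford twirls and fixed $T$-layers, decomposes the commutant of $\Cl_n^{\tilde\otimes 4}$ into a permutation (Haar) part plus stabilizer-defect parts, and shows that the defect contribution contracts layer by layer while the Haar part is a fixed point (see \lsref{lem:TTwirlingP}--\ref{lem:UTTwirlingDelta} and \thref{thm:OmegaUkl}). Your base case $l=0$ via \thref{thm:VarCl} is also fine. However, there is a genuine gap at precisely the step you flag as the "main obstacle": the claim that the single-layer transfer map acts on the defect space with eigenvalue exactly $\gamma^k=(3/4)^k$ is false. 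In the paper, the defect space splits into two inequivalent sectors ($\lambda=[4]$ and $\lambda=[2,2]$), whose exact eigenvalues are $\alpha_k$ and $\beta_k$ of \eref{eq:alphakbetak}; e.g.\ for $k=1$ one has $\alpha_1 = \tfrac{3}{4}\bigl[1-\tfrac{3d+1}{3(d^2-1)}\bigr]$ and $\beta_1=\tfrac{3}{4}\bigl[1-\tfrac{1}{3(d^2-1)}\bigr]$, neither equal to $3/4$ at any finite $d$. So the statement "each active $T$ gate rescales the defect weight by exactly $3/4$" only holds in the $d\to\infty$ limit, and a single eigenvalue on a single defect space cannot be correct because $\alpha_k\neq\beta_k$.

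This is not a cosmetic discrepancy, because the theorem's error budget is $6\|O\|_2^2/d$ and the mismatch between $\alpha_k^l,\beta_k^l$ and $\gamma^{kl}$ contributes to it. Since $\alpha_k^l\approx\gamma^{kl}(1-\epsilon)^l$ with $\epsilon=\caO(d^{-1})$, the deviation is of order $kl\gamma^{kl}/d$, which stays $\caO(1/d)$ \emph{uniformly in $k,l$} only because $x\mapsto x\gamma^{x}$ is bounded (the paper uses $kl\gamma^{kl}\le 81/64$); establishing the two-sided bounds $\gamma^{kl}\bigl[1-\tfrac{kl(3d+1)}{3(d^2-1)}\bigr]\le\alpha_k^l\le\beta_k^l\le\alpha_1^{kl}+\tfrac{kld}{d^2-1}\gamma^{kl}$ is the content of \lref{lem:PropertyAlphaBeta} and is an essential, nontrivial part of the argument. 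Your proposal instead assumes exact $\gamma^{kl}$ scaling, which makes the subsequent bookkeeping (the clean split into $(1-\gamma^{kl})V_*^{(4)}$, the $[\tr(O\rho)]^2$ term, and a geometric-series leakage term) too optimistic; with the true eigenvalues there is an additional term $(\alpha_k^l-\gamma^{kl})A+(\beta_k^l-\gamma^{kl})B$ with $A,B=\caO(\|O\|_2^2)$ that must be controlled by the lemma above, together with explicit bounds on the contractions $\tr[\caR_i(O\otimes\rho)^{\otimes2}]$ (\lref{lem:TrRiOrho}) to verify the constant $6$. Until the eigenvalue identity is replaced by the correct $\alpha_k,\beta_k$ statement and the uniform-in-$l$ deviation bound is supplied, the proof is incomplete.
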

\begin{theorem}\label{thm:VarFUkl}
	Suppose $\caU=\bbU_{k,l}$ with $0\leq k\leq n$ and $l\geq 0$, and $O=|\phi\>\<\phi|- \bbone/d$ with $|\phi\>\in \caH$.  Then 
	\begin{gather}
	V_*(O) \le 2^{1-M_2(\phi)/2}\gamma^{kl}  + \frac{6}{d},\label{eq:VarFUklBound}\\
	-\frac{6}{d}<V_*(O,\phi) -2^{1-M_2(\phi)}\gamma^{kl} <  \frac{4}{d}.\label{eq:VarFUklIdeal}
	\end{gather}
\end{theorem}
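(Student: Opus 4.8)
The plan is to derive both parts of \thref{thm:VarFUkl} from the general-observable estimate \thref{thm:VarUkl} by specializing to the fidelity observable $O=|\phi\>\<\phi|-\bbone/d$ and inserting the exact Clifford and four-design formulas already established. The first step is the elementary identity $\|O\|_2^2=\tr(O^2)=1-1/d=(d-1)/d<1$, which converts the error term $\tfrac{6}{d}\|O\|_2^2$ of \eref{eq:VarUklUpper} into the clean $6/d$ appearing in \eref{eq:VarFUklBound} and \eref{eq:VarFUklIdeal}. For the maximum-variance bound \eref{eq:VarFUklBound} this is essentially all that is needed: maximizing over $\rho$ in \eref{eq:VarUklUpper} gives $V_*(O)\le\gamma^{kl}\max_\rho V_\triangle(O,\rho)+6/d$, so it remains to show $\max_\rho V_\triangle(O,\rho)<2^{1-M_2(\phi)/2}$. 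This follows from the chain in \thref{thm:VarCl}, namely $V_\triangle(O,\rho)\le\frac{2(d+1)}{d(d+2)}\|\Xi_{\rho,O}\|_2^2$, combined with the Cauchy--Schwarz estimate $\|\Xi_{\rho,O}\|_2^2=\sum_P\Xi_\rho(P)^2\Xi_O(P)^2\le(\|\Xi_\rho\|_4^4)^{1/2}(\|\Xi_O\|_4^4)^{1/2}$, the bound $\|\Xi_\rho\|_4^4=d\wp\,2^{-\tM_2(\rho)}\le d$, and the identity $\|\Xi_O\|_4^4=d\,2^{-M_2(\phi)}-1<d\,2^{-M_2(\phi)}$ valid for the fidelity observable; together these inject exactly the factor $2^{-M_2(\phi)/2}$ and give $V_\triangle(O,\rho)<\frac{d+1}{d+2}2^{1-M_2(\phi)/2}<2^{1-M_2(\phi)/2}$ uniformly in $\rho$.

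For the ideal-state window \eref{eq:VarFUklIdeal} I would use the finer interpolation structure that underlies \thref{thm:VarUkl}. Because every ensemble involved is at least a $3$-design, the reconstruction map $\caM$ is common and $V_*$ is linear in the cross moment operator, so the $\bbU_{k,l}$ variance decomposes as $V_*(O,\phi)=V_*^{(4)}(O,\phi)(1-\gamma^{kl})+\gamma^{kl}V_*^{\Cl}(O,\phi)+E$, where $V_*^{(4)}(O,\phi)=\frac{4(d-1)}{(d+2)(d+3)}<4/d$ is the four-design value from \eref{eq:VarFHaar}, $V_*^{\Cl}(O,\phi)=\frac{2^{1-M_2(\phi)}(d+1)-4}{d+2}$ is the Clifford value from \thref{thm:VarFCl}, and $E$ is a small remainder measuring the deviation of the $\bbU_{k,l}$ fourth moment from this convex interpolation.

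Rearranging and subtracting the target gives $V_*(O,\phi)-2^{1-M_2(\phi)}\gamma^{kl}=V_*^{(4)}(O,\phi)(1-\gamma^{kl})+\gamma^{kl}\bigl[V_*^{\Cl}(O,\phi)-2^{1-M_2(\phi)}\bigr]+E$, where a direct computation yields $V_*^{\Cl}(O,\phi)-2^{1-M_2(\phi)}=-(2^{1-M_2(\phi)}+4)/(d+2)\in\bigl(-6/(d+2),0\bigr)$ since $0<2^{1-M_2(\phi)}\le2$. The upper estimate $<4/d$ then comes from $V_*^{(4)}(O,\phi)(1-\gamma^{kl})<4/d$ together with the negativity of the bracketed Clifford correction, while the lower estimate $>-6/d$ comes from $V_*^{(4)}(O,\phi)(1-\gamma^{kl})\ge0$ together with $\gamma^{kl}\bigl[V_*^{\Cl}(O,\phi)-2^{1-M_2(\phi)}\bigr]>-6/(d+2)$, modulo absorbing $E$ into the stated constants.

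The main obstacle is precisely the control of this remainder $E$ and the resulting asymmetry of the window. The black-box inequality \eref{eq:VarUklUpper} only yields a symmetric $\pm6/d$ interval, even after using $V_*^{\Cl}(O,\phi)-2^{1-M_2(\phi)}<0$; landing the sharp interval $(-6/d,4/d)$ forces one to reopen the moment-operator analysis inside the proof of \thref{thm:VarUkl}, evaluate the four-design sector exactly via \eref{eq:VarFHaar} so that it governs the positive excess, and bound $E$ uniformly in $k$ and $l$. Keeping all the $O(1/d)$ contributions consistent across this decomposition is where the bookkeeping is most delicate.
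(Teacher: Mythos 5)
Your proof of \eref{eq:VarFUklBound} is correct and follows the same route as the paper: combine \thref{thm:VarUkl} with $\|O\|_2^2=(d-1)/d<1$ and with the uniform bound $V_\triangle(O,\rho)<2^{1-M_2(\phi)/2}(d+1)/(d+2)$. Your Cauchy--Schwarz derivation of the latter, via $\|\Xi_{\rho,O}\|_2^2\le\|\Xi_\rho\|_4^2\|\Xi_O\|_4^2$, $\|\Xi_\rho\|_4^4\le d$, and $\|\Xi_O\|_4^4=d\,2^{-M_2(\phi)}-1$, is sound and is essentially \eref{eq:VTriangleSRE} and \eref{eq:VTriangleFUpper} of \pref{pro:VTriangle}, which is exactly what the paper invokes.

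For \eref{eq:VarFUklIdeal}, however, there is a genuine gap, and you name it yourself: the remainder $E$ is never estimated. Your interpolation ansatz is the right picture; writing $V_*^{(4)}$ and $V_*^{\Cl}$ for the $4$-design and Clifford values in \eref{eq:VarFHaar} and \eref{eq:VarFCl}, the paper's \pref{pro:VarUklFphi} shows the interpolation is in fact \emph{exact}, $V_*(O,\phi)=(1-\alpha_k^l)V_*^{(4)}+\alpha_k^l V_*^{\Cl}$, with parameter $\alpha_k^l$ from \eref{eq:alphakbetak} in place of $\gamma^{kl}$, so that $E=(\alpha_k^l-\gamma^{kl})\bigl(V_*^{\Cl}-V_*^{(4)}\bigr)$. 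But establishing this identity and bounding $E$ is the actual content of the theorem: the paper does it by expanding $\Omega(\bbU_{k,l})$ in the operators $\caR_i$ (\lref{lem:giUkl} together with \lref{lem:TrRiOrhoF}, yielding \pref{pro:VarUklFphi}) and then invoking \lref{lem:PropertyAlphaBeta}, which gives $\gamma^{kl}\bigl[1-\tfrac{kl(3d+1)}{3(d^2-1)}\bigr]\le\alpha_k^l\le\gamma^{kl}$, followed by a case analysis ($kl=1$ versus $kl\ge2$, using $kl\gamma^{kl}\le 81/64$). Two points show why your shortcut cannot land the stated constants. First, the black box \eref{eq:VarUklUpper}, even combined with the exact evaluation \eref{eq:VTriangleFExact} of $V_\triangle(O,\phi)$, only produces a window of roughly $(-11/d,\,6/d)$ around $2^{1-M_2(\phi)}\gamma^{kl}$, not $(-6/d,\,4/d)$. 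Second, there is a sign subtlety: $E>0$ precisely when $M_2(\phi)>\log_2[(d+3)/4]$, i.e., $E$ pushes on the \emph{upper} side, where the target constant $4/d$ exceeds your leading term $V_*^{(4)}(1-\gamma^{kl})\le 4(d-1)/[(d+2)(d+3)]$ by only $\caO(d^{-2})$; one must therefore check that in this regime $V_*^{\Cl}-V_*^{(4)}=\caO(d^{-1})$ so that $|E|=\caO(d^{-2})$, or argue directly from the exact formula as the paper does. (A minor slip: your claim that $V_*^{\Cl}(O,\phi)-2^{1-M_2(\phi)}=-(2^{1-M_2(\phi)}+4)/(d+2)$ lies in the open interval $(-6/(d+2),0)$ fails at $M_2(\phi)=0$, where it equals $-6/(d+2)$; this is harmless since $-6/(d+2)>-6/d$.)
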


\Thsref{thm:VarUkl} and \ref{thm:VarFUkl}  highlight the roles of three types of  nonstabilizerness associated with the state $\rho$, observable $O$, and unitary ensemble  $\bbU_{k, l}$. Compared with the Clifford ensemble, $\bbU_{k, l}$ can reduce the variance $V_*(O,\rho)$ by a factor of $\gamma^{kl}$, which is  determined by the total number of $T$ gates. Notably, the two ensembles  $\bbU_{k, 1}$ and $\bbU_{1, k}$ have  similar performances, as illustrated in \fref{fig:VarInterleaved}. This fact is instrumental to practical applications: $\bbU_{k, 1}$ can be realized with much fewer elementary gates and much shallower circuits and is thus more appealing.

\begin{figure}[t]
	\centering
	\includegraphics[width = 0.4\textwidth]{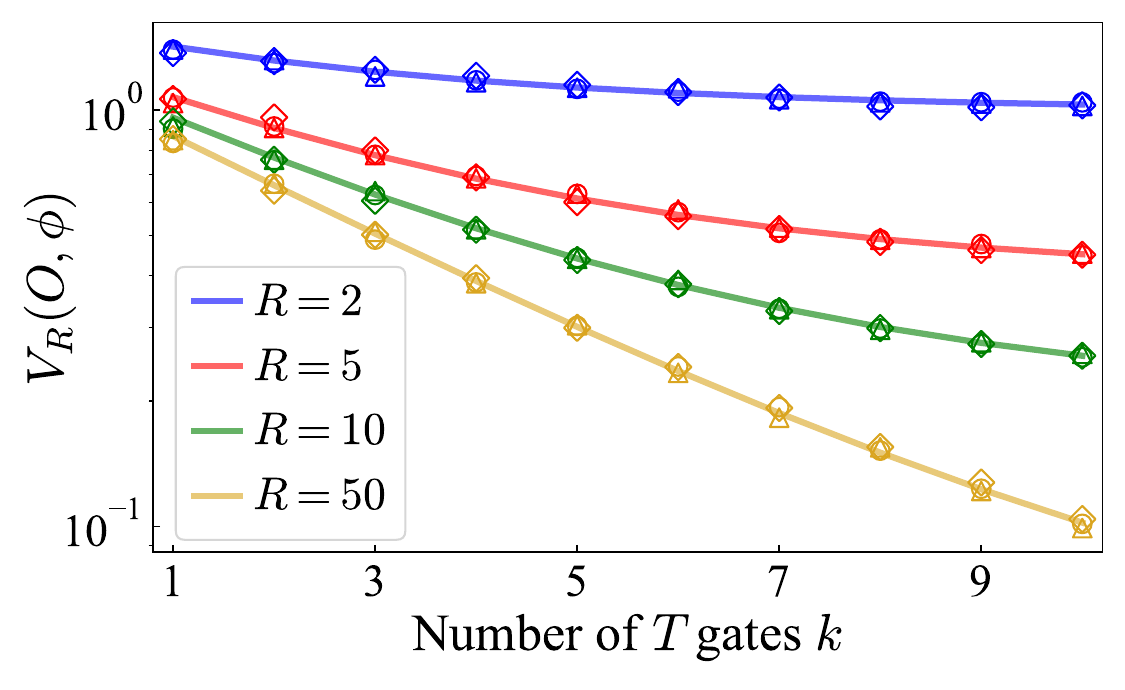}
	\caption{The variance in fidelity estimation based on thrifty shadow, where $O=|\phi\>\<\phi|-\bbone/d$ with  $|\phi\>=|S_{20,2}(\pi/4)\>$. The circles, 
triangles, and diamonds correspond to the unitary  ensembles 	$\bbU_{k, 1}$, $\bbU_{1, k}$, and $\tbbU_k$, respectively. For each ensemble, $50000$ unitaries are sampled and each one is reused $R$ times.	
}\label{fig:VarInterleaved}
\end{figure}

Surprisingly, the circuit underlying the ensemble $\bbU_{k, 1}$ can be simplified further. As shown in \fref{fig:ModelCircuits}(b), the second Clifford layer can be replaced by a fixed Clifford unitary $\Tensor{I}{(n-k)}\otimes\Tensor{H}{k}$, where $H$ is the Hadamard gate. The resulting ensemble reads $\tbbU_k=\Tensor{I}{(n-k)}\otimes\Tensor{(HT)}{k}\Cl_n$. Compared with $\bbU_{k,1}$, the circuit depth can be reduced by one half.  The following theorem shows that
the ensemble $\tbbU_k$ can achieve almost the same performance  as  $\bbU_{k, 1}$ and $\bbU_{1, k}$, as illustrated in \fref{fig:VarInterleaved}. See  \aref{app:BoundsAdd} for the counterpart of \thref{thm:VarFUkl} and additional results.
\begin{theorem}\label{thm:VartUk}
Suppose $\caU=\tbbU_k$ with $0\leq k\leq n$, $\rho\in \caD(\caH)$, and $O\in \caL^\rmH_0(\caH)$. Then
\begin{gather}
\left| V_*(O, \rho) - \gamma^{k}V_\triangle(O,\rho) \right| \le \frac{6}{d}\|O\|_2^2.\label{eq:VartUkBound}
\end{gather}
\end{theorem}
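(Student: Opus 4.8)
The plan is to follow the same computational template that establishes \thref{thm:VarUkl}, exploiting the fact that $\tbbU_k=W\Cl_n$ with the deterministic factor $W=\Tensor{I}{(n-k)}\otimes\Tensor{(HT)}{k}$ differs from $\bbU_{k,1}$ only in that the second \emph{random} Clifford layer is replaced by the \emph{fixed} Clifford layer $\Tensor{I}{(n-k)}\otimes\Tensor{H}{k}$. Writing each $U\in\tbbU_k$ as $U=WC$ with $C\sim\Cl_n$ and pulling the deterministic factor out of the cross moment operator in \eref{eq:OmegaU}, I would obtain
\begin{equation}
\Omega(\tbbU_k)=\bbE_{C\sim\Cl_n}\dagtensor{C}{4}\,\Gamma\,\Tensor{C}{4},
\end{equation}
where $\Gamma$ is the $W$-conjugate of the projector sum in \eref{eq:OmegaU}. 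Because $W$ is a product of single-qubit gates, $\Gamma$ factorizes qubit by qubit: the first $n-k$ qubits contribute exactly the diagonal operator of the pure Clifford case, while each of the last $k$ qubits contributes the twisted operator built from the magic measurement basis $(HT)^\dag|a\>\<a|(HT)$. This already explains why only $k$, rather than $n$, will enter the final bound.

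Next, since $V_*(O,\rho)$ enters only through $\tr\{\Omega(\tbbU_k)[\caM^{-1}(O)\otimes\rho]^{\otimes2}\}$, I would use cyclicity to move the Clifford average onto the physical operators, projecting $[\caM^{-1}(O)\otimes\rho]^{\otimes2}$ onto the commutant of $\Tensor{\Cl_n}{4}$ and then taking its overlap with the fixed $\Gamma$. Expanding everything in the Pauli / characteristic-function basis as in the proof of \thref{thm:VarCl}, the Hadamard part of $W$ merely relabels $Z\leftrightarrow X$ on the last $k$ qubits, permuting the commutant without changing any Hilbert–Schmidt norms, whereas the $\Tensor{T}{k}$ part conjugates the relevant Pauli pairs and injects the per-qubit factor $\gamma=\|\Xi_{T|+\>}\|_4^4/\|\Xi_{T|+\>}\|_2^2=3/4$. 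Collecting these contributions reproduces the Clifford expression $V_\triangle(O,\rho)$ scaled by $\gamma^{k}$ as the leading term, in parallel with the factor $\gamma^{kl}$ appearing for $\bbU_{k,l}$; replacing the random second Clifford by the fixed Hadamard layer is harmless at this order precisely because $\Tensor{H}{k}$ is itself Clifford and hence normalizes the commutant.

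The main obstacle is the fourth-order Clifford twirl itself. Since $\Cl_n$ is only a $3$-design, the commutant of $\Tensor{\Cl_n}{4}$ is strictly larger than the span of the permutations $S_4$, and the extra stochastic-Lagrangian generators responsible for the $4$-design defect must be tracked explicitly rather than discarded. I would need to show that, once contracted against $[\caM^{-1}(O)\otimes\rho]^{\otimes2}$, all such defect contributions—together with the boundary terms arising from the normalization of $\caM^{-1}$ and from the traceless constraint $O\in\caL^\rmH_0(\caH)$—are bounded in magnitude by $\tfrac{6}{d}\|O\|_2^2$, uniformly in $k$, $\rho$, and $O$. I expect this estimate to follow from the same norm inequalities used for \thref{thm:VarUkl} (Cauchy–Schwarz on the characteristic functions together with $\wp(\rho)\le1$), once the fixed Hadamard layer is verified to leave the defect terms invariant up to the same $\caO(\|O\|_2^2/d)$ order; this is why $\tbbU_k$ inherits the identical error bound $\tfrac{6}{d}\|O\|_2^2$.
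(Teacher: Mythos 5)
Your opening moves match the paper's own strategy: writing $U=WC$ and absorbing the fixed layer into the measurement basis is exactly how the paper proceeds, via the identity $\Omega(\tbbU_k)=\Omega(\Cl_n,\caB_T)$ with $\caB_T=\{\Tensor{I}{(n-k)}\otimes\Tensor{(T^\dag H^\dag)}{k}|\bfb\>\}_\bfb$, followed by the commutant analysis of the fourth Clifford tensor power. However, there is a genuine structural error in your final paragraph. In this analysis the leading term $\gamma^k V_\triangle(O,\rho)$ is \emph{itself carried by the non-permutation (stochastic-Lagrangian) operators}: by \lref{lem:TrRiOrho}, $V_\triangle(O,\rho)=\frac{d+1}{d+2}\tr\bigl[\caR_4\Tensor{(O\otimes\rho)}{2}\bigr]$ with $\caR_4=R(\caT_4)+R\bigl((12)\caT_4\bigr)$, i.e.\ the CSS-type subspaces outside $S_4$. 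Your plan to show that ``all such defect contributions \ldots are bounded in magnitude by $\tfrac{6}{d}\|O\|_2^2$'' therefore cannot work: $V_\triangle(O,\rho)$ can be $\Theta(1)$ (e.g.\ close to $2$ in fidelity estimation of a stabilizer state), so bounding the defect sector by $\caO(1/d)$ would erase precisely the term the theorem isolates. The $\tfrac{6}{d}$ error in \eref{eq:VartUkBound} instead collects the deviations of the \emph{permutation}-sector coefficients from their $4$-design values, the second defect orbit $\caR_5$, and the subleading $\nu^k=(1/2)^k$ corrections — a bookkeeping your sketch has inverted.

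Beyond that, the quantitative core of the proof is deferred rather than carried out. What the paper actually does is: compute $\Lambda_1(\caB_T)=d^3\gamma^k+2d^2\nu^k$ and $\Lambda_2(\caB_T)=d^3\gamma^k-d^2\nu^k$ (note the second parameter $\nu=1/2$, absent from your account); derive the explicit coefficients $g_i(\tbbU_k)$ in the $\caR_i$ basis (\lref{lem:gitUk}), which differ from those of $\bbU_{k,1}$ and take a different form for $k=1$ than for $k\ge2$; bound each contraction $\tr\bigl[\caR_i\Tensor{(O\otimes\rho)}{2}\bigr]$ via \lref{lem:TrRiOrho}, which requires operator H\"{o}lder inequalities and, for $\caR_5$, the bound $\bigl|\tr\bigl[R(\caT)\Tensor{(O\otimes\rho)}{2}\bigr]\bigr|\le\|O\|_2^2$ from \rcite{Helsen2023MultiShot} — not merely Cauchy--Schwarz on characteristic functions with $\wp(\rho)\le1$; and finally run the case analysis $k=0$, $k=1$, $k\ge2$ (using, e.g., $d\gamma^k\ge(3/2)^n$) to extract the uniform constant $6$. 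Without these computations, nothing with the stated constant — indeed nothing with any explicit constant — has been established.
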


\emph{Summary.}---We showed that
thrifty shadow estimation is effective on average whenever the underlying unitary ensemble forms a unitary 2-design, which exceeds expectations. In estimation protocol based on the Clifford group, 
the variance is tied to the degree of nonstabilizerness and decreases exponentially with the stabilizer $2$-R\'{e}nyi entropy of the target state for the task of fidelity estimation. Furthermore,  nonstabilizerness in the unitary ensemble can help reduce the variance in addition to the contribution from states and observables. On this basis, we proposed a simple circuit to enhance the efficiency, which requires only one layer of $T$ gates,  but can achieve almost the same performance as popular interleaved Clifford circuits.
Our work endows the stabilizer $2$-R\'{e}nyi entropy with a clear operational interpretation and reveals an intriguing connection between shadow estimation and the resource theory of magic state quantum computation, which may have profound implications for various related topics. In the course of our study, we clarified the properties of the cross moment operator, which is also useful to studying common randomized measurements~\cite{Vermersch2024MultiShot} and cross-platform verification~\cite{Elben2020CrossPlatform,Anshu2022CrossPlatform}.


\let\oldaddcontentsline\addcontentsline
\renewcommand{\addcontentsline}[3]{}

\acknowledgments
We thank Chengsi Mao for valuable discussions. This work is supported by the National Natural Science Foundation of China (Grant No.~92165109),  National Key Research and Development Program of China (Grant No. 2022YFA1404204), and Shanghai Municipal Science and Technology Major Project (Grant No.~2019SHZDZX01).

\bibliography{reference}

\let\addcontentsline\oldaddcontentsline

\onecolumngrid

\newpage
\appendix

\tableofcontents

\bigskip

In this appendix, we prove the results presented in the main text, including \Thsref{thm:AverageShadow}-\ref{thm:VartUk},  \pref{pro:VarNoiseShadow}, and \coref{cor:VarCl}. In addition, we provide some additional results on the characteristic functions, cross characteristic functions, and  the variances $V(O,\rho)$, $V_*(O,\rho)$. To this end, we clarify the basic properties of the cross moment operator, which is of interest beyond the focus of this work, including shadow estimation based on common randomized measurements and cross-platform verification. Throughout this appendix, we assume that $\caH$ is an $n$-qubit Hilbert space of dimension $d=2^n$ and $O\in\caL_0^{\rmH}(\caH)$  as in the main text. In addition, we assume that $\caU$ is a unitary 2-design.

\section{Proofs of \thref{thm:AverageShadow} and \pref{pro:VarNoiseShadow}}

By assumption  $\caU$ is a unitary 2-design on $\caH$  and $O\in \caL_0^{\rmH}(\caH)$, so   $\caM^{-1}(O) = (d+1)O-\tr(O)\bbone=(d+1)O$ and  the variance $V(O,\rho)$ in shadow estimation based on $\caU$ 
can be expressed as follows \cite{Huang2020Shadow}, 
\begin{equation}\label{eq:VarShadow}
V(O,\rho) =  \bbE_{U\sim\caU} \sum_{\bfb} (d+1)^2\tr\left[\Tensor{\bigl(U^\dag |\bfb\>\<\bfb|U\bigr)}{3} \bigl(\rho\otimes\Tensor{O}{2}\bigr)\right]-[\tr(O\rho)]^2.
\end{equation}
In addition, the formula for 
$V_*(O,\rho)$ in \eref{eq:DefV*} can be simplified as 
\begin{equation}\label{eq:DefV*2Design}
V_*(O,\rho) = (d+1)^2\tr\bigl[\Omega(\caU)(O\otimes\rho)^{\otimes 2}\bigr]-[\tr(O\rho)]^2.
\end{equation}

\subsection{Two auxiliary lemmas}
Before proving \thref{thm:AverageShadow}, we need to introduce two auxiliary lemmas. Let $\SWAP$ be the swap operator on $\caH^{\otimes 2}$.

\begin{lemma}\label{lem:UnitaryAvg}
	Suppose $\tcaU$ is a unitary 2-design on  $\caH$ and $A,B\in\caL(\caH)$. Then 
	\begin{gather}
	\bbE_{U\sim\tcaU}U A U^\dag=\frac{\tr(A)}{d}\bbone,  \label{eq:TwirlAvg} \\
	\bbE_{U\sim\tcaU} \left[\left(U A U^\dag\right)\otimes \left(U A U^\dag\right)^\dag\right] = \frac{d|\tr A|^2 -\|A\|_2^2}{d(d^2-1)}\Tensor{\bbone}{2} + \frac{d\|A\|_2^2-|\tr A|^2}{d(d^2-1)}\SWAP,\label{eq:TensorAvg}\\
	\bbE_{U\sim\tcaU} \left|\tr(UAU^\dag B)\right|^2=\frac{d|\tr A|^2|\tr B|^2+d\|A\|_2^2\|B\|_2^2-|\tr A|^2\|B\|_2^2-\|A\|_2^2|\tr B|^2}{d(d^2-1)}. \label{eq:TraceAvg}
	\end{gather}
\end{lemma}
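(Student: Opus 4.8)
The plan is to establish the three identities in sequence, deriving each from the single structural fact that twirling over a unitary 2-design reproduces the corresponding Haar average, and then pinning down the unknown coefficients by Schur--Weyl duality. For \eqref{eq:TwirlAvg} I would observe that the map $A\mapsto\bbE_{U\sim\tcaU}UAU^\dag$ is invariant under pre- and post-composition with any element of the design; since a 2-design is in particular a 1-design, this map agrees with the Haar twirl, whose image lies in the commutant of the defining representation of $\rmU(d)$, namely $\bbC\bbone$ by Schur's lemma. Writing the result as $c\bbone$ and matching traces gives $c=\tr(A)/d$.

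Next, for \eqref{eq:TensorAvg} the key rewriting is $(UAU^\dag)\otimes(UAU^\dag)^\dag = U^{\otimes2}(A\otimes A^\dag)(U^\dag)^{\otimes2}$, so the left-hand side is the image of $A\otimes A^\dag$ under the second-moment twirl. This is a $(2,2)$-type moment of $U$, so the 2-design property lets me substitute the Haar average; by Schur--Weyl duality the Haar twirl projects onto $\mathrm{span}\{\bbone^{\otimes2},\SWAP\}$, so the result takes the form $\alpha\bbone^{\otimes2}+\beta\SWAP$. I would fix $\alpha,\beta$ by pairing both sides with $\bbone^{\otimes2}$ and with $\SWAP$: using $\tr(\bbone^{\otimes2})=d^2$, $\tr(\SWAP)=d$, $\tr(\SWAP^2)=d^2$, together with $\tr(A\otimes A^\dag)=|\tr A|^2$ and $\tr[(A\otimes A^\dag)\SWAP]=\tr(AA^\dag)=\|A\|_2^2$ (the latter exploiting that $\SWAP$ commutes with $U^{\otimes2}$, so $(U^\dag)^{\otimes2}\SWAP\, U^{\otimes2}=\SWAP$), this yields the linear system $\alpha d^2+\beta d=|\tr A|^2$ and $\alpha d+\beta d^2=\|A\|_2^2$, whose solution reproduces the stated coefficients.

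Finally, \eqref{eq:TraceAvg} reduces to \eqref{eq:TensorAvg}. I would write $|\tr(UAU^\dag B)|^2=\tr(UAU^\dag B)\,\overline{\tr(UAU^\dag B)}=\tr\bigl\{[(UAU^\dag)\otimes(UAU^\dag)^\dag](B\otimes B^\dag)\bigr\}$, move the expectation inside the trace, and insert \eqref{eq:TensorAvg}. Pairing $\alpha\bbone^{\otimes2}+\beta\SWAP$ against $B\otimes B^\dag$ via $\tr(B\otimes B^\dag)=|\tr B|^2$ and $\tr[\SWAP(B\otimes B^\dag)]=\|B\|_2^2$ produces $\alpha|\tr B|^2+\beta\|B\|_2^2$, which expands directly into the claimed expression.

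None of these steps poses a genuine obstacle; the protocol is entirely routine once the Schur--Weyl structure is invoked. The only points requiring care are the bookkeeping of the adjoints---ensuring that the correct object $A\otimes A^\dag$ (rather than $A\otimes A$) appears and that the complex conjugates land on the intended tensor factors---and the explicit check that a unitary 2-design captures the $(2,2)$-type moment underlying \eqref{eq:TensorAvg}, which is precisely the balanced second moment a 2-design is guaranteed to reproduce.
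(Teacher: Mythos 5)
Your proof is correct and takes essentially the same route as the paper, whose own proof is just the remark that \eqref{eq:TwirlAvg} and \eqref{eq:TensorAvg} follow from Schur--Weyl duality plus the 2-design assumption, with \eqref{eq:TraceAvg} a simple corollary of \eqref{eq:TensorAvg}. Your write-up supplies exactly those standard details---the commutant $\mathrm{span}\{\bbone^{\otimes 2},\SWAP\}$, the trace pairings giving the linear system for the coefficients, and the tensor-trace identity reducing \eqref{eq:TraceAvg} to \eqref{eq:TensorAvg}---and all of them check out.
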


\begin{lemma}\label{lem:TrOrhoAvg}
	Suppose $\tcaU$ is a unitary 2-design on  $\caH$, $\rho\in\caD(\caH)$, and $O\in\caL_0^{\rmH}(\caH)$. Then
	\begin{equation}\label{eq:TraceAvg2}
	\bbE_{U\sim\tcaU}\; \tr\bigl(UOU^\dag \rho\bigr)^2 =\bbE_{U\sim\tcaU}\; \tr \bigl(O U\rho U^\dag\bigr)^2 = \frac{d\wp(\rho)-1}{d(d^2-1)}\|O\|_2^2,
	\end{equation}
	where  $\wp(\rho) = \tr(\rho^2)$ is the purity of $\rho$.
\end{lemma}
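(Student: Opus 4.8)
The plan is to obtain both equalities as immediate consequences of \lref{lem:UnitaryAvg}, since the quantities in question are second-moment expressions of exactly the type handled there. Because $O$ and $\rho$ are Hermitian, each of the traces $\tr(UOU^\dag\rho)$ and $\tr(OU\rho U^\dag)$ is real, so its square equals its modulus squared and I may apply \eref{eq:TraceAvg} directly. For the first equality I would set $A=O$ and $B=\rho$ in \eref{eq:TraceAvg}; the substitutions $\tr O=0$, $\tr\rho=1$, and $\|\rho\|_2^2=\wp$ collapse the four terms in the numerator to $d\|O\|_2^2\wp-\|O\|_2^2=(d\wp-1)\|O\|_2^2$, which gives the claimed value after dividing by $d(d^2-1)$. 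For the second equality, cyclicity of the trace gives $\tr(OU\rho U^\dag)=\tr(U\rho U^\dag O)$, again of the form covered by \eref{eq:TraceAvg} but now with $A=\rho$ and $B=O$; since the right-hand side of \eref{eq:TraceAvg} is symmetric under $A\leftrightarrow B$, the same numerical value results.

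An alternative route, which I find more transparent, uses the tensor-twirl formula \eref{eq:TensorAvg} together with the swap trick. Writing $\tr(X)^2=\tr(X\otimes X)$ and pulling the expectation inside, the first expression becomes $\tr\{[\bbE_{U}\Tensor{(UOU^\dag)}{2}](\rho\otimes\rho)\}$; applying \eref{eq:TensorAvg} with $A=O$ (using $O^\dag=O$) and then evaluating $\tr(\rho\otimes\rho)=1$ and $\tr[\SWAP(\rho\otimes\rho)]=\wp$ reproduces the result once $\tr O=0$ is inserted. Symmetrically, the second expression equals $\tr\{(O\otimes O)[\bbE_{U}\Tensor{(U\rho U^\dag)}{2}]\}$, where \eref{eq:TensorAvg} with $A=\rho$, together with $\tr O=0$ and $\tr[\SWAP(O\otimes O)]=\tr(O^2)=\|O\|_2^2$, yields the same expression. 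This version makes visible the mechanism behind the coincidence of the two averages.

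There is no genuine obstacle here: the statement is a direct corollary of \lref{lem:UnitaryAvg}, and the only points requiring a moment's care are (i) using Hermiticity of $\rho$ and $O$ to drop the absolute value and to identify $\|\rho\|_2^2$ with $\wp$ and $\tr(O^2)$ with $\|O\|_2^2$, and (ii) recognizing that the apparent asymmetry between the two expressions---with $O$ inside the twirl in one case and $\rho$ inside in the other---washes out, because the relevant second-moment formula is symmetric in its two arguments once the trace and Hilbert--Schmidt data of $O$ and $\rho$ are substituted, with the vanishing of $\tr O$ and the normalization $\tr\rho=1$ conspiring to produce the single expression $(d\wp-1)\|O\|_2^2/[d(d^2-1)]$ in both cases.
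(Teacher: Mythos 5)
Your proposal is correct and follows essentially the same route as the paper, which proves \lref{lem:TrOrhoAvg} as a direct corollary of \eref{eq:TraceAvg} with the substitutions $\tr O=0$, $\tr\rho=1$, $\|\rho\|_2^2=\wp$, exactly as in your first paragraph. Your alternative derivation via \eref{eq:TensorAvg} is just an unrolling of how \eref{eq:TraceAvg} is itself obtained within \lref{lem:UnitaryAvg}, so it is the same argument in expanded form rather than a genuinely different approach.
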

\Eqsref{eq:TwirlAvg}{eq:TensorAvg} in \lref{lem:UnitaryAvg} follow from Schur-Weyl duality and the assumption that $\tcaU$ is a unitary 2-design.  \Eref{eq:TraceAvg}  is a simple corollary of \eref{eq:TensorAvg}. \Eref{eq:TraceAvg2} in \lref{lem:TrOrhoAvg} is in turn a simple corollary of \eref{eq:TraceAvg}.

\subsection{Main proofs}

\begin{proof}[Proof of \thref{thm:AverageShadow}]
	By virtue of \eref{eq:VarShadow} and \lsref{lem:UnitaryAvg}, \ref{lem:TrOrhoAvg} we can deduce that
	\begin{align}
	V(O,\caE(\rho)) &=\bbE_{W\sim \haar}\left\{ \bbE_{U\sim\caU}  \sum_{\bfb} (d+1)^2\tr\left[\Tensor{\bigl(U^\dag |\bfb\>\<\bfb|U\bigr)}{3} \bigl(W\rho W^\dag\bigr)\otimes\Tensor{O}{2}\right] - \bigl[\tr\bigl(OW\rho W^\dag\bigl)\bigr]^2\right\} \nonumber\\	
	&= \bbE_{U\sim\caU} \sum_{\bfb} \frac{(d+1)^2}{d}\tr\left[\Tensor{\bigl(U^\dag |\bfb\>\<\bfb|U\bigr)}{2}\Tensor{O}{2}\right]-\frac{d\tr(\rho^2)-1}{d(d^2-1)}\|O\|_2^2 = \left[1+\frac{d-\tr(\rho^2)}{d^2-1}\right]\|O\|_2^2,\\	
	V(\caE(O),\rho) &=\bbE_{W\sim \haar}\left\{ \bbE_{U\sim\caU}  \sum_{\bfb} (d+1)^2 \tr\left[\Tensor{\bigl(U^\dag |\bfb\>\<\bfb|U\bigr)}{3} \rho\otimes\Tensor{\bigl(WOW^\dag\bigr)}{2}\right] - \bigl[\tr\bigl(WOW^\dag\rho \bigr)\bigr]^2\right\} \nonumber\\
	&= \bbE_{U\sim\caU}  \sum_{\bfb} \frac{d+1}{d}\|O\|_2^2 \tr\left(U^\dag |\bfb\>\<\bfb|U \rho\right) -\frac{d\tr(\rho^2)-1}{d(d^2-1)}\|O\|_2^2 = \left[1+\frac{d-\tr(\rho^2)}{d^2-1}\right]\|O\|_2^2,
	\end{align}
	which confirm \eref{eq:VOrhoAvg}.

	Next, we consider the average of $V_*(O,\rho)$. By virtue of \eqsref{eq:OmegaU}{eq:DefV*2Design} we can deduce that
	\begin{align}
	V_*(O,\caE(\rho)) &=\bbE_{W\sim \haar}\left\{ (d+1)^2\tr\left[\Omega(\caU)\left(O\otimes W\rho W^\dag\otimes O \otimes W\rho W^\dag\right)\right]-\bigl[\tr\bigl(OW\rho W^\dag\bigl)\bigr]^2\right\}\nonumber\\
	&=\frac{(d+1)[d-\tr(\rho^2)]}{d(d-1)} \bbE_{U\sim\caU}  \sum_{\bfa,\bfb}\tr\left\{\left[\dagtensor{U}{2}\left(|\bfa\>\<\bfa|\otimes|\bfb\>\<\bfb|\right)\Tensor{U}{2}\right] \Tensor{O}{2} \right\}\nonumber\\
	&\quad + \frac{(d+1)[d\tr(\rho^2)-1]}{d(d-1)} \bbE_{U\sim\caU}  \sum_{\bfb}\tr\left[\Tensor{\bigl(U^\dag|\bfb\>\<\bfb|U\bigr)}{2} \Tensor{O}{2} \right] -\frac{d\tr(\rho^2)-1}{d(d^2-1)}\|O\|_2^2\nonumber\\
	&= \frac{(d+1)[d-\tr(\rho^2)]}{d(d-1)}[\tr (O)]^2+ \frac{(d+1)[d\tr(\rho^2)-1]}{d(d-1)}\times \frac{d}{d(d+1)}\|O\|_2^2-\frac{d\tr(\rho^2)-1}{d(d^2-1)}\|O\|_2^2\nonumber\\
	&=\frac{d\tr(\rho^2)-1}{d^2-1}\|O\|_2^2.
	\end{align}
	Here the second equality follows from \lsref{lem:UnitaryAvg} and \ref{lem:TrOrhoAvg}, the third equality follows from \lref{lem:TrOrhoAvg} and the fact that $\sum_\bfa |\bfa\>\<\bfa|=\sum_\bfb|\bfb\>\<\bfb|=\bbone$, and the last equality holds because $O$ is traceless by assumption. 
	By a similar reasoning we can deduce that
	\begin{align}
	V_*(\caE(O),\rho) &= \bbE_{W\sim \haar}\left\{ (d+1)^2\tr\left[\Omega(\caU) \bigl(WOW^\dag\otimes \rho\otimes WOW^\dag \otimes \rho\bigr)\right]-\bigl[\tr\bigl(WOW^\dag\rho \bigr)\bigr]^2\right\}\nonumber\\
	&=-\frac{(d+1)\|O\|_2^2}{d(d-1)} \bbE_{U\sim\caU}  \sum_{\bfa,\bfb}\tr\left\{\left[\dagtensor{U}{2}\left(|\bfa\>\<\bfa|\otimes|\bfb\>\<\bfb|\right)\Tensor{U}{2}\right] \Tensor{\rho}{2} \right\}\nonumber\\
	&\quad + \frac{(d+1)\|O\|_2^2}{d-1} \bbE_{U\sim\caU}  \sum_{\bfb}\tr\left[\Tensor{\bigl(U^\dag|\bfb\>\<\bfb|U\bigr)}{2} \Tensor{\rho}{2} \right] -\frac{d\tr(\rho^2)-1}{d(d^2-1)}\|O\|_2^2\nonumber\\
	&=-\frac{(d+1)\|O\|_2^2}{d(d-1)}
	+\frac{(d+1)\|O\|_2^2}{d-1}\times\frac{d[\tr(\rho^2)+1]}{d(d+1)}-\frac{d\tr(\rho^2)-1}{d(d^2-1)}\|O\|_2^2 = \frac{d\tr(\rho^2)-1}{d^2-1}\|O\|_2^2.
	\end{align}
	The above two equations together confirm  \eref{eq:V*OrhoAvg} and complete the proof of \thref{thm:AverageShadow}. 
\end{proof}

\begin{proof}[Proof of \pref{pro:VarNoiseShadow}]
Since $\caU$ is a unitary 3-design by assumption, by virtue of \eref{eq:VarShadow}  we can deduce that

\begin{equation}\label{eq:VarShadow3design}
V(O,\rho) = \frac{6(d+1)}{d+2}\tr\bigl[P_{[3]} \bigl(\rho\otimes\Tensor{O}{2}\bigr)\bigr]-[\tr(O\rho)]^2= \frac{d+1}{d+2}\left[\tr(O^2)+2\tr(\rho O^2)\right]-\left[\tr(\rho O)\right]^2\leq 3\|O\|_2^2,
\end{equation}
where $P_{[3]}$ is the projector onto the symmetric subspace in $\caH^{\otimes 3}$, and the inequality  holds because $\tr(\rho O^2) \le \tr(O^2)=\|O\|_2^2$. This equation confirms the upper bound in \eref{eq:VarShadowBound}, which was originally proved in \rcite{Huang2020Shadow}. The lower bound of $V(O,\rho)$ in \eref{eq:VarShadowBound} holds because $\tr(\rho O^2)\ge0$ and
\begin{align}
\left[\tr(\rho O)\right]^2=\left[\tr(\sqrt{\rho}\sqrt{\rho} O)\right]^2\leq  \tr(\rho) \tr (\rho O^2)= \tr (\rho O^2),
\end{align}
where we have applied the Cauchy-Schwarz inequality.

When $O = |\phi\>\<\phi|-\bbone/d$, \eref{eq:VarFShadow} follows from \eref{eq:VarShadow3design},  which completes the proof of \pref{pro:VarNoiseShadow}.
\end{proof}

\section{Pauli group and Clifford group}\label{app:Clifford}

The $n$-qubit  \emph{Pauli group} $\caP_n$ is the group generated by all $n$-fold tensor products of Pauli operators, that is
\begin{equation}
\caP_n:=\langle \{I, X, Y, Z\}^{\otimes n}\rangle =\bigl\{\rmi^k Z_{\bfp} X_{\bfq} \;| \;k\in\bbZ_4,\bfp,\bfq\in\bbZ_2^n\bigr\},
\end{equation}
where 
\begin{align}\label{eq:ZpXq}
Z_{\bfp}:=Z^{p_1}\otimes Z^{p_2}\otimes\cdots\otimes Z^{p_n},\quad  X_{\bfq}:= X^{q_1}\otimes X^{q_2}\otimes \cdots \otimes X^{q_n}. 
\end{align}
The projective Pauli group $\bcaP_n$ is the quotient of $\caP_n$ over the phase factors 
and can be identified with the set $\{I, X, Y,Z\}^{\otimes n}$. The \emph{Clifford group} is  the unitary normalizer of the Pauli group, that is,
\begin{equation}
\Cl_n = \{U\in\rmU(d)\;|\;U\caP_n U^\dag = \caP_n\},
\end{equation}
where $\rmU(d)$ is the unitary group acting on $\caH$. The projective Clifford group $\overline{\Cl}_n$ is the quotient  of $\Cl_n$ over the phase factors. It is known that $\overline{\Cl}_n/\bcaP_n$ is isomorphic to the symplectic group $\mathrm{Sp}(2n, \bbF_2)$, where $\bbF_2\simeq \bbZ_2$ is the  finite field composed of two elements. Up to overall phase factors,  the Clifford group $\Cl_n$ can be generated by  the Hadamard gate $H$, phase gate $S$, and  controlled-not gate $\mathrm{CNOT}$, where 
\begin{equation}
H = \frac{1}{\sqrt{2}}\begin{pmatrix}
1 & 1\\
1 & -1
\end{pmatrix},
\quad S = \begin{pmatrix}
1 & 0\\
0 & \rmi
\end{pmatrix},
\quad \mathrm{CNOT} = \begin{pmatrix}
1 & 0 & 0 & 0 \\
0 & 1 & 0 & 0 \\
0 & 0 & 0 & 1 \\
0 & 0 & 1 & 0
\end{pmatrix}.
\end{equation}

\section{Properties of characteristic and cross characteristic functions}
Characteristic functions \cite{Gross2006CharFunction,Dai2022CharFunction} and cross characteristic functions play crucial roles in understanding the performance of thrifty shadow estimation, as demonstrated in \thsref{thm:VarCl}, \ref{thm:VarUkl}, \ref{thm:VartUk} and their corollaries. Here we clarify the basic properties of these functions, in preparation for the proofs of our main results.

Let $O,O_1, O_2$ be three linear operators on $\caH$. Recall that the 
characteristic function $\Xi_O$ and cross characteristic functions $\Xi_{O_1, O_2}$, $\tXi_{O_1, O_2}$ are defined as follows:
\begin{align}\label{eq:CharCrossChar}
\Xi_O(P) := \tr(OP),\;\; \Xi_{O_1, O_2}(P) := \tr(O_1P)\tr(O_2 P)=\Xi_{O_1}(P)\Xi_{O_2}(P),\;\; \tXi_{O_1, O_2}(P) := \tr(O_1 P O_2 P),\;\; P\in \bcaP_n.
\end{align}
They can be regarded as vectors with $d^2$ entries. 
 Since the set of  Pauli operators in  $\bcaP_n$ forms an orthogonal unitary basis in  $\caL(\caH)$, by definition it is straightforward to derive the following relations: 
\begin{equation}\label{eq:CharOrelation}
O = \frac{1}{d} \sum_{P\in\bcaP_n} \Xi_{O}(P) P,\quad 
\tr(O^2)=\frac{1}{d}\|\Xi_{O}\|_2^2,\quad \tr(O_1 O_2)=\frac{1}{d}\sum_{P\in\bcaP_n}\Xi_{O_1, O_2}(P).
\end{equation}
Additional properties of characteristic functions and cross characteristic functions are summarized in \lsref{lem:CrossChar}-\ref{lem:CharAvg} below, which are proved in \aref{app:CharProofs}.

\begin{lemma}\label{lem:CrossChar}
Suppose  $O_1,O_2\in \caL^{\rmH}(\caH)$. Then
\begin{equation}\label{eq:CrossCharNorm}
|\tXi_{O_1, O_2}\cdot\Xi_{O_1, O_2}|\leq \|\tXi_{O_1, O_2}\|_2^2 =  \|\Xi_{O_1, O_2}\|_2^2.
\end{equation}
\end{lemma}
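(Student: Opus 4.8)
The plan is to reduce the whole lemma to a single structural fact: the twisted cross characteristic function is obtained from the ordinary cross characteristic function by applying a symmetric orthogonal transformation (up to the scalar $d$). Granting this, the norm equality $\|\tXi_{O_1,O_2}\|_2^2=\|\Xi_{O_1,O_2}\|_2^2$ is immediate because orthogonal maps preserve the Euclidean norm, and the inequality $|\tXi_{O_1,O_2}\cdot\Xi_{O_1,O_2}|\le\|\tXi_{O_1,O_2}\|_2^2$ then follows at once from the Cauchy--Schwarz inequality combined with that equality. I note first that since $O_1,O_2$ and every $P\in\bcaP_n$ are Hermitian, both $\Xi_{O_1,O_2}(P)$ and $\tXi_{O_1,O_2}(P)$ are real, so all norms and inner products below are the standard ones on $\mathbb{R}^{d^2}$.

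First I would establish the linear relation between the two functions. For Hermitian Pauli operators one has $PRP=\epsilon(P,R)\,R$ for all $P,R\in\bcaP_n$, where $\epsilon(P,R)$ is the commutation sign defined by $PR=\epsilon(P,R)RP$ (so $\epsilon=+1$ if $P,R$ commute and $-1$ if they anticommute); this follows from $P^2=\bbone$. Expanding $O_2=\tfrac{1}{d}\sum_{R\in\bcaP_n}\Xi_{O_2}(R)\,R$ in the Pauli basis [cf.\ \eref{eq:CharOrelation}] and inserting it into $\tXi_{O_1,O_2}(P)=\tr(O_1PO_2P)$ gives $\tXi_{O_1,O_2}(P)=\tfrac{1}{d}\sum_R\Xi_{O_2}(R)\,\tr(O_1PRP)=\tfrac{1}{d}\sum_R\epsilon(P,R)\,\Xi_{O_1}(R)\Xi_{O_2}(R)=\tfrac{1}{d}\sum_R\epsilon(P,R)\,\Xi_{O_1,O_2}(R)$. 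In matrix form this reads $\tXi_{O_1,O_2}=\tfrac{1}{d}W\,\Xi_{O_1,O_2}$, where $W$ is the $d^2\times d^2$ matrix with entries $W_{PR}=\epsilon(P,R)$.

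Next I would show that $\tfrac{1}{d}W$ is a symmetric orthogonal matrix. Symmetry is clear since $\epsilon(P,R)=\epsilon(R,P)$. For orthogonality I would use the multiplicativity of the commutation sign, $\epsilon(P,R)\epsilon(P',R)=\epsilon(PP',R)$, to compute $(WW^{T})_{PP'}=\sum_R\epsilon(P,R)\epsilon(P',R)=\sum_R\epsilon(PP',R)$. The key group-theoretic fact is the counting identity $\sum_{R\in\bcaP_n}\epsilon(S,R)=d^2\delta_{S,\bbone}$: for $S=\bbone$ every sign is $+1$ and the sum is $|\bcaP_n|=d^2$, while for $S\neq\bbone$ the map $R\mapsto\epsilon(S,R)$ is a nontrivial $\{\pm1\}$-valued character of the abelian group $\bcaP_n$ and therefore sums to zero (equivalently, a nonidentity Pauli commutes with exactly half of the Paulis). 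Since $PP'=\bbone$ iff $P=P'$ for Hermitian Paulis, this yields $WW^{T}=d^2\,\bbone$, so $\tfrac1d W$ is a symmetric involution and hence orthogonal.

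Finally, $\|\tXi_{O_1,O_2}\|_2^2=\|\tfrac1d W\,\Xi_{O_1,O_2}\|_2^2=\Xi_{O_1,O_2}^{T}\bigl(\tfrac{1}{d^2}W^{T}W\bigr)\Xi_{O_1,O_2}=\|\Xi_{O_1,O_2}\|_2^2$, which is the claimed equality, and then $|\tXi_{O_1,O_2}\cdot\Xi_{O_1,O_2}|\le\|\tXi_{O_1,O_2}\|_2\,\|\Xi_{O_1,O_2}\|_2=\|\tXi_{O_1,O_2}\|_2^2$ by Cauchy--Schwarz. I expect the only genuinely nontrivial step to be the orthogonality $WW^{T}=d^2\bbone$, which rests entirely on the counting identity $\sum_R\epsilon(S,R)=d^2\delta_{S,\bbone}$; everything else is bookkeeping in the Pauli basis. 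Should one wish to avoid the character argument, the norm equality can instead be obtained by contracting $O_1^{\otimes2}\otimes O_2^{\otimes2}$ against $\sum_{P}P^{\otimes4}$ for both sides, but that route requires the fourth Pauli moment and is considerably more laborious.
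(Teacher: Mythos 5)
Your proof is correct, and it reaches the result by a recognizably different organization than the paper's, though both rest on the same underlying fact: the orthogonality of Pauli commutation signs. The paper expands $O_1$ in the Pauli basis, writes $\|\tXi_{O_1,O_2}\|_2^2$ as a double sum over $P',P''$, and collapses it with the twirling identity $\sum_{P} PP'P\otimes PP''P = d^2\delta_{P',P''}\,P'\otimes P'$ (which it asserts without proof); you instead expand $O_2$, extract the pointwise linear relation $\tXi_{O_1,O_2}=\tfrac{1}{d}W\,\Xi_{O_1,O_2}$ with $W_{PR}=\epsilon(P,R)$, and prove that $\tfrac{1}{d}W$ is a symmetric orthogonal involution via the character-sum identity $\sum_R \epsilon(S,R)=d^2\delta_{S,\bbone}$. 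The two key identities are equivalent --- writing $PP'P=\epsilon(P,P')P'$ and $PP''P=\epsilon(P,P'')P''$ turns the paper's twirling identity into exactly your statement $WW^\top=d^2\bbone$ --- so the computational core is shared, and the final Cauchy--Schwarz step is identical in both arguments. What your packaging buys: it is more self-contained (you actually prove the counting identity, via the nontrivial-character argument, rather than assert its twirled form), and it isolates a slightly stronger structural fact, namely that $\tXi_{O_1,O_2}$ and $\Xi_{O_1,O_2}$ are related by a fixed symmetric orthogonal map that does not depend on $O_1,O_2$, from which the norm equality is immediate. The paper's direct computation is shorter on the page but leaves that structure implicit.
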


\begin{lemma}\label{lem:CharOrhoNorm}
	Suppose $\rho\in\caD(\caH)$ and $O\in\caL_0^{\rmH}(\caH)$. Then
	\begin{equation}\label{eq:CharOrhoNorm}
|\tXi_{\rho, O}\cdot\Xi_{\rho, O}|\leq \|\tXi_{\rho, O}\|_2^2=	\|\Xi_{\rho, O}\|_2^2 \le \|\Xi_O^2\|_{[d]}\le\|\Xi_O\|_2^2= d\|O\|_2^2. 
	\end{equation}
\end{lemma}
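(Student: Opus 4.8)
The chain of relations splits naturally into a \emph{left half} that is inherited directly from the preceding lemma and a \emph{right half} that carries the genuinely new content. The plan is to dispatch the left half and the final equality first, since they are essentially bookkeeping. The leftmost inequality $|\tXi_{\rho, O}\cdot\Xi_{\rho, O}|\le \|\tXi_{\rho, O}\|_2^2$ together with the equality $\|\tXi_{\rho, O}\|_2^2=\|\Xi_{\rho, O}\|_2^2$ are immediate specializations of \lref{lem:CrossChar}: since $\rho\in\caD(\caH)$ and $O\in\caL_0^{\rmH}(\caH)$ are both Hermitian, the lemma applies verbatim with $O_1=\rho$ and $O_2=O$. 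Likewise the rightmost equality $\|\Xi_O\|_2^2=d\|O\|_2^2$ is just the norm identity $\tr(O^2)=\|\Xi_O\|_2^2/d$ from \eref{eq:CharOrelation} combined with $\|O\|_2^2=\tr(O^2)$, valid because $O$ is Hermitian.

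The core of the argument is therefore the two-step bound $\|\Xi_{\rho, O}\|_2^2\le \|\Xi_O^2\|_{[d]}\le\|\Xi_O\|_2^2$. The second step is trivial: every entry of $\Xi_O^2$ is nonnegative, so the sum of its $d$ largest entries is at most the sum of all $d^2$ entries, which is $\|\Xi_O\|_2^2$. For the first step I would use $\Xi_{\rho, O}(P)=\Xi_\rho(P)\Xi_O(P)$ to write $\|\Xi_{\rho, O}\|_2^2=\sum_{P\in\bcaP_n}\Xi_\rho(P)^2\,\Xi_O(P)^2$, and then read this off as a weighted sum of the nonnegative values $v_P:=\Xi_O(P)^2$ with weights $w_P:=\Xi_\rho(P)^2$. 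The two properties of the weights I would extract are: (i) $0\le w_P\le 1$, since $|\Xi_\rho(P)|=|\tr(\rho P)|\le \|\rho\|_1\|P\|_\infty=1$ because each Pauli operator in $\bcaP_n$ has operator norm $1$ and $\|\rho\|_1=\tr(\rho)=1$; and (ii) $\sum_P w_P=\|\Xi_\rho\|_2^2=d\wp(\rho)\le d$, again from \eref{eq:CharOrelation} together with $\wp(\rho)\le 1$.

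With these two constraints in hand, the remaining inequality is a rearrangement (linear-programming) bound: maximizing the linear functional $\sum_P w_P v_P$ over all weight vectors obeying $0\le w_P\le 1$ and $\sum_P w_P\le d$ is attained by placing weight $1$ on each of the $d$ indices $P$ carrying the largest values $v_P$ and weight $0$ on the rest, which yields exactly $\|\Xi_O^2\|_{[d]}$. Concretely, one can argue that the vertices of this feasible polytope are $0/1$ vectors with at most $d$ ones (using that $d=2^n$ is an integer), so the maximum over the polytope equals the maximum over such selections, optimized by choosing the $d$ largest $v_P$. I expect this majorization step to be the main, though still elementary, obstacle, since it is the only place where the somewhat unusual truncated quantity $\|\Xi_O^2\|_{[d]}$ genuinely enters; an explicit exchange argument (pushing any admissible weight toward the extremal configuration without decreasing the objective) is an equally viable alternative. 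Everything else in the chain reduces to \eref{eq:CharOrelation} and \lref{lem:CrossChar}.
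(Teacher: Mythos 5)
Your proof is correct and follows essentially the same route as the paper's: the left half and the final equality are dispatched via \lref{lem:CrossChar} and \eref{eq:CharOrelation}, and the key middle inequality $\|\Xi_{\rho,O}\|_2^2\le\|\Xi_O^2\|_{[d]}$ is obtained from the same two facts the paper uses, namely $\Xi_\rho^2(P)\le 1$ and $\|\Xi_\rho^2\|_1=d\wp(\rho)\le d$. The only difference is that you spell out the resulting rearrangement/linear-programming step explicitly, which the paper leaves implicit.
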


\begin{lemma}\label{lem:CharOrhoNormF}
    Suppose $\rho\in\caD(\caH)$ and $O=|\phi\>\<\phi|-\bbone/d$ with $|\phi\>\in\caH$. Then
    \begin{align}
    \tXi_{\rho, O}\cdot\Xi_{\rho, O} &= \tXi_{\rho, \phi}\cdot\Xi_{\rho, \phi} - 2F+\frac{1}{d}< \|\Xi_{\rho, O}\|_2^2=\|\Xi_{\rho, \phi}\|_2^2 - 1< \|\Xi_\phi^2\|_{[d]},\label{eq:CharOrhoNormF}\\
    \tXi_{\phi, \phi}\cdot\Xi_{\phi, \phi} &= \|\Xi_{\phi, \phi}\|_2^2 = 2^{-M_2(\phi)}d, \label{eq:CharOrhoNormFIdeal}
    \end{align} 
    where $F=\<\phi|\rho|\phi\>$ is the fidelity between $\rho$ and $|\phi\>$ and $M_2(\phi)$ is the 2-SRE of $|\phi\>$.
\end{lemma}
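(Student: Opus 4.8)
The plan is to exploit the elementary relationship between the characteristic and cross characteristic functions of $O=|\phi\>\<\phi|-\bbone/d$ and those of the pure state $\phi:=|\phi\>\<\phi|$, reducing every quantity in the statement to a $\phi$-quantity plus a correction localized on the identity Pauli. Writing $\phi$ for $|\phi\>\<\phi|$, I would first record the two facts $\tr(\bbone P)=d\,\delta_{P,\bbone}$ and $P^2=\bbone$ for every $P\in\bcaP_n$. From these, $\Xi_{\rho,O}(P)=\tr(\rho P)\tr(OP)=\Xi_{\rho,\phi}(P)$ for every nontrivial $P$ (since $\tr(OP)=\tr(\phi P)$ there), while $\Xi_{\rho,O}(\bbone)=\tr(\rho)\tr(O)=0$ versus $\Xi_{\rho,\phi}(\bbone)=1$; and for the twisted function $\tXi_{\rho,O}(P)=\tr(\rho P\phi P)-\tfrac1d\tr(\rho P^2)=\tXi_{\rho,\phi}(P)-\tfrac1d$ for \emph{every} $P$.

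The middle equality $\|\Xi_{\rho,O}\|_2^2=\|\Xi_{\rho,\phi}\|_2^2-1$ is then immediate, because the two vectors agree off $\bbone$ and differ there by $0$ versus $1$. For the first equality I would expand
\[
\tXi_{\rho,O}\cdot\Xi_{\rho,O}=\sum_P\tXi_{\rho,\phi}(P)\,\Xi_{\rho,O}(P)-\tfrac1d\sum_P\Xi_{\rho,O}(P).
\]
In the first sum the $P=\bbone$ term vanishes ($\Xi_{\rho,O}(\bbone)=0$) and $\Xi_{\rho,O}=\Xi_{\rho,\phi}$ off $\bbone$, so it equals $\tXi_{\rho,\phi}\cdot\Xi_{\rho,\phi}-\tXi_{\rho,\phi}(\bbone)\Xi_{\rho,\phi}(\bbone)=\tXi_{\rho,\phi}\cdot\Xi_{\rho,\phi}-F$, using $\tXi_{\rho,\phi}(\bbone)=\tr(\rho\phi)=F$ and $\Xi_{\rho,\phi}(\bbone)=1$. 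The second sum is $\tfrac1d\sum_P\Xi_{\rho,O}(P)=\tr(\rho O)=F-\tfrac1d$ by \eref{eq:CharOrelation}. Collecting the pieces gives $\tXi_{\rho,\phi}\cdot\Xi_{\rho,\phi}-2F+\tfrac1d$, as claimed.

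For the two strict inequalities I would argue separately. The right one, $\|\Xi_{\rho,\phi}\|_2^2-1<\|\Xi_\phi^2\|_{[d]}$, follows by repeating the majorization estimate behind \lref{lem:CharOrhoNorm}: in $\|\Xi_{\rho,\phi}\|_2^2=\sum_P\Xi_\rho(P)^2\,\Xi_\phi(P)^2$ the weights satisfy $\Xi_\rho(P)^2=\tr(\rho P)^2\le1$ and sum to $\|\Xi_\rho\|_2^2=d\wp\le d$ (tracelessness of $O$ is not needed), so the weighted sum is bounded by the sum of the $d$ largest entries of $\Xi_\phi^2$; subtracting $1$ and using $\|\Xi_\phi^2\|_{[d]}\ge1$ yields the strict bound. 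The left one, $\tXi_{\rho,O}\cdot\Xi_{\rho,O}<\|\Xi_{\rho,O}\|_2^2$, rests on \lref{lem:CharOrhoNorm}, which gives $\|\tXi_{\rho,O}\|_2=\|\Xi_{\rho,O}\|_2$ and hence $\tXi_{\rho,O}\cdot\Xi_{\rho,O}\le\|\Xi_{\rho,O}\|_2^2$ by Cauchy--Schwarz, with equality only if $\tXi_{\rho,O}=\Xi_{\rho,O}$ as vectors.

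Finally, \eref{eq:CharOrhoNormFIdeal} is the cleanest piece: for a pure state $\tXi_{\phi,\phi}(P)=\tr(\phi P\phi P)=\<\phi|P|\phi\>^2=\tr(\phi P)^2=\Xi_{\phi,\phi}(P)$, so the twisted and ordinary cross functions coincide identically; therefore $\tXi_{\phi,\phi}\cdot\Xi_{\phi,\phi}=\|\Xi_{\phi,\phi}\|_2^2=\sum_P\Xi_\phi(P)^4=\|\Xi_\phi\|_4^4$, which equals $2^{-M_2(\phi)}d$ by \eref{eq:DefM2} with $\wp=1$. The step I expect to be most delicate is the strictness of the left inequality. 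Expanding $\rho,O$ in the Pauli basis and using $PRP=(-1)^{\<P,R\>}R$ shows that $\tXi_{\rho,O}$ is the symplectic Fourier transform of $\Xi_{\rho,O}$, an orthogonal involution, so equality in Cauchy--Schwarz occurs exactly when $\Xi_{\rho,O}$ lies in its $+1$ eigenspace; this can genuinely happen in degenerate cases (e.g.\ $\rho=\bbone/d$ forces $\Xi_{\rho,O}\equiv0$). I would therefore either restrict to states $\rho$ sharing a nontrivial Pauli with $\phi$ — in which case comparing the two vectors at a single Pauli (for instance at $\bbone$, where they read $F-\tfrac1d$ and $0$) already separates them — or invoke the spectral gap of this involution to secure the strict inequality.
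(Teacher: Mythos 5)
Your derivation of the three equalities is essentially the paper's own proof: both split the Pauli sums at $P=\bbone$, using $\tr(P)=d\delta_{P,\bbone}$, $\tXi_{\rho,O}(P)=\tXi_{\rho,\phi}(P)-\tfrac1d$, $\Xi_{\rho,O}(\bbone)=0$, and $\tfrac{1}{d}\sum_P\Xi_{\rho,\phi}(P)=F$; and \eqref{eq:CharOrhoNormFIdeal} is in both cases the observation that $\tXi_{\phi,\phi}=\Xi_{\phi,\phi}=\Xi_\phi^2$ combined with the definition \eqref{eq:DefM2}. Your treatment of the last inequality is in fact more careful than the paper's: the paper cites \lref{lem:CharOrhoNorm}, whose hypothesis $O\in\caL^\rmH_0(\caH)$ does not literally cover $|\phi\>\<\phi|$, whereas you rerun its majorization argument for the pair $(\rho,\phi)$ (correctly noting that tracelessness is never used there) and obtain $\|\Xi_{\rho,O}\|_2^2=\|\Xi_{\rho,\phi}\|_2^2-1\le\|\Xi_\phi^2\|_{[d]}-1<\|\Xi_\phi^2\|_{[d]}$, which yields strictness for free.

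Your hesitation about the remaining strict inequality is well founded, and the difficulty lies in the statement, not in your argument. Take $\rho=\bbone/d$: then $\tr(\rho P)=\delta_{P,\bbone}$ while $\tr(O)=0$, and $\tr(\rho POP)=\tr(O)/d=0$, so $\Xi_{\rho,O}\equiv\tXi_{\rho,O}\equiv0$ and the second inequality in \eqref{eq:CharOrhoNormF} reads $0<0$, which is false; it can only hold as "$\le$" in general. The paper's proof does not establish strictness either: it asserts that this inequality "follows directly from \lref{lem:CharOrhoNorm}", but that lemma is non-strict. So the right move is exactly what you did prove---the Cauchy--Schwarz bound $\tXi_{\rho,O}\cdot\Xi_{\rho,O}\le\|\Xi_{\rho,O}\|_2^2$, with equality iff $\tXi_{\rho,O}=\Xi_{\rho,O}$ as vectors---together with the counterexample; your observation that the two vectors differ at $P=\bbone$ whenever $F\neq 1/d$ is then the correct sufficient condition for strictness, and no "spectral gap" argument can rescue the unrestricted claim, since the counterexample realizes equality. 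The weakening is harmless downstream: for instance, in \lref{lem:TrRiOrhoF} the strict bound still follows because the other inequality in the chain is strict.
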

Here   $ \Xi_\phi$, $\Xi_{\rho,\phi}$, and  $\Xi_{\phi,\phi}$ are the abbreviations of $\Xi_{|\phi\>\<\phi|}$, $\Xi_{\rho, |\phi\>\<\phi|}$, and $\Xi_{ |\phi\>\<\phi|, |\phi\>\<\phi|}$, respectively; a similar convention applies to $\tXi_{\rho,\phi}$, and  $\tXi_{\phi,\phi}$.

\begin{figure}[t]
	\centering
	\includegraphics[width=0.4\textwidth]{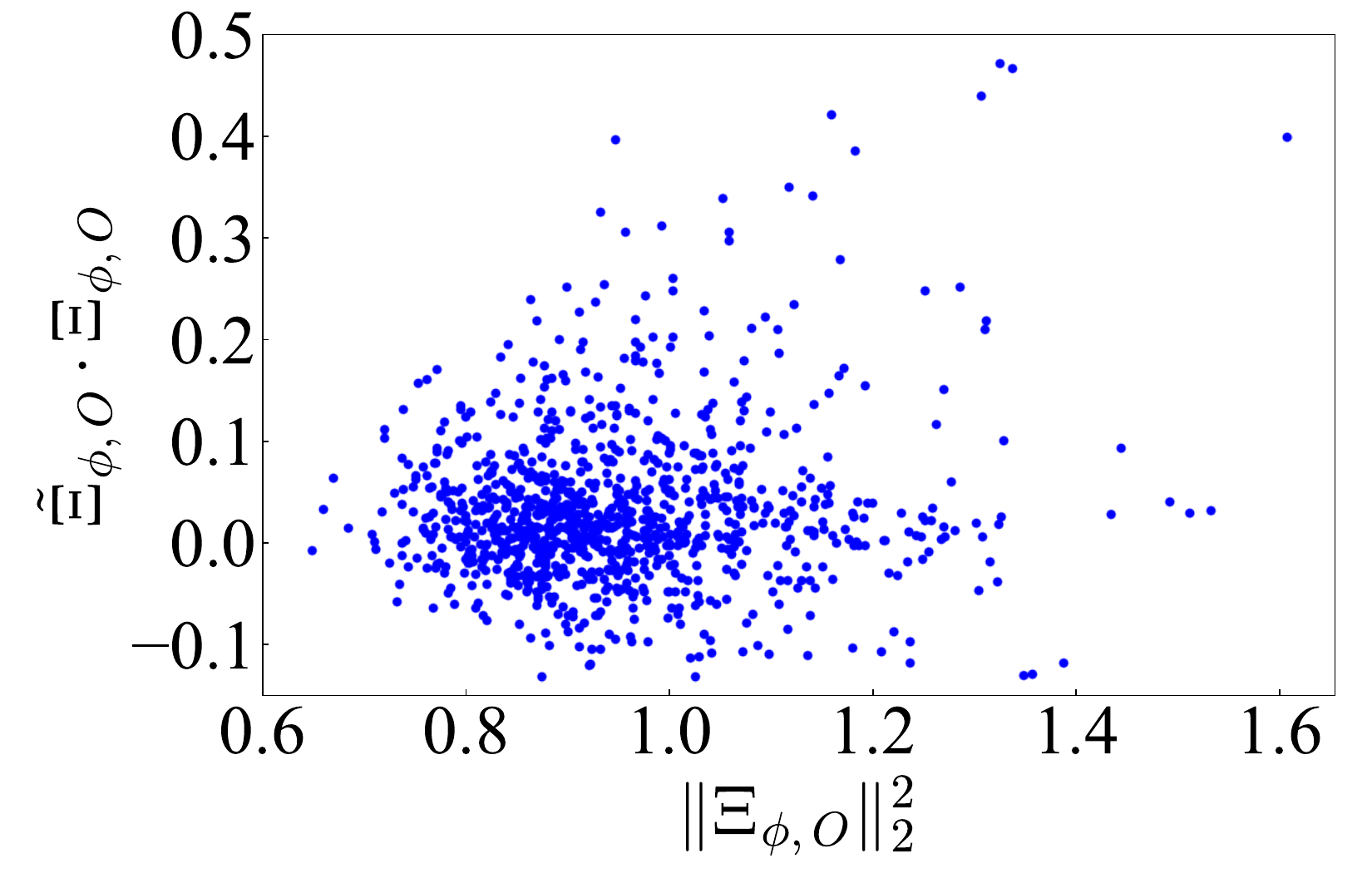}
	\caption{A scatter plot about the relation between $\|\Xi_{\phi, O}\|_2^2$ and $\tXi_{\phi, O}\cdot\Xi_{\phi, O}$ for a five-qubit quantum system. Here $\phi$ is a fixed random pure state, and $O$ is sampled (1000 times) from a unitary-invariant ensemble generated from a 
fixed Hermitian operator that is normalized with respect to the Hilbert-Schmidt norm. 
		}
	\label{fig:ratio}
\end{figure}

\begin{lemma}\label{lem:CharAvg}
Suppose $\rho\in\caD(\caH)$ and $O\in\caL_0^{\rmH}(\caH)$. Then 
\begin{gather}
\bbE_{\rho'\sim\caE(\rho)} [\tr(\rho' O)]^2=\bbE_{O'\sim\caE(O)} [\tr(\rho O')]^2=\frac{d\wp(\rho)-1}{d(d^2-1)}\|O\|_2^2, \label{eq:rhoOtrAvg}\\
\bbE_{\rho'\sim\caE(\rho)} \|\Xi_{\rho', O}\|_2^2 =\bbE_{O'\sim\caE(O)} \|\Xi_{\rho, O'}\|_2^2= \frac{d^2\wp(\rho)-d}{d^2-1}\|O\|_2^2,  \label{eq:CharNormAvg} \\
\bbE_{\rho'\sim\caE(\rho)} \; \tXi_{\rho', O}\cdot\Xi_{\rho', O} =\bbE_{O'\sim\caE(O)} \; \tXi_{\rho, O'}\cdot\Xi_{\rho, O'} = \frac{d\wp(\rho)-1}{d^2-1}\|O\|_2^2. \label{eq:CharProdAvg}
\end{gather}
\end{lemma}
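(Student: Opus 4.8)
The plan is to expand each averaged quantity as a Haar (equivalently, 2-design) expectation of a finite sum over the Pauli basis $\bcaP_n$, interchange the expectation with the sum, and reduce every term to one of the twirl formulas already collected in \lref{lem:UnitaryAvg} and \lref{lem:TrOrhoAvg}. Throughout I would write $\rho'=U\rho U^\dag$ and $O'=UOU^\dag$ with $U$ drawn from the 2-design, and repeatedly invoke the normalization identities in \eref{eq:CharOrelation}, in particular $\sum_{P\in\bcaP_n}[\tr(\rho P)]^2=\|\Xi_\rho\|_2^2=d\wp$ and $\sum_{P\in\bcaP_n}[\tr(OP)]^2=\|\Xi_O\|_2^2=d\|O\|_2^2$, together with $\tr(P)=d$ for $P=\bbone$ and $\tr(P)=0$ otherwise.

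\Eref{eq:rhoOtrAvg} is immediate: since $\tr(\rho'O)=\tr(U\rho U^\dag O)$ and $\tr(\rho O')=\tr(U^\dag\rho U\,O)$ have the same distribution under the symmetry $U\mapsto U^\dag$ of the 2-design average, both expectations coincide with the quantity evaluated in \lref{lem:TrOrhoAvg}, namely $(d\wp-1)\|O\|_2^2/[d(d^2-1)]$.

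For \eref{eq:CharNormAvg} I would pull the Pauli sum outside the expectation, leaving $\sum_P[\tr(\rho P)]^2\,\bbE_U[\tr(O'P)]^2$ for the $O'$-average (and the mirror expression for the $\rho'$-average). Applying \eref{eq:TraceAvg} with the traceless operator $O$ and the Pauli $P$, the $P=\bbone$ contribution vanishes and each nontrivial Pauli contributes $d\|O\|_2^2/(d^2-1)$ (since $\tr P=0$ and $\|P\|_2^2=d$); summing $\sum_{P\ne\bbone}[\tr(\rho P)]^2=d\wp-1$ then yields the stated value. The $\rho'$-average is identical after interchanging the roles of $\rho$ and $O$, so the two sides agree.

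The genuine work is \eref{eq:CharProdAvg}, where the obstacle is the twisted factor $\tr(\rho PO'P)$ (or $\tr(\rho'POP)$). The observation that unlocks it is that, once the factor not containing the twirled operator is pulled out, the remainder is \emph{quadratic} in the twirled operator, so the second-moment formula \eref{eq:TensorAvg} applies. Concretely, for the $O'$-average I would rewrite $\tr(\rho PO'P)\tr(O'P)=\tr[(P\rho P\otimes P)(O'\otimes O')]$ and substitute $\bbE_U(O'\otimes O')=-\|O\|_2^2\,\bbone^{\otimes 2}/[d(d^2-1)]+\|O\|_2^2\,\SWAP/(d^2-1)$ from \eref{eq:TensorAvg}; using $P^2=\bbone$, the trace against $\bbone^{\otimes 2}$ reduces to $\tr(P\rho P)\tr P=\tr P$ and the trace against $\SWAP$ reduces to $\tr(P\rho P\cdot P)=\tr(\rho P)$, so after multiplying by the leftover $\tr(\rho P)$ and summing over $P$ everything collapses to $\sum_P\tr(\rho P)\tr P=d$ and $\sum_P[\tr(\rho P)]^2=d\wp$, giving $(d\wp-1)\|O\|_2^2/(d^2-1)$. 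The $\rho'$-average follows from the analogous rewriting $\tr(\rho'POP)\tr(\rho'P)=\tr[(POP\otimes P)(\rho'\otimes\rho')]$; now the $\bbone^{\otimes 2}$ part vanishes because $\tr(POP)=\tr O=0$, and the $\SWAP$ part alone reproduces the same value, confirming the claimed symmetry between $\caE(\rho)$ and $\caE(O)$.
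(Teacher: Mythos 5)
Your proposal is correct and follows essentially the same route as the paper's proof: expand each quantity as a Pauli sum, interchange expectation with summation, and reduce to the 2-design twirl formulas \eref{eq:TensorAvg}--\eref{eq:TraceAvg} of \lref{lem:UnitaryAvg} (with \eref{eq:rhoOtrAvg} read off directly from \lref{lem:TrOrhoAvg}), using $\tr P=d\delta_{P,\bbone}$ and the tracelessness of $O$ to kill the cross terms. Your tensor-product rewriting $\tr(\rho PO'P)\tr(O'P)=\tr[(P\rho P\otimes P)(O'\otimes O')]$ is the same device the paper uses (written there as $\tr[(\rho'\otimes\rho')(P\otimes POP)]$), so the two arguments coincide up to bookkeeping.
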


\Lref{lem:CharAvg} shows that on average $\tXi_{O_1, O_2}\cdot\Xi_{O_1, O_2}$ is exponentially smaller than $\|\Xi_{O_1, O_2}\|_2^2$ with respect to  the number $n$ of qubits, as illustrated in  \fref{fig:ratio}.
However, the two terms are comparable in certain special situations that are relevant to fidelity estimation (see \thref{thm:VarFCl}). 
When $|\phi\>$ is a stabilizer state, $\rho = |\phi\>\<\phi|$, and $O = |\phi\>\<\phi| - \bbone/d$ for example,
we have $\|\Xi_{\rho, O}\|_2^2-\tXi_{\rho, O}\cdot\Xi_{\rho, O}=\caO(1)$ and $\|\Xi_{\rho, O}\|_2^2$, $\tXi_{\rho, O}\cdot\Xi_{\rho, O}=\caO(d)$ by \lref{lem:CharOrhoNormF}.

Incidentally, by virtue of \thref{thm:VarCl} and \lref{lem:CharAvg}, it is straightforward to determine $V_*(\caE(O),\rho)$
and provide an alternative proof of \eref{eq:V*OrhoAvg} in \thref{thm:AverageShadow} for the ensemble $\caU=\Cl_n$. 
Note that the averages in Eqs.~\eqref{eq:rhoOtrAvg}-\eqref{eq:CharProdAvg}
are all proportional to  $V_*(\caE(O),\rho)$.  This fact is not a mere coincidence.

\subsection{\label{app:CharProofs}Proofs of \lsref{lem:CrossChar}-\ref{lem:CharAvg}}

\begin{proof}[Proof of \lref{lem:CrossChar}]
	By virtue of \eqsref{eq:CharCrossChar}{eq:CharOrelation}	
	we can deduce that 
	\begin{align}
	\|\tXi_{O_1, O_2}\|_2^2&=\sum_{P\in \bcaP_n} [\tr(O_1 P O_2 P)]^2=\frac{1}{d^2}\sum_{P,P', P''\in \bcaP_n} \Xi_{O_1} (P') \Xi_{O_1}(P'')\tr(P'PO_2P)\tr(P''PO_2P)\nonumber\\
	&=\frac{1}{d^2}\sum_{P', P''\in \bcaP_n} \Xi_{O_1} (P') \Xi_{O_1}(P'')\sum_{P\in \bcaP_n} \tr[(PP'P\otimes PP''P)(O_2\otimes O_2)]
	\nonumber\\
	&=\sum_{P' \in \bcaP_n} \Xi^2_{O_1} (P')[\tr(O_2P')]^2
	=\sum_{P \in \bcaP_n} \Xi^2_{O_1,O_2} (P) = \|\Xi_{O_1, O_2}\|_2^2,
	\end{align}	
	which confirms the equality in \eref{eq:CrossCharNorm}. Here the fourth equality holds because
	\begin{align}
	\sum_{P\in \bcaP_n} PP'P\otimes PP''P=\begin{cases}
	d^2 P'\otimes P' & \mbox{if}\quad  P''=P',\\
	0 &\mbox{otherwise}. 
	\end{cases}
	\end{align}
	
	The inequality in \eref{eq:CrossCharNorm} follows from the equality proved above and  the Cauchy-Schwarz inequality. 
\end{proof}

\begin{proof}[Proof of \lref{lem:CharOrhoNorm}]
	The first inequality and equality in \eref{eq:CharOrhoNorm} follow from \lref{lem:CrossChar}, while the last inequality and equality hold by definition. The second inequality in \eref{eq:CharOrhoNorm} can be proved as follows,
	\begin{equation}
	\|\Xi_{\rho, O}\|_2^2=\sum_{P\in \bcaP_n}\Xi_\rho^2(P) \Xi_O^2(P)  \le \|\Xi_O^2\|_{[d]},
	\end{equation}
	where the inequality holds because  $\Xi_\rho^2(P) \le 1$ for each $P\in\bcaP_n$ and $\|\Xi_\rho^2\|_1=\|\Xi_\rho\|_2^2 = d\wp(\rho) \le d$. 
	This observation completes the proof of \lref{lem:CharOrhoNorm}. 
\end{proof}

\begin{proof}[Proof of \lref{lem:CharOrhoNormF}]
Based on the definition in \eref{eq:CharCrossChar} we can deduce that
\begin{align} 
\|\Xi_{\rho, O}\|_2^2 &= \sum_{P\in\bcaP_n}[\tr(\rho P)]^2\left[\tr(|\phi\>\<\phi|P) -\frac{1}{d}\tr(P) \right]^2=\sum_{P\in\bcaP_n, P\neq \bbone}[\tr(\rho P)]^2\left[\tr(|\phi\>\<\phi|P) \right]^2=\|\Xi_{\rho, \phi}\|_2^2 - 1,\\
\tXi_{\rho, O}\cdot\Xi_{\rho, O} &= \sum_{P\in\bcaP_n} \left[\tr(\rho P |\phi\>\<\phi| P)-\frac{1}{d}\tr(\rho)\right] \tr(\rho P)\left[\tr(|\phi\>\<\phi|P) -\frac{1}{d}\tr(P) \right]\nonumber\\
&=\sum_{P\in\bcaP_n, P\neq \bbone}\tr(\rho P |\phi\>\<\phi| P)\tr(\rho P)\tr(|\phi\>\<\phi|P)-\frac{1}{d}\sum_{P\in\bcaP_n, P\neq \bbone} \tr(\rho P)\tr(|\phi\>\<\phi|P)\nonumber\\
&=\tXi_{\rho, \phi}\cdot\Xi_{\rho, \phi} - 2F+\frac{1}{d},
\end{align}
which imply the first and third equalities in \eref{eq:CharOrhoNormF}. In deriving the above results we have employed the facts that $\tr(P)=d\delta_{P,\bbone}$ and $\sum_{P\in\bcaP_n}\Xi_{\rho,\phi}(P)/d = F$. The second and fourth inequalities follow directly from \lref{lem:CharOrhoNorm}. \Eref{eq:CharOrhoNormFIdeal} is a simple corollary of the definitions in \eqsref{eq:DefM2}{eq:CharCrossChar}.
\end{proof}

\begin{proof}[Proof of \lref{lem:CharAvg}]
\Eref{eq:rhoOtrAvg} is a simple corollary of \lref{lem:TrOrhoAvg}. 

By virtue of \eref{eq:TraceAvg} in \lref{lem:UnitaryAvg} we can deduce that
\begin{align}
\bbE_{\rho'\sim\caE(\rho)} \|\Xi_{\rho', O}\|_2^2
&=\bbE_{\rho'\sim\caE(\rho)}\sum_{P\in\bcaP_n}\left\{[\tr(\rho'P)]^2[\tr(O P)]^2 \right\}\nonumber\\
&=\frac{1}{d(d^2-1)}\sum_{P\in\bcaP_n} 
\left\{\left[d^2\wp(\rho)-d+d(\tr P)^2-\wp(\rho)(\tr P)^2\right][\tr(OP)]^2
\right\}= \frac{d^2\wp(\rho)-d}{d^2-1}\|O\|_2^2,\\
\bbE_{O'\sim\caE(O)} \|\Xi_{\rho, O'}\|_2^2&=\bbE_{O'\sim\caE(O)}\sum_{P\in\bcaP_n}\left\{[\tr(\rho P)]^2 [\tr(O'P)]^2\right\}\nonumber\\
&=\frac{\|O\|_2^2}{d(d^2-1)}\sum_{P\in\bcaP_n} \left\{[d^2-(\tr P)^2][\tr(\rho P)]^2\right\}= \frac{d^2\wp(\rho)-d}{d^2-1}\|O\|_2^2,
\end{align}
which confirm \eref{eq:CharNormAvg}. In deriving the above results we have taken into account the facts that  $\tr(P) = d\delta_{P, \bbone}$ and $\sum_{P\in\bcaP_n}[\tr(\rho P)]^2 =d\tr(\rho^2) =d\wp(\rho)$.

By virtue of \eref{eq:TensorAvg} in \lref{lem:UnitaryAvg} we can  deduce that
\begin{align}
&\bbE_{\rho'\sim\caE(\rho)} \; \tXi_{\rho', O}\cdot\Xi_{\rho', O} =
\bbE_{\rho'\sim\caE(\rho)}\sum_{P\in\bcaP_n} \tr(\rho' P)\tr(OP)\tr(\rho' PO P)=\bbE_{\rho'\sim\caE(\rho)}\sum_{P\in\bcaP_n} \tr(O P)\tr[(\rho'\otimes \rho') (P\otimes PO P)]\nonumber\\
&=\frac{1}{d(d^2-1)}\sum_{P\in\bcaP_n}\left\{\left[d-\wp(\rho) \right]\tr(OP)\tr(P)\tr(POP)+\left[d\wp(\rho)-1\right]
\tr(OP)\tr(P^2OP)\right\}=\frac{d\wp(\rho)-1}{d^2-1}\|O\|_2^2,\\
&\bbE_{O'\sim\caE(O)} \; \tXi_{\rho, O'}\cdot\Xi_{\rho, O'} =\bbE_{O'\sim\caE(O)}\sum_{P\in\bcaP_n} \tr(\rho P)\tr(O'P)\tr(O' P\rho P)=\bbE_{O'\sim\caE(O)}\sum_{P\in\bcaP_n} \tr(\rho P)\tr[(O'\otimes O') (P\otimes P\rho P)]\nonumber\\
& =\frac{\|O\|_2^2}{d(d^2-1)}\sum_{P\in\bcaP_n} 
\left[d\tr(\rho P)\tr(P^2\rho P)-\tr(\rho P)\tr(P)\tr(P\rho P)\right]= \frac{d\wp(\rho)-1}{d^2-1}\|O\|_2^2,
\end{align}
which confirm \eref{eq:CharProdAvg}. 
\end{proof}

\section{Characteristic functions and stabilizer 2-R\'{e}nyi entropies of some pure states}\label{app:Stab2Entropy}
\subsection{Phased W states}
Let  $\theta_1,\theta_2,\ldots,\theta_n$ be $n$ real phases and $\bmtheta=(\theta_1,\theta_2,\ldots,\theta_n)$. Here we shall determine the 2-SRE of the following phased W state,
\begin{equation}
|\rmW_n(\bmtheta)\> = \frac{1}{\sqrt{n}} \sum_{j=1}^n e^{\rmi \theta_j} X_j \Tensor{|0\>}{n}. 
\end{equation}
When $\theta_j=j\theta$ for $j=1,2,\ldots, n$, $|\rmW_n(\bmtheta)\>$ reduces to the generalized W state in \eref{eq:WState}, which was studied in \rscite{Yi2023GeneralW,Catalano2024GeneralW}. If in addition $\theta=0$,  then $|\rmW_n(\bmtheta)\>$ reduces to the W state. 
The 2-SRE of  $|\rmW_n\>$  was  calculated in 
\rcite{Jovan2023WSRE}; the 2-SRE of  $|\rmW_n(\theta)\>$ was  calculated in \rcite{Catalano2024GeneralW} under some assumption. 

Let $P\in \bcaP_n$ be an $n$-qubit Pauli operator. Denote by $n_X(P)$ the number of the  Pauli operator $X$ in the tensor decomposition of  $P$; define  $n_Y(P)$ and  $n_Z(P)$  in a similar way. 
If  $n_X(P)+ n_Y(P)=2$, let  $l_{1,2}(P)$ be the first (second) index for which the tensor factor of $P$ is equal to $X$ or $Y$ and let $l(P)=l_2(P)-l_1(P)$; if $n_X(P)+ n_Y(P)\neq 2$,  let $l(P)=l_1(P)=l_2(P)=0$. Based on these definitions, the characteristic function $\Xi_{\rmW_n(\bmtheta)}(P)$ can be expressed as
\begin{align}
\Xi_{\rmW_n(\bmtheta)}(P) &= \frac{1}{n}\bigl\{\delta_{n_X(P),0}\delta_{n_Y(P),0}[n-2n_Z(P)]+2\delta_{n_X(P)+n_Y(P),2}[\delta_{n_X(P),2}+\delta_{n_Y(P),2}]\cos\left(\theta_{l_2}-\theta_{l_1}\right)\nonumber\\
&\quad  - 2\delta_{n_X(P),1}\delta_{n_Y(P),1}\sin\left(\theta_{l_2}-\theta_{l_1}\right)\bigr\},
\end{align}
where $l_1$ and $l_2$ are shorthands for $l_1(P)$ and $l_2(P)$, respectively.  When $P=\Tensor{Z}{n}$ for example,  $\Xi_{\rmW_n(\bmtheta)}(P)=-1$.  As a simple corollary, we can reproduce the  characteristic functions  
$\Xi_{\rmW_n}(P)$ and $\Xi_{\rmW_n(\theta)}(P)$
determined in Refs.~\cite{Jovan2023WSRE,Catalano2024GeneralW},
\begin{align}
\Xi_{\rmW_n}(P) &= \frac{1}{n}\left\{\delta_{n_X(P),0}\delta_{n_Y(P),0}[n-2n_Z(P)]+2\delta_{n_X(P)+n_Y(P),2}[\delta_{n_X(P),2}+\delta_{n_Y(P),2}]\right\},\label{eq:CharFW}\\
\Xi_{\rmW_n(\theta)}(P) &= \frac{1}{n}\left\{\delta_{n_X(P),0}\delta_{n_Y(P),0}[n-2n_Z(P)]+2\delta_{n_X(P)+n_Y(P),2}[\delta_{n_X(P),2}+\delta_{n_Y(P),2}]\cos(l\theta)\right.\nonumber\\
&\quad \left.-2\delta_{n_X(P),1}\delta_{n_Y(P),1}\sin(l\theta)\right\},\label{eq:CharFGeneralW}
\end{align}
where $l$ is a shorthand for $l(P)$.

\begin{proposition}\label{pro:CharNormPhasedW}
	Suppose $\theta,\theta_1,\theta_2,\ldots,\theta_n$ are $n+1$ real phases with $n\geq 1$ and
	$\bmtheta=(\theta_1,\theta_2,\ldots,\theta_n)$. Then 
	\begin{align}
	\|\Xi_{\rmW_n(\bmtheta)}\|_4^4&=\frac{d}{n^4} \left[6n(n-1)+\Biggl|\sum_{j=1}^n e^{4\rmi\theta_j}\Biggr|^2\right], & M_2(\rmW_n(\bmtheta))&=\log_2\frac{n^4}{\Bigl[6n(n-1)+\Bigl|\sum_{j=1}^n e^{4\rmi\theta_j}\bigr|^2\Bigr]},  \label{eq:CharNormPhasedW}\\
	\|\Xi_{\rmW_n}\|_4^4 &=\frac{d(7n-6)}{n^3},& \quad 
	M_2(\rmW_n) &= \log_2\frac{n^3}{7n-6}, \label{eq:CharNormW}\\
	\|\Xi_{\rmW_n(\theta)}\|_4^4
	&= \frac{d\left[6n^2-6n + \sin^2(2n\theta)/\sin^2(2\theta)\right]}{n^4},& \quad M_2(\rmW_n(\theta))& =\log_2 \frac{n^4}{6n^2-6n + \sin^2(2n\theta)/\sin^2(2\theta)}. \label{eq:CharNormGeneralW}
	\end{align}
\end{proposition}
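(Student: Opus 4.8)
The plan is to reduce the entire proposition to a single computation of $\|\Xi_{\rmW_n(\bmtheta)}\|_4^4$. Because $|\rmW_n(\bmtheta)\>$ is pure we have $\wp=1$ and $\|\Xi_{\rmW_n(\bmtheta)}\|_2^2=d$, so the definition in \eref{eq:DefM2} reduces to $M_2(\rmW_n(\bmtheta))=\log_2\bigl(d/\|\Xi_{\rmW_n(\bmtheta)}\|_4^4\bigr)$; hence the 2-SRE in \eref{eq:CharNormPhasedW} follows at once from the stated value of the quartic norm. The two corollaries \eqsref{eq:CharNormW}{eq:CharNormGeneralW} are then pure specializations: set $\theta_j=0$ (so $|\sum_j e^{4\rmi\theta_j}|^2=n^2$), and set $\theta_j=j\theta$, for which the geometric sum gives $|\sum_{j=1}^n e^{4\rmi j\theta}|^2=\sin^2(2n\theta)/\sin^2(2\theta)$ via $|e^{\rmi\alpha}-1|^2=4\sin^2(\alpha/2)$. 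So everything rests on evaluating $\sum_{P\in\bcaP_n}\Xi_{\rmW_n(\bmtheta)}^4(P)$ from the characteristic function stated above the proposition.

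I would split the support of $\Xi_{\rmW_n(\bmtheta)}$ into two disjoint families. \emph{Family A} comprises the $P\in\{I,Z\}^{\otimes n}$, on which $\Xi=(n-2n_Z)/n$; its contribution to the quartic norm is $n^{-4}\sum_{w=0}^n\binom{n}{w}(n-2w)^4$. I would evaluate this binomial fourth moment by writing $\sum_w\binom{n}{w}(n-2w)^4=2^n\,\bbE\bigl[(\sum_i\epsilon_i)^4\bigr]$ for $n$ independent signs $\epsilon_i=\pm1$, giving $2^n(3n^2-2n)$ and hence a Family A contribution of $d(3n^2-2n)/n^4$.

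\emph{Family B} comprises the $P$ with $n_X(P)+n_Y(P)=2$, and is the delicate step. For each choice of the two positions $l_1<l_2$ carrying the $X/Y$ factors, the remaining $n-2$ factors may independently be $I$ or $Z$ and, acting as $+1$ on $|0\>$, leave the value unchanged, so they merely contribute a multiplicity $2^{n-2}$. Among the four $X/Y$ assignments at $(l_1,l_2)$, the two like assignments $\{XX,YY\}$ each give $(2/n)\cos(\theta_{l_2}-\theta_{l_1})$ while the two mixed assignments $\{XY,YX\}$ each give a value of magnitude $(2/n)|\sin(\theta_{l_2}-\theta_{l_1})|$, so their fourth powers sum to $(32/n^4)(\cos^4\Delta+\sin^4\Delta)$ with $\Delta=\theta_{l_2}-\theta_{l_1}$. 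The identity $\cos^4\Delta+\sin^4\Delta=\tfrac34+\tfrac14\cos(4\Delta)$ then collapses this, and summing over $2^{n-2}$ and over all $\binom{n}{2}$ pairs, together with $\sum_{l_1<l_2}\cos\bigl(4(\theta_{l_2}-\theta_{l_1})\bigr)=\tfrac12\bigl(|\sum_j e^{4\rmi\theta_j}|^2-n\bigr)$, yields the Family B contribution $d\bigl[3n^2-4n+|\sum_j e^{4\rmi\theta_j}|^2\bigr]/n^4$. Adding the two families produces $\|\Xi_{\rmW_n(\bmtheta)}\|_4^4=d\bigl[6n(n-1)+|\sum_j e^{4\rmi\theta_j}|^2\bigr]/n^4$, which is \eref{eq:CharNormPhasedW}. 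The main obstacle is exactly this Family B bookkeeping — tracking the free $I/Z$ factors, the four $X/Y$ assignments, and the phase-difference trigonometry simultaneously — whereas the rest reduces to a standard binomial moment and a geometric series.
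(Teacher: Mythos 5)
Your proof is correct, and it takes a genuinely different route from the paper's. The paper never sums over Pauli operators directly: it invokes the fourth-moment identity $\|\Xi_\psi\|_4^4=d^2\tr\bigl[P_n\,\psi^{\otimes 4}\bigr]$, where $P_n$ is the stabilizer projector onto the code stabilized by $\{P^{\otimes 4}\}_{P\in\bcaP_n}$, expands $P_n$ in the product basis $|\varphi(\bfu_1,\ldots,\bfu_n)\>$ built from $\scrS=\{0000,0011,0101,0110\}$, and reduces the computation to a handful of overlaps $|\<\varphi(\bfu_1,\ldots,\bfu_n)|\rmW_n(\bmtheta)\>^{\otimes 4}|^2$: the all-$0000$ configuration yields the phase term $\frac{d}{n^4}|\sum_j e^{4\rmi\theta_j}|^2$, and the $3n(n-1)/2$ paired configurations each yield $4d/n^4$, giving $6n(n-1)d/n^4$. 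You instead evaluate $\sum_P\Xi^4(P)$ directly from the explicit characteristic function, splitting into the $\{I,Z\}^{\otimes n}$ family (a Rademacher fourth moment, $3n^2-2n$) and the weight-two $X/Y$ family (trigonometric bookkeeping producing $3n^2-4n+|\sum_j e^{4\rmi\theta_j}|^2$); note the two decompositions even distribute the phase-sum term differently, so this is not the same calculation in disguise. Your route is more elementary and self-contained — it needs only the characteristic function, a binomial moment, and the identity $\cos^4\Delta+\sin^4\Delta=\tfrac34+\tfrac14\cos(4\Delta)$ — while the paper's route leans on the Clifford-commutant/stabilizer-projector machinery it has already built for its main theorems, trading combinatorial bookkeeping for a few overlap evaluations, and scaling more gracefully to states whose characteristic functions are not written out explicitly. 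All your intermediate values check out (Family A: $d(3n^2-2n)/n^4$; Family B: $d[3n^2-4n+|\sum_j e^{4\rmi\theta_j}|^2]/n^4$; total matching \eref{eq:CharNormPhasedW}), your handling of the sign ambiguity in the mixed $XY/YX$ assignments via magnitudes is sound, and the two specializations \eqsref{eq:CharNormW}{eq:CharNormGeneralW} follow exactly as you state, with the purity reduction $M_2=\log_2\bigl(d/\|\Xi\|_4^4\bigr)$ disposing of the entropy claims.
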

The 2-SRE of  $|\rmW_n\>$ in \eref{eq:CharNormW}  agrees with the result derived in 
\rcite{Jovan2023WSRE}. \Pref{pro:CharNormPhasedW} also implies that
\begin{equation}\label{eq:gWstateChar4normLUB}
\frac{6d(n-1)}{n^3}\leq \|\Xi_{\rmW_n(\bmtheta)}\|_4^4\leq \frac{d(7n-6)}{n^3},\quad 
\log_2 \frac{n^3}{7n-6} \leq  M_2(\rmW_n(\bmtheta))\leq \log_2 \frac{n^3}{6(n-1)}.
\end{equation}
Here the upper bound for $\|\Xi_{\rmW_n(\bmtheta)}\|_4^4$ [lower bound for  $M_2(\rmW_n(\bmtheta))$] is saturated iff $\theta_1=\theta_2=\cdots =\theta_n$, in which case 
$|\rmW_n(\bmtheta)\> $ reduces to the W state up to an overall phase factor;
the lower bound for $\|\Xi_{\rmW_n(\bmtheta)}\|_4^4$ [upper bound for  $M_2(\rmW_n(\bmtheta))$] is saturated iff $\sum_{j=1}^n e^{4\rmi\theta_j}=0$. According to \eref{eq:gWstateChar4normLUB}, all $n$-qubit phased W states have comparable 2-SREs irrespective of the phases $\theta_1, \theta_2,\ldots, \theta_n$.

\begin{proof}[Proof of \pref{pro:CharNormPhasedW}]For the convenience of the following proof, we need to introduce a stabilizer projector following \rcite{Zhu2016Fail4Design}, which will be discussed further in \aref{app:SWDClifford}. Note that $\{P^{\otimes 4}\}_{P\in \bcaP_n}$ is a stabilizer group. Let $P_n$ be the stabilizer projector onto the corresponding stabilizer code. Let $\scrS=\{0000,0011,0101,0110\}$ and
	let $\tbfu$ be the bitwise "NOT" of $\bfu$ for each $\bfu\in \scrS$. Define
	\begin{align}
	|\varphi_\bfu\>:=\frac{|\bfu\>+|\tbfu\>}{\sqrt{2}},\quad 
	|\varphi(\bfu_1,\bfu_2,\ldots,\bfu_n)\>:=|\varphi_{\bfu_1}\>\otimes |\varphi_{\bfu_2}\> \otimes \cdots \otimes |\varphi_{\bfu_n}\>, \quad \bfu, \bfu_1,\bfu_2, \ldots, \bfu_n\in \scrS. 
	\end{align}
	Then the projectors $P_1$ and $P_n$ can be expressed as follows [cf. \eref{eq:FormRT4} in \aref{app:SWDClifford}],
	\begin{align}
	P_1=\sum_{\bfu\in\scrS } |\varphi_\bfu\>\<\varphi_\bfu|,\quad P_n=P_1^{\otimes n}=\sum_{\bfu_1,\bfu_2,\ldots,\bfu_n\in \scrS}|\varphi(\bfu_1,\bfu_2,\ldots,\bfu_n)\>\< \varphi(\bfu_1,\bfu_2,\ldots,\bfu_n)|.
	\end{align}
	Accordingly, $\|\Xi_{\rmW_n(\bmtheta)}\|_4^4$ can be expressed as follows,
	\begin{align}
	\|\Xi_{\rmW_n(\bmtheta)}\|_4^4 &=\sum_{P\in \bcaP_n} [\tr(P |\rmW_n(\bmtheta)\>\<\rmW_n(\bmtheta)|)]^4= d^2\tr \left[P_n\Tensor{(|\rmW_n(\bmtheta)\>\<\rmW_n(\bmtheta)|)}{4}\right]\nonumber\\
	&=d^2\sum_{\bfu_1,\bfu_2,\ldots, \bfu_n\in \scrS}\left|\<\varphi(\bfu_1,\bfu_2,\ldots,\bfu_n) |\rmW_n(\bmtheta)\>^{\otimes 4}\right|^2.\label{eq:CharNormPhasedWproof}
	\end{align}

	Next, let $\scrW=\{\bfw\in \{0,1\}^n\;:\; |\bfw|=1\}$,
	where $|\bfw|$ denotes the weight of $\bfw$, that is,  the number of 1 in the string. Then $\Tensor{|\rmW_n(\bmtheta)\>}{4}$ can be expressed as follows,
	\begin{equation}
	\Tensor{|\rmW_n(\bmtheta)\>}{4} = \frac{1}{n^2} \sum_{\bfa,\bfb,\bfc,\bfw\in\scrW}e^{\rmi(\bfa+\bfb+\bfc+\bfw)\cdot\bmtheta} |a_1 b_1 c_1 w_1\> \otimes  |a_2 b_2 c_2 w_2\>\otimes \dots \otimes|a_n b_n c_n w_n\>, 
	\end{equation}

	According to the properties of $|\varphi(\bfu_1,\bfu_2,\ldots,\bfu_n)\>$ and $|\rmW_n(\bmtheta)\>$, only the following two cases  contribute to the summation in \eref{eq:CharNormPhasedWproof}: 
	\begin{gather}
	\bfu_1=\bfu_2=\cdots=\bfu_n=0000,\\
	\bfu_j=\bfu_k\neq 0000, \quad j<k, \quad \bfu_i=0000\quad \forall i\neq j,  k.
	\end{gather}
	The first case gives the contribution
	\begin{align}
	d\left|(\<0000|+\<1111|)^{\otimes n}|\rmW_n(\bmtheta)\>^{\otimes 4}\right|^2=\frac{d}{n^4}\Biggl|\sum_{j=1}^n e^{4\rmi\theta_j}\Biggr|^2.
	\end{align}
	The second case has $3n(n-1)/2$ subcases, all of which give the same contribution. When $\bfu_1=\bfu_2=0011$ and $\bfu_3=\bfu_4=\cdots=\bfu_n=0000$ for example, we have
	\begin{align}
	&d^2\left|\<\varphi(\bfu_1,\bfu_2,\ldots,\bfu_n) |\rmW_n(\bmtheta)\>^{\otimes 4}\right|^2
	=d\left|\bigl[(\<0011|+\<1100|)^{\otimes 2}\otimes (\<0000|+\<1111|)^{\otimes (n-2)}\bigr]|
	\rmW_n(\bmtheta)\>^{\otimes 4}\right|^2\nonumber\\
	&=d\left|\bigl[(\<0011|\otimes \<1100|+\<1100|\otimes \<0011|)
	\otimes 
	\<0000|^{\otimes (n-2)}
	\bigr]|
	\rmW_n(\bmtheta)\>^{\otimes 4}\right|^2=\frac{4d}{n^4}. 
	\end{align}
	Combining the above results we can deduce that
	\begin{equation}
	\|\Xi_{\rmW_n(\theta)}\|_4^4=\frac{d}{n^4} \left[6n(n-1)+\Biggl|\sum_{j=1}^n e^{4\rmi\theta_j}\Biggr|^2\right],
	\end{equation}
	which confirms the first equality in \eref{eq:CharNormPhasedW}. The second equality in \eref{eq:CharNormPhasedW} follows from  the first equality and the definition of 2-SRE. \Eqsref{eq:CharNormW}{eq:CharNormGeneralW} are simple corollaries of \eref{eq:CharNormPhasedW}. 
\end{proof}

\subsection{$|S_{n,k}(\theta)\>$}
According to the definition in \eref{eq:MagicState}, the characteristic function of  $|S_{1, 1}(\theta)\>=(|0\>+e^{\rmi\theta}|1\>)/\sqrt{2}$ reads
\begin{equation}
\Xi_{S_{1, 1}(\theta)}(P) =
\begin{cases}
1 &   P = I,\\
\cos\theta & P = X,\\
\sin\theta & P = Y,\\
0  & P = Z,
\end{cases}
\end{equation}
and its  Schatten $p$-norm can be expressed as follows,
\begin{equation}
\|\Xi_{S_{1, 1}(\theta)}\|_p = \left(1+\left|\cos\theta\right|^p + \left|\sin\theta\right|^p\right)^{1/p}.
\end{equation}
By definition  $|S_{n, k}(\theta)\>$ is a tensor product of single-qubit states, so the Schatten $p$-norm of its characteristic function can be calculated as follows,
\begin{equation}
\|\Xi_{S_{n, k}(\theta)}\|_p =\|\Xi_{|0\>}\|_p^{n-k} \|\Xi_{S_{1, 1}(\theta)}\|_p^{k}= 2^{(n-k)/p}\left(1+\left|\cos\theta\right|^p + \left|\sin\theta\right|^p\right)^{k/p}.
\end{equation}
Now, it is straightforward to 
calculate the 2-SRE of $|S_{n,k}(\theta)\>$, with the result
\begin{equation}\label{eq:SRESnk}
M_2(|S_{n,k}(\theta)\>) = k M_2(|S_{1,1}(\theta)\>) 
=k\log_2\frac{2}{\|\Xi_{S_{1, 1}(\theta)}\|_4^4}
= -k\log_2 \frac{\cos (4\theta) + 7}{8}.
\end{equation}

\section{\label{app:BoundsAdd}Additional results on the variances $V(O,\rho)$ and $V_*(O,\rho)$}%

In this appendix, we provide some additional results on the variances $V(O,\rho)$ and $V_*(O,\rho)$ in thrifty shadow estimation.

\subsection{Impact of depolarizing noise on  $V(O,\rho)$ and $V_*(O,\rho)$}

\begin{lemma}\label{lem:VDepolarization}
	Suppose $\caU$ is a unitary 2-design on $\caH$,  $\rho\in \caD(\caH)$, $O\in \caL^\rmH_0(\caH)$, and $\rho_p=(1-p)\rho+p\bbone/d$ with $0\leq p\leq 1$. Then 
	\begin{align}\label{eq:V*Depolarization}
	V_*(O,\rho_p) =(1-p)^2V_*(O,\rho).
	\end{align}
\end{lemma}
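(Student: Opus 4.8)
The plan is to work directly from the 2-design expression \eqref{eq:DefV*2Design}, namely $V_*(O,\rho) = (d+1)^2\tr[\Omega(\caU)(O\otimes\rho)^{\otimes 2}]-[\tr(O\rho)]^2$, and to exploit the multilinear dependence on $\rho$ together with the tracelessness of $O$. First I would dispose of the subtracted term: since $O\in\caL^\rmH_0(\caH)$ is traceless, $\tr(O\rho_p)=(1-p)\tr(O\rho)+(p/d)\tr(O)=(1-p)\tr(O\rho)$, so that $[\tr(O\rho_p)]^2=(1-p)^2[\tr(O\rho)]^2$ already carries the desired prefactor.

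Next I would expand the main term. Writing $\rho_p=(1-p)\rho+(p/d)\bbone$ and substituting into $(O\otimes\rho_p)^{\otimes 2}=O\otimes\rho_p\otimes O\otimes\rho_p$, multilinearity splits $\tr[\Omega(\caU)(O\otimes\rho_p)^{\otimes 2}]$ into four terms: the pure term $(1-p)^2\tr[\Omega(\caU)(O\otimes\rho)^{\otimes 2}]$, two cross terms in which exactly one of the two ``state slots'' (the second and fourth tensor factors) is replaced by $\bbone/d$, and one term in which both are replaced. The goal then reduces to showing that every term containing at least one $\bbone$ in a state slot vanishes.

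The key computation is the vanishing of these residual terms, and it is where the slot structure of $\Omega(\caU)$ in \eqref{eq:OmegaU} matters. There the first two tensor factors carry the outcome $\bfa$ and the last two carry $\bfb$, so tensor factors $1,2$ are contracted with $U^\dag|\bfa\rangle\langle\bfa|U$ and factors $3,4$ with $U^\dag|\bfb\rangle\langle\bfb|U$. If the fourth (state) slot is set to $\bbone$, that factor contributes $\tr[\bbone\,U^\dag|\bfb\rangle\langle\bfb|U]=1$, and the remaining sum over $\bfb$ acts only on the third slot, giving $\sum_\bfb\tr[O\,U^\dag|\bfb\rangle\langle\bfb|U]=\tr(O\,U^\dag\bbone U)=\tr(O)=0$ via $\sum_\bfb|\bfb\rangle\langle\bfb|=\bbone$. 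The same argument with $\bfa$ in place of $\bfb$ kills the term with $\bbone$ in the second slot, and both mechanisms together kill the doubly-replaced term. Hence $\tr[\Omega(\caU)(O\otimes\rho_p)^{\otimes 2}]=(1-p)^2\tr[\Omega(\caU)(O\otimes\rho)^{\otimes 2}]$, and combining with the subtracted term yields \eqref{eq:V*Depolarization}.

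The main obstacle---really the only nontrivial point---is bookkeeping the tensor-slot structure of $\Omega(\caU)$ correctly, so that one identifies which factor pairs with which outcome and confirms that collapsing a state slot against $\bbone$ frees up a sum $\sum_\bfb|\bfb\rangle\langle\bfb|=\bbone$ that reduces the surviving observable factor to $\tr(O)=0$. Once this is set up, the computation is immediate, and no explicit 2-design averaging or Schur--Weyl machinery is needed beyond what is already packaged into $\Omega(\caU)$.
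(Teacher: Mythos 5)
Your proof is correct, and its skeleton matches the paper's: both start from \eref{eq:DefV*2Design}, pull the factor $(1-p)^2$ out of the subtracted term via $\tr(O)=0$, expand $\tr\bigl[\Omega(\caU)\Tensor{(O\otimes\rho_p)}{2}\bigr]$ multilinearly into four terms, and kill every term that has $\bbone/d$ in a state slot. The difference lies in how that killing is established. The paper first proves an auxiliary partial-trace identity (\lref{lem:OmegaUPT}), $\tr_4\Omega(\caU)=\bigl[(\Tensor{\bbone}{2}+\SWAP)\otimes\bbone\bigr]/(d+1)$, whose derivation invokes the 2-design property through \lref{lem:UnitaryAvg}, and then contracts this explicit form with the traceless $O$ left in the remaining observable slot. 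You instead observe that the vanishing already happens pointwise in $U$: once a state slot carries $\bbone$, its factor $\tr\bigl[U^\dag|\bfb\>\<\bfb|U\bigr]=1$ is constant, so the free sum $\sum_\bfb|\bfb\>\<\bfb|=\bbone$ collapses the paired observable factor to $\tr(O)=0$ before any averaging is performed. This makes your route marginally more elementary --- the 2-design hypothesis enters only through the validity of \eref{eq:DefV*2Design} itself, not through the cross-term computation --- whereas the paper's route yields an explicit, reusable formula for $\tr_4\Omega(\caU)$. You also spell out the pair-swap symmetry needed for the term with $\bbone$ in the second slot, which the paper leaves implicit when it cites \lref{lem:OmegaUPT} for all three vanishing traces.
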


\Lref{lem:VDepolarization} is proved in \aref{app:VDepolarizeProof}; it shows that the variance $V_*(O,\rho)$ decreases monotonically with the strength of the  depolarizing noise acting on the state $\rho$. 
\Psref{pro:VarHaarDepolarize} and \ref{pro:VarClDepolarize} below are simple corollaries of \thsref{thm:VarHaar}, \ref{thm:VarFCl},~\pref{pro:VarNoiseShadow}, and \lref{lem:VDepolarization}.

\begin{proposition}\label{pro:VarHaarDepolarize}
Suppose $\caU$ is a unitary 3-design on $\caH$, $|\phi\>\in\caH$, $\rho=(1-p)|\phi\>\<\phi|+p \bbone / d$, and $O=|\phi\>\<\phi|- \bbone/d$. Then 
\begin{equation}
V(O,\rho)= \left(\frac{d-1}{d}\right)^2\left[-p^2+\frac{4dp}{(d-1)(d+2)}+\frac{d^2-3d-2}{(d-1)(d+2)}\right]+\frac{d^2-1}{d^2} \le \frac{2d(d+1)}{(d+2)^2},
\end{equation}
where the upper bound   is saturated when $p = 2d/[(d-1)(d+2)]$. If in addition $\caU$ is a unitary 4-design, then 
\begin{equation}
V_*(O,\rho)= \frac{4(1-p)^2(d-1)}{(d+2)(d+3)} \le \frac{4(d-1)}{(d+2)(d+3)},
\end{equation}
where the upper bound is saturated when $p=0$.
\end{proposition}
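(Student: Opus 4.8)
The plan is to recognize that $\rho=(1-p)|\phi\>\<\phi|+p\bbone/d$ is precisely the depolarized version of the pure state $|\phi\>\<\phi|$, while $O=|\phi\>\<\phi|-\bbone/d$ is exactly the fidelity observable appearing in \pref{pro:VarNoiseShadow} and \thref{thm:VarHaar}. Both equations then follow by substituting the relevant fidelity into closed forms that have already been established.

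For the first equation I would start from the pure-state identity \eref{eq:VarFShadow} of \pref{pro:VarNoiseShadow}, valid for any 3-design, which reads $V(O,\rho)=-F^2+d(2F+1)/(d+2)$ with $F=\<\phi|\rho|\phi\>$. The only new ingredient is the fidelity of the depolarized state, and a one-line computation gives $F=(1-p)+p/d=1-p(d-1)/d$. Plugging this in and grouping terms by powers of $p$ reproduces the stated quadratic; this algebraic rearrangement is the only computation of substance and is entirely routine. For the upper bound I would use that $V(O,\rho)$ is a downward parabola in $F$ whose maximum value $2d(d+1)/(d+2)^2$ is attained at $F=d/(d+2)$, as already recorded in \eref{eq:VarFShadow}. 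Solving $1-p(d-1)/d=d/(d+2)$ gives the saturating value $p=2d/[(d-1)(d+2)]$; I would also verify that this lies in $[0,1]$, equivalently that the maximizer $F=d/(d+2)$ is admissible as $F$ sweeps $[1/d,1]$, which holds for every $d\ge2$.

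For the second equation I would observe that $\rho$ is the operator $\rho_p$ of \lref{lem:VDepolarization} taken with base state $|\phi\>\<\phi|$, so that lemma immediately yields $V_*(O,\rho)=(1-p)^2V_*(O,\phi)$ and reduces everything to the pure-state case. Under the additional 4-design hypothesis, \eref{eq:VarFHaar} of \thref{thm:VarHaar} supplies $V_*(O,\phi)=4(d-1)/[(d+2)(d+3)]$, and multiplying by $(1-p)^2$ produces the claimed formula. The upper bound and its saturation at $p=0$ are immediate from $(1-p)^2\le1$ on $[0,1]$.

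In this case there is essentially no hard step: the result is a corollary, and the proof is just substitution of $F=1-p(d-1)/d$ into previously derived expressions followed by elementary algebra. The only point that merits a moment of attention is confirming that the unconstrained maximizer $F=d/(d+2)$ of the fidelity-variance parabola is physically attainable, which is what certifies that the quoted saturation value of $p$ is admissible.
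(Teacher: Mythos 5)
Your proposal is correct and follows exactly the route the paper intends: the paper states that this proposition is a simple corollary of \pref{pro:VarNoiseShadow}, \thref{thm:VarHaar}, and \lref{lem:VDepolarization}, which is precisely your combination of substituting $F=1-p(d-1)/d$ into \eref{eq:VarFShadow} and applying the depolarizing-noise factorization $(1-p)^2V_*(O,\phi)$ together with \eref{eq:VarFHaar}. Your additional check that the maximizer $F=d/(d+2)$ is attainable for $p\in[0,1]$ (equivalently $(d-2)(d+1)\ge 0$) is a worthwhile detail the paper leaves implicit.
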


\begin{proposition}\label{pro:VarClDepolarize}
Suppose $\caU=\Cl_n$, $|\phi\>\in\caH$, $\rho=(1-p)|\phi\>\<\phi|+p \bbone / d$, and $O=|\phi\>\<\phi|- \bbone/d$. Then
\begin{equation}
V_*(O,\rho)= \frac{(1-p)^2\left[2^{1-M_2(\phi)}(d+1)-4\right]}{d+2} \le \frac{2^{1-M_2(\phi)}(d+1)-4}{d+2},
\end{equation}
where the upper bound is saturated when $p=0$.
\end{proposition}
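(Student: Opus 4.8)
The plan is to treat $\rho=(1-p)|\phi\>\<\phi|+p\bbone/d$ as a depolarized version of the pure target state $|\phi\>\<\phi|$, so that the entire statement follows by chaining \lref{lem:VDepolarization} with \thref{thm:VarFCl}. Indeed, $\rho$ is precisely the state $\rho_p$ appearing in \lref{lem:VDepolarization} with the input state chosen to be $|\phi\>\<\phi|$, which is exactly the pure state whose nonstabilizerness controls $V_*(O,\phi)$ in \thref{thm:VarFCl}. Since both results are already established for $\caU=\Cl_n$, no new computation involving the cross moment operator is needed.

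First I would apply \lref{lem:VDepolarization} to isolate the noise dependence, obtaining
\[
V_*(O,\rho)=(1-p)^2\,V_*\bigl(O,|\phi\>\<\phi|\bigr)=(1-p)^2\,V_*(O,\phi).
\]
Next I would substitute the exact value of $V_*(O,\phi)$ supplied by the equality in \eqref{eq:VarFCl} of \thref{thm:VarFCl}, namely $V_*(O,\phi)=[2^{1-M_2(\phi)}(d+1)-4]/(d+2)$. This yields the claimed closed-form expression for $V_*(O,\rho)$ directly, with no further manipulation.

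Finally, the upper bound follows from $0\le p\le 1$, which forces $(1-p)^2\le 1$, together with the fact that $V_*(O,\phi)\ge 0$ as a component of a variance; equivalently, the bracketed factor $2^{1-M_2(\phi)}(d+1)-4$ is nonnegative. Multiplying a nonnegative quantity by $(1-p)^2\le 1$ cannot increase it, so the bound holds and is saturated exactly at $p=0$. There is essentially no obstacle here beyond confirming this nonnegativity so that the monotonicity step is legitimate; the proposition is a direct substitution, consistent with the paper's remark that it is a simple corollary of \lref{lem:VDepolarization} and \thref{thm:VarFCl}.
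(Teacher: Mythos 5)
Your proof is correct and follows exactly the paper's route: the paper declares this proposition a simple corollary of \lref{lem:VDepolarization} (applied with input state $|\phi\>\<\phi|$, giving the factor $(1-p)^2$) and of the equality in \eref{eq:VarFCl} of \thref{thm:VarFCl}, which is precisely your chain. Your closing observation that $2^{1-M_2(\phi)}(d+1)-4\ge 0$ (equivalently $V_*(O,\phi)\ge 0$, since $V_*$ is a variance of conditional expectations) is the right justification for why $(1-p)^2\le 1$ yields the stated upper bound, saturated at $p=0$.
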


\subsection{Mean variances in fidelity estimation based on thrifty shadow}

Here we  clarify the mean variances  in fidelity estimation based on thrifty shadow when the underlying unitary ensemble forms a unitary 2-design. The following proposition is proved in \aref{app:AverageVFproof}.

\begin{proposition}\label{pro:AverageVF}
	Suppose $\caU$ is a unitary 2-design on $\caH$ and $O=|\phi\>\<\phi|-\bbone/d$ with $|\phi\>\in\caH$. Then
	\begin{align}	
\bbE_{|\phi\>\sim\haar} V(O,\phi) &= \frac{2(d-1)}{d+2},  \label{eq:AverageVF}\\
	\bbE_{|\phi\>\sim\haar} V_*(O,\phi) &= \frac{4(d-1)}{(d+2)(d+3)}. \label{eq:AverageV*F}
	\end{align}
\end{proposition}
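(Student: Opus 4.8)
The plan is to turn both averages into Haar twirls and then exploit a design-independence that trivializes the final evaluation. Fixing a reference pure state $|\psi_0\>$ and writing $|\phi\>=W|\psi_0\>$ with $W$ Haar-distributed on $\rmU(d)$, the average over $|\phi\>\sim\haar$ equals the average over $W\sim\haar$. With $\rho_0:=|\psi_0\>\<\psi_0|$ and $O_0:=|\psi_0\>\<\psi_0|-\bbone/d$ we have $\rho=|\phi\>\<\phi|=W\rho_0 W^\dag$ and $O=WO_0W^\dag$, so that $\tr(O\rho)=1-1/d$ is a constant independent of $\phi$. Consequently, in both \eref{eq:VarShadow} and \eref{eq:DefV*2Design} only the leading moment-operator term has to be averaged over $W$.

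For \eref{eq:AverageVF} I would substitute $\rho\otimes\Tensor{O}{2}=W^{\otimes 3}\bigl(\rho_0\otimes\Tensor{O_0}{2}\bigr)\dagtensor{W}{3}$ into \eref{eq:VarShadow} and average over $W$. By Schur--Weyl duality the resulting $3$-fold Haar twirl $\bbE_W W^{\otimes 3}(\rho_0\otimes\Tensor{O_0}{2})\dagtensor{W}{3}$ is a linear combination of the permutation operators $V_\sigma$ ($\sigma\in S_3$) on $\caH^{\otimes 3}$. The key point is that $\bbE_{U\sim\caU}\sum_\bfb\tr\bigl[\Tensor{(U^\dag|\bfb\>\<\bfb|U)}{3}V_\sigma\bigr]=d$ for every $\sigma$, independently of the $2$-design $\caU$: each $U^\dag|\bfb\>\<\bfb|U$ is a rank-one projector, so its trace over every cycle of $\sigma$ is $1$, and the sum over $\bfb$ gives $d$. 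Hence $\bbE_\phi V(O,\phi)$ takes the same value for all $2$-designs, and I may evaluate it on any $3$-design, where \pref{pro:VarNoiseShadow} gives the $\phi$-independent value in \eref{eq:VarFShadow} at $F=1$, namely $2(d-1)/(d+2)$.

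For \eref{eq:AverageV*F} I proceed identically, now substituting $\Tensor{(O\otimes\rho)}{2}=W^{\otimes 4}\Tensor{(O_0\otimes\rho_0)}{2}\dagtensor{W}{4}$ into \eref{eq:DefV*2Design}; the $4$-fold Haar twirl of $\Tensor{(O_0\otimes\rho_0)}{2}$ is a combination of permutation operators $V_\sigma$ ($\sigma\in S_4$). Because $V_\sigma$ commutes with $\Tensor{U}{4}$, the pairing collapses to $\tr[\Omega(\caU)V_\sigma]=\sum_{\bfa,\bfb}\tr\bigl[\bigl(\Tensor{(|\bfa\>\<\bfa|)}{2}\otimes\Tensor{(|\bfb\>\<\bfb|)}{2}\bigr)V_\sigma\bigr]$, which is manifestly independent of $\caU$. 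Thus $\bbE_\phi V_*(O,\phi)$ is again common to all $2$-designs, and evaluating on a $4$-design, \thref{thm:VarHaar} gives the $\phi$-independent value in \eref{eq:VarFHaar}, namely $4(d-1)/[(d+2)(d+3)]$. As an independent check for $\caU=\Cl_n$, one can instead invoke \thref{thm:VarFCl} to write $V_*(O,\phi)=[2^{1-M_2(\phi)}(d+1)-4]/(d+2)$ and reduce the task to $\bbE_\phi 2^{-M_2(\phi)}=\bbE_\phi\|\Xi_\phi\|_4^4/d$; the fourth Haar moment $\bbE_\phi\Tensor{(|\phi\>\<\phi|)}{4}=P_{[4]}/\binom{d+3}{4}$ together with the stabilizer-projector identity $\sum_{P\in\bcaP_n}\Tensor{P}{4}=d^2 P_n$ then yields $\bbE_\phi\|\Xi_\phi\|_4^4=4d/(d+3)$ and the same answer.

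The main obstacle is not the arithmetic but the design-independence step. For a generic $2$-design neither $V(O,\phi)$ nor $V_*(O,\phi)$ is constant in $\phi$, and each a priori involves the third or fourth moments of $\caU$, which a $2$-design does not fix; the statement "for any $2$-design" is legitimate only after one verifies that the $\phi$-average pairs the Haar-twirled operator against the moment data in a $\caU$-independent way. The clean justification rests on the commutation $[V_\sigma,\Tensor{U}{m}]=0$ together with the rank-one property of the measurement projectors. Once this invariance is in hand, the evaluation on a convenient $3$- or $4$-design is immediate, and it simultaneously explains why these Haar averages coincide with the design values in \pref{pro:VarNoiseShadow} and \thref{thm:VarHaar}.
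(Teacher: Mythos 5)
Your proof is correct, and its key step is genuinely different from the paper's. Both arguments open identically---set $|\phi\>=W|\psi_0\>$, pull the Haar average over $W$ into \eref{eq:VarShadow} and \eref{eq:DefV*2Design}, and note that $\tr(O\rho)=1-1/d$ is constant---but the paper never decomposes the twirl into permutations. Instead it shifts $\Tensor{W}{3}$ (resp.\ $\Tensor{W}{4}$) onto the moment data by cyclicity and absorbs it into the ensemble unitary by left invariance of the Haar measure ($UW\sim\haar$ for fixed $U$), so the double average collapses at once to the Haar-ensemble quantity: $\frac{6(d+1)}{d+2}\tr\bigl[P_{[3]}\bigl(\rho_0\otimes\Tensor{O_0}{2}\bigr)\bigr]$ for \eref{eq:AverageVF}, and $(d+1)^2\tr\bigl[\Omega(\haar)\Tensor{(O_0\otimes\rho_0)}{2}\bigr]$ for \eref{eq:AverageV*F}, the latter resting on the identity $\bbE_{W\sim\haar}\dagtensor{W}{4}\Omega(\caU)\Tensor{W}{4}=\Omega(\haar)$. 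You prove the same invariance representation-theoretically: the Schur--Weyl twirl of $\rho_0\otimes\Tensor{O_0}{2}$ (resp.\ $\Tensor{(O_0\otimes\rho_0)}{2}$) lies in the span of permutation operators $V_\sigma$, and the pairings of the moment data with any $V_\sigma$ are universal---$\bbE_{U\sim\caU}\sum_\bfb\tr\bigl[\Tensor{(U^\dag|\bfb\>\<\bfb|U)}{3}V_\sigma\bigr]=d$ by the rank-one cycle count, and $\tr[\Omega(\caU)V_\sigma]$ is $\caU$-free since $V_\sigma$ commutes with $\Tensor{U}{4}$. The paper's absorption is shorter and immediately names the answer as the Haar value; your route isolates exactly why every 2-design gives the same answer (the ensemble enters only through its commutant pairings), which you correctly flag as the crux, and it then reuses \pref{pro:VarNoiseShadow} and \thref{thm:VarHaar} at the evaluation stage instead of recomputing. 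Your Clifford cross-check is also sound: $\bbE_\phi\|\Xi_\phi\|_4^4=d^2\tr\bigl(P_nP_{[4]}\bigr)\big/\binom{d+3}{4}=4d/(d+3)$, using $\tr\bigl(P_nP_{[4]}\bigr)=D_{[4]}^+=(d+1)(d+2)/6$ from \tref{tab:SpaceDimension}, which inserted into \thref{thm:VarFCl} reproduces \eref{eq:AverageV*F}.
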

Note that the averages of $ V(O,\phi)$ and $V_*(O,\phi)$ are equal to the counterparts  in thrifty shadow estimation based on the Haar random unitary ensemble  (cf. \pref{pro:VarNoiseShadow} and  \thref{thm:VarHaar}).

\subsection{The variances $V(O,\rho)$ and $V_*(O,\rho)$ associated with the ensembles $\Cl_n$, $\bbU_{k,l}$, and  $\tbbU_k$ }
To start with we clarify the basic properties of the function $V_\triangle(O,\rho)$ defined in \eref{eq:VTriangleDef} and the derived function $V_\triangle (O) := \max_\rho V_\triangle(O,\rho)$, which will be useful to proving \coref{cor:VarCl} and a number of related results. 
\begin{proposition}\label{pro:VTriangle}
Suppose $\rho\in \caD(\caH)$ and $O\in \caL^\rmH_0(\caH)$. Then
\begin{align}
V_\triangle(O,\rho) &\le \frac{2(d+1)}{d(d+2)} \sqrt{2^{-\tM_2(\rho)}d\wp\|\Xi_O\|_4^4},\label{eq:VTriangleSRE}\\
V_\triangle(O,\rho)&\leq \frac{2(d+1)}{d+2}\|\Xi_{\mathring{\rho}}\|_\infty^2\|O\|_2^2,\label{eq:VTriangleNorminf}\\
V_\triangle(O) &\le \frac{2(d+1)}{d(d+2)}
\|\Xi_O^2\|_{[d]}\le\frac{2(d+1)}{d+2}\|O\|_2^2,\label{eq:VTriangleNorm2}
\end{align}
where $\wp$ is the purity of $\rho$, $\mathring{\rho}$ is the traceless part of $\rho$, and $\|\Xi_O^2\|_{[d]}$ is the sum of the $d$ largest entries of the squared characteristic function $\Xi_O^2$.
If $O=|\phi\>\<\phi|-\bbone/d$ with $|\phi\>\in\caH$, then
\begin{gather}
V_\triangle(O) < \frac{2^{1-M_2(\phi)/2}(d+1)}{d+2},\label{eq:VTriangleFUpper}\\
2^{1-M_2(\phi)}-\frac{5}{d}<V_\triangle(O,\phi) = \frac{\left[2^{1-M_2(\phi)}d^2-3d+1\right](d+1)}{d^2(d+2)} < 2^{1-M_2(\phi)}-\frac{2}{d}.\label{eq:VTriangleFExact}
\end{gather}
\end{proposition}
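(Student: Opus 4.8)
The plan is to reduce every upper bound to a single quantity, $\|\Xi_{\rho, O}\|_2^2$, and then estimate this quantity in three different ways. The starting point is \lref{lem:CharOrhoNorm}, which supplies $\tXi_{\rho, O}\cdot\Xi_{\rho, O} \le \|\Xi_{\rho, O}\|_2^2$; inserting this into the definition \eref{eq:VTriangleDef} yields the uniform estimate $V_\triangle(O,\rho) \le \frac{2(d+1)}{d(d+2)}\|\Xi_{\rho, O}\|_2^2$, from which all three inequalities \eref{eq:VTriangleSRE}--\eref{eq:VTriangleNorm2} will follow.

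For \eref{eq:VTriangleSRE}, I would write $\|\Xi_{\rho, O}\|_2^2 = \sum_{P\in\bcaP_n}\Xi_\rho^2(P)\Xi_O^2(P)$ and apply the Cauchy--Schwarz inequality to the two sequences $\Xi_\rho^2(P)$ and $\Xi_O^2(P)$, obtaining $\|\Xi_{\rho, O}\|_2^2 \le \sqrt{\|\Xi_\rho\|_4^4\,\|\Xi_O\|_4^4}$; the definition of the 2-SRE in \eref{eq:DefM2} then replaces $\|\Xi_\rho\|_4^4$ by $2^{-\tM_2(\rho)}d\wp$. For \eref{eq:VTriangleNorminf}, the key observation is that $O$ is traceless, so $\Xi_O(\bbone)=0$, whereas $\Xi_\rho(P)=\Xi_{\mathring{\rho}}(P)$ for every $P\neq\bbone$; factoring $\|\Xi_{\mathring{\rho}}\|_\infty^2$ out of the resulting sum and using $\|\Xi_O\|_2^2=d\|O\|_2^2$ gives the bound. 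For \eref{eq:VTriangleNorm2}, I would invoke \lref{lem:CharOrhoNorm} directly, which already furnishes $\|\Xi_{\rho, O}\|_2^2 \le \|\Xi_O^2\|_{[d]} \le d\|O\|_2^2$, and then take the maximum over $\rho$.

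The fidelity case is handled by specialization. For the exact identity in \eref{eq:VTriangleFExact}, I would set $\rho=|\phi\>\<\phi|$ (so $F=1$) and read off from \lref{lem:CharOrhoNormF} together with \eref{eq:CharOrhoNormFIdeal} the exact values $\|\Xi_{\phi, O}\|_2^2 = 2^{-M_2(\phi)}d-1$ and $\tXi_{\phi, O}\cdot\Xi_{\phi, O} = 2^{-M_2(\phi)}d-2+1/d$; inserting their sum into \eref{eq:VTriangleDef} and simplifying produces the stated closed form. The two flanking inequalities then reduce to comparing the numerator $2^{1-M_2(\phi)}d^2-3d+1$ against the $d$-dependent bounds, which is routine once one notes that $0<2^{1-M_2(\phi)}\le 2$ (the upper estimate needs only $2^{1-M_2(\phi)}\le 2$, since $M_2(\phi)\ge 0$).

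The remaining estimate \eref{eq:VTriangleFUpper} is where the only genuine step lies. Combining the uniform estimate with \lref{lem:CharOrhoNormF}, namely $\|\Xi_{\rho, O}\|_2^2=\|\Xi_{\rho,\phi}\|_2^2-1\le\|\Xi_\phi^2\|_{[d]}-1$, gives $V_\triangle(O) \le \frac{2(d+1)}{d(d+2)}\bigl(\|\Xi_\phi^2\|_{[d]}-1\bigr)$, so it remains to bound the sum of the $d$ largest entries of $\Xi_\phi^2$ in terms of $M_2(\phi)$. I expect this to be the main obstacle, and I would resolve it with a second Cauchy--Schwarz argument: the sum of the $d$ largest entries of a nonnegative vector is at most $\sqrt{d}$ times its $\ell^2$ norm, whence $\|\Xi_\phi^2\|_{[d]} \le \sqrt{d}\,\sqrt{\|\Xi_\phi\|_4^4} = \sqrt{d}\,\sqrt{2^{-M_2(\phi)}d} = d\,2^{-M_2(\phi)/2}$. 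The strict inequality in \eref{eq:VTriangleFUpper} is then inherited from the $-1$ in the preceding display.
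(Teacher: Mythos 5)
Your proposal mirrors the paper's own proof almost step for step: the uniform reduction $V_\triangle(O,\rho)\le \frac{2(d+1)}{d(d+2)}\|\Xi_{\rho,O}\|_2^2$ via \lref{lem:CharOrhoNorm}, Cauchy--Schwarz between $\Xi_\rho^2$ and $\Xi_O^2$ for \eqref{eq:VTriangleSRE}, H\"older (after dropping the identity component, using $\tr O=0$) for \eqref{eq:VTriangleNorminf}, \lref{lem:CharOrhoNorm} again for \eqref{eq:VTriangleNorm2}, the bound ``sum of the $d$ largest entries $\le \sqrt{d}$ times the $\ell^2$ norm'' for \eqref{eq:VTriangleFUpper} (the paper phrases this as QM--AM, which is the same estimate), and specialization of \lref{lem:CharOrhoNormF} at $\rho=|\phi\>\<\phi|$, $F=1$, for the closed form in \eqref{eq:VTriangleFExact}. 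All of these steps are correct and are exactly the paper's steps.

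The one genuine flaw is your parenthetical justification of the two flanking inequalities in \eqref{eq:VTriangleFExact}. Write $x=2^{1-M_2(\phi)}$; clearing the positive denominator $d^2(d+2)$, the two claims are equivalent to
\[
-xd^2+2d^2+8d+1>0 \quad(\text{lower flank}),\qquad xd^2+d^2-2d-1>0\quad(\text{upper flank}).
\]
It is the \emph{lower} flank that uses $x\le 2$ (i.e.\ $M_2(\phi)\ge 0$). The upper flank becomes \emph{easier} as $x$ grows, so the bound $x\le2$ is useless for it; what it needs is a lower bound on $x$. For $d\ge 3$ any $x>0$ suffices, but for $d=2$ the condition reads $4x-1>0$, i.e.\ $x>1/4$, which does not follow from $0<x\le 2$ (with $x=1/8$, $d=2$ the displayed quantity is negative, so your stated premises do not imply the conclusion). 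You have the dependence exactly backwards. The conclusion itself is true, and the missing ingredient is one line: since $\Xi_\phi(\bbone)=\tr(|\phi\>\<\phi|)=1$, we have $\|\Xi_\phi\|_4^4\ge 1$, hence $2^{-M_2(\phi)}\ge 1/d$, i.e.\ $x\ge 2/d$; then $xd^2+d^2-2d-1\ge d^2-1>0$ for every $d\ge 2$, which settles the upper flank uniformly. With that correction your argument is complete and coincides with the paper's.
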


\begin{figure}[bt]
	\centering
	\includegraphics[width=0.35\linewidth]{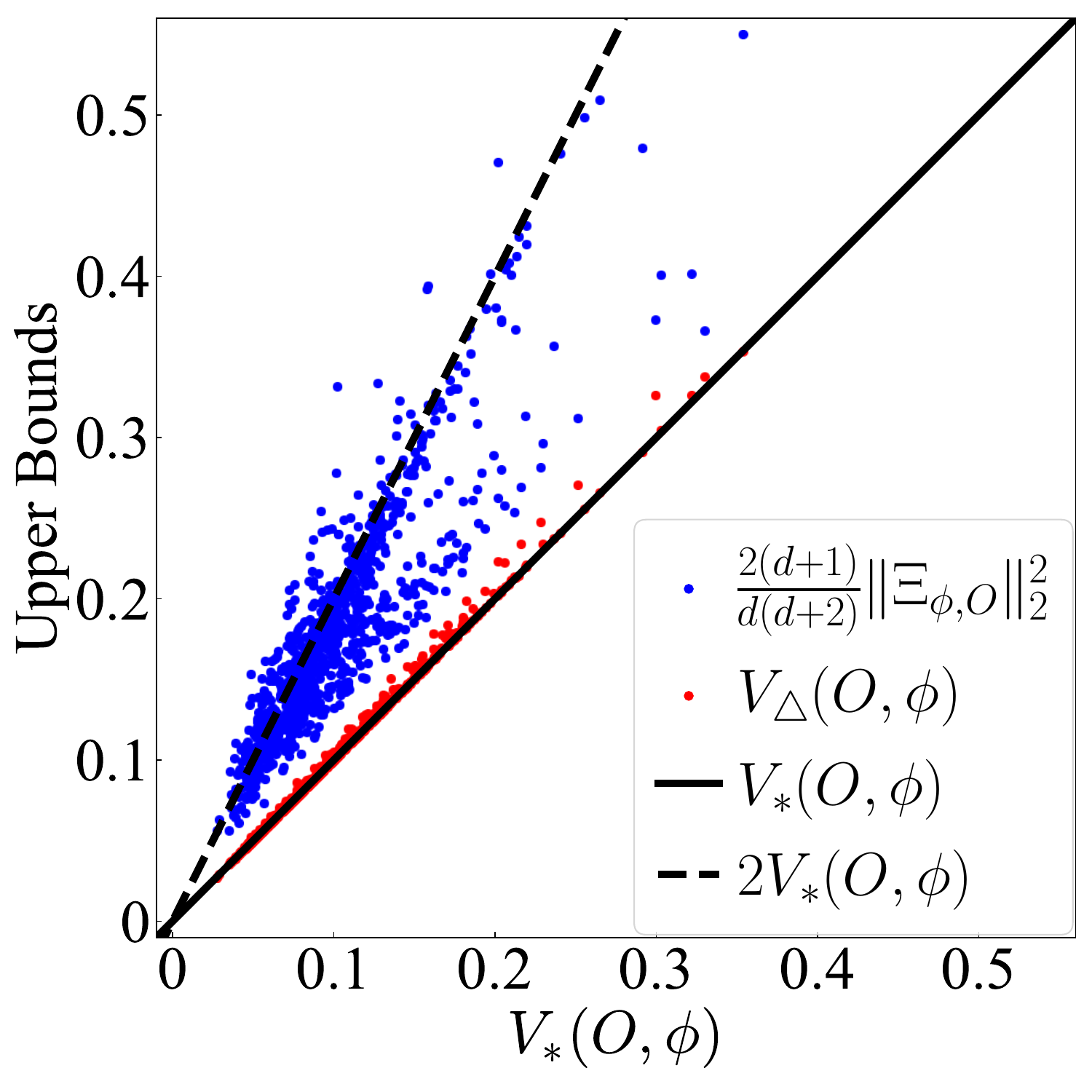}
	\caption{A scatter plot about  $V_*(O,\phi)$ and its upper bounds
		$V_\triangle(O,\phi)$ and $2(d+1)\|\Xi_{\phi,O}\|_2^2/[d(d+2)]$. Here  $O=|\rmW_3\>\<\rmW_3|-\bbone/d$, where $|\rmW_3\>$ is the three-qubit $\rmW$ state, and $\phi$ is sampled from the ensemble of  Haar random pure states 1000 times.}
	\label{fig:CompareUpper}
\end{figure}

Now, we consider the estimation protocol based on the Clifford group and show that the upper bounds for $V_*(O,\rho)$ presented in  \thref{thm:VarCl} are pretty good. \Fref{fig:CompareUpper} illustrates some numerical results on fidelity estimation in which  $O=|\rmW_3\>\<\rmW_3|-\bbone/d$ and $\rho=|\phi\>\<\phi|$ is a Haar random pure state. 
Note that the upper bound $V_\triangle(O,\rho)$ is
nearly tight in most cases, while the upper bound $2(d+1)\|\Xi_{\rho, O}\|_2^2/[d(d+2)]$ is around two times of $V_*(O,\rho)$.

\begin{figure}[t]
\centering
\includegraphics[width=0.45\textwidth]{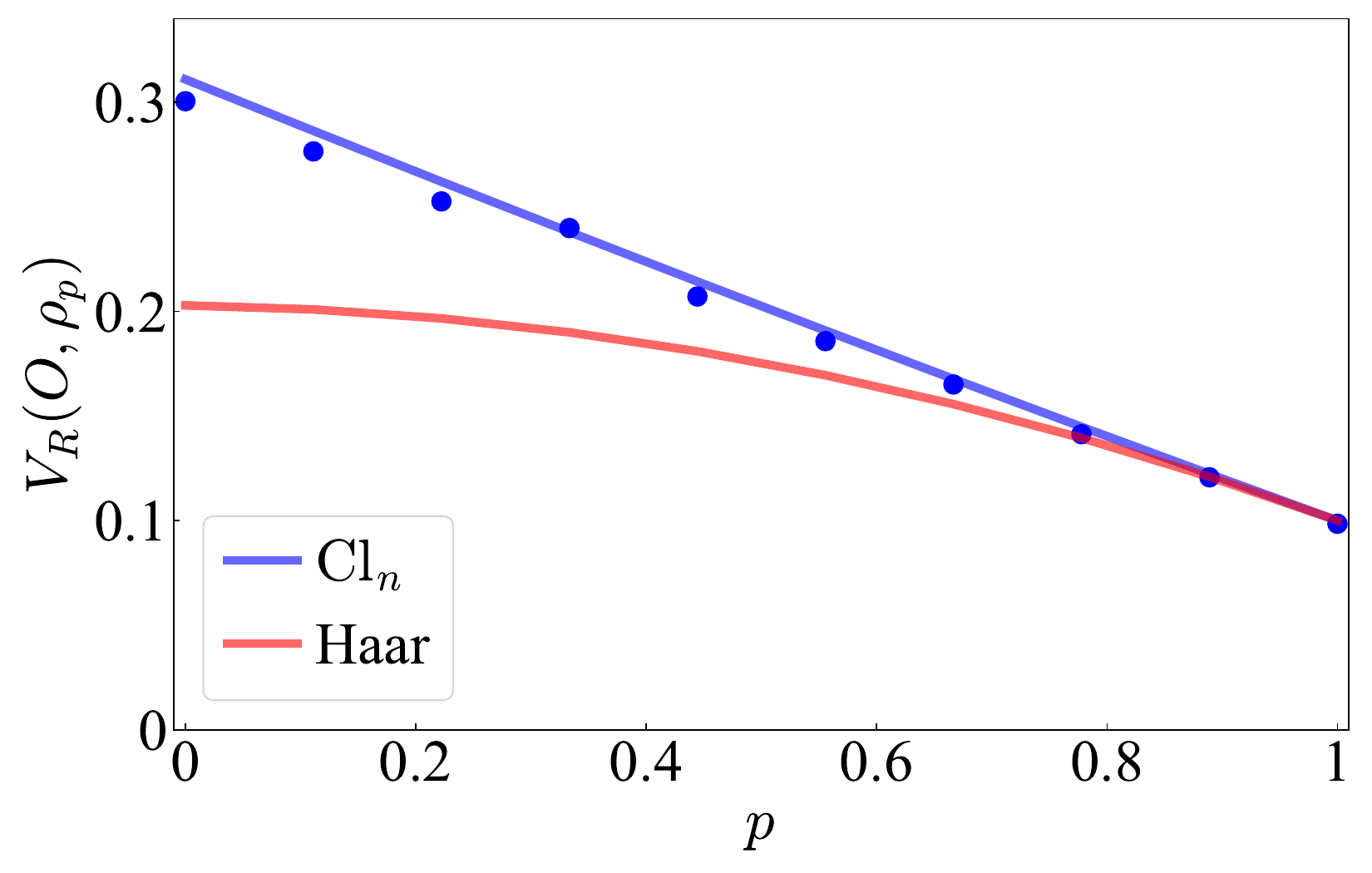}
\caption{The variance $V_R(O,\rho)$ in fidelity estimation based on thrifty shadow with Clifford measurements and Haar random measurements.  Here $R=10$, $O =|\rmW_{10}\>\<\rmW_{10}|- \bbone/d$, and 
 $\rho_p = (1-p)|\rmW_{10}\>\<\rmW_{10}|+p \bbone/d$. The two lines are based on analytical formulas in \eref{eq:VarMultiShot} and \psref{pro:VarHaarDepolarize}, \ref{pro:VarClDepolarize}, where $M_2(\rmW_{10})$ is determined in \pref{pro:CharNormPhasedW}.
 The blue dots are based on numerical simulation in which 20,000 random Clifford unitaries are sampled and each one is reused $R$ times for each data point. 
}
\label{fig:error}
\end{figure}

Next, we turn to thrifty shadow estimation based on the ensembles $\bbU_{k,l}$ and  $\tbbU_k$ (see \fref{fig:ModelCircuits}) and clarify the impact of $T$ gates on the variance $V_*(O,\rho)$. To simplify many formulas, it is convenient to introduce the two constants,
\begin{align}\label{eq:gammanu}
\gamma=\frac{3}{4},\quad \nu=\frac{1}{2}. 
\end{align}
When $\caU=\bbU_{k,l}$, the following result is a simple corollary of \Thref{thm:VarUkl},
\begin{align}
V_*(O,\rho) = \gamma^{kl}V_\triangle (O,\rho) + \caO\bigl(d^{-1}\bigr) \|O\|_2^2.\label{eq:VarUklAppro}
\end{align}
When $l=1$, this formula also applies to the ensemble $\tbbU_k$ thanks to \thref{thm:VartUk}. As analogs of \coref{cor:VarCl}, the two corollaries below follow from \thsref{thm:VarUkl},~\ref{thm:VartUk}, and \pref{pro:VTriangle}. 

\begin{corollary}
Suppose $\caU=\bbU_{k, l}$ with $0\leq k\leq n$ and $l\geq 0$, $\rho\in \caD(\caH)$, and $O\in \caL^\rmH_0(\caH)$. Then
\begin{align}
V_*(O,\rho) &\le \frac{2\gamma^{kl}}{d} \sqrt{2^{-\tM_2(\rho)}d\wp\|\Xi_O\|_4^4}+ \frac{6}{d}\|O\|_2^2,\\
V_*(O,\rho) &\le \frac{2\gamma^{kl}}{d} \|\Xi_{\mathring{\rho}}\|_\infty^2\|O\|_2^2+ \frac{6}{d}\|O\|_2^2,\\
V_*(O) &\le \frac{2\gamma^{kl}}{d} \|\Xi_O^2\|_{[d]}+ \frac{6}{d}\|O\|_2^2\le\left(2\gamma^{kl}+\frac{6}{d}\right)\|O\|_2^2.\label{eq:V*UklUB3}
\end{align}
\end{corollary}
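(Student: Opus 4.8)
The plan is to read off all three inequalities from a single one-sided form of \thref{thm:VarUkl} combined with the bounds on $V_\triangle(O,\rho)$ collected in \pref{pro:VTriangle}; no new estimate is required. Dropping the lower half of the two-sided estimate \eqref{eq:VarUklUpper} gives
\begin{equation}
V_*(O,\rho)\le \gamma^{kl}V_\triangle(O,\rho)+\frac{6}{d}\|O\|_2^2 ,
\end{equation}
which serves as the common starting point. The additive error $\tfrac{6}{d}\|O\|_2^2$ is independent of $\rho$ and reappears verbatim in each displayed bound, so the whole task reduces to bounding the leading term $\gamma^{kl}V_\triangle(O,\rho)$.

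For the first inequality I would substitute \eqref{eq:VTriangleSRE}, and for the second \eqref{eq:VTriangleNorminf}, simplifying the prefactors of \pref{pro:VTriangle} with the elementary estimate $\tfrac{d+1}{d+2}<1$ to reach the displayed coefficients; these are exactly the $\bbU_{k,l}$ analogues of the first two bounds of \coref{cor:VarCl}, rescaled by the nonstabilizerness factor $\gamma^{kl}$. The third inequality is slightly different because it involves the state-optimized quantities $V_*(O)=\max_\rho V_*(O,\rho)$ and $V_\triangle(O)=\max_\rho V_\triangle(O,\rho)$. Since $\gamma^{kl}>0$ and the $\rho$-free error term can be pulled outside the maximum, taking $\max_\rho$ of the starting inequality yields $V_*(O)\le \gamma^{kl}V_\triangle(O)+\tfrac{6}{d}\|O\|_2^2$; inserting the first half of \eqref{eq:VTriangleNorm2} then gives the stated bound in terms of $\|\Xi_O^2\|_{[d]}$.

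The final estimate in \eqref{eq:V*UklUB3} comes from the elementary fact $\|\Xi_O^2\|_{[d]}\le\|\Xi_O\|_2^2=d\|O\|_2^2$, recorded in \lref{lem:CharOrhoNorm} (equivalently in \eqref{eq:CharOrelation}); this turns the leading term into $2\gamma^{kl}\|O\|_2^2$ and, combined with the error term, produces $(2\gamma^{kl}+\tfrac{6}{d})\|O\|_2^2$. Since every step is either a direct substitution from \pref{pro:VTriangle} or an elementary inequality, I expect no genuine obstacle here; the only point that needs care is the bookkeeping when passing to the maximum over $\rho$ in the third inequality, making sure the $\rho$-independent error term is correctly carried through.
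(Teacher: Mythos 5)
Your derivations of the first and third inequalities are correct and follow exactly the route the paper intends (the paper offers no detailed proof, stating only that the corollary follows from \thref{thm:VarUkl} and \pref{pro:VTriangle}): drop the lower half of \eref{eq:VarUklUpper}, insert \eref{eq:VTriangleSRE} [resp.\ the first half of \eref{eq:VTriangleNorm2}, after pulling the $\rho$-independent error term through $\max_\rho$], simplify with $(d+1)/(d+2)<1$, and finish with $\|\Xi_O^2\|_{[d]}\le\|\Xi_O\|_2^2=d\|O\|_2^2$ from \lref{lem:CharOrhoNorm}.

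Your treatment of the second inequality, however, contains a genuine gap. The bound \eref{eq:VTriangleNorminf} reads $V_\triangle(O,\rho)\le \frac{2(d+1)}{d+2}\|\Xi_{\mathring{\rho}}\|_\infty^2\|O\|_2^2$; its prefactor is $\frac{2(d+1)}{d+2}\approx 2$, not $\frac{2(d+1)}{d(d+2)}$, so substituting it and using $(d+1)/(d+2)<1$ yields the coefficient $2\gamma^{kl}$, which is a factor of $d$ larger than the displayed $\frac{2\gamma^{kl}}{d}$. Your claim that this substitution "reaches the displayed coefficients" is therefore false for this line, and no argument can close the gap, because the second displayed inequality is itself false: take $|\phi\>$ an $n$-qubit stabilizer state with $d\ge 8$, $\rho=|\phi\>\<\phi|$, $O=|\phi\>\<\phi|-\bbone/d$, and $kl=0$ (so $\bbU_{k,l}=\Cl_n$ and $\gamma^{kl}=1$). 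Then $\|\Xi_{\mathring{\rho}}\|_\infty=1$, $\|O\|_2^2=(d-1)/d$, and \thref{thm:VarFCl} with $M_2(\phi)=0$ gives $V_*(O,\phi)=\frac{2(d-1)}{d+2}$, whereas the claimed right-hand side equals $\frac{8(d-1)}{d^2}$, and $\frac{2(d-1)}{d+2}>\frac{8(d-1)}{d^2}$ whenever $d^2>4d+8$. The stray $1/d$ is evidently a typo in the paper's statement (compare \coref{cor:VarCl}, whose second bound carries no $1/d$); what your substitution actually proves is the corrected bound $V_*(O,\rho)\le \frac{2(d+1)\gamma^{kl}}{d+2}\|\Xi_{\mathring{\rho}}\|_\infty^2\|O\|_2^2+\frac{6}{d}\|O\|_2^2\le 2\gamma^{kl}\|\Xi_{\mathring{\rho}}\|_\infty^2\|O\|_2^2+\frac{6}{d}\|O\|_2^2$. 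A careful write-up should have flagged this mismatch rather than asserting that the stated coefficient follows.
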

The second upper bound for $V_*(O)$ in \eref{eq:V*UklUB3} improves over the upper bound in Theorem~4 of \rcite{Helsen2023MultiShot} (only applicable to $\bbU_{1,k}$) by about 15 times; the first upper bound is even much better whenever $\|\Xi_O^2\|_{[d]}\ll d\|O\|_2^2$, which is the case for a generic observable $O$.

\begin{corollary}
Suppose $\caU=\tbbU_k$ with $0\leq k\leq n$,  $\rho\in \caD(\caH)$, and $O\in\caL_0^{\rmH}(\caH)$. Then
\begin{align}
V_*(O,\rho) &\le \frac{2\gamma^{k}}{d} \sqrt{2^{-\tM_2(\rho)}d\wp\|\Xi_O\|_4^4}+ \frac{6}{d}\|O\|_2^2,\\
V_*(O,\rho) &\le \frac{2\gamma^{k}}{d} \|\Xi_{\mathring{\rho}}\|_\infty^2\|O\|_2^2+ \frac{6}{d}\|O\|_2^2,\\
V_*(O) &\le \frac{2\gamma^{k}}{d}\|\Xi_O^2\|_{[d]}+ \frac{6}{d}\|O\|_2^2\le\left(2\gamma^{k}+\frac{6}{d}\right)\|O\|_2^2.
\end{align}
\end{corollary}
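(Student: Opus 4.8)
The plan is to treat this corollary purely as a consequence of the two heavy-lifting results \thref{thm:VartUk} and \pref{pro:VTriangle}, feeding the three a priori bounds on $V_\triangle(O,\rho)$ into the estimate for $V_*(O,\rho)$. The first step is to extract the one-sided inequality I actually need from the two-sided estimate of \thref{thm:VartUk}: discarding the lower half of $|V_*(O,\rho)-\gamma^{k}V_\triangle(O,\rho)|\le (6/d)\|O\|_2^2$ leaves
\begin{equation}
V_*(O,\rho)\le \gamma^{k}V_\triangle(O,\rho)+\frac{6}{d}\|O\|_2^2 ,
\end{equation}
which is the template into which every bound of \pref{pro:VTriangle} will be inserted.

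For the two state-dependent inequalities I would substitute \eref{eq:VTriangleSRE} and \eref{eq:VTriangleNorminf} directly into this template. In the first case the prefactor $2(d+1)\gamma^{k}/[d(d+2)]$ is bounded by $2\gamma^{k}/d$ using $(d+1)/(d+2)<1$, which yields the advertised form; the second case amounts to multiplying \eref{eq:VTriangleNorminf} by $\gamma^{k}$ and carrying along the residual $(6/d)\|O\|_2^2$. The one thing to watch is that the $\gamma^{k}$ suppression attaches only to the magic-sensitive part $V_\triangle$ and not to the additive $6/d$ term, so the two contributions must be kept separate throughout, with no $\gamma^{k}$ ever multiplying the residual.

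For the state-independent bound on $V_*(O)=\max_\rho V_*(O,\rho)$ the order of operations is the one genuinely subtle point: I would first maximize both sides of the displayed template over $\rho$, obtaining $V_*(O)\le \gamma^{k}V_\triangle(O)+(6/d)\|O\|_2^2$ with $V_\triangle(O)=\max_\rho V_\triangle(O,\rho)$, and only afterwards apply the state-free estimate \eref{eq:VTriangleNorm2}. A final use of $(d+1)/(d+2)<1$ turns the rational prefactor into $2\gamma^{k}/d$, and the terminal inequality $\le(2\gamma^{k}+6/d)\|O\|_2^2$ follows from $\|\Xi_O^2\|_{[d]}\le\|\Xi_O\|_2^2=d\|O\|_2^2$ together with $\gamma^{k}\le 1$. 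I do not expect a real analytic obstacle, since all the work resides in \thref{thm:VartUk} and \pref{pro:VTriangle} and this is essentially a bookkeeping corollary; the only places demanding care are retaining the correct one-sided bound from the two-sided estimate and commuting the maximization over $\rho$ with the substitution of the state-free bound.
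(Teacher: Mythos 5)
Your overall strategy is exactly the paper's: the paper derives this corollary in one line by combining \thref{thm:VartUk} with \pref{pro:VTriangle}, and your template $V_*(O,\rho)\le\gamma^{k}V_\triangle(O,\rho)+(6/d)\|O\|_2^2$ together with the substitutions you describe is precisely that argument. Your treatment of the first and third inequalities is correct, including the order of operations for $V_*(O)$ (maximize over $\rho$ first, then apply \eref{eq:VTriangleNorm2}) and the prefactor simplification via $(d+1)/(d+2)<1$.

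Your handling of the second inequality, however, hides a real problem behind the phrase that it ``amounts to multiplying \eref{eq:VTriangleNorminf} by $\gamma^{k}$.'' The prefactor in \eref{eq:VTriangleNorminf} is $2(d+1)/(d+2)$, not $2(d+1)/[d(d+2)]$ --- unlike \eref{eq:VTriangleSRE} and \eref{eq:VTriangleNorm2}, it carries no factor $1/d$. So your substitution yields $V_*(O,\rho)\le \frac{2(d+1)\gamma^{k}}{d+2}\|\Xi_{\mathring{\rho}}\|_\infty^2\|O\|_2^2+\frac{6}{d}\|O\|_2^2$, which is weaker by a factor of $d$ than the printed bound $\frac{2\gamma^{k}}{d}\|\Xi_{\mathring{\rho}}\|_\infty^2\|O\|_2^2+\frac{6}{d}\|O\|_2^2$. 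In fact the printed bound cannot be proved by any argument, because it is false: take $k\ge1$, $\rho=|\phi\>\<\phi|$ with $|\phi\>$ a stabilizer state, and $O=|\phi\>\<\phi|-\bbone/d$. Then $\|\Xi_{\mathring{\rho}}\|_\infty=1$ and $\|O\|_2^2<1$, while \pref{pro:VarFtUk} with $M_2(\phi)=0$ gives $V_*(O,\phi)=\frac{2[(d^2+d)\gamma^{k}+(2d+2)\nu^{k}-6]}{(d+2)(d+4)}\approx 2\gamma^{k}$, which exceeds $(2\gamma^{k}+6/d)/d\cdot d=(2\gamma^{k}+6)/d$ for large $d$. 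The second line of the corollary therefore carries a typo (the $1/d$ should be absent, consistent with the second line of \coref{cor:VarCl}, whose prefactor is $2(d+1)/(d+2)$; the analogous corollary for $\bbU_{k,l}$ has the same typo). Your derivation actually establishes the corrected statement; the gap in your write-up is that you asserted the step delivers the advertised form when it delivers something differing by a factor of $d$ --- that mismatch needed to be flagged, not absorbed.
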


\begin{figure}
	\centering
	\includegraphics[width=0.7\linewidth]{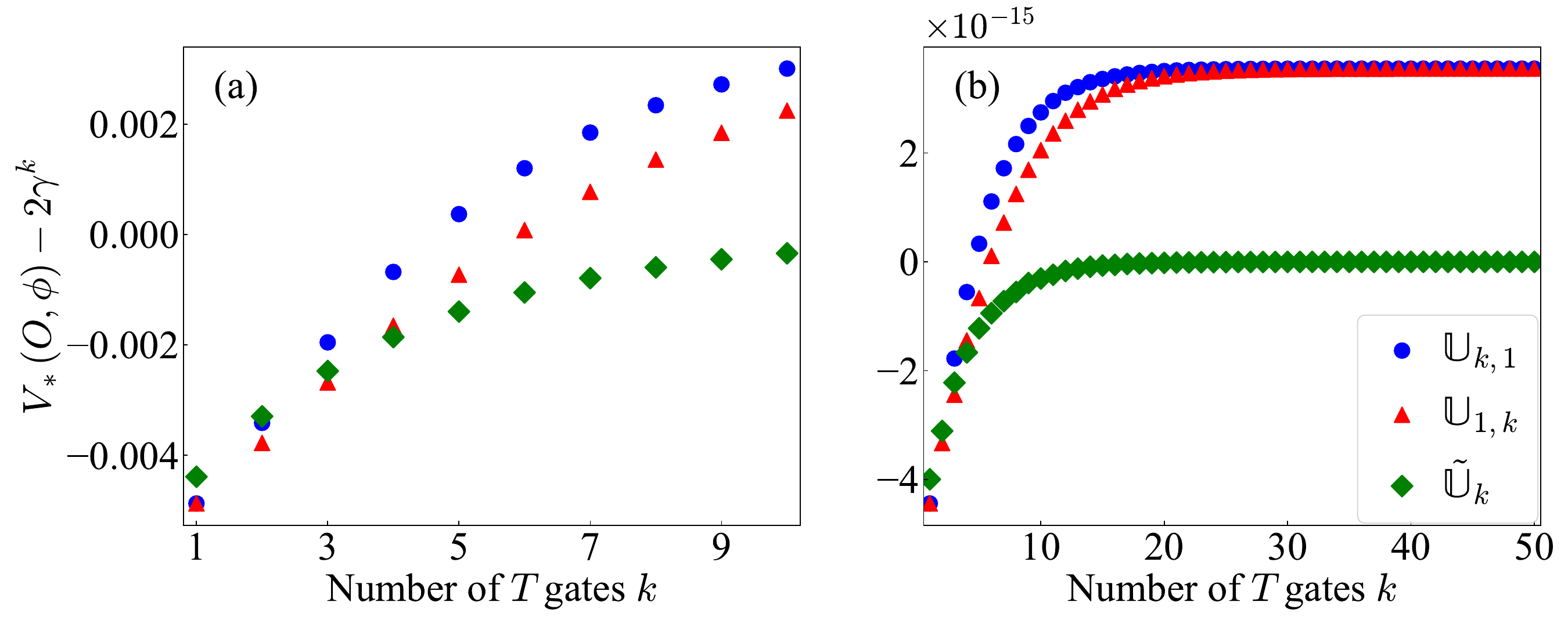}
	\caption{
		The variance $V_*(O,\phi)$ in fidelity estimation based on thrifty shadow with the three unitary ensembles  $\bbU_{k,1}$, $\bbU_{1,k}$, and $\tbbU_k$. Here  $O =|\phi\>\<\phi|- \bbone/d$,  $|\phi\>$ is an $n$-qubit stabilizer state with (a) $n=10$ and (b) $n=50$, and  $V_*(O,\phi)$ is determined by \psref{pro:VarUklFphi} and \ref{pro:VarFtUk} with $M_2(\phi)=0$.}
	\label{fig:CompareEnsemble}
\end{figure}

Next, we focus on the variance $V_*(O,\rho)$ in fidelity estimation based on the ensembles $\bbU_{k, l}$ and  $\tbbU_k$. As complements to \thsref{thm:VarUkl}-\ref{thm:VartUk}, the following two propositions are  proved in \asref{app:ProofUkl} and \ref{app:ProoftUk}, respectively.  

\begin{proposition}\label{pro:VarUklFphi}
Suppose $\caU=\bbU_{k, l}$ with $1\leq k\leq n$ and $l\geq 0$, and $O=|\phi\>\<\phi|-\bbone/d$ with $|\phi\>\in\caH$. Then
\begin{equation}\label{eq:VarUklFphi}
V_*(O,\phi)=\frac{4 (d-1)}{(d+2) (d+3)}+ \frac{ 2^{1-M_2(\phi)}(d+1)(d+3)-8(d+1)}{(d+2)(d+3)}\alpha_k^l,
\end{equation}
where $\alpha_k$ is defined in \eref{eq:alphakbetak} in \aref{app:Omega}. 
\end{proposition}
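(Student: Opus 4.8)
The plan is to evaluate $V_*(O,\phi)$ directly from its operator representation and to exploit the fact that, for the ideal fidelity input, everything is built out of the single pure state $|\phi\>$ and the identity. Since $\caU=\bbU_{k,l}$ is a unitary 2-design, $\caM^{-1}(O)=(d+1)O$, and with $\rho=|\phi\>\<\phi|$ one has $\tr(O\rho)=1-1/d$. Thus by \eref{eq:DefV*2Design},
\begin{equation}
V_*(O,\phi)=(d+1)^2\tr\!\left[\Omega(\bbU_{k,l})\,\Tensor{(O\otimes|\phi\>\<\phi|)}{2}\right]-\Bigl(\tfrac{d-1}{d}\Bigr)^2,
\end{equation}
so the whole task reduces to computing the single trace against the cross moment operator $\Omega(\bbU_{k,l})$.

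First I would recall the structure of $\Omega(\bbU_{k,l})$ established in proving \thref{thm:VarUkl}. Because each Clifford layer is averaged independently while each interleaved layer $\Lambda=\Tensor{I}{(n-k)}\otimes\Tensor{T}{k}$ acts by a fixed conjugation, the fourth-order moment factors as $\caT_{\Cl}\circ(\caT_\Lambda\circ\caT_{\Cl})^l$, where $\caT_{\Cl}(\cdot)=\bbE_{C\sim\Cl_n}\dagtensor{C}{4}(\cdot)\Tensor{C}{4}$ is the Clifford twirl and $\caT_\Lambda(\cdot)=\dagtensor{\Lambda}{4}(\cdot)\Tensor{\Lambda}{4}$. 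The twirl $\caT_{\Cl}$ is the orthogonal projector onto the fourth-order Clifford commutant, so the iteration restricts to a transfer operator on that finite-dimensional space. Every permutation operator lies in the commutant of the full unitary group and hence is fixed by both $\caT_\Lambda$ and $\caT_{\Cl}$ (eigenvalue $1$), whereas the Clifford-specific defect generators — built from the stabilizer projector $P_n$ onto the code stabilized by $\{\Tensor{P}{4}\}_{P\in\bcaP_n}$ — are contracted by the factor $\alpha_k$ of \eref{eq:alphakbetak} per interleaved layer. Consequently $\Omega(\bbU_{k,l})$ equals its Haar/$4$-design counterpart plus $\alpha_k^l$ times the $P_n$-correction.

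I would then evaluate the two contributions on the symmetric input $\Tensor{(O\otimes|\phi\>\<\phi|)}{2}$. Its permutation-invariant part reproduces the $4$-design moment, so contracting it with the input, subtracting $[\tr(O\rho)]^2$, and restoring the prefactor $(d+1)^2$ yields exactly the $l$-independent piece $4(d-1)/[(d+2)(d+3)]$ of \eref{eq:VarFHaar}. The defect part contracts $P_n$ against $\Tensor{|\phi\>}{4}$; invoking the identity $d^2\tr[P_n\Tensor{(|\phi\>\<\phi|)}{4}]=\|\Xi_\phi\|_4^4$ from the proof of \pref{pro:CharNormPhasedW} together with $\|\Xi_\phi\|_4^4=2^{-M_2(\phi)}d$ (the definition of the 2-SRE with $\wp=1$), all Clifford-specific dependence funnels into the single quantity $2^{-M_2(\phi)}$, producing the coefficient $[2^{1-M_2(\phi)}(d+1)(d+3)-8(d+1)]/[(d+2)(d+3)]$ that multiplies $\alpha_k^l$. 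Two consistency checks guide the bookkeeping: at $l=0$ we have $\bbU_{k,0}=\Cl_n$ and $\alpha_k^0=1$, and the formula must collapse to \thref{thm:VarFCl}; as $l\to\infty$ the defect term vanishes and $V_*(O,\phi)$ must tend to the $4$-design value \eref{eq:VarFHaar}.

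The main obstacle is the commutant bookkeeping at the fourth moment. Unlike $t\le 3$, the fourth-order Clifford commutant contains stabilizer defect operators beyond the symmetric group $S_4$, so one must correctly identify which directions are fixed (contributing the $l$-independent $4$-design part) and which are contracted, verify that the symmetric fidelity input couples only to the $\alpha_k$-eigenvector of the transfer operator (so that no $\beta_k^l$ term from \eref{eq:alphakbetak} survives), and pin down the precise projection coefficients. Once the permutation contractions are organized by cycle type and the lone defect contraction is reduced to $2^{-M_2(\phi)}$ via $P_n$, assembling the stated numerator is a routine algebraic simplification.
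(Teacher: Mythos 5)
Your high-level skeleton---expand $\Omega(\bbU_{k,l})$ over the fourth-order Clifford commutant, track how each interleaved layer contracts the non-permutation directions, then contract against $(O\otimes|\phi\>\<\phi|)^{\otimes2}$---is the same as the paper's, which implements it through \eref{eq:DefV*gi}, \lref{lem:giUkl}, and \lref{lem:TrRiOrhoF}. However, your central structural claim is false: $\Omega(\bbU_{k,l})$ does \emph{not} equal its Haar counterpart plus $\alpha_k^l$ times a single $P_n$-correction. The defect part of the commutant splits into two inequivalent sectors, and the transfer map has two distinct nontrivial eigenvalues: $\alpha_k$ on $\Delta_{[4],G}$ and $\beta_k$ on $\Delta_{[2,2],G}$ (\lref{lem:UTTwirlingDelta}, \thref{thm:OmegaUkl}). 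Consequently the operator $\Omega(\bbU_{k,l})-\Omega(\haar)$ carries both an $\alpha_k^l$ and a $\beta_k^l$ piece, and a priori $V_*(O,\phi)$ contains a $\beta_k^l$ term.

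The reason \eref{eq:VarUklFphi} nonetheless holds is a nontrivial cancellation specific to this input: with $\rho=|\phi\>\<\phi|$ and $O=|\phi\>\<\phi|-\bbone/d$, the overlap $\tr\bigl[\Delta_{[2,2],G}\Tensor{(O\otimes\rho)}{2}\bigr]$ vanishes identically in $|\phi\>$. You flag exactly this step ("verify that no $\beta_k^l$ term survives") but never carry it out, and your operator-level premise would make the check appear unnecessary---yet it is precisely the substantive content of the proof. In the paper it emerges from explicit computation: writing $\xi_i=\tr\bigl[\caR_i\Tensor{(O\otimes\rho)}{2}\bigr]$ and inserting the coefficients of \lref{lem:giUkl} (which contain both $\alpha_k^l$ and $\beta_k^l$) together with the traces of \lref{lem:TrRiOrhoF} into \eref{eq:DefV*gi}, the total $\beta_k^l$ contribution is proportional to
\begin{equation}
-2\xi_1+\xi_2-2\xi_3+2d\,\xi_4-d\,\xi_5=-\frac{d-1}{d}+\frac{d-1}{d}=0,
\end{equation}
which is what kills $\beta_k^l$ and leaves the $\alpha_k^l$ interpolation between \thref{thm:VarFCl} and \eref{eq:VarFHaar}. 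Two further cautions: (i) the defect contraction is not just $d^2\tr\bigl[P_n\Tensor{(|\phi\>\<\phi|)}{4}\bigr]=\|\Xi_\phi\|_4^4$; the defect orbits $\scrT_4,\scrT_5$ consist of operators $R(\sigma\caT_4)$ with $\sigma\in S_4$, whose contractions produce twisted sums such as $\sum_{P\in\bcaP_n}\tr(OPOP)\<\phi|P|\phi\>^2$ (this is exactly what \lref{lem:TrRiOrhoF} computes), so funneling everything into $2^{-M_2(\phi)}$ requires those additional evaluations; (ii) your consistency checks at $l=0$ and $l\to\infty$ cannot close the gap, since any formula mixing $\alpha_k^l$ and $\beta_k^l$ corrections passes both limits.
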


\begin{proposition}\label{pro:VarFtUk}
Suppose $\caU=\tbbU_k$ with $0\leq k\leq n$ and $O=|\phi\>\<\phi|-\bbone/d$ with $|\phi\>\in\caH$. Then
\begin{align}
V_*(O) &\le 2^{1-M_2(\phi)/2} \gamma^k + \frac{6}{d},\label{eq:VarFtUkBound}\\
V_*(O,\phi) &= \frac{ 2^{1-M_2(\phi)}\left[(d^3+4 d^2+3 d) \gamma ^k+(2 d^2+8 d +6)\nu^k-2 d^2 -12 d-10\right]}{(d-1) (d+2) (d+4)}\nonumber \\
&\quad+\frac{2 \left[-(4 d^2+4d) \gamma ^k-(8 d+8) \nu ^k+2 d^2+6 d+16\right]}{(d-1) (d+2) (d+4)},\label{eq:VarFtUkExact}\\
& -\frac{6}{d}<V_*(O,\phi)-2^{1-M_2(\phi)}\gamma^k<\frac{4}{d}. \label{eq:VarFtUkIdealBound}
\end{align}
\end{proposition}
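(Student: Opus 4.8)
The plan is to evaluate the variance directly from its closed form \eqref{eq:DefV*2Design}, $V_*(O,\phi)=(d+1)^2\tr[\Omega(\tbbU_k)(O\otimes\phi)^{\otimes2}]-[\tr(O\phi)]^2$ with $\rho=|\phi\>\<\phi|$, using the explicit expression for the cross moment operator $\Omega(\tbbU_k)$ worked out in \aref{app:Omega}. Since $\tbbU_k=\Tensor{I}{(n-k)}\otimes\Tensor{(HT)}{k}\,\Cl_n$ factors as a fixed unitary $V_0=\Tensor{I}{(n-k)}\otimes\Tensor{(HT)}{k}$ composed with a uniform Clifford, the Clifford twirl $\bbE_C\,\dagtensor{C}{4}(\cdot)\Tensor{C}{4}$ projects onto the fourth-order commutant of $\Cl_n$; by the Schur--Weyl duality for the Clifford group this commutant is spanned by the permutation operators together with the stabilizer projector $P_n$ of \aref{app:SWDClifford}. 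Accordingly $\Omega(\tbbU_k)$ splits into a permutation (symmetric-group) part, which reproduces the 4-design baseline $4(d-1)/[(d+2)(d+3)]$, plus a correction proportional to $V_0^{\dagger\otimes4}P_nV_0^{\otimes4}$.

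First I would determine how the fixed layer acts on the stabilizer projector. Because $P_n=P_1^{\otimes n}$ and $V_0$ is the identity on the first $n-k$ qubits and $HT$ on the remaining $k$, each nontrivial factor contributes an independent single-qubit weight; evaluating the matrix elements of $(HT)^{\otimes4}$ against the four basis vectors $\{|\varphi_\bfu\>\}_{\bfu\in\scrS}$ of $P_1$ (introduced in the proof of \pref{pro:CharNormPhasedW}) yields exactly the two constants $\gamma=3/4$ and $\nu=1/2$ of \eqref{eq:gammanu}. For the single fixed layer of $\tbbU_k$ these appear separately as $\gamma^k$ and $\nu^k$, rather than collapsing into the single weight $\alpha_k$ of \eqref{eq:alphakbetak} that governs the fully interleaved ensemble $\bbU_{k,1}$ of \pref{pro:VarUklFphi}. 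I would then contract $\Omega(\tbbU_k)$ with $(O\otimes\phi)^{\otimes2}$ for $O=|\phi\>\<\phi|-\bbone/d$: the permutation terms reduce to elementary traces such as $\tr(O\phi)$, $\tr(O^2)$, and $\tr(O^2\phi)$, each equal to $1-1/d$ once $O=\phi-\bbone/d$ is inserted, while the stabilizer term contracts to $\|\Xi_\phi\|_4^4=2^{-M_2(\phi)}d$ by \eqref{eq:CharOrhoNormFIdeal} and \eqref{eq:DefM2}, which is the sole source of the $2^{1-M_2(\phi)}$ dependence. Collecting the baseline, the $\gamma^k$- and $\nu^k$-weighted corrections, and the constant $[\tr(O\phi)]^2$ produces the exact formula \eqref{eq:VarFtUkExact}; a sanity check at $k=0$, where $\gamma^0=\nu^0=1$ and $\tbbU_0=\Cl_n$, must recover the Clifford value $[2^{1-M_2(\phi)}(d+1)-4]/(d+2)$ of \thref{thm:VarFCl}.

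The two estimates then follow. For the two-sided bound \eqref{eq:VarFtUkIdealBound} I would isolate the leading term $2^{1-M_2(\phi)}\gamma^k$ in \eqref{eq:VarFtUkExact} and bound the remainder using $0\le\nu^k\le\gamma^k\le1$ and $2^{-M_2(\phi)}\le1$, so that the $\nu^k$ piece together with all $\caO(1/d)$ corrections falls inside the stated window $(-6/d,4/d)$. The upper bound \eqref{eq:VarFtUkBound} does not even require the exact formula: by \thref{thm:VartUk}, $V_*(O)\le\gamma^k V_\triangle(O)+6\|O\|_2^2/d$, and since $\|O\|_2^2=1-1/d<1$ while $V_\triangle(O)<2^{1-M_2(\phi)/2}(d+1)/(d+2)<2^{1-M_2(\phi)/2}$ by \eqref{eq:VTriangleFUpper}, one obtains $V_*(O)\le2^{1-M_2(\phi)/2}\gamma^k+6/d$ at once.

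The step I expect to be the \emph{main obstacle} is the second one: even granting the form of $\Omega(\tbbU_k)$ from \aref{app:Omega}, one must track the mixed powers $\gamma^k$ and $\nu^k$ through the contraction and collect the resulting rational functions of $d$ with enough precision to pin down the exact coefficients in \eqref{eq:VarFtUkExact}. Because the target window in \eqref{eq:VarFtUkIdealBound} is tight (width $\caO(1/d)$), the subleading $\nu^k$ and constant terms cannot be discarded prematurely, so the challenge is bookkeeping accuracy rather than any single hard inequality.
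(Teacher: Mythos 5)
Your high-level route is the paper's own: \eref{eq:VarFtUkBound} is obtained exactly as in the paper (from \thref{thm:VartUk} together with \eref{eq:VTriangleFUpper} and $\|O\|_2^2<1$); the exact formula is obtained by contracting an explicit expansion of the cross moment operator $\Omega(\tbbU_k)$ against $(O\otimes\phi)^{\otimes 2}$ (the paper does this via \eref{eq:DefV*gi}, \lref{lem:gitUk}, and \lref{lem:TrRiOrhoF}); and \eref{eq:VarFtUkIdealBound} follows by isolating $2^{1-M_2(\phi)}\gamma^k$ and bounding the remainder, with the $k=0$ check against \thref{thm:VarFCl} serving the same role as in the paper.

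There is, however, a concrete error in your structural premise that would corrupt \eref{eq:VarFtUkExact} if executed literally. The commutant of the fourth Clifford tensor power is \emph{not} spanned by the $24$ permutation operators together with $P_n$: it is spanned by $\{R(\caT)\}_{\caT\in\Sigma_{4,4}}$, i.e., by the permutations together with the \emph{six} mixed products $R(\sigma\caT_4)=dR(\sigma)P_n$, $\sigma\in\tS_3$, which are linearly independent of the permutations and of $P_n$ alone (cf.~\pref{pro:CommSpan}). Already at $k=0$ your span fails: $\Omega(\Cl_n)\propto \caR_1+\caR_4$ with $\caR_4=dP_n+dR((12))P_n$, and the two pieces of $\caR_4$ contract differently, $\tr[dP_n(O\otimes\rho)^{\otimes2}]=\|\Xi_{\rho,O}\|_2^2$ versus $\tr[dR((12))P_n(O\otimes\rho)^{\otimes2}]=\tXi_{\rho,O}\cdot\Xi_{\rho,O}$. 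Correspondingly, the deviation of $\Omega(\tbbU_k)$ from the $4$-design baseline is not proportional to a single operator built from $V_0^{\dagger\otimes4}P_nV_0^{\otimes4}$: in the relevant $G$-invariant sector it has \emph{two} independent components, the orbit sums $\caR_4$ and $\caR_5$ of \eref{eq:DefcaRi}, whose coefficients $g_4(\tbbU_k)$ and $g_5(\tbbU_k)$ in \lref{lem:gitUk} carry $\gamma^k$ and $\nu^k$ in different combinations, with $g_5(\tbbU_k)\neq 0$ for all $k\geq1$ and with a distinct contraction $\tr[\caR_5(O\otimes\phi)^{\otimes2}]=(2^{2-M_2(\phi)}d^2-7d+3)/d^2$ (\lref{lem:TrRiOrhoF}). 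Your own correct remark that $\gamma^k$ and $\nu^k$ enter with separate weights is in fact incompatible with a one-parameter correction. The gap is repairable precisely by the machinery you cite: view $\tbbU_k$ as a Clifford twirl in the rotated basis $\caB_T$, so that $\Omega(\tbbU_k)=\Omega(\Cl_n,\caB_T)$ is given by \pref{pro:OmegatUk} and \lref{lem:gitUk} — but the bookkeeping you identify as the main obstacle must be carried out in this full, correctly parametrized basis, not in the $25$-dimensional span of permutations and $P_n$.
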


If $|\phi\>$ is a stabilizer state, then $M_2(\phi)=0$ and \eref{eq:VarFtUkExact} implies that
\begin{align}
2\gamma^k-\frac{6}{d+2}\le V_*(O,\phi) = \frac{2 \left[(d^2 +d)\gamma ^k+(2 d+2) \nu ^k-6\right]}{(d+2) (d+4)} <2\gamma^k,
\end{align}
given that $0<\gamma^k,\nu^k\le 1$. Here the lower bound is saturated when $k=0$, so the upper bound in \eref{eq:VarFtUkIdealBound} is nearly tight.

Thanks to \thref{thm:VarFUkl} and
\pref{pro:VarFtUk}, $V_*(O,\phi) =2^{1-M_2(\phi)}\gamma^k +\caO(d^{-1})$
for shadow estimation based on all three ensembles $\bbU_{k,1}$,  $\bbU_{1,k}$, and $\tbbU_k$. So  the three ensembles  have similar performances in thrifty shadow estimation, and their distinctions vanish quickly as the system size increases, as illustrated in  \Fref{fig:CompareEnsemble}. 
Nevertheless, the ensemble $\tbbU_k$ is exceptional when $k=n=1$, in which case \eref{eq:VarFtUkExact} implies that
\begin{equation}
V_*(O,\phi) = \frac{1}{2}-\frac{3}{8}\times 2^{-M_2(\phi)}. 
\end{equation}
Here the variance $V_*(O,\phi)$ increases monotonically with $M_2(\phi)$, in sharp contrast with the general behavior for $n\geq 2$.  For example, $V_*(O,\phi) = 1/8$ when $|\phi\>$ is a stabilizer state, 
 while $V_*(O,\phi)=7/32$ when $|\phi\> = |S_{1,1}(\pi/4)\>$.

\subsection{\label{app:VDepolarizeProof}Proof of \lref{lem:VDepolarization}}
Before proving \lref{lem:VDepolarization}, we need to introduce an auxiliary lemma.
\begin{lemma}\label{lem:OmegaUPT}
	Suppose $\caU$ is a unitary 2-design on $\caH$. Then 
	\begin{align}\label{eq:OmegaUPT}
	\tr_4 \Omega(\caU)=\frac{(\Tensor{\bbone}{2}+\SWAP)\otimes \bbone}{d+1}. 
	\end{align}
\end{lemma}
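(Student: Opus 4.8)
The plan is to compute the partial trace directly from the definition of $\Omega(\caU)$ in \eref{eq:OmegaU}, exploiting the key structural fact that tracing out the fourth tensor factor collapses this fourth-moment object into a genuine second-moment object, so that the standing assumption that $\caU$ is a unitary 2-design turns out to be exactly enough. This reduction is the whole point: one might expect a formula built from the fourth cross moment operator to require a 4-design, but the partial trace kills two of the four factors.

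First I would expand the definition, writing $\dagtensor{U}{4}\bigl[\Tensor{(|\bfa\>\<\bfa|)}{2}\otimes\Tensor{(|\bfb\>\<\bfb|)}{2}\bigr]\Tensor{U}{4}$ as the tensor product $(U^\dag|\bfa\>\<\bfa|U)\otimes(U^\dag|\bfa\>\<\bfa|U)\otimes(U^\dag|\bfb\>\<\bfb|U)\otimes(U^\dag|\bfb\>\<\bfb|U)$ inside the expectation and the double sum over $\bfa,\bfb$. Taking $\tr_4$ acts only on the last factor and produces the scalar $\tr(U^\dag|\bfb\>\<\bfb|U)=\tr(|\bfb\>\<\bfb|)=1$, eliminating one of the two $\bfb$-dependent factors. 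The remaining sum over $\bfb$ then hits only the third factor, and since $\sum_\bfb U^\dag|\bfb\>\<\bfb|U=U^\dag\bbone U=\bbone$, this factor becomes the identity. What survives is
\[
\tr_4\Omega(\caU)=\left[\bbE_{U\sim\caU}\sum_\bfa (U^\dag|\bfa\>\<\bfa|U)^{\otimes 2}\right]\otimes\bbone .
\]

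The bracketed operator is a second-moment quantity, which I would evaluate with \lref{lem:UnitaryAvg}, applying \eref{eq:TensorAvg} to $A=|\bfa\>\<\bfa|$. Since $A$ is a rank-one projector, $\tr A=1$ and $\|A\|_2^2=\tr A=1$, so both coefficients in \eref{eq:TensorAvg} collapse to $(d-1)/[d(d^2-1)]=1/[d(d+1)]$, giving $\bbE_U(U^\dag|\bfa\>\<\bfa|U)^{\otimes 2}=(\Tensor{\bbone}{2}+\SWAP)/[d(d+1)]$ for each $\bfa$. Summing over the $d$ computational-basis labels $\bfa\in\{0,1\}^n$ multiplies this by $d$, producing $(\Tensor{\bbone}{2}+\SWAP)/(d+1)$, whence $\tr_4\Omega(\caU)=(\Tensor{\bbone}{2}+\SWAP)\otimes\bbone/(d+1)$, as claimed.

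There is no real obstacle here; the only subtlety is aligning the orientation $U^\dag A U$ appearing above with the orientation $UAU^\dag$ in \eref{eq:TensorAvg}. This is harmless because $A=|\bfa\>\<\bfa|$ is Hermitian and the 2-design average coincides with the Haar average, which is invariant under $U\mapsto U^\dag$. Everything else is routine bookkeeping, and the entire conceptual content lies in the first reduction, which is precisely what explains why a 2-design suffices despite the appearance of the fourth cross moment operator.
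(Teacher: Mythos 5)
Your proof is correct and follows essentially the same route as the paper's: trace out the fourth factor using $\tr(U^\dag|\bfb\>\<\bfb|U)=1$, resolve the $\bfb$-sum into the identity on the third factor, and evaluate the remaining second-moment sum over $\bfa$ via \eref{eq:TensorAvg} of \lref{lem:UnitaryAvg}. Your extra remark on reconciling the $U^\dag AU$ versus $UAU^\dag$ orientation is a detail the paper glosses over, and your handling of it is sound.
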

\begin{proof}[Proof of \lref{lem:OmegaUPT}]
	By virtue of the definition in \eref{eq:OmegaU} we can deduce that
	\begin{align}
	\tr_4 \Omega(\caU)&=\sum_{\bfa,\bfb} \bbE_{U\sim\caU} \dagtensor{U}{3} \bigl[\Tensor{(|\bfa\>\<\bfa|)}{2}\otimes |\bfb\>\<\bfb|\bigr] \Tensor{U}{3}=\sum_{\bfa} \bbE_{U\sim\caU} \Tensor{\left(U^\dag|\bfa\>\<\bfa|U\right)}{2}\otimes \bbone=\frac{(\bbone^{\otimes 2}+\SWAP)\otimes \bbone}{d+1},
	\end{align}
	which confirms \eref{eq:OmegaUPT}. 
	Here the second equality holds because $\sum_{\bfb}|\bfb\>\<\bfb|=\bbone$, and the last equality follows from \lref{lem:UnitaryAvg}. 
\end{proof}

\begin{proof}[Proof of \lref{lem:VDepolarization}] \lref{lem:VDepolarization} is a simple corollary of  \eref{eq:DefV*2Design} and the following equations,
\begin{align}
[\tr(O\rho_p)]^2& =(1-p)^2 [\tr(O\rho)]^2,\\
\tr[\Omega(\caU)\Tensor{(O\otimes\rho_p)}{2}] &=  (1-p)^2\tr[\Omega(\caU)\Tensor{(O\otimes \rho)}{2}]+\frac{ p(1-p)}{d}\tr[\Omega(\caU)(O\otimes \rho\otimes O\otimes\bbone)]\nonumber\\
&\quad +\frac{ p(1-p)}{d}\tr[\Omega(\caU)(O\otimes \bbone\otimes O\otimes\rho)] + \frac{ p^2}{d^2}\tr[\Omega(\caU)(O\otimes \bbone\otimes O\otimes\bbone)]\nonumber\\
&=  (1-p)^2\tr[\Omega(\caU)\Tensor{(O\otimes \rho)}{2}],
\end{align}
given that $O$ is traceless by assumption. Here the second equality follows from \lref{lem:OmegaUPT}, which means
\begin{align}
\tr[\Omega(\caU)(O\otimes \bbone\otimes O\otimes\rho)]=\tr[\Omega(\caU)(O\otimes \rho\otimes O\otimes\bbone)] =\tr[\Omega(\caU)(O\otimes \bbone\otimes O\otimes\bbone)]=0.
\end{align}
\end{proof}

\subsection{\label{app:AverageVFproof}Proof of \pref{pro:AverageVF}}
\begin{proof}[Proof of \pref{pro:AverageVF}]
Let $O_0=|0\>\<0|-\bbone/d$. By virtue of \eref{eq:VarShadow} we can deduce that (cf. the proof of \pref{pro:VarNoiseShadow})
\begin{align}
\bbE_{|\phi\>\sim\haar}V(O,\phi) &=  \bbE_{W\sim\haar} V(WO_0W^\dagger, W|0\>)\nonumber\\
&=\bbE_{U\sim\caU}\bbE_{W\sim \haar} \sum_{\bfb} (d+1)^2\tr\left[\Tensor{\bigl(U^\dag |\bfb\>\<\bfb|U\bigr)}{3} W^{\otimes 3}\bigl(|0\>\<0|\otimes\Tensor{O_0}{2}\bigr)W^{\dag\otimes 3}\right]-\frac{(d-1)^2}{d^2}\nonumber\\
&=\frac{6(d+1)}{d+2}\tr\bigl[P_{[3]}\bigl(|0\>\<0|\otimes\Tensor{O_0}{2}\bigr)\bigr]-\frac{(d-1)^2}{d^2}=\frac{d+1}{d+2}\bigl[\tr\bigl(O_0^2\bigr)+2\tr\bigl(|0\>\<0| O_0^2\bigr)\bigr]-\frac{(d-1)^2}{d^2}\nonumber\\
&=\frac{d+1}{d+2}\biggl[\frac{d-1}{d}+\frac{2(d-1)^2}{d^2} \biggr]-\frac{(d-1)^2}{d^2}
=\frac{2(d-1)}{d+2},
\end{align}
where $P_{[3]}$ is the projector onto the symmetric subspace in $\caH^{\otimes 3}$, and "Haar" denotes the Haar measure on the set of pure states or the unitary group on $\caH$ depending on the context. This equation confirms \eref{eq:AverageVF}; note that the third expression above  is exactly the variance in  shadow  estimation based on the Haar random unitary ensemble.

By virtue of \eref{eq:DefV*2Design} we can deduce that
	\begin{align}
	\bbE_{|\phi\>\in\haar} V_*(O,\phi) &= \bbE_{W\sim\haar} V_*(WO_0W^\dagger, W|0\> )\nonumber\\
	&=(d+1)^2 \bbE_{W\sim\haar}\tr\left[\Omega(\caU)\Tensor{W}{4}\Tensor{(O_0\otimes|0\>\<0|)}{2}\dagtensor{W}{4} \right]-\frac{(d-1)^2}{d^2} \nonumber\\
	&=(d+1)^2 \tr\left[\Omega(\haar)\Tensor{(O_0\otimes|0\>\<0|)}{2} \right]-\frac{(d-1)^2}{d^2}=\frac{4(d-1)}{(d+2)(d+3)},
	\end{align}
which confirms \eref{eq:AverageV*F}. Here	
$\Omega(\haar)$ is the cross moment operator associated  with the Haar random unitary ensemble and the last equality follows from 
\thref{thm:VarHaar} (proved in \aref{app:ProofHaar}); note that the third expression above  is exactly the variance in thrifty shadow estimation based on the Haar random unitary ensemble.
\end{proof}

\subsection{Proof of \pref{pro:VTriangle}}
\begin{proof}[Proof of \pref{pro:VTriangle}]
According to the definition of $V_\triangle(O,\rho)$ in \eref{eq:VTriangleDef}, we have
\begin{equation}\label{eq:VTriangleProof}
V_\triangle(O,\rho) \le \frac{2(d+1)}{d(d+2)}\|\Xi_{\rho, O}\|_2^2 \le \frac{2(d+1)}{d(d+2)}\|\Xi_O\|_4^2 \|\Xi_\rho\|_4^2=\frac{2(d+1)}{d(d+2)} \sqrt{2^{-\tM_2(\rho)}d\wp\|\Xi_O\|_4^4},
\end{equation}
which confirms \eref{eq:VTriangleSRE}.
Here the first inequality follows from \lref{lem:CharOrhoNorm}, the second inequality follows from the Cauchy-Schwarz inequality, and the equality follows from the definition of $\tM_2(\rho)$ in \eref{eq:DefM2}.  In addition,
\begin{align}
V_\triangle(O,\rho) &\le \frac{2(d+1)}{d(d+2)}\|\Xi_{\rho, O}\|_2^2 = \frac{2(d+1)}{d(d+2)}\Xi_\rho^2 \cdot \Xi_O^2 = \frac{2(d+1)}{d(d+2)}\Xi_{\mathring{\rho}}^2 \cdot \Xi_O^2\nonumber\\
&\le \frac{2(d+1)}{d(d+2)}\|\Xi_{\mathring{\rho}}^2\|_\infty\|\Xi_O^2\|_1 = \frac{2(d+1)}{d+2}\|\Xi_{\mathring{\rho}}\|_\infty^2\|O\|_2^2,
\end{align}
which confirms \eref{eq:VTriangleNorminf}. Here the second equality holds because $\tr(O)=0$, the second inequality follows from H\"{o}lder's inequality, and the last equality holds because $\|\Xi_O^2\|_1=\|\Xi_O\|_2^2=d\|O\|_2^2$ by \eref{eq:CharOrelation}.  \Eref{eq:VTriangleNorm2} follows  from the first inequality in \eref{eq:VTriangleProof} and \lref{lem:CharOrhoNorm}.


When $O=|\phi\>\<\phi|-\bbone/d$, by virtue of \eref{eq:VTriangleProof} and \lref{lem:CharOrhoNormF} we can deduce that
\begin{align}
V_\triangle(O)\le \frac{2(d+1)}{d(d+2)}\|\Xi^2_{\phi}\|_{[d]} \le \frac{2(d+1)}{d^{1/2}(d+2)}\|\Xi^4_{\phi}\|_{[d]}^{1/2}\le \frac{2(d+1)}{d^{1/2}(d+2)} \|\Xi_{\phi}\|_4^2 =\frac{2^{1-M_2(\phi)/2}(d+1)}{d+2},
\end{align}
which confirms \eref{eq:VTriangleFUpper}. Here, the second inequality holds because the arithmetic mean is no more than the quadratic mean, and the third inequality holds because $\Xi_{\phi}^4(P)\ge 0$ for each $P\in \bcaP_n$. 
\Eref{eq:VTriangleFExact} follows from the definition of $V_\triangle(O,\rho)$ in \eref{eq:VTriangleDef}  and \lref{lem:CharOrhoNormF} with $\rho=|\phi\>\<\phi|$. 
\end{proof}

\section{The commutant of the $t$th Clifford tensor power}\label{app:SWDClifford}
To better understand the mathematical foundation of thrifty shadow estimation based on the Clifford group, here we need to recapitulate the main results on the commutant of the $t$th Clifford tensor power following  \rscite{Zhu2016Fail4Design,Gross2021Duality}.

\subsection{Schur-Weyl duality of Clifford tensor powers}\label{app:LagrangianSub}
The famous  Schur-Weyl duality states that the commutant of the $t$th (diagonal) tensor power of the unitary group $\rmU(d)$ on $\caH$ is spanned by 
permutation operators \cite{Goodman2009SchurWeyl,Pavel2009SchurWeyl}. 
Since the Clifford group $\Cl_n$ forms a unitary 3-design, but not a 4-design \cite{Zhu2016Fail4Design,Gross2021Duality}, the commutant of its $t$th tensor power is also generated by permutation operators when $t\leq 3$, but is strictly larger when $t\geq 4$. To describe the commutant, we need to introduce additional concepts.

A subspace $\caT\le \bbZ_2^{2t}$ is a \emph{stochastic Lagrangian subspace} \cite{Gross2021Duality} if it satisfies the following three conditions:
\begin{enumerate}
\item $\bfx\cdot\bfx=\bfy\cdot\bfy \mod 4$ for all $(\bfx,\bfy)\in \caT$.
\item $\caT$ has dimension $t$. 
\item $\mathbf{1}_{2t} = (11\cdots1)^\top\in \caT$.
\end{enumerate}
Here $\bfx$ and $\bfy$ are vectors in $\bbZ_2^{t}$, and $(11\cdots1)^\top$ is a shorthand for $(1,1,\dots,1)^\top$. The first condition means $\caT$ is totally isotropic with respect to the quadratic form $\frq(\bfx,\bfy)=\bfx\cdot\bfx-\bfy\cdot\bfy \mod 4$. The first two conditions together means $\caT$ is a Lagrangian subspace. The set of all stochastic Lagrangian subspaces in $ \bbZ_2^{2t}$ is denoted by  $\Sigma_{t,t}$ henceforth.

A $t\times t$ matrix $O$ over $\bbZ_2$ is a  \emph{stochastic isometry}  if $O\bfx\cdot O\bfx = \bfx\cdot \bfx \mod 4$ for all $\bfx\in\bbZ_2^t$ \cite{Gross2021Duality}. 
All $t\times t$ stochastic isometries over $\bbZ_2$  form the stochastic orthogonal group, which is denoted by $O_t$ henceforth. Note that every  $t\times t$ permutation matrix over $\bbZ_2$ is a stochastic isometry. The group of all such permutation matrices is a subgroup of $O_t$ and is denoted by $S_t$ henceforth. Alternatively, permutations in $S_t$ can be represented by cycles or their products.
Each stochastic isometry $O$ in $O_t$ defines a stochastic Lagrangian subspace as follows, 
\begin{align}
\caT_O := \{(O\bfx,\bfx)\;|\;\bfx\in\bbZ_2^t\}. 
\end{align}
In this way, $O_t$ and  $S_t$ can be identified as subsets of the set  $\Sigma_{t,t}$ of stochastic Lagrangian subspaces. In addition, $O\in O_t$ has  natural left and right actions on $\Sigma_{t,t}$ defined as follows,
\begin{equation}
O\caT := \left\{(O\bfx,\bfy)\;|\;(\bfx,\bfy)\in \caT \right\},\quad TO := \left\{\bigl(\bfx,O^\top\bfy\bigr)\;|\;(\bfx,\bfy)\in \caT \right\}.
\end{equation}
Under these actions, $\Sigma_{t,t}(d)$ is decomposed into a disjoint union of double cosets:
\begin{equation}\label{eq:doublecosets}
\Sigma_{t,t}(d) = O_t(d)\caT_1 O_t(d) \cup O_t(d)\caT_2 O_t(d) \cup \cdots \cup O_t(d)\caT_k O_t(d),
\end{equation}
where $\caT_1, \caT_2,\dots, \caT_k\in \Sigma_{t,t}(d)$ are  coset representatives. Note that one of the double cosets can be identified with $O_t(d)$.

To further understand the structure of  stochastic Lagrangian subspaces, we need to introduce two additional concepts. A subspace $\caN\le \bbZ_2^t$ is a \emph{defect subspace} if $\bfx\cdot\bfx=0\mod 4$ for all $\bfx\in \caN$, which means $\caN$ is $q$-isotropic with respect to the quadratic form $q(\bfx) = \bfx\cdot\bfx \mod 4$ \cite{Gross2021Duality}. In this case, we have $\caN\leq \caN^\perp$, where $\caN^\perp$ denotes the orthogonal complement of $\caN$ with respect to the inner product modulo 2. Given two defect subspaces $\caN, \caN'$ in $\bbZ_2^t$, a linear map $J: \caN^\perp/\caN\rightarrow \caN'^\perp/\caN'$ is called a \emph{defect isomorphism} if it satisfies the following two conditions,
\begin{enumerate}
\item $q\left(J[\bfx]\right)=q\left([\bfx]\right)$ for all $[\bfx]\in \caN^\perp/\caN$.
\item $J[\mathbf{1}_t] = [\mathbf{1}_t]$.
\end{enumerate}
In that case, a stochastic Lagrangian subspace in $\Sigma_{t,t}$ can be constructed as follows,
\begin{equation}
\caT = \{(\bfx+\bfz,\bfy+\bfw)\; |\; [\bfy]\in \caN^\perp/\caN, [\bfx]=J[\bfy] ,\bfw\in \caN,\bfz\in \caN'\}.
\end{equation}
Moreover, any  stochastic Lagrangian subspace in $\Sigma_{t,t}$ can be induced from two defect subspaces and a defect isomorphism
in this way \cite{Gross2021Duality}.  If $\caN'=\caN$ and the defect isomorphism $J$ is the identity map, then $\caT$ is called  a  stochastic Lagrangian subspace of Calderbank-Shor-Steane (CSS) type and can be expressed as follows,
\begin{equation}\label{eq:TCSS}
\caT = \{(\bfx+\bfz,\bfx+\bfw)\; |\; [\bfx]\in \caN^\perp/\caN,\bfw, \bfz\in \caN
\}.
\end{equation}

For each stochastic Lagrangian subspace $\caT$ in $\Sigma_{t,t}$, we can define an operator $r(\caT)$ on $\bbC_2^{\otimes t}$ and an operator $R(\caT)$ on $\Tensor{\bigl(\bbC_2^{\otimes t}\bigr)}{n}$ as follows:
\begin{equation}\label{eq:DefRT}
r(\caT) := \sum_{(\bfx,\bfy)\in \caT} |\bfx\>\<\bfy|,\quad R(\caT) := \Tensor{r(\caT)}{n}.
\end{equation}
This definition is compatible with the action of the  stochastic orthogonal group $O_t$ in the following sense
\begin{align}\label{eq:RORT}
 r(O\caT)= r(O)r(\caT),\quad  R(O\caT)= R(O)R(\caT)\quad \forall O\in O_t,\quad \caT\in \Sigma_{t,t}.
\end{align}
Note that $\Tensor{\bigl(\bbC_2^{\otimes t}\bigr)}{n}$ is isomorphic to $\caH^{\otimes t}$, and $R(\caT)$ can be regarded as an operator on $\caH^{\otimes t}$ thanks to this  isomorphism. Moreover, $R(\caT)$ belongs to the commutant of $\Cl_n^{\tilde{\otimes}t}:=\{U^{\otimes t}\;|\;U\in \Cl_n\}$ \cite{Gross2021Duality}, where the notation $\tilde{\otimes}$ is used to denote the diagonal tensor power. Actually, $\{R(\caT)\}_{\caT\in \Sigma_{t,t}}$  spans the commutant  when  $n\ge t-1$ \cite{Gross2021Duality}. If $\caT$ is a stochastic Lagrangian subspace of CSS type, then it is uniquely determined by a defect subspace $\caN$ via \eref{eq:TCSS}. In addition, $\caN$ can be used to  define a CSS stabilizer group as follows,
\begin{equation}\label{eq:CSS}
\CSS(\caN):=\{Z_{\bfp}X_{\bfq}\;|\; \bfp,\bfq\in \caN\},
\end{equation}
where $Z_\bfp$ and $X_\bfq$ are defined in \eref{eq:ZpXq}. Denote by $|\caN|$  the cardinality of $\caN$ and let 
\begin{equation}\label{eq:DefCSSProjector}
P_{\caN}=\frac{1}{|\caN|^2} \sum_{\bfp,\bfq\in \caN} Z_{\bfp}X_{\bfq}
\end{equation}
be the projector onto the corresponding code space. Then the operators  $r(\caT)$ and $R(\caT)$ can be expressed as follows,
\begin{equation}\label{eq:CSSProjector}
r(\caT) = |\caN|P_{\caN},\quad  R(\caT)=r(\caT)^{\otimes n} = |\caN|^n P_{\caN}^{\otimes n}.
\end{equation}
This observation explains why $\caT$ is called a stochastic Lagrangian subspace of CSS type.

\subsection{The commutant of the fourth Clifford tensor power}
In this section, we focus on the commutant of the fourth Clifford tensor power \cite{Zhu2016Fail4Design,Gross2021Duality}, which corresponds to the case  $t=4$. It is known that $O_4=S_4$ has order 24; $\Sigma_{4,4}$ has cardinality 30 and can be decomposed into two double cosets,
\begin{equation}\label{eq:Sigma44}
\Sigma_{4,4}=S_4\cup S_4 \caT_4 S_4,
\end{equation}
where  $\caT_4$ is the stochastic Lagrangian subspace of CSS type determined by  the defect subspace  $\caN = \mathrm{span}\bigl\{(1111)^\top\bigr\}$ 
via \eref{eq:TCSS}.
Note that  $\caN$ is the only nontrivial defect subspace in $\bbZ_2^4$. In addition, $\caT_4$ can be represented by the generating matrix
\begin{equation}\label{eq:FormT4}
\left(\begin{array}{cccc|cccc}
1 & 0 & 0 & 1 & 1 & 0 & 0 & 1\\
0 & 1 & 0 & 1 & 0 & 1 & 0 & 1\\
0 & 0 & 0 & 0 & 1 & 1 & 1 & 1\\
1 & 1 & 1 & 1 & 0 & 0 & 0 & 0
\end{array}\right)^{\!\!\!\!\top}.
\end{equation}

\begin{lemma}\label{lem:S3T4}
Suppose  $\tS_3$ is the subgroup of $S_4$ that is generated by the two cycles $(12)$ and $(123)$. Then 
\begin{gather}
\tS_3 \caT_4=S_4 \caT_4 S_4,   \label{eq:S3T4}\\
\tr R(\sigma\caT_4)=d^{l(\sigma)}\quad \forall \sigma\in \tS_3,  \label{eq:sigmaT4tr}
\end{gather}
where $R(\caT)$ is defined in \eref{eq:DefRT} and $l(\sigma)$ denotes the number of cycles in $\sigma$.
\end{lemma}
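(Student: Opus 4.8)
The plan is to handle the two assertions separately, working throughout from the concrete description of $\caT_4$. Since $\caT_4$ is of CSS type with defect subspace $\caN=\mathrm{span}\{(1111)^\top\}$, \eqref{eq:TCSS} lets me rewrite it as
\[
\caT_4=\{(\bfa,\bfb)\;|\;\bfa,\bfb\in\caN^\perp,\ \bfa+\bfb\in\caN\},
\]
where $\caN=\{0000,1111\}$ and $\caN^\perp$ is the space of even-weight vectors in $\bbZ_2^4$. I would first record that $\tS_3=\langle(12),(123)\rangle\cong S_3$ has order $6$ and fixes the coordinate $4$; this last feature is what drives both parts.

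For \eqref{eq:S3T4}, the inclusion $\tS_3\caT_4\subseteq S_4\caT_4 S_4$ is immediate, and from \eqref{eq:Sigma44} together with $|\Sigma_{4,4}|=30$ and $|S_4|=24$ I get $|S_4\caT_4 S_4|=6$. Hence it suffices to show that the six subspaces $\{\sigma\caT_4:\sigma\in\tS_3\}$ are pairwise distinct, i.e. that the left stabilizer $\mathrm{Stab}_L(\caT_4)=\{\sigma\in S_4:\sigma\caT_4=\caT_4\}$ meets $\tS_3$ only in the identity. Using the description above, $\sigma\caT_4=\caT_4$ is equivalent to $\sigma\bfa+\bfa\in\caN$ for every $\bfa\in\caN^\perp$, i.e. to $\sigma$ acting trivially on the two-dimensional quotient $\caN^\perp/\caN$. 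This identifies $\mathrm{Stab}_L(\caT_4)$ with the kernel of the natural action $S_4\to\mathrm{GL}(\caN^\perp/\caN)\cong S_3$, which is the Klein four-group $V=\{e,(12)(34),(13)(24),(14)(23)\}$; a short check on a basis $\{1100,1010,1001\}$ of $\caN^\perp$ confirms this. Since every nontrivial element of $V$ moves the point $4$ while $\tS_3$ fixes it, $V\cap\tS_3=\{e\}$, so $|\tS_3\caT_4|=6$ and equality follows.

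For \eqref{eq:sigmaT4tr}, I use $R(\caT)=r(\caT)^{\otimes n}$ from \eqref{eq:DefRT}, so that $\tr R(\sigma\caT_4)=(\tr r(\sigma\caT_4))^n$ and it is enough to prove $\tr r(\sigma\caT_4)=2^{l(\sigma)}$, where $l(\sigma)$ is the number of cycles of $\sigma$ regarded as a permutation of $\{1,2,3\}$ (so that $2^{nl(\sigma)}=d^{l(\sigma)}$). By definition $\tr r(\sigma\caT_4)$ counts the diagonal terms $(\bfx,\bfx)\in\sigma\caT_4$, and the left action gives
\[
\tr r(\sigma\caT_4)=\bigl|\{\bfx\in\caN^\perp:\sigma^{-1}\bfx+\bfx\in\caN\}\bigr|.
\]
Now $\sigma^{-1}\bfx+\bfx$ always carries a $0$ in coordinate $4$ (because $\sigma$ fixes $4$), so it can never equal $1111$; thus the condition collapses to $\sigma^{-1}\bfx=\bfx$, and I am counting the even-weight vectors fixed by $\sigma$. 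A fixed vector assigns one bit to each cycle of $\sigma$; writing the $l(\sigma)$ cycles on $\{1,2,3\}$ together with the singleton $\{4\}$, the even-weight constraint is a single linear equation over $\bbF_2$ whose coefficient on the $\{4\}$-bit is $1$. Hence for each of the $2^{l(\sigma)}$ assignments of bits to the $\{1,2,3\}$-cycles the $\{4\}$-bit is uniquely forced, giving exactly $2^{l(\sigma)}$ even-weight fixed vectors and therefore $\tr R(\sigma\caT_4)=d^{l(\sigma)}$.

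The two places that need genuine care, as opposed to bookkeeping, are the identification of $\mathrm{Stab}_L(\caT_4)$ with the Klein four-group (this is exactly what makes the count in Part 1 come out to $6$) and the observation in Part 2 that, because $\tS_3$ fixes coordinate $4$, the "complementation" solutions $\sigma^{-1}\bfx=\bfx+1111$ never occur. The latter is what removes the second family of diagonal terms and collapses the count to the clean value $2^{l(\sigma)}$, thereby tying the trace to the cycle type of $\sigma$. Everything else is finite linear algebra over $\bbF_2$.
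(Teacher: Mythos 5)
Your proof is correct and takes essentially the same route as the paper's: both parts rest on the CSS description of $\caT_4$ via $\caN=\mathrm{span}\{(1111)^\top\}$, obtain $|S_4\caT_4S_4|=30-24=6$ from \eref{eq:Sigma44}, show the six $\tS_3$-translates are pairwise distinct through the action on $\caN^\perp/\caN$, and compute the trace by counting diagonal pairs fixed by $\sigma$. The only cosmetic differences are that you name the left stabilizer explicitly as the Klein four-group (the paper phrases this as injectivity of the map from $\tS_3$ into the defect automorphisms) and that you count fixed vectors in $\caN^\perp$ directly via the coordinate-4 observation, whereas the paper counts fixed cosets in $\caN^\perp/\caN$ (of dimension $l(\sigma)-1$) and multiplies by the coset size.
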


\begin{proof}
Note that $\caN=\mathrm{span}\bigl\{(1111)^\top
\bigr\}$ is the only nontrivial defect subspace in $\bbZ_2^4$ and is invariant under the action of $S_4$, and $\caT_4$ is determined by $\caN$  via \eref{eq:TCSS}. 
The orthogonal complement $\caN^\perp$ of $\caN$ is a three-dimensional subspace spanned by the three vectors  $(1100)^\top, (0110)^\top, (1111)^\top$, while $\caN^\perp/\caN$ is a two-dimension vector space spanned by  $\bigl[(1100)^\top\bigr]$ and $\bigl[(0110)^\top\bigr]$. The group of defect automorphisms on $\caN^\perp/\caN$ is isomorphic to $S_3$ and $\tS_3$.

In addition, the action of each $\sigma\in S_4$ on $\caT_4$ can be expressed as follows,
\begin{equation}
\sigma\caT_4 = \bigl\{(\sigma\bfx+\sigma\bfz,\bfx+\bfw)\; |\; [\bfx]\in \caN^\perp/\caN,\bfw, \bfz\in \caN
\bigr\}=\bigl\{(\sigma\bfx+\bfz,\bfx+\bfw)\; |\; [x]\in \caN^\perp/\caN,\bfw, \bfz\in \caN
\bigr\}.
\end{equation}
Note that each $\sigma\in S_4$ induces a defect automorphism on $\caN^\perp/\caN$, and
this map is injective when restricted to $\tS_3$. In addition, $\sigma\caT_4=\caT_4$ iff the induced defect automorphism is trivial (coincides with the identity map). Therefore, $|\tS_3\caT_4|=6=|S_4\caT_4S_4|$, which implies \eref{eq:S3T4}. 

Next, suppose $\sigma\in \tS_3$; then 
\begin{align}
\dim\{[\bfx]\in \caN^\perp/\caN \;|\; [\sigma \bfx]=[\bfx]  \}=l(\sigma)-1.
\end{align}
In conjunction with the previous equation we can deduce that
\begin{align}
\tr r(\sigma\caT_4)=2^{l(\sigma)},\quad \tr R(\sigma\caT_4)=[\tr r(\sigma\caT_4)]^n=d^{l(\sigma)},
\end{align}
which confirms \eref{eq:sigmaT4tr} and completes the proof of \lref{lem:S3T4}. 
\end{proof}

When  $n\ge 3$, the  set $\{R(\caT)=\Tensor{r(\caT)}{n}\}_{\caT\in\Sigma_{4,4}}$ is linearly independent and spans  the commutant of $\Cl_n^{\tilde{\otimes}4}$ \cite{Gross2021Duality}. Here we show that $\{R(\caT)\}_{\caT\in\Sigma_{4,4}}$ spans  the commutant of $\Cl_n^{\tilde{\otimes}4}$ for all $n\geq 1$. 
\begin{proposition}\label{pro:CommSpan}
Suppose $n\geq 1$; then $\{R(\caT)\}_{\caT\in\Sigma_{4,4}}$ spans the commutant of $\Cl_n^{\tilde{\otimes}4}$.
\end{proposition}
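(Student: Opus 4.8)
The statement is already known for $n\ge 3$ by \rcite{Gross2021Duality}, so the plan is to dispose of the two remaining cases $n=1$ and $n=2$. In both cases the inclusion $\mathrm{span}\{R(\caT)\}_{\caT\in\Sigma_{4,4}}\subseteq \mathrm{Comm}(\Cl_n^{\tilde{\otimes}4})$ already holds, since each $R(\caT)$ lies in the commutant of $\Cl_n^{\tilde{\otimes}4}$; hence it suffices to show that the two spaces have the same dimension. I would therefore compute $\dim\mathrm{Comm}(\Cl_n^{\tilde{\otimes}4})$ and $\dim\mathrm{span}\{R(\caT)\}$ separately and check that they agree.

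For the commutant I would use that the twirl $X\mapsto\bbE_{U\sim\caU}U^{\otimes4}X\dagtensor{U}{4}$ is the Hilbert--Schmidt orthogonal projector onto $\mathrm{Comm}(\Cl_n^{\tilde{\otimes}4})$, so that $\dim\mathrm{Comm}(\Cl_n^{\tilde{\otimes}4})=\bbE_{U\sim\Cl_n}|\tr U|^8$. For $n=1$ the single-qubit Clifford group is, modulo phases, the rotation group of the cube; grouping its $24$ elements by rotation angle gives $|\tr U|^2\in\{4,2,1,0\}$ with multiplicities $\{1,6,8,9\}$, whence $\bbE_{U\sim\Cl_1}|\tr U|^8=(256+6\cdot16+8)/24=15$. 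For $n=2$ the same formula applies to the order-$11520$ group $\overline{\Cl}_2$; here it is cleanest to track only the excess over the value $24=\bbE_{U\sim\haar}|\tr U|^8$ attained on $\rmU(4)$, which is precisely the quantity measuring how $\Cl_2$ fails to be a unitary $4$-design.

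For the span I would pass to the Gram matrix. Since $R(\caT)=\Tensor{r(\caT)}{n}$ one has $\tr\bigl(R(\caT)^\dagger R(\caT')\bigr)=\bigl[\tr\bigl(r(\caT)^\dagger r(\caT')\bigr)\bigr]^n=2^{n\dim(\caT\cap\caT')}$, so $\dim\mathrm{span}\{R(\caT)\}=\operatorname{rank}\bigl[2^{n\dim(\caT\cap\caT')}\bigr]_{\caT,\caT'}$. The entries follow from the combinatorics of $\Sigma_{4,4}=S_4\cup\tS_3\caT_4$ in \eref{eq:Sigma44} together with \eref{eq:RORT} and \lref{lem:S3T4}: within the $24$-element block $\caT_\sigma=\{(\sigma\bfx,\bfx)\}$ one has $\dim(\caT_\sigma\cap\caT_\tau)=l(\sigma^{-1}\tau)$, so this block reproduces the $S_4$-commutant on $\caH^{\otimes4}$ (rank $14$ for $n=1$, rank $24$ for $n\ge2$); the cross entries $\tr\bigl(R(\sigma)^\dagger R(\tau\caT_4)\bigr)=\tr R(\sigma^{-1}\tau\,\caT_4)$ and the entries among the six operators $R(\caT)$ with $\caT\in\tS_3\caT_4$ are then supplied directly by \lref{lem:S3T4}. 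Evaluating the rank is finite linear algebra: for $n=1$ the six extra operators must enlarge the permutation span by exactly one dimension (from $14$ to $15$), and for $n=2$ they must enlarge the $24$-dimensional permutation span by the excess found above.

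The main obstacle is the $n=2$ case. Both the group average $\bbE_{U\sim\Cl_2}|\tr U|^8$ and the rank of the $30\times30$ integer matrix $\bigl[4^{\dim(\caT\cap\caT')}\bigr]_{\caT,\caT'}$ are finite but opaque, and the real content is to show that the excess of the Clifford fourth moment over the $4$-design value $24$ is accounted for, dimension by dimension, by the operators $R(\caT)$ from the non-permutation double coset $\tS_3\caT_4$. Once $\dim\mathrm{Comm}(\Cl_2^{\tilde{\otimes}4})=\operatorname{rank}\bigl[4^{\dim(\caT\cap\caT')}\bigr]$ is verified, the inclusion above upgrades to equality; the case $n=1$ is the same computation with $2$ in place of $4$, and the two together establish the proposition for all $n\ge1$.
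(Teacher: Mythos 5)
Your overall strategy coincides with the paper's: the inclusion $\mathrm{span}\{R(\caT)\}_{\caT\in\Sigma_{4,4}}\subseteq\mathrm{Comm}\bigl(\Cl_n^{\tilde{\otimes}4}\bigr)$ is immediate, so it suffices to match dimensions, with the span dimension read off from the rank of the Gram matrix $\bigl[d^{\dim(\caT\cap\caT')}\bigr]_{\caT,\caT'}$ and the commutant dimension given by the fourth frame potential (the paper cites \rcite{Zhu20173Design} for the latter rather than recomputing it). Your $n=1$ frame-potential computation is correct: the values $|\tr U|^2\in\{4,2,1,0\}$ with multiplicities $\{1,6,8,9\}$ over the $24$ rotations give $\bbE_{U\sim\Cl_1}|\tr U|^8=15$. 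Your Gram-entry bookkeeping via $\Sigma_{4,4}=S_4\cup\tS_3\caT_4$, \eref{eq:RORT}, \lref{lem:S3T4}, and the relation $R(\caT_4)^2=dR(\caT_4)$ is likewise the same as the paper's.

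The gap is that the decisive computations are announced but never performed. For $n=2$ you compute neither side: you call both $\bbE_{U\sim\Cl_2}|\tr U|^8$ and the rank of the $30\times30$ matrix ``finite but opaque'' and then declare the proof finished ``once'' their equality is verified --- but that verification \emph{is} the content of the proposition in this case, so nothing beyond the known $n\ge3$ case has been established. Even for $n=1$ you have only one side: the claim that the six operators $R(\caT)$ with $\caT\in\tS_3\caT_4$ ``must enlarge the permutation span by exactly one dimension'' is not a deduction from anything you proved; a priori all six could lie inside the $14$-dimensional permutation span, in which case the span would have dimension $14<15$ and the proposition would be false for $n=1$. What is missing is exactly what the paper supplies: using the Gram entries you describe, it diagonalizes $\Gamma$ and finds eigenvalues $d(d-1)(d-2)(d-4)$, $d(d+1)(d+2)(d+4)$, $d(d^2-1)(d-2)$, $d(d^2-1)(d+2)$ with multiplicities $1$, $1$, $14$, $14$, so the rank is $15$ at $d=2$, $29$ at $d=4$, and $30$ for $d\ge8$, matching the commutant dimensions $15$, $29$, $30$ from \rcite{Zhu20173Design}. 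Until you either reproduce such a rank computation (together with the $n=2$ frame potential, or a citation for it), your argument remains a plan rather than a proof.
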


\begin{proof}
Denote the elements in $\Sigma_{4,4}$ by $\caT_1, \caT_2, \ldots, \caT_{30}$ and let  $\Gamma$ be the Gram matrix of the set $\{R(\caT)\}_{\caT\in\Sigma_{4,4}}$ with  entries 
\begin{equation}
\Gamma_{ij}:= \tr\bigl[R(\caT_i)^\dag R(\caT_j)\bigr]=\bigl\{\tr\bigl[r(\caT_i)^\dag r(\caT_j)\bigr]\bigr\}^n  =d^{\dim \caT_i\cap\caT_j}.
\end{equation}
To simplify the computation, note that $\Sigma_{4,4}$ can be decomposed into two double cosets: $\Sigma_{4,4}=S_4\cup \tS_3 \caT_4$ according to \eref{eq:Sigma44} and \lref{lem:S3T4}. Now, we need to distinguish three cases. 
If $\sigma,\tau\in S_4$, then 
\begin{equation}
\tr\bigl[R(\sigma)^\dag R(\tau)\bigr] = \tr [R(\sigma^{-1}\tau)] = d^{l(\sigma^{-1}\tau)},
\end{equation}
where $l(\sigma^{-1}\tau)$ denotes the number of cycles in $\sigma^{-1}\tau$. 
If $\sigma\in S_4$ and $\caT = \tau \caT_4$ with  $\tau\in \tS_3$, then by virtue of \eref{eq:RORT} we can deduce that
\begin{equation}
\tr\bigl[R(\sigma)^\dag R(\caT)\bigr] = \tr [R(\sigma^{-1}\tau \caT_4)]=\tr[R(\sigma' \caT_4)] = d^{l(\sigma')},
\end{equation}
where $\sigma'$ is the unique permutation in $\tS_3$ that satisfies the condition  $\sigma'\caT_4=\sigma^{-1}\tau\caT_4$, and the last equality follows from \lref{lem:S3T4}.  
If  $\caT_i = \sigma \caT_4$ and $\caT_j = \tau \caT_4$ with $\sigma,\tau\in \tS_3$, then
\begin{equation}
\tr\bigl[R(\caT_i)^\dag R(\caT_j)\bigr] = \tr\bigl[R(\sigma^{-1}\tau) R(\caT_4)^2\bigr] =d\tr[R(\sigma^{-1}\tau) R(\caT_4)]=d\tr[R(\sigma'\caT_4)] = d^{1+l(\sigma')},
\end{equation}
where $\sigma'$ is the unique permutation in $\tS_3$ that satisfies the condition  $\sigma'\caT_4=\sigma^{-1}\tau\caT_4$ as before. Here  the first two equalities hold  because $\caT_4$ is a stochastic Lagrangian subspace of CSS type that  is determined by  the one-dimensional defect subspace  $\caN = \mathrm{span}\{(1111)^\top\}$, which means $R(\caT_4)$ is proportional to a projector and
$R(\caT_4)^2=dR(\caT_4)$ by  \eref{eq:CSSProjector}. In this way, we can calculate all entries of the Gram matrix $\Gamma$.

Now, the eigenvalues of the Gram matrix $\Gamma$ read
\begin{equation}
d(d-1)(d-2)(d-4),\;\; d(d+1)(d+2)(d+4),\;\; d(d^2-1)(d-2),\;\; d(d^2-1)(d+2),
\end{equation}
with multiplicities $1$, $1$, $14$, and $14$, respectively. Since the dimension of the span of $\{R(\caT)\}_{\caT\in\Sigma_{4,4}}$ coincides with the rank of $\Gamma$, it follows that
\begin{equation}
 \dim \mathrm{span}\left(\{R(\caT)\}_{\caT\in\Sigma_{4,4}}\right)=\rank \Gamma = \left\{ 
\begin{aligned}
&15, &n=1,\\
&29, &n=2,\\
&30, &n\ge3.
\end{aligned}\right.
\end{equation}
This result  matches the dimension of the commutant of $\Cl_n^{\tilde{\otimes}t}$~\cite{Zhu20173Design}  and implies that $\{R(\caT)\}_{\caT\in\Sigma_{4,4}}$ spans the commutant of $\Cl_n^{\tilde{\otimes}4}$, which completes the proof of \pref{pro:CommSpan}.
\end{proof}

\subsection{Projectors based on Schur-Weyl duality}\label{app:ProjectorSWD}
Recall that $\caH$ is a $d$-dimensional Hilbert space. According to  Schur-Weyl duality, the $t$th tensor power  $\Tensor{\caH}{t}$ can be decomposed into multiplicity-free irreducible representations of $\rmU(d)\times S_t$ as follows,
\begin{equation}
\Tensor{\caH}{t} =\bigoplus_\lambda \caH_\lambda =  \bigoplus_\lambda \caW_\lambda \otimes \caS_\lambda,
\end{equation}
where each $\lambda$ represents a non-increasing partition of $t$ into no more than $d$ parts, $\caW_\lambda$ is the corresponding  Weyl module carrying an irreducible representation of $\rmU(d)$, and  $\caS_\lambda$ is the Specht module carrying an irreducible representation of $S_t$. In addition, the representations of $\rmU(d)$ (and similarly for $S_t$) associated with different partitions are inequivalent. 

In the rest of this section, we focus on the special case $t=4$. Then  the projector onto $\caH_\lambda$, denoted by $P_\lambda$ henceforth, can be calculated as follows,
\begin{equation}\label{eq:FormPlambda}
P_{\lambda} =\frac{d_\lambda}{24} \sum_{\sigma\in S_4} \chi_\lambda(\sigma) R(\sigma),
\end{equation}
where $\chi_\lambda(\sigma)$ is the character of $\sigma$ corresponding to  the representation $\lambda$ and $R(\sigma)$ is the unitary operator that represents the permutation $\sigma$. 
When $d\ge 4$, there are five distinct  partitions, namely, $[4], [3,1], [2,2], [2,1,1], [1,1,1,1]$.

Next, we decompose $\caH^{\otimes 4}$ into irreducible presentations of $\Cl_n\times S_4$ following \rcite{Zhu2016Fail4Design} and 
 explain the connection with the language used in \sref{app:LagrangianSub}. To this end, we need to introduce a stabilizer code and the corresponding stabilizer projector. Note that $\{P^{\otimes 4}\}_{P\in \bcaP_n}$ is a stabilizer group of CSS type 
and is identical to the stabilizer group
$\CSS(\caN)^{\otimes n}$, where  $\caN=\mathrm{span}\bigl\{(1111)^\top\bigr\}$
is the unique nontrivial defect subspace in $\bbZ_2^4$, and $\CSS(\caN)$ is the CSS stabilizer group defined in \eref{eq:CSS}. 
Let $\caV_n$ be the corresponding stabilizer code and
$P_n$ the stabilizer projector onto $\caV_n$. Then 
\begin{equation}\label{eq:FormRT4}
P_n= \frac{1}{d} R(\caT_4) = \frac{1}{d^2}\sum_{P\in\bcaP_n}P^{\otimes 4}=\frac{1}{d^2} \Tensor{\left(\Tensor{I}{4}+\Tensor{X}{4}+\Tensor{Y}{4}+\Tensor{Z}{4}\right)}{n},
\end{equation}
where  $\caT_4$ is the stochastic Lagrangian subspace determined by the defect subspace $\caN$ via \eref{eq:TCSS}, and the first two equalities follow from \eqsref{eq:DefCSSProjector}{eq:CSSProjector}. Let $\caV_n^\bot$ be the orthogonal complement of $\caV_n$ and define
\begin{align}
\caH_\lambda^+:=H_\lambda\cap\caV_n,\quad \caH_\lambda^-:=H_\lambda\cap\caV_n^\bot,\quad D_\lambda^\pm:=\dim \caH_\lambda^\pm;
\end{align}
then $\caH_\lambda=\caH_\lambda^+\bigoplus \caH_\lambda^-$. 
Explicit formulas for $D_\lambda^\pm$ can be derived by virtue of \tref{tab:SpaceDimension} in \rcite{Zhu2016Fail4Design} (note that $D_\lambda^\pm$ in this paper correspond to $d_\lambda D_\lambda^\pm$ in \rcite{Zhu2016Fail4Design}).
Let $P_\lambda^s$ for $s=\pm$ be the projector onto $\caH_\lambda^\pm$; then 
\begin{align}
P_\lambda^+=P_\lambda P_n =P_n P_\lambda,\quad P_\lambda^-=P_\lambda (\bbone-P_n)= (\bbone-P_n)P_\lambda, 
\end{align}
given that $P_n$ and $P_\lambda$ commute with each other.
Now, $\Tensor{\caH}{4}$ can be decomposed into multiplicity-free irreducible representations of $\Cl_n\times S_4$ as follows \cite{Zhu2016Fail4Design},
\begin{equation}
\Tensor{\caH}{4} =\bigoplus_{\lambda,s=\pm |D_\lambda^s\neq 0} \caH_\lambda^s. 
\end{equation}
Moreover, the representations of $\Cl_n$ associated with different subspaces in the above decomposition are inequivalent.

In preparation for studying the cross moment operator $\Omega(\caU)$ defined in \eref{eq:OmegaU}, here we need to consider a special subgroup of $S_4$. Define
\begin{equation}\label{eq:DefG}
G := \{(e),(12),(34),(12)(34)\}. 
\end{equation}
Simple analysis shows that $G$ has four 
 irreducible representations, all of which are one-dimensional. The characters of these representations are presented in \tref{tab:RepDecom}.
In addition, each irreducible representation of $S_4$ can be decomposed into a direct sum of irreducible representations of $G$, and the corresponding multiplicities are also presented in \tref{tab:RepDecom}, which shows that the decomposition is multiplicity free.

\begin{table}[t]
	\centering
	\normalsize
	\renewcommand{\arraystretch}{1.5}
	\caption{(left) Character table of  irreducible representations of the group $G$ defined in \eref{eq:DefG}. (right) Multiplicities in decomposing each irreducible representation of $S_4$  into irreducible representations of $G$.}
	\begin{minipage}{0.35\textwidth}
		\centering
		\begin{math}
		\begin{array}{c|cccc}
		\hline\hline
		i & \chi_i((e)) & \chi_i((12)) & \chi_i((34)) & \chi_i((12)(34)) \\ \hline
		1 & 1 & 1 & 1 & 1 \\ 
		2  & 1 & -1 & -1 & 1 \\ 
		3 & 1 & -1 & 1 & -1 \\ 
		4  & 1 & 1 & -1 & -1 \\ \hline\hline
		\end{array}
		\end{math}
	\end{minipage}
	\hspace{0.05\textwidth}
	\begin{minipage}{0.35\textwidth}
		\centering
		\begin{math}
		\begin{array}{l|cccc}
		\hline\hline
		\lambda & m_1 & m_2 & m_3 & m_4 \\ \hline
		[4] & 1 & 0 & 0 & 0 \\ 
		{[3,1]}  & 1 & 0 & 1 & 1 \\
		{[2,2]} & 1 & 1 & 0 & 0 \\ 
		{[2,1,1]}  & 0 & 1 & 1 & 1 \\
		{[1,1,1,1]}  & 0 & 1 & 0 & 0 \\ \hline\hline
		\end{array}
		\end{math}   
	\end{minipage}
	\label{tab:RepDecom}
\end{table}

Let  $\caH_G$ be the 
the subspace of $\caH^{\otimes 4}$ that is stabilized by $G$ or, equivalently,  the subspace that carries the trivial representation of $G$; let  $P_G$ be the projector onto $\caH_G$. Then 
\begin{equation}\label{eq:PGHG}
 P_G = \frac{1}{4}\sum_{\sigma\in G} R(\sigma),\quad \caH_G= \supp (P_G). 
\end{equation}
In addition, $P_G$  commutes with the projectors $P_n$, $P_\lambda$, and $P_\lambda^\pm$ introduced above. 
Define
\begin{align}
\caH_{\lambda,G}:=\caH_\lambda\cap\caH_G,\quad D_{\lambda,G}:=\dim\caH_{\lambda,G}\quad \caH_{\lambda,G}^\pm:=\caH_\lambda^\pm\cap\caH_G,\quad D_{\lambda,G}^\pm :=\dim\caH_{\lambda,G}^\pm; 
\end{align}
then  $\caH_{\lambda,G}$ is an invariant subspace of $\rmU(d)\times G$, and  $\caH_{\lambda,G}^\pm$  are invariant subspaces of $\Cl_n\times G$. Explicit formulas for the dimensions $D_{\lambda,G}$ and $D_{\lambda,G}^\pm$ are summarized in \tref{tab:SpaceDimension}. 
Let  $P_{\lambda,G}$ and $P_{\lambda,G}^\pm$ be the projectors onto $\caH_{\lambda,G}$ and $\caH_{\lambda,G}^\pm$, respectively; then 
\begin{align}\label{eq:DefProjectors}
P_{\lambda,G}=P_\lambda P_G=P_G P_\lambda 
, \quad  D_{\lambda,G}=\tr P_{\lambda,G},\quad 
\quad P_{\lambda,G}^\pm=P_\lambda^\pm P_G=P_G P_\lambda^\pm,\quad D_{\lambda,G}^\pm=\tr P_{\lambda,G}^\pm . 
\end{align}

\begin{table}[t]
\renewcommand{\arraystretch}{2}
\normalsize
\caption{Dimension formulas for $D_{\lambda,G}=\dim\caH_{\lambda, G}$ and $D_{\lambda, G}^{s}=\dim\caH_{\lambda, G}^{s}$ with $s=\pm$.}
\centering
\begin{math}
\begin{array}{l|ccc}
\hline\hline
\lambda		& D_{\lambda, G} & D_{\lambda, G}^+ & D_{\lambda, G}^- \\  \hline
[4]	& \frac{d(d+1)(d+2)(d+3)}{24} & \frac{(d+1)(d+2)}{6}  &  \frac{(d^2-1)(d+2)(d+4)}{24}  \\ 
{[3,1]}	& \frac{d(d+2)(d^2-1)}{8}  & 0 & \frac{d(d+2)(d^2-1)}{8}   \\
{[2,2]}	& \frac{d^2(d^2-1)}{12}  & \frac{(d^2-1)}{3}  & \frac{(d^2-4)(d^2-1)}{12}  \\ 
{[2,1,1]}	& 0  & 0 & 0 \\ 
{[1,1,1,1]}& 0  &  0 &  0  \\ 
\hline\hline
\end{array} 
\end{math}
\label{tab:SpaceDimension}
\end{table}

In conjunction with Proposition~5 in \rcite{Zhu2016Fail4Design} and  \tref{tab:RepDecom} we can deduce the following lemma. 
\begin{lemma}\label{lem:RepV4}
	The representations of $\rmU(d)\times G$ and $\Cl_n\times G$ on $\caH^{\otimes 4}$ are both multiplicity free. The representations of $\rmU(d)$ and $\Cl_n$ on $\caH_G$ are also multiplicity free. In particular, $\rmU(d)$ is irreducible on $\caH_{\lambda,G}$ whenever $D_{\lambda,G}\neq 0$; $\Cl_n$ is irreducible on $\caH_{\lambda,G}^s$  whenever $D_{\lambda,G}^s\neq 0$. 
\end{lemma}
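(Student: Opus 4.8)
The plan is to derive everything from two multiplicity-free decompositions already in hand---ordinary Schur-Weyl duality for $\rmU(d)\times S_4$ together with its Clifford refinement from \rcite{Zhu2016Fail4Design}, both recalled above---and the branching data of the Specht modules down to $G$ collected in \tref{tab:RepDecom}. The key structural point is that $G\le S_4$ acts only on the ``Specht'' tensor factor of each Schur-Weyl isotypic component, so restricting the symmetric-group action from $S_4$ to $G$ refines the decomposition in a completely controlled way, and multiplicity-freeness is preserved precisely because the branching is multiplicity free.

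First I would recall that as a $\rmU(d)\times S_4$-module $\caH^{\otimes 4}=\bigoplus_\lambda \caW_\lambda\otimes\caS_\lambda$ with the $\caW_\lambda$ pairwise inequivalent $\rmU(d)$-irreps, and that the stabilizer projector $P_n$ commutes with every $R(\sigma)$ (since the code $\caV_n$ is permutation invariant), so it splits each $\caH_\lambda$ compatibly into $\caH_\lambda^\pm=\caW_\lambda^\pm\otimes\caS_\lambda$; by \rcite{Zhu2016Fail4Design} the nonzero $\caW_\lambda^s$ are pairwise inequivalent $\Cl_n$-irreps. Next I would restrict $S_4$ to $G$: reading off \tref{tab:RepDecom}, every multiplicity $m_{\lambda,i}$ is $0$ or $1$, i.e.\ each branching $\caS_\lambda\!\downarrow_G=\bigoplus_i m_{\lambda,i}\chi_i$ is multiplicity free. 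Combining gives $\caH^{\otimes 4}=\bigoplus_{\lambda,i:\,m_{\lambda,i}=1}\caW_\lambda\otimes\chi_i$ as a $\rmU(d)\times G$-module and $\caH^{\otimes 4}=\bigoplus_{\lambda,s,i:\,D_\lambda^s\neq0,\,m_{\lambda,i}=1}\caW_\lambda^s\otimes\chi_i$ as a $\Cl_n\times G$-module.

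To conclude multiplicity-freeness I would argue that no two summands coincide: within a fixed $\lambda$ the distinct characters $\chi_i$ separate the summands (each occurring at most once by the table), while across different $\lambda$ (resp.\ different $(\lambda,s)$) the pairwise-inequivalent irreps $\caW_\lambda$ (resp.\ $\caW_\lambda^s$) separate them. For the statements about $\caH_G$ I would identify it with the $G$-trivial ($\chi_1$) isotypic component; extracting that part leaves $\caH_G=\bigoplus_{\lambda:\,m_{\lambda,1}=1}\caW_\lambda$ as a $\rmU(d)$-module and $\bigoplus_{\lambda,s:\,D_\lambda^s\neq0,\,m_{\lambda,1}=1}\caW_\lambda^s$ as a $\Cl_n$-module, both multiplicity free since their summands are pairwise-inequivalent irreps. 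For irreducibility I would use that $P_G$ acts on $\caW_\lambda\otimes\caS_\lambda$ (resp.\ $\caW_\lambda^s\otimes\caS_\lambda$) as $1\otimes(\text{projector onto }\caS_\lambda^{G})$, so $\caH_{\lambda,G}=\caW_\lambda\otimes\caS_\lambda^G$ has dimension $\dim\caW_\lambda\cdot m_{\lambda,1}$; whenever $D_{\lambda,G}\neq0$ we have $m_{\lambda,1}=1$ and $\caH_{\lambda,G}$ is a single copy of $\caW_\lambda$, irreducible under $\rmU(d)$, and identically $\Cl_n$ is irreducible on $\caH_{\lambda,G}^s=\caW_\lambda^s\otimes\caS_\lambda^G$ whenever $D_{\lambda,G}^s\neq0$.

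The step that carries all the weight is the multiplicity-freeness of the branching $S_4\!\downarrow_G$ recorded in \tref{tab:RepDecom}: it is exactly what guarantees that passing from $S_4$ to $G$ cannot create repeated constituents inside a single Schur-Weyl block, so the global decomposition stays multiplicity free. Everything else is bookkeeping, once one also invokes the harder external input---the pairwise inequivalence of the Clifford irreps $\caW_\lambda^s$ from \rcite{Zhu2016Fail4Design}---which is what upgrades the $\rmU(d)$ statements to their $\Cl_n$ counterparts. The only subtlety needing care is that the two separation mechanisms (the $G$-character within a block and the unitary/Clifford label across blocks) never collide, which they cannot, since irreps of a direct-product group are determined by the pair of factor-irreps.
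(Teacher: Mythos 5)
Your proof is correct and takes essentially the same route as the paper, whose entire proof is the remark that the lemma follows from Proposition~5 of \rcite{Zhu2016Fail4Design} (the multiplicity-free $\Cl_n\times S_4$ decomposition $\caH^{\otimes 4}=\bigoplus_{\lambda,s}\caW_\lambda^s\otimes\caS_\lambda$ with pairwise inequivalent Clifford irreps) combined with the multiplicity-free branching of $S_4$ down to $G$ recorded in \tref{tab:RepDecom}. You have simply written out the bookkeeping the paper leaves implicit, including the key identifications $\caH_{\lambda,G}=\caW_\lambda\otimes\caS_\lambda^G$ and $\caH_{\lambda,G}^s=\caW_\lambda^s\otimes\caS_\lambda^G$ with $\dim\caS_\lambda^G=m_{\lambda,1}\le 1$, which yield the irreducibility claims.
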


According to Schur-Weyl duality and \lref{lem:RepV4}, $\caH_{\lambda}$ and $\caH_{\lambda, G}$ carry  representations of $\rmU(d)$ and are 
invariant under  $\Tensor{U}{4}$ for any $U\in\rmU(d)$, that is,
\begin{align}
\dagtensor{U}{4} P_{\lambda} \Tensor{U}{4} = P_{\lambda},\quad 	\dagtensor{U}{4} P_{\lambda, G} \Tensor{U}{4} = P_{\lambda, G}\quad \forall U\in \rmU(d). \label{eq:UPlambdaG}
\end{align}
Similarly,  $\caH_{\lambda}^+$ and $\caH_{\lambda, G}^+$ are 
invariant under  $\Tensor{U}{4}$ for any $U\in\Cl_n$, that is,
\begin{align}
\dagtensor{U}{4} P_{\lambda}^+ \Tensor{U}{4} = P_{\lambda}^+,\quad 	\dagtensor{U}{4} P_{\lambda, G}^+ \Tensor{U}{4} = P_{\lambda, G}^+\quad \forall U\in \Cl_n. \label{eq:UPlambdaGCl}
\end{align}
In addition, $\caH_G$ can be decomposed into multiplicity-free irreducible representations of $\rmU(d)$ and  $\Cl_n$ as follows,
\begin{align}
\caH_G =\bigoplus_{\lambda |D_{\lambda,G}\neq 0} \caH_{\lambda,G} =\bigoplus_{\lambda,s |D_{\lambda,G}^s\neq 0} \caH_{\lambda,G}^s.
\end{align}
Accordingly, we have
\begin{align}\label{eq:PGdecom}
P_G =\bigoplus_{\lambda |D_{\lambda,G}\neq 0} P_{\lambda,G} =\bigoplus_{\lambda,s |D_{\lambda,G}^s\neq 0} P_{\lambda,G}^s.
\end{align}

Thanks to \eqsref{eq:UPlambdaG}{eq:UPlambdaGCl}, the projectors  $P_{\lambda, G}$  and  $P_{\lambda, G}^s$ commute with $U^{\otimes 4}$ for all $U\in \Cl_n$ and thus can be expanded in terms of  $R(\caT)$ for $\caT\in \Sigma_{4, 4}$, where $\Sigma_{4, 4}$ is the set of stochastic Lagrangian subspaces presented  in 
\eref{eq:Sigma44}. In addition,  $P_{\lambda, G}$  and  $P_{\lambda, G}^s$ are invariant under left and right multiplication of unitary operators associated with permutations in $G$. 
Under the left and right actions of $G$,  the set $\Sigma_{4,4}$ is divided into the following five orbits:
\begin{enumerate}
\item $(e),(12),(34),(12)(34)$;
\item $(13),(23),(14),(24),(123),(132),(124),(142),(134),(143),(234),(243),(1234),(1243),(1342),(1432)$;
\item $(13)(24),(14)(23),(1324),(1423)$;
\item $\caT_4, (12)\caT_4$;
\item $(13)\caT_4, (23)\caT_4, (123)\caT_4, (132)\caT_4$.
\end{enumerate}
Denote by $\scrT_i$ the $i$th orbit and
define 
\begin{equation}\label{eq:DefcaRi}
\caR_i = \sum_{\caT\in \scrT_i} R(\caT),\quad i=1,2,3,4,5. 
\end{equation}
In conjunction with the definitions in \eref{eq:DefProjectors}, we can expand $P_{\lambda, G}$ and $P_{\lambda, G}^+$ in terms of the five operators as follows, 
\begin{equation}\label{eq:RelationPandRi}
\begin{aligned}
P_{[4],G} &= P_{[4]}=\frac{1}{24} (\caR_1 + \caR_2 + \caR_3),&\quad P_{[4],G}^+ &=P_{[4]}^+= \frac{1}{6d} (\caR_4+\caR_5),&
P_{[2,2], G} &= \frac{1}{24} (2\caR_1 - \caR_2 + 2\caR_3),\\
P_{[2,2], G}^+ &=\frac{1}{6d}(2\caR_4-\caR_5),&
P_{[3,1], G} &= \frac{1}{8} (\caR_1 - \caR_3), & P_{[3,1], G}^+&=0.
\end{aligned}
\end{equation}
Based on this equation it is easy to expand $P_{\lambda, G}^-=P_{\lambda, G}-P_{\lambda, G}^+$. Note that $P_{\lambda, G}^\pm=P_{\lambda, G}=0$ when $\lambda=[2,1,1]$ or $\lambda=[1,1,1,1]$ (see \tref{tab:SpaceDimension}).

When $n\geq 3$,  the set $\{R(\caT)\}_{\caT\in \Sigma_{4, 4}}$ is  linearly independent, so $\{\caR_i\}_{i=1}^5$ is also linearly independent. When $n=2$,  $\{\caR_i\}_{i=1}^5$ is linearly independent by direct calculation. In both cases, the decompositions in \eref{eq:RelationPandRi} are unique. 
In the special case $n=1$, $\{\caR_i\}_{i=1}^5$ is not linearly independent, but spans a four-dimensional subspace in the operator space.

\section{Properties of the cross moment operator}\label{app:Omega}
In this section we clarify the properties of the cross moment operator $\Omega(\caU)$ defined in \eref{eq:OmegaU} in a  slightly more general situation
in which the measurement basis is not necessarily the computational basis.
Given a unitary ensemble $\caU$ and an orthonormal  basis $\caB$ on $\caH$, define
\begin{equation}\label{eq:DefOmegacaB}
\Omega(\caU, \caB) := \sum_{|\psi\>,|\varphi\>\in\caB} \bbE_{U\sim\caU} \dagtensor{U}{4} \left[\Tensor{(|\psi\>\<\psi|)}{2}\otimes \Tensor{(|\varphi\>\<\varphi|)}{2}\right] \Tensor{U}{4}.
\end{equation}
 Note that  $\Omega(\caU, \caB)$ is invariant under left and right multiplication by $R(\sigma)$ for $\sigma\in G$, where the group $G$ is defined in \eref{eq:DefG}, 
so $\Omega(\caU, \caB)$ is supported in the subspace $\caH_G$ stabilizer by $G$ as defined in \eref{eq:PGHG}. If $\caB$ is the standard computational basis $\{|\bfb\>\}_{\bfb\in \{0,1\}^n}$, then $\Omega(\caU, \caB)$ reduces to $\Omega(\caU)$.

\begin{lemma}\label{lem:OmegaPlambdaG}
Suppose $\caU$ is any unitary ensemble on $\caH$, $\caB$ is an orthonormal basis of $\caH$, and $\lambda$ is a non-increasing partition of $4$ into no more than $d=2^n$ parts. Then
\begin{gather}
	\tr[\Omega(\caU,\caB)P_{\lambda,G}]=\tr[\Omega(\caU, \caB)P_{\lambda}],  \label{eq:OmegaPlambdaG0}\\
\tr\bigl[\Omega(\caU, \caB)P_{[4]}\bigr]=\frac{d(d+5)}{6},
	\quad \tr\bigl[\Omega(\caU, \caB)P_{[2,2]}\bigr]=\frac{d(d-1)}{3}, \quad  \tr\bigl[\Omega(\caU, \caB)P_{[3,1]}\bigr]=\frac{d(d-1)}{2}, \label{eq:OmegaPlambdaG}\\
\tr\bigl[\Omega(\caU, \caB)P_{[2,1,1]}\bigr]=
	\tr\bigl[\Omega(\caU, \caB)P_{[1,1,1,1]}\bigr]=0, \quad n\geq 2.
	\label{eq:OmegaPlambdaG2}
	\end{gather}
\end{lemma}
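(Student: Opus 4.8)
The plan is to remove the average over $\caU$ using the $\rmU(d)$-invariance of the Schur--Weyl projectors, and then to evaluate the resulting $\caU$-independent traces by expanding $P_\lambda$ in permutation operators. Writing $A_{\psi\varphi}:=\Tensor{(|\psi\>\<\psi|)}{2}\otimes\Tensor{(|\varphi\>\<\varphi|)}{2}$, we have $\Omega(\caU,\caB)=\sum_{\psi,\varphi\in\caB}\bbE_{U\sim\caU}\dagtensor{U}{4}A_{\psi\varphi}\Tensor{U}{4}$. Because $\Tensor{U}{4}P_\lambda\dagtensor{U}{4}=P_\lambda$ for all $U\in\rmU(d)$ by \eqref{eq:UPlambdaG}, cyclicity of the trace gives $\tr[\dagtensor{U}{4}A_{\psi\varphi}\Tensor{U}{4}P_\lambda]=\tr[A_{\psi\varphi}P_\lambda]$ for every $U$, so the expectation drops out and $\tr[\Omega(\caU,\caB)P_\lambda]=\sum_{\psi,\varphi\in\caB}\tr[A_{\psi\varphi}P_\lambda]$, which is manifestly independent of $\caU$.

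The combinatorial heart is the sum $c(\sigma):=\sum_{\psi,\varphi\in\caB}\tr[A_{\psi\varphi}R(\sigma)]$. Writing $\caB=\{|i\>\}$ and using $\sum_i\Tensor{(|i\>\<i|)}{2}=\sum_i|ii\>\<ii|$, I would rewrite $c(\sigma)=\tr[(\sum_{i,j}|iijj\>\<iijj|)R(\sigma)]=\sum_{i,j}\<iijj|R(\sigma)|iijj\>$, which counts the pairs $(i,j)$ for which the string $(i,i,j,j)$ is constant along every cycle of $\sigma$. A cycle forces the constraint $i=j$ precisely when it contains slots from both blocks $\{1,2\}$ and $\{3,4\}$; since $\sigma$ stabilizes both blocks setwise exactly when $\sigma\in G=\{(e),(12),(34),(12)(34)\}$, it follows that $c(\sigma)=d^2$ for $\sigma\in G$ and $c(\sigma)=d$ otherwise.

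Expanding $P_\lambda=\tfrac{d_\lambda}{24}\sum_{\sigma}\chi_\lambda(\sigma)R(\sigma)$ via \eqref{eq:FormPlambda}, I then obtain $\tr[\Omega(\caU,\caB)P_\lambda]=\tfrac{d_\lambda}{24}\sum_\sigma\chi_\lambda(\sigma)c(\sigma)=\tfrac{d_\lambda}{24}\big[d\sum_{\sigma\in S_4}\chi_\lambda(\sigma)+(d^2-d)\sum_{\sigma\in G}\chi_\lambda(\sigma)\big]$. Here $\sum_{\sigma\in S_4}\chi_\lambda(\sigma)=24\,\delta_{\lambda,[4]}$ by orthogonality with the trivial character, and $\sum_{\sigma\in G}\chi_\lambda(\sigma)=4\,m_1(\lambda)$, with $m_1(\lambda)$ the multiplicity of the trivial representation of $G$ read off from \tref{tab:RepDecom}. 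Substituting $d_\lambda=1,3,2,3,1$ and $m_1(\lambda)=1,1,1,0,0$ for $\lambda=[4],[3,1],[2,2],[2,1,1],[1,1,1,1]$ reproduces $d(d+5)/6$, $d(d-1)/2$, $d(d-1)/3$, $0$, $0$, proving \eqref{eq:OmegaPlambdaG} and \eqref{eq:OmegaPlambdaG2}. Finally, \eqref{eq:OmegaPlambdaG0} follows since $\Omega(\caU,\caB)=P_G\,\Omega(\caU,\caB)\,P_G$ is supported on $\caH_G$ and $[P_G,P_\lambda]=0$: cyclicity gives $\tr[\Omega(\caU,\caB) P_\lambda]=\tr[\Omega(\caU,\caB) P_GP_\lambda P_G]=\tr[\Omega(\caU,\caB) P_\lambda P_G]=\tr[\Omega(\caU,\caB) P_{\lambda,G}]$.

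The main obstacle is establishing the split $c(\sigma)\in\{d,d^2\}$ and connecting it to representation theory. The subtlety is that $c$ is not a class function on $S_4$---it depends on the ordered block structure $\{\{1,2\},\{3,4\}\}$ rather than on the cycle type alone---so the argument hinges on recognizing $G$ as exactly the block-stabilizing subgroup and then matching the ensuing $G$-character sum to the tabulated multiplicities $m_1$. Once this identification is made, the remaining evaluation is a short character computation that is uniform in $n$ and automatically yields the vanishing of the $[2,1,1]$ and $[1,1,1,1]$ terms.
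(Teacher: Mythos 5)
Your proof is correct and follows essentially the same route as the paper's: both use the $\rmU(d)$-invariance of $P_\lambda$ (together with the $G$-support of $\Omega(\caU,\caB)$ for \eqref{eq:OmegaPlambdaG0}) to drop the ensemble average and reduce everything to a $\caU$-independent trace, which the paper then dismisses as ``a straightforward calculation.'' Your explicit evaluation of that trace---showing $c(\sigma)=d^2$ for $\sigma\in G$ and $c(\sigma)=d$ otherwise, then pairing with the character sums $\sum_{\sigma\in S_4}\chi_\lambda(\sigma)=24\,\delta_{\lambda,[4]}$ and $\sum_{\sigma\in G}\chi_\lambda(\sigma)=4\,m_1(\lambda)$ from \tref{tab:RepDecom}---is precisely the omitted computation, and it correctly reproduces all stated values including the vanishing cases in \eqref{eq:OmegaPlambdaG2}.
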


\begin{proof}
\Eref{eq:OmegaPlambdaG0} holds because $\Omega(\caU, \caB)$ is supported in $\caH_G$ and $P_{\lambda,G}=P_GP_\lambda P_G$. In conjunction with \eref{eq:UPlambdaG} we can deduce that 

\begin{align}
\tr[\Omega(\caU, \caB)P_{\lambda,G}]=\tr[\Omega(\caU, \caB)P_{\lambda}]=\tr(\Omega_0P_{\lambda}) =\tr(\Omega_0P_{\lambda,G}),
\end{align}
where $\Omega_0:=\sum_{\bfa,\bfb}\bigl[\Tensor{(|\bfa\>\<\bfa|)}{2}\otimes \Tensor{(|\bfb\>\<\bfb|)}{2}\bigr]$. In addition, a straightforward calculation shows that
\begin{align}
\tr\bigl(\Omega_0P_{[4]}\bigr)=\frac{d(d+5)}{6}, \quad
\tr\bigl(\Omega_0P_{[2,2],G}\bigr)=\frac{d(d-1)}{3}, \quad
\tr\bigl(\Omega_0P_{[3,1],G}\bigr)=\frac{d(d-1)}{2}, 
\end{align}
which implies \eref{eq:OmegaPlambdaG}. \Eref{eq:OmegaPlambdaG2} holds because $D_{\lambda,G}=0$ and $P_{\lambda,G}=0$ for $\lambda=[2,1,1]$ and $\lambda=[1,1,1,1]$ according to \tref{tab:SpaceDimension}.
\end{proof}

For the convenience of the following discussion, given  $\lambda=[4],\;[2,2],\;[3,1]$ and $s=\pm$, define
\begin{align}\label{eq:Defkappa}
\kappa_\lambda(\caU,\caB)&:=
\frac{\tr[\Omega(\caU, \caB)P_{\lambda,G}]}{D_{\lambda,G}}, \quad \kappa_\lambda^s(\caU,\caB):=\begin{cases}
\frac{\tr\bigl[\Omega(\caU, \caB)P_{\lambda,G}^s\bigr]}{D_{\lambda,G}^s} &D_{\lambda,G}^s\neq 0,\\
0 &D_{\lambda,G}^s=0.
\end{cases}
\end{align}
For completeness, we set  $\kappa_\lambda^s(\caU,\caB)=\kappa_\lambda(\caU,\caB)=0$ when $\lambda=[2,1,1],\;[1,1,1,1]$. According to \tref{tab:SpaceDimension} and \lref{lem:OmegaPlambdaG}, $\kappa_\lambda(\caU,\caB)$
is independent of  the specific choices of the ensemble $\caU$ and basis $\caB$ and can be written as $\kappa_\lambda(\haar)$ henceforth; $\kappa_{[3,1]}^\pm(\caU,\caB)$ are also independent of  $\caU$ and $\caB$.  To be specific, we have
\begin{align}\label{eq:kappaHaar}
\begin{aligned}
\kappa_{[4]}(\caU,\caB)&=\kappa_{[4]}(\haar)= \frac{4(d+5) }{(d+1)(d+2)(d+3)},&\quad 
\kappa_{[2,2]}(\caU,\caB)&=\kappa_{[2,2]}(\haar)=\frac{4}{d(d+1)},\\
\kappa_{[3,1]}^-(\caU,\caB)&=\kappa_{[3,1]}(\caU,\caB)=\kappa_{[3,1]}(\haar)=\frac{4}{(d+1)(d+2)},&\quad \kappa_{[3,1]}^+(\caU,\caB)&=0.
\end{aligned}
\end{align}
In the special case $n=1$, we have
\begin{align}
\kappa_{[2,2]}^+(\caU,\caB)&=\kappa_{[2,2]}(\caU,\caB)=\kappa_{[2,2]}(\haar)=\frac{4}{d(d+1)},\quad \kappa_{[2,2]}^-(\caU,\caB)=0. 
\end{align}

\subsection{Haar random ensemble}

If the unitary ensemble $\caU$ is the Haar random ensemble or a unitary $4$-design, then the cross moment operator $\Omega(\caU,\caB)$ is independent of  the specific choices of the ensemble $\caU$ and basis $\caB$; it is written as $\Omega(\haar)$ henceforth. 
\begin{proposition}\label{pro:OmegaHaar}
	Suppose $n\ge1$, $\caU$ is the Haar random ensemble or a unitary 4-design on $\caH$, and $\caB$ is any orthonormal basis of $\caH$.
	Then 
	\begin{align}
	\Omega(\caU,\caB)&=\Omega(\haar) =\kappa_{[4]}(\haar)P_{[4]} + \kappa_{[2,2]}(\haar)P_{[2,2], G} + \kappa_{[3,1]}(\haar)P_{[3,1], G}   \nonumber\\
 &= \frac{4[d(d+5) P_{[4]}+(d+2)(d+3)P_{[2,2],G}+d(d+3) P_{[3,1],G}]}{d(d+1)(d+2)(d+3)}, \label{eq:OmegaHaar}
	\end{align}
	where $\kappa_{\lambda}(\haar)$ for $\lambda=[4],\;[2,2],\;[3,1]$ are presented in  \eref{eq:kappaHaar}. 	
\end{proposition}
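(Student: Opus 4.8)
The plan is to exploit the bimodule structure established in \lref{lem:RepV4}. I will show that $\Omega(\haar)$ commutes with the fourth diagonal tensor power of the full unitary group while simultaneously being supported in the $G$-invariant subspace $\caH_G$; Schur's lemma then forces it to be a scalar on each irreducible block, hence a linear combination of the projectors $P_{\lambda, G}$.

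First I would record two structural constraints on $\Omega(\caU, \caB)$. The remark following \eref{eq:DefOmegacaB} shows that $\Omega(\caU, \caB)$ is invariant under left and right multiplication by $R(\sigma)$ for $\sigma \in G$, hence supported in $\caH_G = \supp(P_G)$; this holds for every ensemble and basis. When $\caU$ is Haar (or any $4$-design), the inner average $\bbE_{U} \dagtensor{U}{4}[\cdots]\Tensor{U}{4}$ lands in the commutant of $\rmU(d)^{\tilde{\otimes}4}$, so $\Omega(\haar)$ commutes with $\Tensor{U}{4}$ for every $U \in \rmU(d)$; in particular it is independent of the basis $\caB$, which justifies the notation $\Omega(\haar)$.

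Next I would invoke \lref{lem:RepV4}, which states that the representation of $\rmU(d)$ on $\caH_G$ is multiplicity free with irreducible blocks $\caH_{\lambda, G}$. By Schur's lemma any operator supported in $\caH_G$ that commutes with $\rmU(d)^{\tilde{\otimes}4}$ acts as a scalar on each block, so $\Omega(\haar) = \sum_\lambda c_\lambda P_{\lambda, G}$. Reading off \tref{tab:SpaceDimension}, only the partitions $\lambda = [4]$, $[2,2]$, $[3,1]$ have $D_{\lambda, G} \neq 0$, which truncates the sum to three terms; here $P_{[4], G} = P_{[4]}$ by \eref{eq:RelationPandRi}.

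Finally I would pin down the coefficients. Since the $P_{\lambda, G}$ are mutually orthogonal projectors, $\tr[\Omega(\haar) P_{\lambda, G}] = c_\lambda D_{\lambda, G}$, so $c_\lambda = \kappa_\lambda(\haar)$ in the notation of \eref{eq:Defkappa}. The numerators $\tr[\Omega(\caU,\caB) P_\lambda]$ come from \lref{lem:OmegaPlambdaG} and the denominators $D_{\lambda, G}$ from \tref{tab:SpaceDimension}, reproducing the values in \eref{eq:kappaHaar}; clearing the common denominator $d(d+1)(d+2)(d+3)$ then gives the closed form in \eref{eq:OmegaHaar}. The only substantive step is the Schur-lemma reduction, whose force rests entirely on the multiplicity-freeness in \lref{lem:RepV4}; I expect no real obstacle, since the representation-theoretic work has been front-loaded into \lref{lem:RepV4} and the trace computation into \lref{lem:OmegaPlambdaG}, leaving only the bookkeeping of already-tabulated traces and dimensions.
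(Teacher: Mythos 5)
Your proposal is correct and follows essentially the same route as the paper's own proof: support of $\Omega(\haar)$ in $\caH_G$, the multiplicity-free decomposition of \lref{lem:RepV4} plus Schur's lemma to reduce $\Omega(\haar)$ to $\sum_\lambda \kappa_\lambda(\haar) P_{\lambda,G}$, and the traces and dimensions from \lref{lem:OmegaPlambdaG} and \tref{tab:SpaceDimension} to fix the coefficients via \eref{eq:Defkappa} and \eref{eq:kappaHaar}. The only cosmetic looseness is deducing basis independence ``in particular'' from commutation with $\Tensor{U}{4}$; strictly it follows from left-invariance of the Haar measure (absorbing the basis change into $U$), an argument you already invoke implicitly.
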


Since $\Omega(\haar)$ is a linear combination of three mutually orthogonal projectors by \eref{eq:OmegaHaar}, its Schatten $p$-norm  can be calculated analytically,
\begin{align}
\left\|\Omega(\haar) \right\|_p^p=&\;\frac{d(d+1)(d+2)(d+3)}{24}\left[\frac{4(d+5)}{(d+1)(d+2)(d+3)}\right]^p +\frac{d^2(d^2-1)}{12} \left[\frac{4}{d(d+1)}\right]^p \nonumber\\
&+ \frac{d(d+2)(d^2-1)}{8} \left[\frac{4}{(d+1)(d+2)}\right]^p,
\end{align}
which implies that 
\begin{align}\label{eq:OmegaHaarNorm1Inf}
\|\Omega(\haar) \|_1=d^2,\quad \|\Omega(\haar)\|_\infty = \frac{4}{d(d+1)}. 
\end{align}
When $n\gg1$, we have $\Omega(\haar)\approx 4P_G/d^2$ according to \eqsref{eq:PGdecom}{eq:OmegaHaar}.

\begin{proof}[Proof of \pref{pro:OmegaHaar}]
The first equality in \eref{eq:OmegaHaar} follows from the definition of 	$\Omega(\caU,\caB)$ and the assumption that $\caU$ is a unitary 4-design.
The second equality follows from \lref{lem:RepV4},  Schur's lemma, and the definitions in \eref{eq:Defkappa} given that $\Omega(\haar)$ is supported in $\caH_G$. 	
The last equality follows from \eref{eq:kappaHaar}.	
\end{proof}

\subsection{Clifford ensemble}
In this section we turn to the cross moment operator $\Omega(\Cl_n, \caB)$ based on the Clifford group $\Cl_n$ 
 and an orthonormal basis $\caB$. For the convenience of the following discussion, define
\begin{equation}\label{eq:DefLambda12}
\Lambda_1(\caB):= \sum_{|\psi\>,|\varphi\>\in\caB} \left(\|\Xi_{\psi,\phi}\|_2^2+2\tXi_{\psi,\phi}\cdot\Xi_{\psi,\phi}\right),\quad
\Lambda_2(\caB):=\sum_{|\psi\>,|\varphi\>\in\caB}\left(\|\Xi_{\psi,\phi}\|_2^2 - \tXi_{\psi,\phi}\cdot\Xi_{\psi,\phi}\right).
\end{equation}
Here we assume that  $n\ge2$, while the special case $n=1$ is discussed in \aref{app:SingleCase} separately.

\begin{proposition}\label{pro:OmegaCl}
	Suppose $n\ge2$ and $\caB$ is an orthonormal basis for $\caH$.
	Then 
	\begin{align}
	\Omega(\Cl_n, \caB) &= \kappa_{[4]}^+(\Cl_n,\caB)P_{[4]}^++\kappa_{[4]}^-(\Cl_n,\caB)P_{[4]}^- + \kappa_{[2,2]}^+(\Cl_n,\caB)P_{[2,2],G}^+ \nonumber\\
        &\quad+ \kappa_{[2,2]}^-(\Cl_n,\caB)P_{[2,2],G}^- + \kappa_{[3,1]}(\Cl_n,\caB)P_{[3,1],G}, \label{eq:OmegaCl}
	\end{align}
	where 
	\begin{equation}\label{eq:kappaCl}
	\begin{aligned}
	\kappa_{[4]}^+(\Cl_n,\caB) &= \frac{2\Lambda_1(\caB)}{d^2(d+1)(d+2)},&\quad \kappa_{[4]}^-(\Cl_n,\caB) &= \frac{4d^3(d+5)-8\Lambda_1(\caB)}{d^2(d^2-1)(d+2)(d+4)} ,\\
	\kappa_{[2,2]}^+(\Cl_n,\caB) &= \frac{2\Lambda_2(\caB)}{d^2(d^2-1)},
	&\quad \kappa_{[2,2]}^-(\Cl_n,\caB) &= \frac{4d^3(d-1)-8\Lambda_2(\caB)}{d^2(d^2-1)(d^2-4)},\\
	\kappa_{[3,1]}(\Cl_n,\caB) &= \frac{4}{(d+1)(d+2)}.
	\end{aligned}
	\end{equation}
If $\caB$ is the standard computational basis, then 
	\begin{equation}\label{eq:kappaClb}
	\begin{aligned}
	&\kappa_{[4]}^+(\Cl_n,\caB) = \kappa_{[2,2]}^+(\Cl_n,\caB) = \frac{2}{(d+1)},\\
	&\kappa_{[4]}^-(\Cl_n,\caB) = \kappa_{[2,2]}^-(\Cl_n,\caB) = \kappa_{[3,1]}(\Cl_n,\caB) = \frac{4}{(d+1)(d+2)}.
	\end{aligned}
	\end{equation}
\end{proposition}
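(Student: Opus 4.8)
The plan is to pin down $\Omega(\Cl_n,\caB)$ via Schur's lemma and then fix its handful of eigenvalues by computing a few traces. Since $\Omega(\Cl_n,\caB)=\bbE_{U\sim\Cl_n}\dagtensor{U}{4}\Omega_0\Tensor{U}{4}$ with $\Omega_0=\sum_{|\psi\>,|\varphi\>\in\caB}\Tensor{(|\psi\>\<\psi|)}{2}\otimes\Tensor{(|\varphi\>\<\varphi|)}{2}$, it lies in the commutant of $\Cl_n^{\tilde{\otimes}4}$, and it is supported in $\caH_G$ as already noted. By \lref{lem:RepV4} the group $\Cl_n$ acts irreducibly and multiplicity-freely on each block $\caH_{\lambda,G}^s$, so Schur's lemma forces $\Omega(\Cl_n,\caB)=\sum_{\lambda,s}\kappa_\lambda^s(\Cl_n,\caB)P_{\lambda,G}^s$ with $\kappa_\lambda^s=\tr[\Omega(\Cl_n,\caB)P_{\lambda,G}^s]/D_{\lambda,G}^s$. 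According to \tref{tab:SpaceDimension} the only nonzero blocks for $n\ge 2$ are $[4]^\pm$, $[2,2]^\pm$, and $[3,1]^-$ (recall $D_{[3,1],G}^+=0$), which already yields the form in \eref{eq:OmegaCl}. For the $[3,1]$ block the whole space is of $-$ type, so $\tr[\Omega(\Cl_n,\caB)P_{[3,1],G}]=\tr[\Omega(\Cl_n,\caB)P_{[3,1]}]$ is the ensemble-independent Haar value of \lref{lem:OmegaPlambdaG}; dividing by $D_{[3,1],G}$ gives $\kappa_{[3,1]}=4/[(d+1)(d+2)]$.

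The key simplification is that for any $A$ in the commutant of $\Cl_n^{\tilde{\otimes}4}$ the Clifford twirl is invisible, i.e. $\tr[\Omega(\Cl_n,\caB)A]=\tr(\Omega_0 A)$ by cyclicity of the trace; I apply this to the block projectors. Using $P_{[4],G}^+=(\caR_4+\caR_5)/(6d)$ and $P_{[2,2],G}^+=(2\caR_4-\caR_5)/(6d)$ from \eref{eq:RelationPandRi} together with $R(\caT_4)=d^{-1}\sum_{P\in\bcaP_n}\Tensor{P}{4}$ from \eref{eq:FormRT4}, every needed trace reduces to $\tr[\Omega_0 R(\sigma\caT_4)]=d^{-1}\sum_{P}\tr[\Omega_0\Tensor{P}{4}R(\sigma)]$ over the coset representatives $\sigma\in\tS_3$. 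Writing $B_1=B_2=|\psi\>\<\psi|P$ and $B_3=B_4=|\varphi\>\<\varphi|P$ and using $\tr[(B_1\otimes B_2\otimes B_3\otimes B_4)R(\sigma)]=\prod_{c}\tr\bigl(\prod_{i\in c}B_i\bigr)$ over the cycles $c$ of $\sigma$, the straight representatives in $\scrT_4$ (namely $e,(12)$) produce $\sum_P\Xi_{\psi}^2(P)\Xi_{\varphi}^2(P)=\|\Xi_{\psi,\varphi}\|_2^2$, whereas each twisted representative in $\scrT_5$ (namely $(13),(23),(123),(132)$) produces $\sum_P\tXi_{\psi,\varphi}(P)\Xi_{\psi,\varphi}(P)=\tXi_{\psi,\varphi}\cdot\Xi_{\psi,\varphi}$, because a cycle mixing a $\psi$-replica and a $\varphi$-replica contracts into $\tr(|\psi\>\<\psi|P|\varphi\>\<\varphi|P)=\tXi_{\psi,\varphi}(P)$. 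Summing over $|\psi\>,|\varphi\>\in\caB$ and collecting with \eref{eq:DefLambda12} gives $\tr[\Omega_0 P_{[4],G}^+]=\Lambda_1(\caB)/(3d^2)$ and $\tr[\Omega_0 P_{[2,2],G}^+]=2\Lambda_2(\caB)/(3d^2)$; dividing by $D_{[4],G}^+$ and $D_{[2,2],G}^+$ reproduces $\kappa_{[4]}^+$ and $\kappa_{[2,2]}^+$ in \eref{eq:kappaCl}.

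The two remaining coefficients follow from the sum rule $\tr[\Omega(\Cl_n,\caB)P_{\lambda,G}]=\tr[\Omega(\Cl_n,\caB)P_{\lambda}]$ of \lref{lem:OmegaPlambdaG}, whose right-hand side is the ensemble-independent Haar trace: subtracting the $+$ traces and dividing by $D_{[4],G}^-$ and $D_{[2,2],G}^-$ yields $\kappa_{[4]}^-$ and $\kappa_{[2,2]}^-$ after routine simplification. Finally, for the computational basis $\{|\bfb\>\}$ only diagonal Paulis contribute to each $\Xi_{|\bfa\>}$, so a direct count gives $\sum_{\bfa,\bfb}\|\Xi_{|\bfa\>,|\bfb\>}\|_2^2=d^3$ and $\sum_{\bfa,\bfb}\tXi_{|\bfa\>,|\bfb\>}\cdot\Xi_{|\bfa\>,|\bfb\>}=d^2$, hence $\Lambda_1=d^2(d+2)$ and $\Lambda_2=d^2(d-1)$, which substituted into \eref{eq:kappaCl} collapse the five coefficients onto the two values in \eref{eq:kappaClb}.

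I expect the main obstacle to be the bookkeeping in the second step: one must track the replica/permutation structure carefully, keep the $d^{-1}$ from $R(\caT_4)$, correctly identify each cycle contraction as either $\|\Xi_{\psi,\varphi}\|_2^2$ or $\tXi_{\psi,\varphi}\cdot\Xi_{\psi,\varphi}$, and verify via \lref{lem:S3T4} that the orbits $\scrT_4,\scrT_5$ exhaust the representatives entering $P_{[4],G}^+$ and $P_{[2,2],G}^+$; everything else reduces to the dimension table and Haar traces already established.
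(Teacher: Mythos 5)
Your proof is correct and follows essentially the same route as the paper: Schur's lemma together with \lref{lem:RepV4} and \tref{tab:SpaceDimension} give the block form, the $+$ traces reduce to $\Lambda_1(\caB)/(3d^2)$ and $2\Lambda_2(\caB)/(3d^2)$ by explicit contraction, the $-$ coefficients follow from the sum rule of \lref{lem:OmegaPlambdaG}, and your computational-basis counts $\Lambda_1=d^3+2d^2$, $\Lambda_2=d^3-d^2$ match the paper's. The only cosmetic difference is that you expand $P_{[4],G}^+$ and $P_{[2,2],G}^+$ via $\caR_4,\caR_5$ from \eref{eq:RelationPandRi}, so that $\|\tXi_{\psi,\varphi}\|_2^2$ never appears and \lref{lem:CrossChar} is not needed at this step, whereas the paper's intermediate expressions carry $\|\tXi_{\psi,\varphi}\|_2^2$ terms that are then merged with $\|\Xi_{\psi,\varphi}\|_2^2$ using \lref{lem:CrossChar}; both bookkeepings yield identical traces and hence identical coefficients.
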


At this point, it is instructive to quantify the difference between $\Omega(\Cl_n, \caB)$ and $\Omega(\haar)$. 
By virtue of \psref{pro:OmegaHaar} and \ref{pro:OmegaCl} we can determine  the $1$-norm of this difference,
\begin{align}
\left\|\Omega(\Cl_n, \caB)-\Omega(\haar)\right\|_1 =& \; \frac{ (d^2 - 1) (d + 2) (d + 4)}{3 d (d + 3)} \left|\kappa_{[4]}^+(\Cl_n,\caB) - \kappa_{[4]}^-(\Cl_n,\caB) \right|\nonumber\\
&+ \frac{2 (d^2 - 4) (d^2 - 1)}{3 d^2} \left|\kappa_{[2,2]}^+(\Cl_n,\caB)-\kappa_{[2,2]}^-(\Cl_n,\caB)\right|.
\end{align}
In addition, we can derive explicit formulas for  the $1$-norm and  spectral norm of $\Omega(\Cl_n)-\Omega(\haar)$,
\begin{equation}
\begin{aligned}
\left\|\Omega(\Cl_n)-\Omega(\haar)\right\|_1 &= \frac{2 (d-1) \left(3 d^2+6 d-10\right)}{3 d (d+3)}=\caO(d),\\
\left\|\Omega(\Cl_n)-\Omega(\haar)\right\|_\infty &= \frac{2(d-1)(d+4)}{(d+1)(d+2)(d+3)}=\caO(d^{-1}).
\end{aligned}
\end{equation}
As a benchmark, we have $\|\Omega(\haar)\|_1=d^2$ and $\|\Omega(\haar)\|_\infty = 4/[d(d+1)]$ by \eref{eq:OmegaHaarNorm1Inf}.

\begin{proof}[Proof of \pref{pro:OmegaCl}]
\Eref{eq:OmegaCl} follows from \lref{lem:RepV4}, Schur's lemma, \tref{tab:SpaceDimension}, and \eref{eq:Defkappa} given that $\Omega(\Cl_n, \caB)$ is supported in $\caH_G$ and  that
$D_{[3,1],G}^+=D_{[2,1,1],G}^\pm=D_{[1,1,1,1],G}^\pm=0$, $D_{[3,1],G}^-=D_{[3,1],G}$, and $P_{[3,1],G}^-=P_{[3,1],G}$ by \tref{tab:SpaceDimension}. In addition,
by virtue of \lref{lem:CrossChar} we can derive the following results,
	\begin{equation}
	\begin{aligned}
	\tr\left[\Omega(\Cl_n, \caB)P_{[4]}^+\right] &= \frac{1}{6d^2} \sum_{|\psi\>,|\varphi\>\in\caB}\left(\|\Xi_{\psi,\phi}\|_2^2+ \|\tXi_{\psi,\phi}\|_2^2+ 4\tXi_{\psi,\phi}\cdot\Xi_{\psi,\phi} \right)=\frac{\Lambda_1(\caB)}{3d^2}  ,\\
	\tr\left[\Omega(\Cl_n, \caB)P_{[4]}^-\right] &= \tr\left[\Omega(\Cl_n, \caB)P_{[4]}\right]-\tr\left[\Omega(\Cl_n, \caB)P_{[4]}^+\right] = \frac{d(d+5)}{6}-
\frac{\Lambda_1(\caB)}{3d^2} 	,\\
 \tr\left[\Omega(\Cl_n, \caB)P_{[2,2],G}^+\right] &= \frac{1}{3d^2} \sum_{|\psi\>,|\varphi\>\in\caB} \left(\|\Xi_{\psi,\phi}\|_2^2+ \|\tXi_{\psi,\phi}\|_2^2-2\tXi_{\psi,\phi}\cdot\Xi_{\psi,\phi} \right)=\frac{2\Lambda_2(\caB)}{3d^2} ,\\
    \tr\left[\Omega(\Cl_n, \caB)P_{[2,2],G}^-\right] &= \tr\left[\Omega(\Cl_n, \caB)P_{[2,2],G}\right] - \tr\left[\Omega(\Cl_n, \caB)P_{[2,2],G}^+\right] = \frac{d(d-1)}{3} - 
       \frac{2\Lambda_2(\caB)}{3d^2}  ,    \\  
    \tr\left[\Omega(\Cl_n, \caB)P_{[3,1],G}^-\right] &= \tr\left[\Omega(\Cl_n, \caB)P_{[3,1],G}\right]=\frac{d(d-1)}{2},
	\end{aligned}
	\end{equation}
which imply \eref{eq:kappaCl} given 
 the definitions in \eref{eq:Defkappa} and the dimension formulas in \tref{tab:SpaceDimension}.

If $\caB$ is  the standard computational basis, then direct calculation yields
\begin{equation}
	\Lambda_1(\caB) = d^3 + 2d^2,\quad \Lambda_2(\caB) = d^3 - d^2,
\end{equation}
which implies \eref{eq:kappaClb} given \eref{eq:kappaCl} proved above.
\end{proof}

\subsection{The ensemble $\tbbU_k$  based on the simplest circuit}

By virtue of  \pref{pro:OmegaCl} we  can determine $\Omega(\tbbU_k)$ for the ensemble  $\tbbU_k=\Tensor{I}{(n-k)}\otimes\Tensor{(HT)}{k}\Cl_n$ (with $n\geq 1$ and $0\leq k\leq n$), which corresponds to the simplest circuit illustrated in \fref{fig:ModelCircuits}(b). This result  will be useful to proving \thref{thm:VartUk} and several related results. Actually, it is straightforward  to verify that $\Omega(\tbbU_k) = \Omega(\Cl_n, \caB_T)$, where $\caB_T=\{\Tensor{I}{(n-k)}\otimes\Tensor{(T^\dag H^\dag)}{k}|\bfb\>\}_{\bfb}$. According to the definitions in \eref{eq:DefLambda12}, $\Lambda_1(\caB_T)$ and $\Lambda_2(\caB_T)$ can be calculated as
\begin{equation}\label{eq:Lambda12tUk}
\Lambda_1(\caB_T) = d^3\gamma^k+2d^2\nu^k,\quad \Lambda_2(\caB_T) = d^3\gamma^k-d^2\nu^k,
\end{equation}
where $\gamma=3/4$ and $\nu=1/2$. In conjunction with \pref{pro:OmegaCl} we can deduce the following proposition.
\begin{proposition}\label{pro:OmegatUk}
Suppose $n\ge2$ and $0\leq k\leq n$. Then
\begin{align}
\Omega(\tbbU_k) &= \kappa_{[4]}^+(\tbbU_k)P_{[4]}^++\kappa_{[4]}^-(\tbbU_k)P_{[4]}^- + \kappa_{[2,2]}^+(\tbbU_k)P_{[2,2],G}^+ + \kappa_{[2,2]}^-(\tbbU_k)P_{[2,2],G}^- + \kappa_{[3,1]}(\tbbU_k)P_{[3,1],G},
\end{align}
where 
\begin{equation}\label{eq:kappatUk}
\begin{aligned}
\kappa_{[4]}^+(\tbbU_k)& = \frac{2d\gamma^k + 4\nu^k}{(d+1)(d+2)},&\quad \kappa_{[4]}^-(\tbbU_k) &= \frac{4d^2+4d\left(5-2\gamma^k\right)-16\nu^k}{(d^2-1)(d+2)(d+4)} ,\\
\kappa_{[2,2]}^+(\tbbU_k) &= \frac{2d\gamma^k-2\nu^k}{d^2-1},
&\quad \kappa_{[2,2]}^-(\tbbU_k) &= \frac{4d^2-4d\left(1+2\gamma^k\right)+8\nu^k}{(d^2-1)(d^2-4)},\\
\kappa_{[3,1]}(\tbbU_k) &= \frac{4}{(d+1)(d+2)}.
\end{aligned}
\end{equation}
\end{proposition}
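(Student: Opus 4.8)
The plan is to obtain Proposition~\ref{pro:OmegatUk} as a direct specialization of Proposition~\ref{pro:OmegaCl} to the product basis $\caB_T=\{\Tensor{I}{(n-k)}\otimes\Tensor{(T^\dag H^\dag)}{k}|\bfb\>\}_\bfb$, combined with the identity $\Omega(\tbbU_k)=\Omega(\Cl_n,\caB_T)$ recorded just above the statement. First I would justify that identity. Writing an arbitrary element of $\tbbU_k$ as $VC$ with $V=\Tensor{I}{(n-k)}\otimes\Tensor{(HT)}{k}$ fixed and $C\in\Cl_n$ uniformly distributed, I factor $\dagtensor{(VC)}{4}[\,\cdot\,]\Tensor{(VC)}{4}=\dagtensor{C}{4}\bigl(\dagtensor{V}{4}[\,\cdot\,]\Tensor{V}{4}\bigr)\Tensor{C}{4}$. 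Since $V$ acts on single copies, $\dagtensor{V}{4}\bigl[\Tensor{(|\bfa\>\<\bfa|)}{2}\otimes\Tensor{(|\bfb\>\<\bfb|)}{2}\bigr]\Tensor{V}{4}=\Tensor{(|\psi_\bfa\>\<\psi_\bfa|)}{2}\otimes\Tensor{(|\psi_\bfb\>\<\psi_\bfb|)}{2}$ with $|\psi_\bfa\>=V^\dag|\bfa\>$; as $\bfa$ runs over the computational labels, $|\psi_\bfa\>$ runs over $\caB_T$. Comparing with the definition of $\Omega(\caU,\caB)$ in \eref{eq:DefOmegacaB} then gives $\Omega(\tbbU_k)=\Omega(\Cl_n,\caB_T)$, so the structural decomposition \eref{eq:OmegaCl} applies verbatim with $\kappa_\lambda^s(\tbbU_k)=\kappa_\lambda^s(\Cl_n,\caB_T)$.

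Second, I would evaluate the two basis-dependent quantities $\Lambda_1(\caB_T)$ and $\Lambda_2(\caB_T)$ from \eref{eq:DefLambda12}. The crucial point is that $\caB_T$ is a product basis: the computational basis on the first $n-k$ qubits and the single-qubit basis $\{T^\dag H^\dag|0\>,T^\dag H^\dag|1\>\}$ on each of the last $k$ qubits. Because $\Xi_{\psi,\varphi}$ and $\tXi_{\psi,\varphi}$ multiply over tensor factors, both $\|\Xi_{\psi,\varphi}\|_2^2$ and $\tXi_{\psi,\varphi}\cdot\Xi_{\psi,\varphi}$ factorize into single-qubit sums, and so do the sums over $|\psi\>,|\varphi\>\in\caB_T$. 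Each of the first $n-k$ stabilizer qubits contributes a fixed factor, while each of the $k$ magic qubits contributes factors that I expect to reduce to $\gamma=3/4$ (for the $\|\Xi_{\psi,\varphi}\|_2^2$ piece) and $\nu=1/2$ (for the piece distinguishing $\Lambda_1$ from $\Lambda_2$), mirroring the nonstabilizerness $\gamma=2^{-M_2(T|+\>)}$ injected by the $T$ layer. Taking the product over all $n$ qubits then yields $\Lambda_1(\caB_T)=d^3\gamma^k+2d^2\nu^k$ and $\Lambda_2(\caB_T)=d^3\gamma^k-d^2\nu^k$, i.e.\ \eref{eq:Lambda12tUk}.

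Finally I would substitute these two values into the coefficient formulas \eref{eq:kappaCl} and simplify. For example $\kappa_{[4]}^+(\tbbU_k)=2\Lambda_1(\caB_T)/[d^2(d+1)(d+2)]$ collapses to $(2d\gamma^k+4\nu^k)/[(d+1)(d+2)]$, and the remaining four coefficients of \eref{eq:kappatUk} follow by the same purely algebraic reduction; note that $\kappa_{[3,1]}$ is basis-independent and hence unchanged. The main obstacle is the single-qubit magic-state computation underlying \eref{eq:Lambda12tUk}: one must verify that the per-qubit factors produced by the rotated states $T^\dag H^\dag|0\>,T^\dag H^\dag|1\>$ are exactly $\gamma$ and $\nu$, and that the combinations in $\Lambda_1$ and $\Lambda_2$ separate cleanly into these two numbers. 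Everything else—the reduction to $\Omega(\Cl_n,\caB_T)$ and the substitution into \eref{eq:kappaCl}—is routine once the factorization and the single-qubit evaluation are established. The hypothesis $n\ge2$ is inherited directly from Proposition~\ref{pro:OmegaCl}, with the degenerate case $n=1$ handled separately.
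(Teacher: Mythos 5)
Your proposal is correct and takes essentially the same route as the paper: the paper likewise reduces the claim via the identity $\Omega(\tbbU_k)=\Omega(\Cl_n,\caB_T)$, computes $\Lambda_1(\caB_T)=d^3\gamma^k+2d^2\nu^k$ and $\Lambda_2(\caB_T)=d^3\gamma^k-d^2\nu^k$ from the tensor-product structure of $\caB_T$, and substitutes into \pref{pro:OmegaCl}. The per-qubit factors you anticipated do check out (each stabilizer qubit contributes $8$ and $4$ to the $\|\Xi_{\psi,\varphi}\|_2^2$ and $\tXi_{\psi,\varphi}\cdot\Xi_{\psi,\varphi}$ pieces respectively, while each qubit carrying $T^\dag H^\dag$ contributes $6$ and $2$, giving exactly the ratios $\gamma=3/4$ and $\nu=1/2$), after which the algebraic reduction of \eref{eq:kappaCl} yields \eref{eq:kappatUk} as you describe.
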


\subsection{The ensemble $\bbU_{k,l}$ based on the interleaved Clifford circuit}
In this section we determine the cross moment operator associated with the unitary ensemble  $\bbU_{k, l}$, which corresponds to the interleaved  Clifford circuit illustrated in \fref{fig:ModelCircuits} (a).   
For the convenience of the following discussion, we first introduce two parameters, assuming that $n\ge k \ge1$. Define
\begin{align}\label{eq:alphakbetak}
\alpha_k := \frac{d^2(d+3)\left(d\gamma^k+3\nu^k\right)-4(d+1)(d+2)}{(d^2-1)(d+2)(d+4)},\quad \beta_k := \frac{d^2 \left(d^2 \gamma^k - 4 \right)+4}{(d^2 - 1) (d^2 - 4)},
\end{align}
where $\gamma=3/4$ and $\nu=1/2$, and, when $n=1$, the value of $\beta_1$ is understood as the limit $d\to 2$. Notably, we have
\begin{equation}\label{eq:alpha1beta1}
\alpha_1 = \frac{3d^2-3d-4}{4(d^2-1)}=\frac{3}{4}\left[1-\frac{3d+1}{3(d^2-1)}\right], \quad \beta_1 = \frac{3d^2-4}{4(d^2-1)}=\frac{3}{4}\left[1-\frac{1}{3(d^2-1)}\right].
\end{equation}
The basic properties of $\alpha_k$ and $\beta_k$ are summarized in the following lemma, which is proved in \aref{app:ProofDeviaCoeff}. 
\begin{lemma}\label{lem:PropertyAlphaBeta}
Suppose  $n\ge k \ge1$ and $l\ge0$. Then $\alpha_k$ and $\beta_k$ are monotonically increasing in $d$ and  monotonically decreasing in $k$. In addition,
\begin{gather}
\lim_{d\to\infty}\alpha_k=\lim_{d\to\infty}\beta_k=\gamma^k,\quad 
0<\alpha_k < \gamma^k,\quad 0<\beta_k < \gamma^k,\label{eq:UpperAlphaBeta}\\
\gamma^{kl}\left[1-\frac{kl(3d+1)}{3(d^2-1)}\right]\le\alpha_1^{kl} \le \alpha_k^l \le \beta_k^l \le \beta_1^{kl} \le \alpha_1^{kl} + \frac{kld}{d^2-1}\gamma^{kl}.\label{eq:RelationAlphaBeta}
\end{gather}
\end{lemma}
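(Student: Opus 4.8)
The plan is to treat $\alpha_k$ and $\beta_k$ as explicit rational functions of the continuous variable $d\ge 2$ whose coefficients are affine in $g:=\gamma^k=(3/4)^k$ and $h:=\nu^k=(1/2)^k$. Concretely I would read off from \eqref{eq:alphakbetak} that $\beta_k=A_\beta(d)\,g-C_\beta(d)$ and $\alpha_k=A(d)\,g+B(d)\,h-C(d)$, where the coefficient functions $A,B,A_\beta$ are manifestly positive for $d>2$ and $C,C_\beta$ are constant in $k$. I would record the elementary facts $0<h<g\le 3/4<1$, the identity $g=(3/2)^k h$ (so that $7g-9h=h[7(3/2)^k-9]>0$ for $k\ge1$), and the relations $g=3^k/d^2$, $h=1/d$ that hold at the minimal admissible dimension $d=2^k$. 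Two claims then follow at once: monotone decrease in $k$ holds because $g$ and $h$ both strictly decrease with $k$ while their coefficients are positive; and the limits $\alpha_k,\beta_k\to\gamma^k$ as $d\to\infty$ follow by comparing the leading coefficients ($d^4g$ over $d^4$) of numerator and denominator.

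For the upper bounds in \eqref{eq:UpperAlphaBeta} I would place $\gamma^k$ over a common denominator with $\alpha_k$ (resp. $\beta_k$). One finds $\gamma^k-\beta_k=[(4-5g)d^2-4(1-g)]/[(d^2-1)(d^2-4)]$, whose numerator $N_\beta:=(4-5g)d^2-4(1-g)$ is positive: since $4-5g>0$ it increases in $d$ and is already nonnegative at $d=2$, where it equals $12-16g\ge0$ (the removable $d=2$ case forces $k=1$ and is handled by the limit convention). Likewise $\gamma^k-\alpha_k$ is a fraction whose cubic numerator $N_\alpha:=3(g-h)d^3+(7g-9h+4)d^2+(12-6g)d+8(1-g)$ has all coefficients positive, hence is positive. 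For monotone increase in $d$, the substitution $u=d^2$ turns $\beta_k$ into $(gu^2-4u+4)/(u^2-5u+4)$, whose $u$-derivative has numerator $(4-5g)u^2-8(1-g)u+4$ with discriminant $16g(4g-3)\le0$; an upward parabola with nonpositive discriminant is nonnegative, so $\beta_k$ increases in $d$. The analogous claim for $\alpha_k$ needs the derivative of a genuine degree-four over degree-four rational function, and I would show its sign is governed by an explicit polynomial in $d$ that stays nonnegative on the admissible range $d\ge 2^k$. Positivity $\alpha_k,\beta_k>0$ then follows by combining the monotone increase in $d$ with the boundary value at $d=2^k$, where the $\alpha_k$ numerator collapses to $(3^k-1)(d^2+3d)-8>8$ for $k\ge1$.

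For the chain \eqref{eq:RelationAlphaBeta} I would first strip off the exponent $l$: since $t\mapsto t^l$ is nondecreasing on $[0,\infty)$ for $l\ge0$, the three middle inequalities $\alpha_1^{kl}\le\alpha_k^l\le\beta_k^l\le\beta_1^{kl}$ reduce to the single-layer chain $0<\alpha_1^{k}\le\alpha_k\le\beta_k\le\beta_1^{k}$. The central step $\alpha_k\le\beta_k$ is direct: clearing the shared factor $(d^2-1)(d+2)$ from the two correction fractions reduces it to positivity of $N_\alpha(d-2)-N_\beta(d+4)$, a polynomial whose leading term $3(g-h)d^4$ is positive. The two outer steps are super-/submultiplicativity in $k$. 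Writing $\alpha_1=\gamma(1-\epsilon_\alpha)$ and $\beta_1=\gamma(1-\epsilon_\beta)$ with $\epsilon_\alpha=(3d+1)/[3(d^2-1)]$ and $\epsilon_\beta=1/[3(d^2-1)]$ from \eqref{eq:alpha1beta1}, the bound $\beta_k\le\beta_1^k$ follows from $(1-\epsilon_\beta)^k\ge1-k\epsilon_\beta$ (Bernoulli), which reduces it to the polynomial inequality $[12-(15+k)\gamma^k]d^2\ge12-(12+4k)\gamma^k$, valid because $(15+k)(3/4)^k\le12$; and $\alpha_1^k\le\alpha_k$ follows from the reverse estimate $(1-\epsilon_\alpha)^k\le1-k\epsilon_\alpha+\binom{k}{2}\epsilon_\alpha^2$, again reducing to an explicit polynomial inequality on $d\ge2^k$.

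The two extreme inequalities of \eqref{eq:RelationAlphaBeta} are then quick. The left one is Bernoulli applied to $\alpha_1^{kl}=\gamma^{kl}(1-\epsilon_\alpha)^{kl}\ge\gamma^{kl}(1-kl\,\epsilon_\alpha)$, which is exactly $\gamma^{kl}[1-kl(3d+1)/(3(d^2-1))]$, using $\epsilon_\alpha\le1$ for $d\ge2$. The right one is a mean-value estimate: $\beta_1^{kl}-\alpha_1^{kl}=\gamma^{kl}[(1-\epsilon_\beta)^{kl}-(1-\epsilon_\alpha)^{kl}]\le\gamma^{kl}\,kl\,(\epsilon_\alpha-\epsilon_\beta)$ since $1-\epsilon_\beta\le1$, and $\epsilon_\alpha-\epsilon_\beta=d/(d^2-1)$ yields the stated bound $\gamma^{kl}kld/(d^2-1)$. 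I anticipate that the main obstacle is not these clean endpoints but the two polynomial-positivity verifications flagged above, namely the $d$-monotonicity of $\alpha_k$ and the supermultiplicativity $\alpha_1^k\le\alpha_k$: each requires showing an explicit polynomial in $d$ with $k$-dependent coefficients stays nonnegative on the restricted domain $d\ge2^k$ rather than on all $d\ge2$, so the verification genuinely needs the constraint $n\ge k$ and cannot be settled by a naive coefficient-sign check alone.
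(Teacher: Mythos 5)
Your proposal is not complete, and its one finished ``central step'' rests on a justification that is actually false. Clearing denominators as you suggest gives
\begin{equation*}
\beta_k-\alpha_k=\frac{N_\alpha(d-2)-N_\beta(d+4)}{(d^2-1)(d^2-4)(d+4)},\qquad
N_\alpha(d-2)-N_\beta(d+4)=3d\bigl[(g-h)d^3+(2g-h)d^2+(6h-4)d-4\bigr],
\end{equation*}
and positivity of the leading coefficient $3(g-h)$ does not settle the sign: the last two coefficients are negative, and the bracket really is negative when $k$ is large relative to $\log_2 d$ (take $d=4$ and $k$ large, so $g,h\approx 0$ and the bracket is about $-4d-4<0$). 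The inequality $\alpha_k\le\beta_k$ is simply false without the hypothesis $d\ge 2^k$, so no coefficient-sign argument can prove it; the paper's proof uses exactly this hypothesis, via $d\nu^k\ge d\nu^n=1$ together with $\gamma^k\ge 2\nu^k$ for $k\ge2$, to bound the bracket below by $d^2-d+2>0$ (with $k=1$ handled separately from \eqref{eq:alpha1beta1}, where $\beta_1-\alpha_1=3d/[4(d^2-1)]>0$). You have the role of the constraint $n\ge k$ exactly backwards: you flag it as essential for the $d$-monotonicity of $\alpha_k$, but that step holds for all $d\ge2$ and $k\ge1$ (the paper's derivative numerator $f_1$ has all coefficients manifestly positive, using only $\gamma^k\le 3/4$, $\nu^k\le 1/2$, and $(3/2)^k\ge 3/2$), whereas the step you call ``direct'' is the one that genuinely needs $d\ge2^k$.

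Beyond this, the two verifications you defer---the $d$-monotonicity of $\alpha_k$ and the supermultiplicativity $\alpha_1^k\le\alpha_k$---are never carried out, and the latter is the most delicate part of the paper's proof: it requires the second-order Bernoulli bound $(1-x)^q\le 1-qx+q(q-1)x^2/2$, the estimate $(k-1)(3d+1)/[18(d^2-1)]\le 1/12$ (which again uses $k\le n$), and then a case analysis ($k=1$ trivial, $k=2$ by direct computation, $k=3,4$ by direct computation, $k\ge5$ using $d\gamma^k\ge(3/2)^5>7$) to show the resulting polynomial $f_2\ge0$. Saying the claim ``reduces to an explicit polynomial inequality on $d\ge2^k$'' without exhibiting and verifying that inequality leaves the hardest part of \eqref{eq:RelationAlphaBeta} unproved; note also that your positivity argument for $\alpha_k$ (boundary value at $d=2^k$ plus monotonicity) is itself contingent on the deferred monotonicity claim. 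The remaining components of your proposal are correct and in places cleaner than the paper's: the positive-coefficient numerator $N_\alpha=3(g-h)d^3+(7g-9h+4)d^2+(12-6g)d+8(1-g)$ gives $\alpha_k<\gamma^k$ directly for all $d\ge2$ (the paper instead deduces this from monotonicity plus the limit), the substitution $u=d^2$ with discriminant $16g(4g-3)\le0$ is a neat proof of the $d$-monotonicity of $\beta_k$, the boundary evaluation $(3^k-1)(d^2+3d)-8$ at $d=2^k$ is correct, and your reductions for $\beta_k\le\beta_1^k$ and for the two outer inequalities of \eqref{eq:RelationAlphaBeta} match the paper's and are sound.
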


Before determining the cross moment operator associated with the ensemble  $\bbU_{k, l}$, we need to introduce three auxiliary lemmas. Let  $\hat{T}_k$ be a shorthand for $\Tensor{I}{(n-k)} \otimes \Tensor{T}{k}$. When $n\ge2$, define 
\begin{align}
\Delta_{[4],G}=\Delta_{[4]}:=\frac{ P_{[4]}^+}{ D_{[4]}^+}- \frac{ P_{[4]}^-}{ D_{[4]}^-},\quad  \Delta_{[2,2],G}:= \frac{ P_{[2,2], G}^+}{ D_{[2,2], G}^+}- \frac{ P_{[2,2], G}^-}{ D_{[2,2], G}^-}. 
\end{align}

\begin{lemma}\label{lem:TTwirlingP}
Suppose $n\ge2$ and $1\le k \le n$. Then
\begin{equation}
\dagtensor{\hat{T}_{1}}{4}P_{\lambda, G}^+\Tensor{\hat{T}_{1}}{4}  =P_{\lambda,G}^+ - P_{\lambda, G}^{+,(n-1)}\otimes \left(|\mathbf{0}\>\<\mathbf{1}|+|\mathbf{1}\>\<\mathbf{0}|\right),
\end{equation}
where $|\mathbf{0}\>$ and $|\mathbf{1}\>$ are shorthands for $|0000\>$ and $|1111\>$, and $P_{\lambda, G}^{+,(n-1)}$ is the counterpart of $P_{\lambda, G}^{+}$ associated with the first $(n-1)$-qubits.
\end{lemma}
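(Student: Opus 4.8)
The plan is to observe that conjugation by $\Tensor{\hat{T}_1}{4}$ touches only one of the three commuting factors of $P_{\lambda,G}^+$, reducing everything to a single-qubit-block computation. Recall from \eref{eq:DefProjectors} (together with $P_\lambda^+=P_\lambda P_n$) that $P_{\lambda,G}^+=P_\lambda P_n P_G$, where the three projectors mutually commute. Since $\hat{T}_1\in\rmU(d)$, Schur--Weyl duality in the form \eref{eq:UPlambdaG} gives $\dagtensor{\hat{T}_1}{4}P_\lambda\Tensor{\hat{T}_1}{4}=P_\lambda$; and because $P_G=\tfrac14\sum_{\sigma\in G}R(\sigma)$ by \eref{eq:PGHG} is a combination of permutation operators, each of which commutes with $\Tensor{U}{4}$ for every $U\in\rmU(d)$, we likewise get $\dagtensor{\hat{T}_1}{4}P_G\Tensor{\hat{T}_1}{4}=P_G$. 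Inserting resolutions of the identity $\Tensor{\hat{T}_1}{4}\dagtensor{\hat{T}_1}{4}=\bbone$ between the factors then yields $\dagtensor{\hat{T}_1}{4}P_{\lambda,G}^+\Tensor{\hat{T}_1}{4}=P_\lambda\bigl(\dagtensor{\hat{T}_1}{4}P_n\Tensor{\hat{T}_1}{4}\bigr)P_G$, so the whole task is to conjugate the stabilizer projector $P_n$.

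The second step is the single-block calculation. Under the isomorphism $\Tensor{\caH}{4}\cong\Tensor{\bigl(\Tensor{\bbC_2}{4}\bigr)}{n}$ the stabilizer projector factorizes as $P_n=\Tensor{P_1}{n}$ with $P_1=\sum_{\bfu\in\scrS}|\varphi_\bfu\>\<\varphi_\bfu|$ [see \eqsref{eq:FormRT4}{eq:CSSProjector} and the proof of \pref{pro:CharNormPhasedW}], while $\Tensor{\hat{T}_1}{4}$ acts as the identity on the first $n-1$ blocks and as $\Tensor{T}{4}$ on the last. I would compute $\dagtensor{T}{4}P_1\Tensor{T}{4}$ directly using that $\dagtensor{T}{4}$ multiplies $|\bfb\>$ by $e^{-\rmi\pi|\bfb|/4}$, where $|\bfb|$ is the Hamming weight: the three vectors $|\varphi_\bfu\>$ with $\bfu\in\{0011,0101,0110\}$ have only weight-$2$ components and pick up a pure phase, leaving their projectors unchanged, whereas $|\varphi_{0000}\>=(|\mathbf{0}\>+|\mathbf{1}\>)/\sqrt2$ is mapped to $(|\mathbf{0}\>-|\mathbf{1}\>)/\sqrt2$. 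Hence $\dagtensor{T}{4}P_1\Tensor{T}{4}=P_1-M$ with $M:=|\mathbf{0}\>\<\mathbf{1}|+|\mathbf{1}\>\<\mathbf{0}|$, and therefore $\dagtensor{\hat{T}_1}{4}P_n\Tensor{\hat{T}_1}{4}=P_n-P_{n-1}\otimes M$, where $P_{n-1}:=\Tensor{P_1}{(n-1)}$ acts on the first $n-1$ blocks.

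Substituting this back gives $\dagtensor{\hat{T}_1}{4}P_{\lambda,G}^+\Tensor{\hat{T}_1}{4}=P_{\lambda,G}^+-P_\lambda\bigl(P_{n-1}\otimes M\bigr)P_G$, so it remains to identify $P_\lambda\bigl(P_{n-1}\otimes M\bigr)P_G=P_{\lambda,G}^{+,(n-1)}\otimes M$. The decisive observation is that $M$ is invariant under both left and right multiplication by every permutation operator $r(\sigma)$, because $r(\sigma)$ fixes the fully polarized vectors $|\mathbf{0}\>$ and $|\mathbf{1}\>$; thus $r(\sigma)Mr(\tau)=M$. Writing $P_\lambda=\tfrac{d_\lambda}{24}\sum_\sigma\chi_\lambda(\sigma)\Tensor{r(\sigma)}{(n-1)}\otimes r(\sigma)$ and $P_G=\tfrac14\sum_\tau\Tensor{r(\tau)}{(n-1)}\otimes r(\tau)$ via \eref{eq:FormPlambda}, \eref{eq:PGHG}, and \eref{eq:DefRT} (splitting off the last block), the last-block factors all collapse to $M$, and the surviving $(n-1)$-block sum is exactly $P_\lambda^{(n-1)}P_{n-1}P_G^{(n-1)}=P_{\lambda,G}^{+,(n-1)}$, which establishes the claim. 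The step to watch is precisely this factorization: neither $P_\lambda$ nor $P_G$ is a tensor product across qubit blocks, and it is only the permutation invariance of $M$ that lets the trailing-block permutation factors cancel, leaving the genuine $(n-1)$-qubit projector $P_{\lambda,G}^{+,(n-1)}$ tensored with $M$.
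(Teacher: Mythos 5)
Your proof is correct and takes essentially the same route as the paper's: both isolate the stabilizer projector $P_n$ (using invariance of the other factor(s) under conjugation by $\Tensor{\hat{T}_1}{4}$), compute $\dagtensor{\hat{T}_1}{4}P_n\Tensor{\hat{T}_1}{4}=P_n-P_{n-1}\otimes\left(|\mathbf{0}\>\<\mathbf{1}|+|\mathbf{1}\>\<\mathbf{0}|\right)$, and then exploit that every permutation operator $r(\sigma)$ fixes $|\mathbf{0}\>$ and $|\mathbf{1}\>$ in order to factor the correction term as $P_{\lambda,G}^{+,(n-1)}\otimes\left(|\mathbf{0}\>\<\mathbf{1}|+|\mathbf{1}\>\<\mathbf{0}|\right)$. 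The only cosmetic differences are that you expand $P_\lambda$ and $P_G$ separately where the paper expands $P_{\lambda,G}$ as a single $n$-independent combination of the $R(\sigma)$, and you verify the conjugation of $P_n$ explicitly in the $|\varphi_\bfu\>$ basis where the paper reads it off from the Pauli form \eqref{eq:FormRT4}.
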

\begin{lemma}\label{lem:UTTwirling}
Suppose $n\ge2$ and  $1\le k\le n$. Then
\begin{align}
\bbE_{U\sim \Cl_n} \dagtensor{U}{4} \dagtensor{\hat{T}_k}{4} P_{[4]}^+ \Tensor{\hat{T}_k}{4}\Tensor{U}{4}
&=\frac{4(1-\alpha_k)}{d(d+3)}P_{[4]}+\alpha_k P_{[4]}^+,\label{eq:UTTwirlingP4}\\
\bbE_{U\sim \Cl_n} \dagtensor{U}{4} \dagtensor{\hat{T}_k}{4} P_{[2,2],G}^+ \Tensor{\hat{T}_k}{4}\Tensor{U}{4} &= \frac{4(1-\beta_k)}{d^2}P_{[2,2], G} + \beta_k P_{[2,2] , G}^+.\label{eq:UTTwirlingP22}
\end{align}
\end{lemma}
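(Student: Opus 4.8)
The plan is to evaluate each Clifford twirl as the orthogonal projection onto the commutant of $\Cl_n^{\tilde{\otimes}4}$, exploiting the multiplicity-free structure on $\caH_G$ established in \lref{lem:RepV4}. The key structural observation is that $\hat{T}_k=\Tensor{I}{(n-k)}\otimes\Tensor{T}{k}$ applies the same single-qubit unitary to all four tensor copies, so $\dagtensor{\hat{T}_k}{4}$ commutes with every permutation operator $R(\sigma)$ and hence with $P_G$. Consequently the operand $A:=\dagtensor{\hat{T}_k}{4}P_{[4]}^+\Tensor{\hat{T}_k}{4}$ (and likewise the $[2,2]$ operand) stays supported in $\caH_G$, and since the $\Cl_n$ action there is multiplicity free, Schur's lemma gives $\bbE_{U\sim\Cl_n}\dagtensor{U}{4}A\Tensor{U}{4}=\sum_{\lambda,s}\frac{\tr(AP_{\lambda,G}^s)}{D_{\lambda,G}^s}P_{\lambda,G}^s$.

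Next I would argue that only one value of $\lambda$ contributes. Since $\hat{T}_k\in\rmU(d)$ and the isotypic components $\caH_{[4]},\caH_{[2,2]}$ are invariant under the diagonal $\rmU(d)$ action, $\dagtensor{\hat{T}_k}{4}$ commutes with $P_{[4]}$ and $P_{[2,2]}$. Writing $P_{[4]}^+=P_{[4]}P_n$ and commuting the symmetric-type factor through $\dagtensor{\hat{T}_k}{4}$ shows that $A=P_{[4]}\dagtensor{\hat{T}_k}{4}P_n\Tensor{\hat{T}_k}{4}$ is supported in $\caH_{[4]}$; hence $\tr(AP_{\lambda,G}^s)=0$ for every $\lambda\neq[4]$, which eliminates the $[3,1]$ and $[2,2]$ blocks and leaves only the $s=\pm$ pair at $\lambda=[4]$. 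The same argument with $P_{[2,2],G}^+=P_{[2,2]}P_GP_n$ confines the second operand to $\caH_{[2,2],G}$.

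The two surviving coefficients are then pinned down by two traces. The total trace is immediate, $\tr A=\tr P_{[4]}^+=D_{[4],G}^+$, and the remaining datum is $\tr(AP_{[4]}^+)=\tr(P_{[4]}\dagtensor{\hat{T}_k}{4}P_n\Tensor{\hat{T}_k}{4}P_n)$. To compute it I would expand $P_{[4]}=\frac{1}{24}\sum_{\sigma\in S_4}R(\sigma)$ into permutation operators $R(\sigma)=\Tensor{r(\sigma)}{n}$ and use $P_n=\Tensor{p}{n}$ with $p=(\Tensor{I}{4}+\Tensor{X}{4}+\Tensor{Y}{4}+\Tensor{Z}{4})/4$. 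Because every factor in the trace is a tensor product across the $n$ qubits, the whole expression factorizes into $\frac{1}{24}\sum_{\sigma\in S_4}f_T(\sigma)^k\,f_0(\sigma)^{n-k}$, where $f_T(\sigma)$ is the single-qubit trace on a $T$-dressed qubit and $f_0(\sigma)$ that on a bare qubit. Evaluating $f_T,f_0$ on the conjugacy classes of $S_4$ and resumming expresses $\tr(AP_{[4]}^+)$ as a combination of $\gamma^k$ and $\nu^k$; combined with the balance relation $c_+D_{[4],G}^++c_-D_{[4],G}^-=D_{[4],G}^+$ this reproduces \eref{eq:UTTwirlingP4} with $\alpha_k$ as in \eref{eq:alphakbetak}. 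The $[2,2]$ case runs in parallel, with the extra $P_G$ factor and the character of $[2,2]$ reweighting the $S_4$ sum, and yields $\beta_k$.

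The main obstacle is precisely this last, bookkeeping-heavy step: computing the single-qubit traces $f_T(\sigma),f_0(\sigma)$ across all five conjugacy classes of $S_4$, correctly incorporating the $G$-weighting and the $[2,2]$ character, and checking that the resulting rational expressions collapse to the compact constants $\alpha_k,\beta_k$. Everything upstream is purely structural, following from Schur's lemma together with the commutation of $\dagtensor{\hat{T}_k}{4}$ with the symmetric-type projectors and with $P_G$.
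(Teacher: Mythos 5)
Your proposal is correct, and its structural skeleton coincides with the paper's: both invoke the multiplicity-free decomposition of $\caH_G$ (\lref{lem:RepV4}) plus Schur's lemma to reduce the twirl to coefficients $\tr\bigl(AP_{\lambda,G}^s\bigr)/D_{\lambda,G}^s$, both use the commutation of $\dagtensor{\hat{T}_k}{4}$ with $P_\lambda$ and $P_G$ to confine the operand to a single isotypic block, and both obtain the second coefficient from trace preservation, $\tr\bigl[\dagtensor{\hat{T}_k}{4} P_{\lambda,G}^+ \Tensor{\hat{T}_k}{4} P_{\lambda,G}^-\bigr]=D_{\lambda,G}^+-\tr\bigl[\dagtensor{\hat{T}_k}{4} P_{\lambda,G}^+ \Tensor{\hat{T}_k}{4} P_{\lambda,G}^+\bigr]$. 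Where you genuinely diverge is the computational core, namely the overlap $\tr\bigl[\dagtensor{\hat{T}_k}{4} P_{\lambda,G}^+ \Tensor{\hat{T}_k}{4} P_{\lambda,G}^+\bigr]$. The paper derives this through \lref{lem:TTwirlingP}: the single-layer identity $\dagtensor{\hat{T}_{1}}{4}P_{\lambda, G}^+\Tensor{\hat{T}_{1}}{4}=P_{\lambda,G}^+ - P_{\lambda, G}^{+,(n-1)}\otimes(|\mathbf{0}\>\<\mathbf{1}|+|\mathbf{1}\>\<\mathbf{0}|)$ is iterated into a binomial expansion, and the resulting alternating sum $\sum_j(-1)^j\binom{k}{j}D_{\lambda,G}^{+,(n-j)}$ is resummed via the dimension formulas of \tref{tab:SpaceDimension} to produce $\gamma^k$ and $\nu^k$. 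You instead bypass \lref{lem:TTwirlingP} entirely: writing $P_{[4]}=\frac{1}{24}\sum_\sigma R(\sigma)$ and $P_n=\Tensor{p}{n}$, the trace factorizes qubit-by-qubit into $\frac{1}{24}\sum_{\sigma\in S_4}f_T(\sigma)^k f_0(\sigma)^{n-k}$, with $f_T,f_0$ class functions on $S_4$ (and a character-weighted combination of the $G$-orbits for the $[2,2]$ block). I verified your route reproduces the paper's values: since $(T^\dag)^{\otimes 4}pT^{\otimes 4}p=\sum_{\bfu\in\scrS,\bfu\neq 0000}|\varphi_\bfu\>\<\varphi_\bfu|$, one gets $f_T(\sigma)\in\{3,1,3,0,1\}$ and $f_0(\sigma)=f_T(\sigma)+1$ on the five conjugacy classes, whence $\frac{1}{24}\sum_\sigma f_T^kf_0^{n-k}=\frac{d^2\gamma^k+3d\nu^k}{6}$ and the weighted $[2,2]$ sum gives $\frac{d^2\gamma^k}{3}$, exactly the paper's overlaps, from which $\alpha_k$ and $\beta_k$ follow. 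The trade-off: your computation is more self-contained (no need for the intermediate-$n$ dimension formulas or the iterated $T$-layer lemma) but carries the bookkeeping of class-function evaluation; the paper's route reuses machinery (\lref{lem:TTwirlingP}, \tref{tab:SpaceDimension}) that it has already established for other purposes. One minor tightening: to kill the cross-block traces you should note that $A$ is two-sidedly supported in $\caH_{[4]}$, e.g.\ via $P_{[4]}^+=P_{[4]}P_nP_{[4]}$ (or argue by cyclicity of the trace that one-sided support suffices); as written, $A=P_{[4]}\dagtensor{\hat{T}_k}{4}P_n\Tensor{\hat{T}_k}{4}$ only exhibits support on the left.
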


\begin{lemma}\label{lem:UTTwirlingDelta}
Suppose $n\ge2$ and $1\le k \le n$. Then
\begin{gather}
\bbE_{U\sim \Cl_n} \dagtensor{U}{4} \dagtensor{\hat{T}_k}{4}\Delta_{[4],G} \Tensor{\hat{T}_k}{4}\Tensor{U}{4}=\alpha_k\Delta_{[4],G},\quad 
\bbE_{U\sim \Cl_n} \dagtensor{U}{4} \dagtensor{\hat{T}_k}{4}\Delta_{[2,2],G} \Tensor{\hat{T}_k}{4}\Tensor{U}{4}=\beta_k\Delta_{[2,2],G}.
\end{gather}
\end{lemma}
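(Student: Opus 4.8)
The plan is to read both identities as eigenvalue equations for the single linear map
\[
\Phi_k(X):=\bbE_{U\sim\Cl_n}\dagtensor{U}{4}\dagtensor{\hat{T}_k}{4}\,X\,\Tensor{\hat{T}_k}{4}\Tensor{U}{4}
\]
on operators over $\caH^{\otimes4}$. Since conjugation by a unitary preserves the trace and the average is over a probability measure, $\Phi_k$ is linear and trace preserving. It therefore suffices to prove $\Phi_k(\Delta_{[4],G})=\alpha_k\Delta_{[4],G}$ and $\Phi_k(\Delta_{[2,2],G})=\beta_k\Delta_{[2,2],G}$, and I would run the two cases in parallel because they are structurally identical.

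First I would record the invariance of the full projectors $P_{[4]}$ and $P_{[2,2],G}$. By \eref{eq:UPlambdaG}, $\dagtensor{V}{4}P_{\lambda,G}\Tensor{V}{4}=P_{\lambda,G}$ for every $V\in\rmU(d)$; applying this with the unitary $V=\hat{T}_k$ and then with $V=U\in\Cl_n\subset\rmU(d)$ gives at once
\[
\Phi_k(P_{[4]})=P_{[4]},\qquad \Phi_k(P_{[2,2],G})=P_{[2,2],G},
\]
using $P_{[4]}=P_{[4],G}$ from \eref{eq:RelationPandRi}. Together with \lref{lem:UTTwirling}, which expresses $\Phi_k(P_{[4]}^+)$ and $\Phi_k(P_{[2,2],G}^+)$ as explicit combinations of $\{P_{[4]},P_{[4]}^+\}$ and $\{P_{[2,2],G},P_{[2,2],G}^+\}$ respectively, this shows that $\Phi_k$ leaves invariant the two-dimensional span $W_{[4]}:=\mathrm{span}\{P_{[4]},P_{[4]}^+\}=\mathrm{span}\{P_{[4]}^+,P_{[4]}^-\}$ and, in the ordered basis $\{P_{[4]},P_{[4]}^+\}$, acts by an upper-triangular matrix with diagonal entries $1$ and $\alpha_k$. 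Hence the eigenvalues of $\Phi_k$ on $W_{[4]}$ are $1$ and $\alpha_k$, and likewise $1$ and $\beta_k$ on $W_{[2,2]}:=\mathrm{span}\{P_{[2,2],G},P_{[2,2],G}^+\}$. (The dimensions $D_{[4]}^\pm$ and $D_{[2,2],G}^\pm$ are all strictly positive for $n\ge2$ by \tref{tab:SpaceDimension}, so these spans are genuinely two-dimensional and the operators $\Delta_{[4],G},\Delta_{[2,2],G}$ are well defined.)

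Next I would identify which eigenvector carries which eigenvalue by using trace preservation. If $\Phi_k(X)=\alpha_k X$ then $\tr X=\tr\Phi_k(X)=\alpha_k\tr X$, forcing $\tr X=0$ as soon as $\alpha_k\neq1$; and \lref{lem:PropertyAlphaBeta} gives $\alpha_k<\gamma^k\le\gamma<1$ (and similarly $\beta_k<1$), so the $\alpha_k$-eigenvector is the unique traceless direction of $W_{[4]}$. Since $\Delta_{[4],G}=P_{[4]}^+/D_{[4]}^+-P_{[4]}^-/D_{[4]}^-$ is traceless (each term has trace $1$) and lies in $W_{[4]}$, it must coincide with that eigenvector up to scale, whence $\Phi_k(\Delta_{[4],G})=\alpha_k\Delta_{[4],G}$. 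The identical argument with $\beta_k$ and $W_{[2,2]}$ yields the second identity.

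The heavy lifting is already done in \lref{lem:UTTwirling}; relative to it, this lemma is essentially a corollary, and the only point demanding care is the trace-preservation step that pins the traceless operators $\Delta_{[4],G},\Delta_{[2,2],G}$ to the nontrivial eigenvalues rather than to $1$. I expect the main (mild) obstacle to be verifying that $\Phi_k$ indeed keeps $W_{[4]}$ and $W_{[2,2]}$ invariant --- that is, that the $\hat{T}_k$-conjugation does not scatter $P_{[4]}^+$ or $P_{[2,2],G}^+$ outside these spans --- which is precisely what \lref{lem:UTTwirling} combined with the invariance of $P_{[4]}$ and $P_{[2,2],G}$ certifies. A fully computational alternative, substituting the explicit entries of \tref{tab:SpaceDimension} into \lref{lem:UTTwirling}, would also work but at the cost of more bookkeeping.
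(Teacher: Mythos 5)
Your proof is correct and takes essentially the same route as the paper: the paper also obtains this lemma as a direct corollary of \lref{lem:UTTwirling}, combined with the invariance \eref{eq:UPlambdaG} of $P_{[4]}$ and $P_{[2,2],G}$ under conjugation by $\Tensor{V}{4}$ for any unitary $V$ (in particular $\hat{T}_k$ and Clifford unitaries). Your trace-preservation argument, which pins the traceless operators $\Delta_{[4],G}$ and $\Delta_{[2,2],G}$ to the eigenvalues $\alpha_k$ and $\beta_k$ (legitimate since $\alpha_k,\beta_k<\gamma<1$ by \lref{lem:PropertyAlphaBeta}), is simply a clean way of executing the corollary step that avoids the explicit dimension bookkeeping from \tref{tab:SpaceDimension}; both versions are valid.
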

\Lref{lem:TTwirlingP} shows the effect of the twirling based on $\hat{T}_1$ and \lref{lem:UTTwirlingDelta}  is a simple corollary of \lref{lem:UTTwirling}. \Lref{lem:TTwirlingP} and \ref{lem:UTTwirling} are proved in \aref{app:ProofTTwirling}.

Now, we are ready to determine the cross moment operator associated  with the unitary ensemble  $\bbU_{k, l}$ and an orthonormal basis $\caB$. 
\begin{theorem}\label{thm:OmegaUkl}
Suppose $n\ge2$, $1\le k\le n$, $l\ge 0$, and $\caB$ is an orthonormal basis of $\caH$. Then
\begin{align}
\Omega(\bbU_{k,l},\caB) =&\;\kappa_{[4]}^+\left(\bbU_{k,l},\caB\right)P_{[4]}^+ +\kappa_{[4]}^-\left(\bbU_{k,l},\caB\right)P_{[4]}^- + \kappa_{[2,2]}^+\left(\bbU_{k,l},\caB\right)P_{[2,2],G}^+\nonumber\\
&+\kappa_{[2,2]}^-\left(\bbU_{k,l},\caB\right)P_{[2,2],G}^- + \kappa_{[3,1]}\left(\bbU_{k,l},\caB\right)P_{[3,1],G},  \label{eq:OmegaUkl}
\end{align}
where
\begin{equation}
\begin{aligned}\label{eq:kappaUkl}
\kappa_{[4]}^+\left(\bbU_{k,l},\caB\right) &=\frac{4(d+5)}{(d+1)(d+2)(d+3)}+\alpha_k^l\left[\kappa_{[4]}^+\left(\Cl_n,\caB\right)-\frac{4(d+5)}{(d+1)(d+2)(d+3)}\right],\\
\kappa_{[4]}^-\left(\bbU_{k,l},\caB\right)&=\frac{4(d+5)}{(d+1)(d+2)(d+3)}-\alpha_k^l\left[\frac{4(d+5)}{(d+1)(d+2)(d+3)}-\kappa_{[4]}^-\left(\Cl_n,\caB\right)\right],\\
\kappa_{[2,2]}^+\left(\bbU_{k,l},\caB\right)&= \frac{4}{d(d+1)} + \beta_k^l\left[\kappa_{[2,2]}^+\left(\Cl_n,\caB\right)-\frac{4}{d(d+1)}\right],\\
\kappa_{[2,2]}^-\left(\bbU_{k,l},\caB\right)&=\frac{4}{d(d+1)}-\beta_k^l\left[\frac{4}{d(d+1)}-\kappa_{[2,2]}^-\left(\Cl_n,\caB\right)\right],\\
\kappa_{[3,1]}\left(\bbU_{k,l},\caB\right)& = \frac{4}{(d+1)(d+2)}.
\end{aligned}
\end{equation}
If $\caB$ is the standard computational basis, then
\begin{equation}\label{eq:kappaUklb}
\begin{aligned}
\kappa_{[4]}^+ (\bbU_{k, l} ,\caB) &= \frac{4(d+5)+2(d-1)(d+4)\alpha_k^l}{(d+1)(d+2)(d+3)},&\quad \kappa_{[4]}^- (\bbU_{k, l}, \caB) &= \frac{4(d+5)-8\alpha_k^l}{(d+1)(d+2)(d+3)},\\
\kappa_{[2,2]}^+ (\bbU_{k, l}, \caB) &= \frac{4+2(d-2)\beta_k^l}{d(d+1)},&\quad \kappa_{[2,2]}^- (\bbU_{k, l}, \caB) &= \frac{4(d+2)-8\beta_k^l}{d(d+1)(d+2)}.
\end{aligned}
\end{equation}
\end{theorem}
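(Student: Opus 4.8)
The plan is to exploit the layered structure of $\bbU_{k,l}$ to write the cross moment operator as the $l$-fold iterate of a single superoperator acting on $\Omega(\Cl_n,\caB)$, and then to diagonalize that superoperator on the five-dimensional space spanned by $P_{[4]}^\pm$, $P_{[2,2],G}^\pm$, and $P_{[3,1],G}$. Introduce the Clifford twirl $\Phi_{\mathrm{C}}(X)=\bbE_{U\sim\Cl_n}\dagtensor{U}{4}X\Tensor{U}{4}$ and the $T$-conjugation $\Phi_{\mathrm{T}}(X)=\dagtensor{\hat{T}_k}{4}X\Tensor{\hat{T}_k}{4}$. Writing a generic element of $\bbU_{k,l}$ as $V_l\hat{T}_k V_{l-1}\cdots \hat{T}_k V_0$ with i.i.d.\ Clifford layers $V_0,\dots,V_l$ and $l$ interleaved $\hat{T}_k$ layers, I would insert this into the definition of $\Omega$, take the expectation over the layer nearest the seed together with the sum over $\caB$ to produce $\Omega(\Cl_n,\caB)$, and then peel off the alternating $\hat{T}_k,V$ layers one at a time. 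This yields the recursion $\Omega(\bbU_{k,l},\caB)=(\Phi_{\mathrm{C}}\circ\Phi_{\mathrm{T}})^l\bigl[\Omega(\Cl_n,\caB)\bigr]$, which reduces correctly to $\Omega(\Cl_n,\caB)$ at $l=0$.

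Next I would determine how $\Phi_{\mathrm{C}}\circ\Phi_{\mathrm{T}}$ acts on the relevant subspace. Since $\Omega_0:=\sum_{\bfa,\bfb}\Tensor{(|\bfa\>\<\bfa|)}{2}\otimes\Tensor{(|\bfb\>\<\bfb|)}{2}$ is $G$-invariant and both $\Phi_{\mathrm{C}}$ and $\Phi_{\mathrm{T}}$ commute with every $R(\sigma)$, $\sigma\in G$, all iterates remain supported in $\caH_G$; after each $\Phi_{\mathrm{C}}$ they also lie in the commutant of $\Cl_n^{\tilde{\otimes}4}$, so $\Phi_{\mathrm{C}}\circ\Phi_{\mathrm{T}}$ is an endomorphism of the five-dimensional space above. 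On this space it is diagonal in the basis $\{P_{[4]},\,P_{[2,2],G},\,P_{[3,1],G},\,\Delta_{[4],G},\,\Delta_{[2,2],G}\}$: the three total projectors are $\rmU(d)$-invariant by \eref{eq:UPlambdaG} and the $R(\sigma)$ commute with $\Tensor{\hat{T}_k}{4}$, so both $\Phi_{\mathrm{T}}$ and $\Phi_{\mathrm{C}}$ fix them, while \lref{lem:UTTwirlingDelta} gives $(\Phi_{\mathrm{C}}\circ\Phi_{\mathrm{T}})(\Delta_{[4],G})=\alpha_k\Delta_{[4],G}$ and $(\Phi_{\mathrm{C}}\circ\Phi_{\mathrm{T}})(\Delta_{[2,2],G})=\beta_k\Delta_{[2,2],G}$. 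Hence $(\Phi_{\mathrm{C}}\circ\Phi_{\mathrm{T}})^l$ has eigenvalues $1,1,1,\alpha_k^l,\beta_k^l$.

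I would then decompose the seed as $\Omega(\Cl_n,\caB)=\Omega(\haar)+\bigl[\Omega(\Cl_n,\caB)-\Omega(\haar)\bigr]$. The key observation is that the difference lies entirely in $\mathrm{span}\{\Delta_{[4],G},\Delta_{[2,2],G}\}$: by \lref{lem:OmegaPlambdaG} the traces $\tr[\Omega(\caU,\caB)P_{[4]}]$, $\tr[\Omega(\caU,\caB)P_{[2,2]}]$, $\tr[\Omega(\caU,\caB)P_{[3,1]}]$ are independent of the ensemble, so $\Omega(\Cl_n,\caB)$ and $\Omega(\haar)$ agree on the $P_\lambda$ (trace) component of each block and on the entire $[3,1]$ block, which carries no $\pm$ splitting and on which $\Cl_n$ is irreducible. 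Applying $(\Phi_{\mathrm{C}}\circ\Phi_{\mathrm{T}})^l$ therefore leaves $\Omega(\haar)$ fixed and multiplies the $\Delta_{[4],G}$ and $\Delta_{[2,2],G}$ pieces by $\alpha_k^l$ and $\beta_k^l$. Translating back into the projector basis with $P_{[4]}^+=\tfrac{D_{[4]}^+}{D_{[4]}}P_{[4]}+\tfrac{D_{[4]}^+D_{[4]}^-}{D_{[4]}}\Delta_{[4],G}$ and its analogues then expresses each $\kappa_\lambda^s(\bbU_{k,l},\caB)$ as its Haar value $\kappa_\lambda(\haar)$ plus $\alpha_k^l$ (resp.\ $\beta_k^l$) times its deviation $\kappa_\lambda^s(\Cl_n,\caB)-\kappa_\lambda(\haar)$, which is precisely \eref{eq:kappaUkl}; substituting the computational-basis values from \eref{eq:kappaClb} yields \eref{eq:kappaUklb}.

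The genuine analytic difficulty—how the non-Clifford layer $\Tensor{\hat{T}_k}{4}$ redistributes weight among the irreducible blocks, and the resulting closed forms for $\alpha_k,\beta_k$—is already packaged into \lsref{lem:TTwirlingP}, \ref{lem:UTTwirling}, and \ref{lem:UTTwirlingDelta}. Given those, the main obstacle here is bookkeeping: correctly peeling the recursion with the right count of $l$ twirls, and verifying via \lref{lem:OmegaPlambdaG} that the Clifford--Haar difference has no component along the fixed projectors, so that the clean deviation-from-Haar form with geometric decay $\alpha_k^l,\beta_k^l$ emerges. The special case $n=1$, where $\{\caR_i\}$ fails to be linearly independent and the $[2,2]$ splitting degenerates, must be treated separately, consistent with the hypothesis $n\ge 2$.
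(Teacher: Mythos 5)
Your proposal is correct and follows essentially the same route as the paper: both rest on the multiplicity-free decomposition of \lref{lem:RepV4} with Schur's lemma to restrict $\Omega$ to the five blocks, on \lref{lem:OmegaPlambdaG} to show the deviation from $\Omega(\haar)$ is proportional to $\Delta_{[4],G}$ and $\Delta_{[2,2],G}$, and on \lref{lem:UTTwirlingDelta} to contract that deviation by $\alpha_k$ (resp.\ $\beta_k$) per layer, with the Clifford case as the $l=0$ seed. Your explicit diagonalization of the transfer superoperator $\Phi_{\mathrm{C}}\circ\Phi_{\mathrm{T}}$ on the basis $\{P_{[4]},P_{[2,2],G},P_{[3,1],G},\Delta_{[4],G},\Delta_{[2,2],G}\}$ is simply a repackaging of the paper's induction on $l$, so there is no substantive difference.
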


\begin{proof}
\Eref{eq:OmegaUkl}
follows from \lref{lem:RepV4}, Schur's lemma, \tref{tab:SpaceDimension}, and \eref{eq:Defkappa}, just like \eref{eq:OmegaCl}.
The formula for $\kappa_{[3,1]}\left(\bbU_{k,l},\caB\right)$ in  \eref{eq:kappaUkl} follows from \eref{eq:kappaHaar}. In conjunction with \lref{lem:OmegaPlambdaG} we can deduce that the operator
$P_{\lambda,G}[\Omega(\bbU_{k,l},\caB)-\Omega(\haar)] P_{\lambda,G}$ is traceless and proportional to $\Delta_{\lambda,G}$ for $\lambda=[4], \; [2,2]$. When $l=0$, the ensemble $\bbU_{k,l}$ reduces to the Clifford ensemble, in which case \eqsref{eq:kappaUkl}{eq:kappaUklb} follow from \eqsref{eq:kappaCl}{eq:kappaClb}, respectively. 
When $l\geq 1$, by virtue of \lref{lem:UTTwirlingDelta} and  \eref{eq:Defkappa}  we can deduce that
\begin{align}
P_{\lambda,G}[\Omega(\bbU_{k,l},\caB)-\Omega(\haar)] P_{\lambda,G}&= P_{\lambda,G}[\Omega(\bbU_{k,l-1},\caB)-\Omega(\haar)] P_{\lambda,G}\times \begin{cases}
\alpha_k & \lambda=[4],\\
\beta_k & \lambda=[2,2], 
\end{cases}\\
\kappa_\lambda^\pm(\bbU_{k,l},\caB)-\kappa(\haar)&= [\kappa_\lambda^\pm(\bbU_{k,l-1},\caB)-\kappa(\haar)]\times \begin{cases}
\alpha_k & \lambda=[4],\\
\beta_k & \lambda=[2,2],
\end{cases}
\end{align}
which imply \eqsref{eq:kappaUkl}{eq:kappaUklb} by induction. 
\end{proof}

When $l=0$, \thref{thm:OmegaUkl} recovers \pref{pro:OmegaCl}. In general, the $1$-norm of the difference $\Omega(\bbU_{k,l},\caB)-\Omega(\haar)$ can be calculated as follows,
\begin{align}
\left\|\Omega(\bbU_{k,l},\caB) - \Omega(\haar)\right\|_1 &=\frac{ (d^2 - 1) (d + 2) (d + 4)\alpha_k^l }{3 d (d + 3)} \left|\kappa_{[4]}^+\left(\Cl_n,\caB\right)-\kappa_{[4]}^-\left(\Cl_n,\caB\right)\right|\nonumber\\
&\quad +  \frac{2 (d^2 - 4) (d^2 - 1)\beta_k^l}{3 d^2} \left|\kappa_{[2,2]}^+\left(\Cl_n,\caB\right)-\kappa_{[2,2]}^-\left(\Cl_n,\caB\right)\right|\nonumber\\
&=\left\|\Omega(\Cl_n, \caB)-\Omega(\haar)\right\|_1\left[\gamma^{kl}+\caO(d^{-1})\right].
\end{align}
It is mainly determined by the total number of $T$ gates employed to generate the ensemble $\bbU_{k,l}$, which echoes \thsref{thm:VarUkl} and \ref{thm:VarFUkl} in the main text. Therefore, two ensembles $\bbU_{k,l}$ and $\bbU_{k',l'}$ will have similar performances in thrifty shadow estimation whenever $kl=k'l'$. Notably, the ensemble $\bbU_{k,1}$
is similar to $\bbU_{1,k}$ with this regard although it is much easier to realize.

\subsubsection{Proof of \lref{lem:PropertyAlphaBeta}}\label{app:ProofDeviaCoeff}
\begin{proof}
According to the definitions in \eref{eq:alphakbetak} it is easy to verify that $\alpha_k$ and $\beta_k$ are monotonically decreasing in $k$, given that $0<\nu<\gamma < 1$. In addition, a direct calculation shows that $\partial \alpha_k/\partial d=f_1/[(d^2-1)^2(d+2)^2(d+4)^2]$ with  
\begin{align}
f_1&=3 d^6 \left(\gamma ^k-\nu ^k\right)+2 d^5 \left(7 \gamma ^k-9 \nu ^k+4\right)+3 d^4 \left(\gamma ^k-11 \nu ^k+20\right)+4 d^3 \left(44 - 17 \gamma ^k-9 \nu ^k\right)\nonumber\\
&\quad +18 d^2 \left(14 - 4 \gamma ^k - 7\nu ^k\right)+16 d \left(11 -9 \nu ^k\right)+48> 0,  
\end{align}
which implies that $\alpha_k$ is  monotonically increasing in $d$. When $k=1$, by virtue of \eref{eq:alpha1beta1}, 
it is straightforward to prove that $\beta_k$ is monotonically increasing in $d$. When $k\ge2$, which means $d\ge4$, by virtue of \eref{eq:alphakbetak} we can deduce that
\begin{align}
 \frac{\partial \beta_k}{\partial d}=\frac{2d[4(d^2-1)^2-d^2\gamma^k(5d^2-8)]}{(d^2-1)^2(d^2-4)^2} >  \frac{d(d^2-4)^2}{2(d^2-1)^2(d^2-4)^2}>0, 
\end{align}
where the first inequality holds because $5d^2-8>0$ and $\gamma^k< 3/4$ given that $\gamma=3/4$ and $k\geq 2$. Therefore, $\beta_k$ is monotonically increasing in $d$.

By definitions in \eref{eq:alphakbetak} it is straightforward to verify the limits in 
\eref{eq:UpperAlphaBeta}, which imply the inequalities $\alpha_k,\beta_k<\gamma^k$, given that $\alpha_k$ and $\beta_k$ are  monotonically increasing in $d$.  The inequalities  $\alpha_k,\beta_k>0$ also follow from the definitions in \eqsref{eq:alphakbetak}{eq:alpha1beta1} given that $k\leq n$ and $d\gamma^k> d\nu^k\geq 1$. This observation completes the proof of \eref{eq:UpperAlphaBeta}, and it remains to prove \eref{eq:RelationAlphaBeta}.

By virtue of \eref{eq:alpha1beta1} we can deduce that
\begin{equation}\label{eq:LowerAlpha1}
\alpha_1^{kl} = \gamma^{kl} \left[1-\frac{3d+1}{3(d^2-1)}\right]^{kl} \ge \gamma^{kl} \left[1-\frac{kl(3d+1)}{3(d^2-1)}\right],
\end{equation}
which confirms the first inequality in \eref{eq:RelationAlphaBeta}. Here the inequality holds because  $(1-x)^q \ge 1-qx$ when $x\in[0,1]$  and $q\ge0$.

To prove  the second
inequality in \eref{eq:RelationAlphaBeta}, it suffices to prove the inequality $\alpha_1^k \le \alpha_k$, which is trivial when $k=1$.  When $k=2$ (which means $d\geq 4$), we have
\begin{equation}
\alpha_k - \alpha_1^k=\frac{d(3d^3+2d^2-29d-24)}{16(d+2)(d^2-1)}>0. 
\end{equation}
When  $k \ge3$ (which means $d\geq 8$), by virtue of the fact that $(1-x)^q \le 1 - qx + q(q-1)x^2/2$ for $x\in[0, 1]$ and $q\ge1$ we can deduce that
\begin{equation}
\alpha_1^{k}= \gamma^k \left[1-\frac{3d+1}{3(d^2-1)}\right]^{k} \le \gamma^{k}
\left[1-\frac{k(3d+1)}{3(d^2-1)}+ \frac{k(k-1)(3d+1)^2}{18(d^2-1)^2}\right]\le \gamma^{k} \left[1-\frac{3kd}{4(d^2-1)}\right],
\end{equation}
where the second inequality holds because 
\begin{align}\label{eq:TechiqueUpper}
\frac{(k-1)(3d+1)}{18(d^2-1)} \le \frac{(n-1)(3d+1)}{18(d^2-1)} \le \frac{1}{12}.
\end{align}
Therefore,
\begin{align}
\alpha_k-\alpha_1^k &\ge \alpha_k-\gamma^k \left[1-\frac{3kd}{4(d^2-1)}\right]=\frac{f_2}{4 (d^2-1) (d+2) (d+4)},
\end{align}
where 
\begin{align}
f_2=3 d^3 \left[(k-4) \gamma ^k+4 \nu ^k\right]+2 d^2 \left[(9 k-14) \gamma ^k+18 \nu ^k-8\right]+24 d \left[(k+1) \gamma ^k-2\right]+32 \left(\gamma ^k-1\right). 
\end{align}
If in addition $k\geq 5$ (which means $n\geq 5$ and $d\geq 32$),  then we have $d\gamma^k \ge d\gamma^n \ge (3/2)^5 > 7$, which implies that $f_2 > 21d^2-16d^2-48d-32\geq 0$; if instead $k=3,4$, then $f_2\geq 0$ by direct calculations. In any case, we have $\alpha_k-\alpha_1^k\geq0$, which implies the second inequality in \eref{eq:RelationAlphaBeta}.

To prove  the third
inequality in \eref{eq:RelationAlphaBeta}, it suffices to prove the  inequality $\alpha_k \le \beta_k$. When $k=1$, this inequality follows from \eref{eq:alpha1beta1}. 
When $k\geq 2$, by definitions in \eref{eq:alphakbetak} we can deduce that
\begin{align}
\beta_k - \alpha_k &= \frac{3 d \left[d^3 \left(\gamma^k-\nu^k\right)+d^2 \left(2 \gamma^k-\nu^k\right)+d \left(6 \nu ^k - 4\right)-4\right]}{(d^2-1)(d^2-4)(d+4)}\ge\frac{3 d (d^2-d+2)}{(d^2-1)(d^2-4)(d+4)}\ge0,
\end{align}
which implies the third
inequality in \eref{eq:RelationAlphaBeta}.
Here the second inequality holds because $\gamma^k\geq 2\nu^k>0$ and $d\nu^k\geq d\nu^n=1$.

To prove the fourth inequality in \eref{eq:RelationAlphaBeta}, it suffices to prove the inequality $\beta_k \le \beta_1^k$. By virtue of \eref{eq:alpha1beta1} we can deduce that
\begin{equation}
\beta_1^k = \gamma^{k} \left[1-\frac{1}{3(d^2-1)}\right]^k \ge \gamma^{k}\left[1-\frac{k}{3(d^2-1)}\right].
\end{equation}
Therefore,  
\begin{equation}
\beta_1^k - \beta_k \ge \gamma^{k}\left[1-\frac{k}{3(d^2-1)}\right]-\beta_k=\frac{12 d^2-12 -\left[k(d^2-4) +15 d^2-12\right] \gamma ^k}{3 (d^2-4)(d^2-1)}\ge0,
\end{equation}
which implies the fourth inequality in \eref{eq:RelationAlphaBeta}. Here
 the second inequality holds because $[k(d^2-4) +15 d^2-12]\gamma ^k$ decreases monotonically in $k$ when $d\ge 2$.

Finally, by virtue of \eref{eq:alpha1beta1} we can deduce that
\begin{equation}
\beta_1^{kl} - \alpha_1^{kl} = \gamma^{kl} \left[\left(1-\frac{1}{3(d^2-1)}\right)^{kl}-\left(1-\frac{3d+1}{3(d^2-1)}\right)^{kl}\right] \le \frac{kld}{d^2-1}\gamma^{kl},
\end{equation}
which confirms the last inequality in  \eref{eq:RelationAlphaBeta} and completes the proof of \lref{lem:PropertyAlphaBeta}.
\end{proof}

\subsubsection{Proofs of \lsref{lem:TTwirlingP} and~\ref{lem:UTTwirling}}\label{app:ProofTTwirling}
\begin{proof}[Proof of \lref{lem:TTwirlingP}]
By virtue of \eref{eq:FormRT4} we can determine the conjugation action of $\Tensor{\hat{T}_{1}}{4}$ on the stabilizer projector $P_n$, with the result
\begin{align}
\dagtensor{\hat{T}_{1}}{4} P_n \Tensor{\hat{T}_{1}}{4} = P_n - P_{n-1}\otimes \left(|\mathbf{0}\>\<\mathbf{1}|+|\mathbf{1}\>\<\mathbf{0}|\right),
\end{align}
where $|\mathbf{0}\>$ and $|\mathbf{1}\>$ are shorthands for $|0000\>$ and $|1111\>$, respectively. As a corollary, 
\begin{align}
\dagtensor{\hat{T}_{1}}{4}P_{\lambda, G}^+\Tensor{\hat{T}_{1}}{4}  
&=P_{\lambda, G}\dagtensor{\hat{T}_{1}}{4}P_n\Tensor{\hat{T}_{1}}{4}
=P_{\lambda, G} \left[P_n - P_{n-1}\otimes \left(|\mathbf{0}\>\<\mathbf{1}|+|\mathbf{1}\>\<\mathbf{0}|\right)\right]\nonumber\\
&=P_{\lambda, G}^+ - \sum_{\sigma\in S_4} c_{\lambda, G}(\sigma) R(\sigma) \left[P_{n-1}\otimes \left(|\mathbf{0}\>\<\mathbf{1}|+|\mathbf{1}\>\<\mathbf{0}|\right)\right]\nonumber\\
&=P_{\lambda, G}^+ - \sum_{\sigma\in S_4} \bigl\{ c_{\lambda, G}(\sigma)\left[  \Tensor{r(\sigma)}{n-1}P_{n-1}\right]\otimes \left[r(\sigma)\left(|\mathbf{0}\>\<\mathbf{1}|+|\mathbf{1}\>\<\mathbf{0}|\right)\right]\bigr\}\nonumber\\
&=P_{\lambda, G}^+ - \left[\sum_{\sigma\in S_4}  c_{\lambda, G}(\sigma)\Tensor{r(\sigma)}{n-1}P_{n-1}\right]\otimes \left(|\mathbf{0}\>\<\mathbf{1}|+|\mathbf{1}\>\<\mathbf{0}|\right)\nonumber\\
&= P_{\lambda,G}^+ - P_{\lambda, G}^{+,(n-1)}\otimes \left(|\mathbf{0}\>\<\mathbf{1}|+|\mathbf{1}\>\<\mathbf{0}|\right),
\end{align}
where $P_{\lambda, G} = \sum_{\sigma\in S_4} c_{\lambda, G}(\sigma) R(\sigma)$ and $c_{\lambda, G}(\sigma)$ is independent of the qubit number $n$ by definitions in Eqs.~(\ref{eq:FormPlambda}),~(\ref{eq:PGHG}), and~(\ref{eq:DefProjectors}). The first equality holds because $P_{\lambda, G}^+ = P_{\lambda, G}P_n$ and $\dagtensor{\hat{T}_{1}}{4} P_{\lambda, G}\Tensor{\hat{T}_1}{4}=P_{\lambda, G}$; the fifth equality follows from the fact that $r(\sigma)|\mathbf{0}\>=|\mathbf{0}\>$ and $r(\sigma)|\mathbf{1}\>=|\mathbf{1}\>$ for any $\sigma\in S_4$, where $r(\sigma)$ is defined in \eref{eq:DefRT}.
\end{proof}

\begin{proof}[Proof of \lref{lem:UTTwirling}]
By virtue of  \lref{lem:TTwirlingP} we can determine the action  of $\hat{T}_k$ on the projector $P_{\lambda,G}^+$, with the result
\begin{equation}\label{eq:TkTwirlingP}
\dagtensor{\hat{T}_{k}}{4}P_{\lambda,G}^+\Tensor{\hat{T}_{k}}{4} =\sum_{j=0}^{k} (-1)^j\binom{k}{j} P_{\lambda,G}^{+, (n-j)} \otimes \Tensor{\left(|\mathbf{0}\>\<\mathbf{1}|+|\mathbf{1}\>\<\mathbf{0}|\right)}{j}.
\end{equation}
Note that $P_{\lambda,G}^{+, (n-j)}$ may project onto different $(n-j)$-qubit subsystems. Since the precise partitions of these projectors are not required in the following analysis, we denote them by the same notation for the sake of simplicity. By virtue of \lref{lem:RepV4} and Schur's lemma, for $\lambda$ such that $D_{\lambda, G}^\pm \neq 0$, we have
\begin{equation}\label{eq:TTwirlingP+}
\bbE_{U\sim \Cl_n} \dagtensor{U}{4} \dagtensor{\hat{T}_k}{4} P_{\lambda, G}^+ \Tensor{\hat{T}_k}{4}\Tensor{U}{4} =\sum_{s=\pm} \frac{\tr\left[\dagtensor{\hat{T}_k}{4} P_{\lambda, G}^+ \Tensor{\hat{T}_k}{4} P_{\lambda, G}^s\right]}{D_{\lambda, G}^s} P_{\lambda, G}^s.
\end{equation}
The overlaps involved can be calculated as follows,
\begin{align}
\tr\left[\dagtensor{\hat{T}_k}{4} P_{\lambda, G}^+ \Tensor{\hat{T}_k}{4} P_{\lambda, G}^+\right] &= \tr\left[\dagtensor{\hat{T}_k}{4} P_{\lambda, G}^+ \Tensor{\hat{T}_k}{4} P_{\lambda, G}P_n\right] = \tr\left[\dagtensor{\hat{T}_k}{4} P_{\lambda, G}^+ \Tensor{\hat{T}_k}{4} P_n\right] \nonumber\\
&=  \sum_{j=0}^{k} (-1)^j\binom{k}{j} D_{\lambda, G}^{+,(n-j)} \tr\left[ P_1 \left(|\mathbf{0}\>\<\mathbf{1}|+|\mathbf{1}\>\<\mathbf{0}|\right)\right]^j=\sum_{j=0}^{k} (-1)^j\binom{k}{j} D_{\lambda, G}^{+,(n-j)},\label{eq:trTPTP}\\
\tr\left[\dagtensor{\hat{T}_k}{4} P_{\lambda, G}^+ \Tensor{\hat{T}_k}{4} P_{\lambda, G}^-\right] &= D_{\lambda, G}^{+} - \tr\left[\dagtensor{\hat{T}_k}{4} P_{\lambda, G}^+ \Tensor{\hat{T}_k}{4} P_{\lambda, G}^+\right],
\end{align}
where $D_{\lambda, G}^{+,(n-j)}$ is the dimension of $\caH_{\lambda, G}^{(n-j)}$, which is associated  with $n-j$ qubits. The second equality in \eref{eq:trTPTP} follows from the equality $P_{\lambda, G}^+P_{\lambda, G} = P_{\lambda, G}^+$ and \eref{eq:UPlambdaG}; the last two equalities in \eref{eq:trTPTP} follow from  \eref{eq:TkTwirlingP} and the fact that $\tr\left[ P_1 \left(|\mathbf{0}\>\<\mathbf{1}|+|\mathbf{1}\>\<\mathbf{0}|\right)\right]=1$.
In conjunction with the dimension formulas in \tref{tab:SpaceDimension} we can deduce  that
\begin{align}
&\tr\left[\dagtensor{\hat{T}_k}{4} P_{[4]}^+ \Tensor{\hat{T}_k}{4} P_{[4]}^+\right] = \frac{d^2\gamma^k+3 d\nu^k}{6},&\quad& \tr\left[\dagtensor{\hat{T}_k}{4} P_{[4]}^+ \Tensor{\hat{T}_k}{4} P_{[4]}^-\right] = \frac{d^2\left(1-\gamma^k\right)+3 d\left(1-\nu^k\right)+2}{6}, \\
&\tr\left[\dagtensor{\hat{T}_k}{4} P_{[2,2], G}^+ \Tensor{\hat{T}_k}{4} P_{[2,2], G}^+\right] = \frac{d^2\gamma^k}{3},&\quad& \tr\left[\dagtensor{\hat{T}_k}{4} P_{[2,2], G}^+ \Tensor{\hat{T}_k}{4} P_{[2,2], G}^-\right] = \frac{d^2 \left( 1 - \gamma^k \right) - 1}{3}.
\end{align}
Together with \eref{eq:TTwirlingP+} and  \tref{tab:SpaceDimension}, the two equations means
\begin{align}
\bbE_{U\sim \Cl_n} \dagtensor{U}{4} \dagtensor{\hat{T}_k}{4} P_{[4]}^+ \Tensor{\hat{T}_k}{4}\Tensor{U}{4} &= \frac{d^2\gamma^k+3 d\nu^k}{(d+1)(d+2)} P_{[4]}^+ + \frac{4\left[d^2\left(1-\gamma^k\right)+3 d\left(1-\nu^k\right)+2\right]}{(d^2-1)(d+2)(d+4)} P_{[4]}^-,\\
\bbE_{U\sim \Cl_n} \dagtensor{U}{4} \dagtensor{\hat{T}_k}{4} P_{[2,2],G}^+ \Tensor{\hat{T}_k}{4}\Tensor{U}{4} &=\frac{d^2\gamma^k}{d^2-1} P_{[2,2], G}^+ + \frac{4\left[d^2 \left( 1 - \gamma^k \right) - 1\right]}{(d^2-1)(d^2-4)} P_{[2,2], G}^-,
\end{align}
which imply  \eqsref{eq:UTTwirlingP4}{eq:UTTwirlingP22} and complete the proof of \lref{lem:UTTwirling}.
\end{proof}

\subsection{Analysis for the single-qubit case}\label{app:SingleCase}
In this section, we determine the cross moment operator for the single-qubit case, that is $n=1$. Now, special attention is required because  $D_{[2,2],G}^-=0$ and $P_{[2,2], G} = P_{[2,2], G}^+$, which means $\Lambda_2(\caB) = 4$ for any orthonormal basis $\caB$ on $\caH$. Nevertheless,  the basic idea is similar  to the case $n\ge2$.
So we shall state the main results and omit the detailed proofs. 

\begin{proposition}\label{pro:OmegaClSingle}
Suppose $n=1$ and $\caB$ is any orthonormal basis of $\caH$.
Then 
\begin{align}
\Omega(\Cl_1, \caB) &= \kappa_{[4]}^+(\Cl_1,\caB)P_{[4]}^++\kappa_{[4]}^-(\Cl_1,\caB)P_{[4]}^- + \kappa_{[2,2]}(\Cl_1,\caB)P_{[2,2],G} + \kappa_{[3,1]}(\Cl_1,\caB)P_{[3,1],G},
\end{align}
where 
\begin{equation}
\kappa_{[4]}^+(\Cl_1,\caB) = \frac{\Lambda_1(\caB)}{24},\quad\kappa_{[4]}^-(\Cl_1,\caB) = \frac{28-\Lambda_1(\caB)}{36} ,\quad\kappa_{[2,2]}(\Cl_1,\caB) = \frac{2}{3},
\quad\kappa_{[3,1]}(\Cl_1,\caB) = \frac{1}{3}.
\end{equation}
If $\caB$ is the standard computational basis, then
\begin{equation}
\kappa_{[4]}^+(\Cl_1,\caB) = \kappa_{[2,2]}(\Cl_1,\caB) = \frac{2}{3},\quad\kappa_{[4]}^-(\Cl_1,\caB) = \kappa_{[3,1]}(\Cl_1,\caB) = \frac{1}{3}.
\end{equation}
\end{proposition}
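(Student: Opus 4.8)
The plan is to mirror the proof of \pref{pro:OmegaCl}, while accounting for the two degeneracies that appear at $n=1$ (equivalently $d=2$): by \tref{tab:SpaceDimension} we have $D_{[2,2],G}^-=0$ and $D_{[3,1],G}^+=0$, so that $P_{[2,2],G}=P_{[2,2],G}^+$ and $P_{[3,1],G}=P_{[3,1],G}^-$. First I would invoke \lref{lem:RepV4} together with Schur's lemma. Since $\Omega(\Cl_1,\caB)$ is supported in $\caH_G$ and commutes with $\Tensor{U}{4}$ for every $U\in\Cl_1$, it must be a linear combination of the projectors onto the multiplicity-free, pairwise inequivalent $\Cl_1$-irreducible blocks of $\caH_G$. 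For $n=1$ the surviving blocks are $\caH_{[4],G}^+,\ \caH_{[4],G}^-,\ \caH_{[2,2],G},\ \caH_{[3,1],G}$, of dimensions $2,3,1,3$, which yields the stated form with coefficients $\kappa_\lambda^s=\tr[\Omega(\Cl_1,\caB)P_{\lambda,G}^s]/D_{\lambda,G}^s$ as in \eref{eq:Defkappa}.

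Next I would pin down the two basis-independent coefficients. Because the $[2,2]$ and $[3,1]$ sectors each collapse to a single irreducible block, $\kappa_{[2,2]}$ and $\kappa_{[3,1]}$ are fixed entirely by the total traces $\tr[\Omega(\Cl_1,\caB)P_{[2,2],G}]$ and $\tr[\Omega(\Cl_1,\caB)P_{[3,1],G}]$, which by \lref{lem:OmegaPlambdaG} equal $d(d-1)/3$ and $d(d-1)/2$ and are independent of $\caB$. Dividing by $D_{[2,2],G}=1$ and $D_{[3,1],G}=3$ at $d=2$ gives $\kappa_{[2,2]}(\Cl_1,\caB)=2/3$ and $\kappa_{[3,1]}(\Cl_1,\caB)=1/3$. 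As a byproduct, matching the value $2/3$ against the characteristic-function formula $\tr[\Omega P_{[2,2],G}^+]=2\Lambda_2(\caB)/(3d^2)$ (valid here since $P_{[2,2],G}^+=P_{[2,2],G}$) forces $\Lambda_2(\caB)=4$ for every orthonormal basis, which is precisely the degeneracy anticipated in the surrounding text.

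For the $[4]$ sector I would reuse the computation underlying \pref{pro:OmegaCl}: the identity $\tr[\Omega(\Cl_n,\caB)P_{[4]}^+]=\Lambda_1(\caB)/(3d^2)$ rests only on \lref{lem:CrossChar} (the equality $\|\tXi\|_2^2=\|\Xi\|_2^2$) and the operator expansion of $P_{[4]}^+$ into orbit sums, so it survives at $n=1$. Combined with $\tr[\Omega P_{[4]}]=d(d+5)/6$ from \lref{lem:OmegaPlambdaG} this gives $\tr[\Omega P_{[4]}^-]=d(d+5)/6-\Lambda_1(\caB)/(3d^2)$; dividing by $D_{[4],G}^+=2$ and $D_{[4],G}^-=3$ at $d=2$ yields $\kappa_{[4]}^+=\Lambda_1(\caB)/24$ and $\kappa_{[4]}^-=(28-\Lambda_1(\caB))/36$. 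For the standard computational basis the direct evaluation $\Lambda_1(\caB)=d^3+2d^2=16$ then produces the special values $\kappa_{[4]}^+=\kappa_{[2,2]}=2/3$ and $\kappa_{[4]}^-=\kappa_{[3,1]}=1/3$.

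The main obstacle is confirming that the operator identities and trace formulas inherited from the $n\ge2$ analysis genuinely survive the loss of linear independence of $\{R(\caT)\}_{\caT\in\Sigma_{4,4}}$ at $n=1$, where $\{\caR_i\}_{i=1}^5$ only spans a four-dimensional subspace. I expect this to be harmless: the decompositions of $P_{[4]}^\pm$ and $P_{[2,2],G}^+$ into orbit sums are genuine operator identities whose coefficients do not depend on $n$, and the traces I need are computed directly rather than through any change of basis that would require independence. If one prefers to sidestep this point entirely, all four coefficients can be obtained instead by an explicit $16\times16$ evaluation of $\Omega(\Cl_1,\caB)$ against the four projectors, which is routine at $d=2$.
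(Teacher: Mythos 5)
Your proposal is correct and follows exactly the route the paper intends: the paper explicitly omits the detailed proof for $n=1$, stating only that "the basic idea is similar to the case $n\ge2$," and your argument is precisely the proof of \pref{pro:OmegaCl} (Schur's lemma on the multiplicity-free blocks of $\caH_G$, the trace formulas from \lref{lem:OmegaPlambdaG} and \lref{lem:CrossChar}) adapted to the $d=2$ degeneracies $D_{[2,2],G}^-=0$, $P_{[3,1],G}=P_{[3,1],G}^-$, with all dimensions and coefficients checking out. Your handling of the loss of linear independence of $\{\caR_i\}$ at $n=1$ is also sound, since the expansions in \eref{eq:RelationPandRi} are operator identities with $n$-independent coefficients and their validity does not require uniqueness.
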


\begin{proposition}
Suppose $n=1$, $l\ge0$, and $\caB$ is any orthonormal basis of $\caH$. Then
\begin{align}
\Omega(\bbU_{1,l},\caB) =\kappa_{[4]}^+\left(\bbU_{1,l},\caB\right)P_{[4]}^+ +\kappa_{[4]}^-\left(\bbU_{1,l},\caB\right)P_{[4]}^- + \kappa_{[2,2]}\left(\bbU_{1,l},\caB\right)P_{[2,2],G}+ \kappa_{[3,1]}\left(\bbU_{1,l},\caB\right)P_{[3,1],G},
\end{align}
where
\begin{equation}
\begin{aligned}
\kappa_{[4]}^+\left(\bbU_{1,l},\caB\right) &= \frac{7}{15}+\left(\frac{1}{6}\right)^l\left[\frac{\Lambda_1(\caB)}{24}-\frac{7}{15}\right],&\quad\kappa_{[4]}^-\left(\bbU_{1,l},\caB\right) &= \frac{7}{15}-\left(\frac{1}{6}\right)^l\left[\frac{\Lambda_1(\caB)}{36}-\frac{14}{45}\right],\\
\kappa_{[2,2]}\left(\bbU_{1,l},\caB\right) &= \frac{2}{3},&\quad \kappa_{[3,1]}\left(\bbU_{1,l},\caB\right) &= \frac{1}{3}.
\end{aligned}
\end{equation}
\end{proposition}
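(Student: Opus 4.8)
The plan is to specialize the argument behind \thref{thm:OmegaUkl} to the single-qubit case $d=2$, taking care of the degeneracies $D_{[2,2],G}^-=0$ and $P_{[2,2],G}=P_{[2,2],G}^+$ that distinguish it from $n\ge2$. First I would note that, as for every cross moment operator, $\Omega(\bbU_{1,l},\caB)$ is $G$-invariant and supported in $\caH_G$, and that it commutes with $\Cl_1^{\tilde{\otimes}4}$ because the innermost Clifford layer makes $\bbU_{1,l}$ invariant under right multiplication by $\Cl_1$. By \lref{lem:RepV4} and Schur's lemma it is therefore a linear combination of the four projectors $P_{[4]}^+$, $P_{[4]}^-$, $P_{[2,2],G}=P_{[2,2],G}^+$, and $P_{[3,1],G}=P_{[3,1],G}^-$, with coefficients given by the normalized traces in \eref{eq:Defkappa}. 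This yields the displayed form, and it remains to evaluate the four $\kappa$'s.

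The decisive simplification relative to $n\ge2$ is that three of the four coefficients are fixed outright by \lref{lem:OmegaPlambdaG}, which makes the traces $\tr[\Omega\,P_\lambda]$ independent of the ensemble. Since $D_{[3,1],G}^+=0$ the $[3,1]$ block lies entirely in the $-$ sector, so $\kappa_{[3,1]}=\tr[\Omega\,P_{[3,1],G}]/D_{[3,1],G}=(d(d-1)/2)/3=1/3$ at $d=2$; and because $D_{[2,2],G}^-=0$ the whole $[2,2]$ block is one-dimensional, so $\kappa_{[2,2]}=\tr[\Omega\,P_{[2,2],G}]/D_{[2,2],G}=(d(d-1)/3)/1=2/3$ for every $l$. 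The same lemma gives the $[4]$ trace $\tr[\Omega\,P_{[4]}]=d(d+5)/6=7/3$, i.e. the constraint $2\kappa_{[4]}^++3\kappa_{[4]}^-=7/3$ from the dimensions $D_{[4],G}^\pm=2,3$ in \tref{tab:SpaceDimension}. Thus only a single independent number inside the $[4]$ block remains to be propagated through the $l$ layers.

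To propagate it I would use the layer recursion $\Omega(\bbU_{1,l},\caB)=\bbE_{U\sim\Cl_1}\dagtensor{U}{4}\dagtensor{\hat{T}_1}{4}\,\Omega(\bbU_{1,l-1},\caB)\,\Tensor{\hat{T}_1}{4}\Tensor{U}{4}$, obtained by peeling off one interleaved $T$ layer together with a neighbouring Clifford. The Haar operator $\Omega(\haar)=\tfrac{7}{15}P_{[4]}+\tfrac{2}{3}P_{[2,2],G}+\tfrac{1}{3}P_{[3,1],G}$ is a fixed point of this channel, and by the universal traces above the deviation $\Omega(\Cl_1,\caB)-\Omega(\haar)$ is traceless against $P_{[4]}$, $P_{[2,2]}$, and $P_{[3,1]}$, hence points purely along the traceless $[4]$ direction $\Delta_{[4],G}=P_{[4]}^+/D_{[4]}^+-P_{[4]}^-/D_{[4]}^-$. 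The single-qubit version of \lsref{lem:UTTwirling} and~\ref{lem:UTTwirlingDelta} says $\Delta_{[4],G}$ is an eigenvector of the channel with eigenvalue $\alpha_1=(3d^2-3d-4)/[4(d^2-1)]=1/6$ at $d=2$ by \eref{eq:alpha1beta1}. Iterating gives $\kappa_{[4]}^+(\bbU_{1,l},\caB)-\tfrac{7}{15}=(1/6)^l[\kappa_{[4]}^+(\Cl_1,\caB)-\tfrac{7}{15}]$, and inserting the base value $\kappa_{[4]}^+(\Cl_1,\caB)=\Lambda_1(\caB)/24$ from \pref{pro:OmegaClSingle} together with the trace constraint for $\kappa_{[4]}^-$ reproduces the stated formulas.

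The step I expect to be the main obstacle is justifying the eigenvalue $\alpha_1=1/6$ for the $[4]$ block at $d=2$, since the proof of \lref{lem:UTTwirling} routes the $\hat{T}_1$-twirl of $P_{[4]}^+$ through the $(n-1)$-qubit projectors $P_{[4],G}^{+,(n-1)}$, which collapse to a trivial zero-qubit object when $n=1$. I would resolve this either by checking that the overlap identities in that proof stay valid in the limit $d\to2$ (the coefficients in \thref{thm:OmegaUkl} are rational in $d$ and were asserted for all $n\ge1$), or, more safely, by evaluating the two overlaps $\tr[\dagtensor{\hat{T}_1}{4}P_{[4]}^+\Tensor{\hat{T}_1}{4}P_{[4]}^{\pm}]$ as explicit traces of $16\times16$ matrices and reading off that $\Delta_{[4],G}$ is scaled by $1/6$. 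The remaining manipulations are routine bookkeeping with the dimensions of \tref{tab:SpaceDimension}.
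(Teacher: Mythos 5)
Your proposal is correct and follows essentially the route the paper intends: the paper explicitly omits the detailed proof of this proposition, stating that the argument mirrors \thref{thm:OmegaUkl} with the single-qubit degeneracies $D_{[2,2],G}^-=0$ and $P_{[2,2],G}=P_{[2,2],G}^+$ handled separately, and that is exactly what you do — Schur decomposition via \lref{lem:RepV4}, the ensemble-independent traces of \lref{lem:OmegaPlambdaG} pinning down $\kappa_{[2,2]}=2/3$, $\kappa_{[3,1]}=1/3$, and the trace constraint $2\kappa_{[4]}^++3\kappa_{[4]}^-=7/3$, then propagating the single remaining parameter along $\Delta_{[4],G}$ with eigenvalue $\alpha_1=1/6$ and base case \pref{pro:OmegaClSingle}. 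You also correctly identify the only delicate point (the validity of \lsref{lem:TTwirlingP} and~\ref{lem:UTTwirling} at $n=1$), and either of your proposed fixes — checking the rational-in-$d$ overlap formulas survive at $d=2$ (they do: both overlaps equal $1$, giving eigenvalue $\tfrac12-\tfrac13=\tfrac16$) or a direct $16\times16$ trace computation — closes it.
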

Note that for the unitary ensemble $\bbU_{k, l}$, the only nontrivial choice for $k$ is 1, so we can focus on  $\bbU_{1, l}$ without loss of generality. In addition, combining \eref{eq:Lambda12tUk} with \pref{pro:OmegaClSingle} we can obtain the counterpart of \pref{pro:OmegatUk} for the special case $n=1$.

\subsection{Alternative expression of the cross moment operator in terms of $\caR_i$}
According to \aref{app:ProjectorSWD}, the projectors $P_{\lambda, G}$ and $P_{\lambda, G}^s$ can be expanded in terms of the operators  $\caR_i$ defined in \eref{eq:DefcaRi}. Suppose  $\caU$ corresponds to one of the following four unitary ensembles: Haar, $\Cl_n$, $\bbU_{k,l}$, and $\tbbU_k$.
Then we can expand the cross moment operator $\Omega(\caU, \caB)$
in terms of $\caR_i$ as follows,
\begin{equation}
\Omega(\caU, \caB) = \sum_{i=1}^{5} g_i(\caU, \caB) \caR_i.
\end{equation}
When $n\geq 2$, the set  $\{\caR_i\}_{i=1}^5$ is linearly independent, so the coefficients $g_i(\caU, \caB)$ are uniquely  determined by $\caU$ and $\caB$. In the special case $n=1$, there is some freedom in choosing the coefficients.
In conjunction with \eref{eq:DefV*2Design}, we can express the variance  $V_*(O,\rho)$ in thrifty shadow estimation as follows,
\begin{equation}\label{eq:DefV*gi}
V_*(O,\rho) = (d+1)^2\sum_{i=1}^5 g_i(\caU,\caB)\tr\bigl[\caR_i\Tensor{(O\otimes \rho)}{2}\bigr] - [\tr(O\rho)]^2.
\end{equation}
When $\caB$ is the standard computational basis, which is our main focus, the coefficients $g_i(\caU, \caB)$ can be abbreviated as $g_i(\caU)$. \Lsref{lem:giHaar}-\ref{lem:gitUk} below determine these coefficients associated with the four  unitary ensembles mentioned above.
Note that these results hold whenever $n\geq 1$, although some coefficients are not unique when $n=1$.

\begin{lemma}\label{lem:giHaar}
Suppose  $n\geq 1$. Then
\begin{equation}\label{eq:giHaar}
\begin{aligned}
g_1(\haar) &= \frac{(d^2+4d+2)}{d(d+1)(d+2)(d+3)},&\quad& g_2(\haar) = - \frac{1}{d(d+1)(d+2)(d+3)},\\
g_3(\haar) &= \frac{1}{d(d+1)(d+3)}, &\quad& g_4(\haar) = g_5(\haar) = 0. 
\end{aligned}
\end{equation}
\end{lemma}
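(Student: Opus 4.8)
The plan is to read off the coefficients directly, by substituting the projector decompositions of \eref{eq:RelationPandRi} into the closed form of the Haar cross moment operator supplied by \pref{pro:OmegaHaar}. Recall that $\Omega(\haar)=\kappa_{[4]}(\haar)P_{[4]}+\kappa_{[2,2]}(\haar)P_{[2,2],G}+\kappa_{[3,1]}(\haar)P_{[3,1],G}$, and that \eref{eq:RelationPandRi} writes each of $P_{[4]}$, $P_{[2,2],G}$, $P_{[3,1],G}$ as a linear combination of $\caR_1,\caR_2,\caR_3$ alone. First I would observe that since none of these three projectors involves $\caR_4$ or $\caR_5$, the expansion $\Omega(\haar)=\sum_i g_i(\haar)\caR_i$ has $g_4(\haar)=g_5(\haar)=0$ immediately. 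Conceptually, the operators $\caR_4,\caR_5$ are exactly the ones built from the CSS subspace $\caT_4$ that separate $\Cl_n$ from $\rmU(d)$, and they simply do not enter the Haar moment operator; this vanishing is the one structurally meaningful point of the lemma.

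For the remaining coefficients I would collect the $\caR_1$, $\caR_2$, $\caR_3$ contributions, obtaining
\begin{align*}
g_1(\haar) &= \tfrac{1}{24}\kappa_{[4]}(\haar)+\tfrac{1}{12}\kappa_{[2,2]}(\haar)+\tfrac{1}{8}\kappa_{[3,1]}(\haar),\\
g_2(\haar) &= \tfrac{1}{24}\bigl[\kappa_{[4]}(\haar)-\kappa_{[2,2]}(\haar)\bigr],\\
g_3(\haar) &= \tfrac{1}{24}\kappa_{[4]}(\haar)+\tfrac{1}{12}\kappa_{[2,2]}(\haar)-\tfrac{1}{8}\kappa_{[3,1]}(\haar),
\end{align*}
and then insert the explicit values $\kappa_{[4]}(\haar)=4(d+5)/[(d+1)(d+2)(d+3)]$, $\kappa_{[2,2]}(\haar)=4/[d(d+1)]$, and $\kappa_{[3,1]}(\haar)=4/[(d+1)(d+2)]$ from \eref{eq:kappaHaar}. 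Clearing denominators over the common factor $6d(d+1)(d+2)(d+3)$, the numerators collapse using the elementary identities $d(d+5)-(d+2)(d+3)=-6$, together with $d(d+5)+2(d+2)(d+3)+3d(d+3)=6(d^2+4d+2)$ and $d(d+5)+2(d+2)(d+3)-3d(d+3)=6(d+2)$, which reproduce the three stated formulas after cancellation.

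The only genuine subtlety, which I would handle last, is the case $n=1$. For $n\ge 2$ the set $\{\caR_i\}_{i=1}^5$ is linearly independent, so the coefficients $g_i(\haar)$ are uniquely determined and the substitution above settles the matter. For $n=1$ the five operators span only a four-dimensional space, so an expansion of $\Omega(\haar)$ is no longer unique; however, the operator identities in \eref{eq:RelationPandRi} and \pref{pro:OmegaHaar} are valid for every $n\ge 1$, so the displayed values still furnish a legitimate expansion. Since the lemma asserts the validity of these coefficients rather than their uniqueness in that degenerate case, the same formulas apply. I do not anticipate any real obstacle beyond keeping the bookkeeping straight: the whole argument is a finite linear-algebra substitution, and the one fact worth flagging is the vanishing of $g_4$ and $g_5$, which records that the Haar moment operator is supported on the permutation part of the Clifford commutant.
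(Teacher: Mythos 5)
Your proposal is correct and is essentially the paper's own proof, which consists of exactly the same two ingredients: substituting the expansions of $P_{[4]}$, $P_{[2,2],G}$, $P_{[3,1],G}$ from \eref{eq:RelationPandRi} into the Haar-moment formula of \pref{pro:OmegaHaar} and reading off the coefficients (the paper states this in one line; you supply the arithmetic, which checks out, including the identities $d(d+5)-(d+2)(d+3)=-6$ and the collapse of the $g_1,g_3$ numerators). Your handling of the $n=1$ case — the expansion is still valid even though $\{\caR_i\}_{i=1}^5$ is linearly dependent there, so only uniqueness fails — is also consistent with the paper's remarks following \eref{eq:RelationPandRi}.
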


\begin{lemma}\label{lem:giCl}
Suppose $n\geq 1$. Then 
\begin{align}\label{eq:giCl}
g_1(\Cl_n) = g_4(\Cl_n)=\frac{1}{(d+1)(d+2)},\quad g_2(\Cl_n) = g_3(\Cl_n) = g_5(\Cl_n) =0.
\end{align}
\end{lemma}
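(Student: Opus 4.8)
The plan is to invert two expansions of the cross moment operator that are already in hand. On one side, \pref{pro:OmegaCl} together with the computational-basis values in \eref{eq:kappaClb} writes $\Omega(\Cl_n)$ as a combination of the mutually orthogonal projectors $P_{[4]}^+,P_{[4]}^-,P_{[2,2],G}^+,P_{[2,2],G}^-,P_{[3,1],G}$, with coefficients $\kappa_{[4]}^+=\kappa_{[2,2]}^+=2/(d+1)$ and $\kappa_{[4]}^-=\kappa_{[2,2]}^-=\kappa_{[3,1]}=4/[(d+1)(d+2)]$. On the other side, \eref{eq:RelationPandRi} expresses each of these projectors as an explicit combination of the orbit sums $\caR_1,\dots,\caR_5$ from \eref{eq:DefcaRi}. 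Substituting the second set of relations into the first and collecting the coefficient of each $\caR_i$ produces the $g_i(\Cl_n)$ directly.

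Concretely, I would first eliminate the "minus" projectors via $P_{[4]}^-=P_{[4]}-P_{[4]}^+$ and $P_{[2,2],G}^-=P_{[2,2],G}-P_{[2,2],G}^+$, so that only $P_{[4]},P_{[4]}^+,P_{[2,2],G},P_{[2,2],G}^+,P_{[3,1],G}$ survive, and then insert their expansions from \eref{eq:RelationPandRi}. Writing $a:=\kappa_{[4]}^+=\kappa_{[2,2]}^+$ and $b:=\kappa_{[4]}^-=\kappa_{[2,2]}^-=\kappa_{[3,1]}$, the $G$-symmetric projectors $P_{[4]},P_{[2,2],G},P_{[3,1],G}$ (each carrying coefficient $b$) contribute the coefficients of $\caR_1,\caR_2,\caR_3$, which evaluate to $b(1/24+1/12+1/8)=b/4$, $b(1/24-1/24)=0$, and $b(1/24+1/12-1/8)=0$; the two CSS-type projectors $P_{[4]}^+$ and $P_{[2,2],G}^+$ (each carrying coefficient $a-b$) contribute the coefficients of $\caR_4,\caR_5$, which evaluate to $(a-b)/(2d)$ and $0$. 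Using $a-b=2d/[(d+1)(d+2)]$ and $b=4/[(d+1)(d+2)]$ then gives $g_1=b/4=1/[(d+1)(d+2)]$, $g_4=(a-b)/(2d)=1/[(d+1)(d+2)]$, and $g_2=g_3=g_5=0$, exactly as claimed.

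For $n\ge2$ the set $\{\caR_i\}_{i=1}^5$ is linearly independent (recorded in \aref{app:ProjectorSWD}), so the coefficient matching above is unambiguous and the $g_i(\Cl_n)$ are uniquely pinned down. The only delicate point, and the main obstacle, is the single-qubit case $n=1$ ($d=2$), where the five orbit sums span merely a four-dimensional space and the expansion is no longer unique. There I would instead verify directly that the claimed combination $\frac{1}{12}(\caR_1+\caR_4)$ reproduces $\Omega(\Cl_1)$: using $\caR_1=4P_G$ from \eref{eq:PGHG} and the CSS relations $\caR_4+\caR_5=12P_{[4]}^+$, $2\caR_4-\caR_5=12P_{[2,2],G}$ of \eref{eq:RelationPandRi} (which, since $P_{[2,2],G}^-=0$, give $\caR_4=4(P_{[4]}^++P_{[2,2],G})$), the combination reduces to $\frac{2}{3}P_{[4]}^++\frac{1}{3}P_{[4]}^-+\frac{2}{3}P_{[2,2],G}+\frac{1}{3}P_{[3,1],G}$, which matches \pref{pro:OmegaClSingle}. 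Thus the stated $g_i$ furnish a valid expansion for every $n\ge1$, and everything beyond this $n=1$ degeneracy is routine bookkeeping.
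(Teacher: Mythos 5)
Your proposal is correct and follows essentially the same route as the paper's proof: for $n\ge2$ it combines \pref{pro:OmegaCl} (with the computational-basis values in \eref{eq:kappaClb}) with the projector expansions in \eref{eq:RelationPandRi} and collects coefficients of the $\caR_i$, and for $n=1$ it invokes \pref{pro:OmegaClSingle} together with the degeneracy relation stemming from $P_{[2,2],G}=P_{[2,2],G}^+$. The only cosmetic difference is that for $n=1$ you verify the claimed combination $\tfrac{1}{12}(\caR_1+\caR_4)$ directly, whereas the paper first writes the generic expansion and then simplifies it using $2\caR_1-\caR_2+2\caR_3=4\caR_4-2\caR_5$; the ingredients and conclusions are identical.
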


\begin{lemma}\label{lem:giUkl}
Suppose  $n\ge k\ge 1$ and $l\ge 0$. Then
\begin{equation}
\begin{aligned}
g_1(\bbU_{k, l }) &= \frac{d^2+4d+2}{d(d+1)(d+2)(d+3)}-\frac{\alpha_k^l}{3(d+1)(d+2)(d+3)}-\frac{2\beta_k^l}{3d(d+1)(d+2)},\\
g_2(\bbU_{k, l }) &=-\frac{1}{d (d+1) (d+2) (d+3)} -\frac{\alpha_k^l}{3(d+1) (d+2) (d+3)}+\frac{\beta_k^l}{3d (d+1) (d+2)},\\
g_3(\bbU_{k,l}) &=\frac{1}{d (d+1) (d+3)} - \frac{\alpha_k^l}{3(d+1)(d+2)(d+3)}-\frac{2\beta_k^l}{3d(d+1)(d+2)},\\
g_4(\bbU_{k, l}) &=\frac{\alpha_k^l}{3(d+1)(d+2)}+\frac{2\beta_k^l}{3(d+1)(d+2)}, \quad g_5(\bbU_{k, l}) = \frac{\alpha_k^l}{3(d+1)(d+2)}-\frac{\beta_k^l}{3(d+1)(d+2)},
\end{aligned}
\end{equation}
where $\alpha_k,\beta_k$ are defined in \eref{eq:alphakbetak}.
\end{lemma}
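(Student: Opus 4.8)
The plan is to obtain Lemma via a single linear change of basis. By \thref{thm:OmegaUkl}, specialized to the standard computational basis through \eref{eq:kappaUklb}, the operator $\Omega(\bbU_{k,l})$ is already written as a combination of the five mutually orthogonal projectors $P_{[4]}^\pm$, $P_{[2,2],G}^\pm$, $P_{[3,1],G}$ with explicit coefficients $\kappa_\lambda^s(\bbU_{k,l})$. Since the statement only rewrites the same operator in the spanning set $\{\caR_i\}_{i=1}^5$, it suffices to substitute the dictionary \eref{eq:RelationPandRi} together with $P_{\lambda,G}^-=P_{\lambda,G}-P_{\lambda,G}^+$, so that
\begin{align}
P_{[4]}^- &= \tfrac{1}{24}(\caR_1+\caR_2+\caR_3)-\tfrac{1}{6d}(\caR_4+\caR_5),\nonumber\\
P_{[2,2],G}^- &= \tfrac{1}{24}(2\caR_1-\caR_2+2\caR_3)-\tfrac{1}{6d}(2\caR_4-\caR_5),\nonumber
\end{align}
while $P_{[4]}^+=\tfrac{1}{6d}(\caR_4+\caR_5)$, $P_{[2,2],G}^+=\tfrac{1}{6d}(2\caR_4-\caR_5)$, and $P_{[3,1],G}=\tfrac{1}{8}(\caR_1-\caR_3)$. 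I would then simply gather the coefficient of each $\caR_i$.

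Carrying this out produces the bookkeeping relations (writing $\kappa$ for $\kappa_{\cdot}(\bbU_{k,l})$)
\begin{align}
g_1 &= \tfrac{\kappa_{[4]}^-}{24}+\tfrac{\kappa_{[2,2]}^-}{12}+\tfrac{\kappa_{[3,1]}}{8}, &
g_2 &= \tfrac{\kappa_{[4]}^--\kappa_{[2,2]}^-}{24}, &
g_3 &= \tfrac{\kappa_{[4]}^-}{24}+\tfrac{\kappa_{[2,2]}^-}{12}-\tfrac{\kappa_{[3,1]}}{8},\nonumber\\
g_4 &= \tfrac{\kappa_{[4]}^+-\kappa_{[4]}^-}{6d}+\tfrac{\kappa_{[2,2]}^+-\kappa_{[2,2]}^-}{3d}, &
g_5 &= \tfrac{\kappa_{[4]}^+-\kappa_{[4]}^-}{6d}-\tfrac{\kappa_{[2,2]}^+-\kappa_{[2,2]}^-}{6d}. &&\nonumber
\end{align}
Inserting \eref{eq:kappaUklb} then reduces everything to elementary algebra. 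The two governing simplifications are $\kappa_{[4]}^+-\kappa_{[4]}^-=2d\alpha_k^l/[(d+1)(d+2)]$ and $\kappa_{[2,2]}^+-\kappa_{[2,2]}^-=2d\beta_k^l/[(d+1)(d+2)]$, which immediately deliver the stated $g_4$ and $g_5$; the expressions for $g_1,g_2,g_3$ follow after combining the $\alpha_k^l$- and $\beta_k^l$-dependent pieces of $\kappa_{[4]}^-$, $\kappa_{[2,2]}^-$ with the Haar-level constants, the constant part of $g_1$ for example collapsing to $6(d^2+4d+2)$ over $6d(d+1)(d+2)(d+3)$.

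Because $\{\caR_i\}_{i=1}^5$ is linearly independent for $n\ge 2$ (as recorded in \aref{app:ProjectorSWD}), the expansion $\Omega(\bbU_{k,l})=\sum_i g_i\caR_i$ is unique, so matching coefficients is legitimate and the identities above pin down the $g_i$. For $n=1$ the five operators span only a four-dimensional space, whence the coefficients are not unique; there I would instead verify directly that the claimed expressions reproduce the single-qubit cross moment operator of \aref{app:SingleCase}, using the one linear relation among the $\caR_i$ to absorb the ambiguity, so that the stated formulas remain a valid representative.

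The only real obstacle is arithmetic endurance in the $g_1,g_2,g_3$ computations, where the rational factors $\tfrac{1}{24},\tfrac{1}{12},\tfrac{1}{8}$ interact with the partially $\alpha_k^l$- and $\beta_k^l$-weighted values of $\kappa_{[4]}^-$ and $\kappa_{[2,2]}^-$. Two sanity checks guard against slips: setting $l=0$, so that $\alpha_k^0=\beta_k^0=1$ and $\bbU_{k,0}=\Cl_n$, must recover \lref{lem:giCl}, and letting $\alpha_k^l,\beta_k^l\to 0$ must recover the Haar coefficients \lref{lem:giHaar}. Both limits are visibly consistent with the target formulas, which makes the full verification routine.
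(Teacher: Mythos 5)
Your proposal is correct and follows essentially the same route as the paper: the paper's (omitted) proof of \lref{lem:giUkl} is exactly the substitution of the dictionary \eref{eq:RelationPandRi} into the projector decomposition of \thref{thm:OmegaUkl} with the computational-basis coefficients \eref{eq:kappaUklb}, just as you carry out, and your bookkeeping relations, the simplifications $\kappa_{[4]}^+-\kappa_{[4]}^-=2d\alpha_k^l/[(d+1)(d+2)]$ and $\kappa_{[2,2]}^+-\kappa_{[2,2]}^-=2d\beta_k^l/[(d+1)(d+2)]$, and the $l=0$ and Haar-limit sanity checks all verify against \lsref{lem:giCl} and \ref{lem:giHaar}. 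Your handling of $n=1$ via the linear relation $2\caR_1-\caR_2+2\caR_3=4\caR_4-2\caR_5$ likewise mirrors how the paper treats the single-qubit case in the proof of \lref{lem:giCl}.
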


\begin{lemma}\label{lem:gitUk}
Suppose $n\ge k\ge 2$. Then
\begin{equation}\label{eq:gitUk}
\begin{aligned}
g_1(\tbbU_k) &= \frac{(-d^2-2 d) \gamma^k+4\nu^k+d^3+2 d^2-8 d+4}{(d^2-4) (d^2-1)(d+4)},\quad g_2(\tbbU_k) = \frac{d \left(2\gamma^k-1-\nu^k\right)}{(d^2-4) (d^2-1) (d+4)},\\
g_3(\tbbU_k) &=\frac{\left(d^2+2 d\right) \left(1-\gamma^k\right)-4\left(1-\nu^k\right)}{(d^2-4) (d^2-1) (d+4)},\\
g_4(\tbbU_k) &=\frac{d\left(d^2+3 d-2\right) \gamma^k-(2d+4)\nu^k-2 (d^2+3d-6)}{(d^2-4) (d^2-1) (d+4)},\\
g_5(\tbbU_k) &= -\frac{(d^2+2d)\gamma^k-(d^2+2d-4)\nu^k-4}{(d^2-4) (d^2-1) (d+4)},
\end{aligned}
\end{equation}
where $\gamma=3/4$ and $\nu=1/2$.
If instead $n\geq k=1$, then
\begin{equation}\label{eq:gitU1}
\begin{aligned}
g_1(\tbbU_1) &= \frac{4 d-3}{4 (d-1) (d+1) (d+2)},\quad g_2(\tbbU_1) = 0, \quad g_3(\tbbU_1)=\frac{1}{4 (d-1) (d+1) (d+2)},\\
g_4(\tbbU_1) &= \frac{3 d-5}{4 (d-1) (d+1) (d+2)},\quad g_5(\tbbU_1) = -\frac{1}{4 (d-1) (d+1) (d+2)}.
\end{aligned}
\end{equation}
\end{lemma}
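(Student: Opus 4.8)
The plan is to obtain \lref{lem:gitUk} as a direct corollary of \pref{pro:OmegatUk}, which already expresses $\Omega(\tbbU_k)$ as a linear combination of the mutually orthogonal projectors $P_{[4]}^\pm$, $P_{[2,2],G}^\pm$, and $P_{[3,1],G}$, simply by rewriting those projectors in the basis $\{\caR_i\}_{i=1}^5$. Recall from the defining relation $\Omega(\caU,\caB)=\sum_{i=1}^5 g_i(\caU,\caB)\caR_i$ that, for $n\ge 2$, the set $\{\caR_i\}_{i=1}^5$ is linearly independent (see \aref{app:ProjectorSWD}), so the coefficients $g_i(\tbbU_k)$ are uniquely determined and can be read off once everything is expressed in terms of the $\caR_i$.

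First I would invert the relations in \eref{eq:RelationPandRi}. Combining $P_{\lambda,G}^-=P_{\lambda,G}-P_{\lambda,G}^+$ with \eref{eq:RelationPandRi} yields each of the five projectors appearing in \pref{pro:OmegatUk} explicitly as a linear combination of $\caR_1,\ldots,\caR_5$; in particular $P_{[4]}^+=\frac{1}{6d}(\caR_4+\caR_5)$, $P_{[4]}^-=\frac{1}{24}(\caR_1+\caR_2+\caR_3)-\frac{1}{6d}(\caR_4+\caR_5)$, $P_{[2,2],G}^+=\frac{1}{6d}(2\caR_4-\caR_5)$, $P_{[2,2],G}^-=\frac{1}{24}(2\caR_1-\caR_2+2\caR_3)-\frac{1}{6d}(2\caR_4-\caR_5)$, and $P_{[3,1],G}=\frac{1}{8}(\caR_1-\caR_3)$. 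Substituting these into the expansion of $\Omega(\tbbU_k)$ from \pref{pro:OmegatUk} and collecting the coefficient of each $\caR_i$ expresses every $g_i(\tbbU_k)$ as an explicit linear function of the five numbers $\kappa_{[4]}^\pm(\tbbU_k)$, $\kappa_{[2,2]}^\pm(\tbbU_k)$, and $\kappa_{[3,1]}(\tbbU_k)$. Inserting the closed forms for these $\kappa$'s from \eref{eq:kappatUk} and simplifying then produces \eref{eq:gitUk}.

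The case $k=1$ requires separate treatment because the factor $(d^2-4)$ in \eref{eq:gitUk} is singular at $d=2$, i.e. at $n=1$. For $n\ge 2$ (so $d\ge 4$) one may substitute $k=1$ directly into the result just derived: the identity $2\gamma-1-\nu=0$ forces $g_2(\tbbU_1)=0$, and the remaining numerators acquire a common factor $(d-2)(d+4)$ that cancels against $(d^2-4)(d+4)$, collapsing the expressions to the simpler rational functions in \eref{eq:gitU1} (for instance the $g_1$ numerator $4d^3+5d^2-38d+24$ factors as $(4d-3)(d-2)(d+4)$). For the genuinely single-qubit case $n=1$, where $\{\caR_i\}_{i=1}^5$ spans only a four-dimensional space and $P_{[2,2],G}^-=0$, I would instead start from \pref{pro:OmegaClSingle} applied to the twisted basis $\caB_T$ (using $\Omega(\tbbU_1)=\Omega(\Cl_1,\caB_T)$ and feeding $\Lambda_1(\caB_T)$ from \eref{eq:Lambda12tUk}), fix a consistent choice among the non-unique coefficients, and check agreement with \eref{eq:gitU1} at $d=2$.

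I expect the only real difficulty to be bookkeeping rather than anything conceptual: carrying the five $\kappa$-values of \eref{eq:kappatUk} through the substitution while tracking the $\gamma^k$ and $\nu^k$ dependence, and exhibiting the cancellations that reduce the $k=1$ denominators to $(d-1)(d+1)(d+2)$. Since every map involved is linear and the projector-to-$\caR_i$ dictionary \eref{eq:RelationPandRi} is fixed, no new estimates or structural inputs are needed beyond \pref{pro:OmegatUk}, \eref{eq:RelationPandRi}, and---for $n=1$---\pref{pro:OmegaClSingle}.
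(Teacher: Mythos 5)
Your proposal is correct and follows essentially the same route as the paper, which omits the details but indicates that \lref{lem:gitUk} is proved exactly like \lref{lem:giCl}: expand $\Omega(\tbbU_k)$ via \pref{pro:OmegatUk}, convert the projectors to the $\caR_i$ basis using \eref{eq:RelationPandRi}, and handle $n=1$ via \pref{pro:OmegaClSingle} with $\Lambda_1(\caB_T)$ from \eref{eq:Lambda12tUk}. Your supporting computations check out (e.g., $\kappa_{[4]}^- - \kappa_{[2,2]}^-$ indeed produces the factor $2\gamma^k-1-\nu^k$ in $g_2$, the $k=1$ numerators acquire the common factor $(d-2)(d+4)$, and at $n=1$ any discrepancy is a multiple of the null relation $2\caR_1-\caR_2+2\caR_3-4\caR_4+2\caR_5=0$), so the only caveat---non-uniqueness of the $g_i$ at $n=1$---is one you have already addressed.
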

\Eref{eq:gitU1} can also be regarded as a special case of \eref{eq:gitUk} [when $n=1$ we need to take a proper limit].

Next, we provide informative bounds for $\tr\left[\caR_i \Tensor{(O\otimes \rho)}{2}\right]$ with $i=1,2,3,4,5$, which will be very useful to studying the variance $V_*(O,\rho)$ in view of \eref{eq:DefV*gi}.
\begin{lemma}\label{lem:TrRiOrho}
Suppose $\rho\in\caD(\caH)$ and $O\in \caL_0^\rmH (\caH)$. Then $\tr\left[\caR_i \Tensor{(O\otimes \rho)}{2}\right]\ge0$ for $i=1,2,3,4$. Moreover,
\begin{equation}\label{eq:TrRiOrho}
\begin{aligned}
\tr\left[\caR_1 \Tensor{(O\otimes \rho)}{2}\right] & =[\tr(O\rho)]^2\le \|O\|_\infty^2 \le \frac{d-1}{d}\|O\|_2^2,\\
\tr\left[\caR_2 \Tensor{(O\otimes \rho)}{2}\right] &= \tr(O^2)+4\tr(O^2\rho) +2\tr(O^2\rho^2)+2\tr(O\rho O\rho)\le \|O\|_2^2+8\|O\|_\infty^2\le \frac{9d-8}{d}\|O\|_2^2,\\
\tr\left[\caR_3 \Tensor{(O\otimes \rho)}{2}\right] & = \tr(O^2)\tr(\rho^2)+ [\tr(O\rho)]^2 + 2\tr(O^2\rho^2)\le \|O\|_2^2+3\|O\|_\infty^2\le \frac{4d-3}{d}\|O\|_2^2,\\
\tr\left[\caR_4 \Tensor{(O\otimes \rho)}{2}\right] &=\frac{1}{d}\left( \|\Xi_{\rho, O}\|_2^2 + \tXi_{\rho, O}\cdot\Xi_{\rho, O}\right) =  \frac{d+2}{d+1}V_\triangle(O,\rho)\le 2\|O\|_2^2,\\
\left|\tr\left[\caR_5 \Tensor{(O\otimes \rho)}{2}\right]\right| &\le 4\|O\|_2^2.\\
\end{aligned}    
\end{equation}
\end{lemma}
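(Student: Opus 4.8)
The plan is to evaluate each $\tr[\caR_i(O\otimes\rho)^{\otimes2}]$ in closed form and then read off the bounds. Each $\caR_i$ is a sum of the operators $R(\caT)$ over a $G$-orbit of stochastic Lagrangian subspaces, as listed before \eref{eq:DefcaRi}. For the three permutation orbits $\scrT_1,\scrT_2,\scrT_3$ I would use the cycle-decomposition formula $\tr[R(\sigma)(A_1\otimes A_2\otimes A_3\otimes A_4)]=\prod_{\text{cycles }c}\tr\big(\prod_{j\in c}A_j\big)$ with $(A_1,A_2,A_3,A_4)=(O,\rho,O,\rho)$. Since $\tr(O)=0$ and $\tr(\rho)=1$, every permutation with a fixed point on an $O$-factor drops out, and the surviving terms assemble into the three stated expressions. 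For the two CSS orbits $\scrT_4,\scrT_5$ I would instead combine $R(\sigma\caT_4)=R(\sigma)R(\caT_4)$ from \eref{eq:RORT} with $R(\caT_4)=\tfrac1d\sum_{P\in\bcaP_n}P^{\otimes4}$ from \eref{eq:FormRT4}, so that $\tr[R(\sigma\caT_4)(O\otimes\rho)^{\otimes2}]=\tfrac1d\sum_P\tr[R(\sigma)((PO)\otimes(P\rho)\otimes(PO)\otimes(P\rho))]$; the same cycle formula then turns each factor into $\tr(PO)=\Xi_O(P)$, $\tr(P\rho)=\Xi_\rho(P)$, or a twisted quantity such as $\tr(P\rho PO)=\tXi_{\rho,O}(P)$. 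For $\scrT_4=\{\caT_4,(12)\caT_4\}$ this gives exactly $\tfrac1d(\|\Xi_{\rho,O}\|_2^2+\tXi_{\rho,O}\cdot\Xi_{\rho,O})$, which equals $\tfrac{d+2}{d+1}V_\triangle(O,\rho)$ by \eref{eq:VTriangleDef}.

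Next, positivity and the upper bounds for $i=1,2,3,4$ follow from the closed forms. The only non-obvious positivity is the cross term in $\caR_2$, which I would handle via the identity $2\tr(O^2\rho^2)+2\tr(O\rho O\rho)=\|O\rho+\rho O\|_2^2\ge0$ together with $\tr(O^2\rho^2)=\|O\rho\|_2^2\ge0$ and $\tr(O^2\rho)=\tr(O\rho O)\ge0$; for $\caR_4$ positivity is just $\tXi_{\rho,O}\cdot\Xi_{\rho,O}\ge-\|\Xi_{\rho,O}\|_2^2$, which is \lref{lem:CharOrhoNorm}. For the norm bounds I would use $|\tr(O\rho)|\le\|O\|_\infty$, the estimates $\tr(O^2\rho),\tr(O^2\rho^2),|\tr(O\rho O\rho)|\le\|O\|_\infty^2$ (the last since $|\tr((O\rho)^2)|\le\|O\rho\|_2^2=\tr(O^2\rho^2)$), the traceless inequality $\|O\|_\infty^2\le\tfrac{d-1}{d}\|O\|_2^2$, and $\|\Xi_{\rho,O}\|_2^2\le d\|O\|_2^2$ from \lref{lem:CharOrhoNorm}; substituting into the explicit formulas produces the three stated right-hand sides.

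Finally, the $\caR_5$ bound. Carrying out the CSS computation for $\scrT_5=\{(13)\caT_4,(23)\caT_4,(123)\caT_4,(132)\caT_4\}$, the two transpositions give $A:=\tfrac1d\sum_P\tXi_{O,O}(P)\Xi_\rho^2(P)$ and $B:=\tfrac1d\tXi_{\rho,O}\cdot\Xi_{\rho,O}$, while the two three-cycles contribute the same quantity $C:=\tfrac1d\sum_P\tr(\rho POPOP)\Xi_\rho(P)$, so that $\tr[\caR_5(O\otimes\rho)^{\otimes2}]=A+B+2C$. I would bound $|A|\le\|O\|_2^2$ using $|\tXi_{O,O}(P)|=|\tr(OPOP)|\le\|OP\|_2^2=\|O\|_2^2$ and $\sum_P\Xi_\rho^2(P)=d\wp\le d$, and $|B|\le\|O\|_2^2$ from \lref{lem:CharOrhoNorm}. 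For $C$, I would write $C=\tfrac1d\tr[(\rho\otimes\rho)K]$ with $K=\sum_P(POPOP)\otimes P$ and apply Cauchy--Schwarz, $|C|\le\tfrac1d\|\rho\otimes\rho\|_2\|K\|_2=\tfrac{\wp}{d}\|K\|_2$; a direct evaluation $\|K\|_2^2=\tr(K^\dagger K)=d\sum_P\tr((POPOP)^2)=d\sum_P\tr(O^2PO^2P)=d^2\|O\|_2^4$, where the last equality uses the depolarizing identity $\sum_P PBP=d\tr(B)\bbone$, gives $\|K\|_2=d\|O\|_2^2$ and hence $|C|\le\wp\|O\|_2^2\le\|O\|_2^2$. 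Altogether $|\tr[\caR_5(O\otimes\rho)^{\otimes2}]|\le|A|+|B|+2|C|\le4\|O\|_2^2$.

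The main obstacle is precisely this $\caR_5$ estimate: the three-cycle contribution $C$ cannot be controlled term by term, since the pointwise bound $|\tr(\rho POPOP)|\le\|O\|_\infty^2$ summed against $\Xi_\rho$ overshoots by a factor $\sqrt d$. The crux is to recast $C$ as a Hilbert--Schmidt inner product and evaluate $\|K\|_2$ exactly through the Pauli completeness/depolarizing identity, which is what produces the clean factor of $d$ needed for cancellation. A secondary care point is the purely algebraic bookkeeping of cycle orderings in $\scrT_2,\scrT_3,\scrT_5$ and spotting the positivity identity $2\tr(O^2\rho^2)+2\tr(O\rho O\rho)=\|O\rho+\rho O\|_2^2$.
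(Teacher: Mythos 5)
Your proposal is correct, and the closed-form evaluations, the positivity arguments, and the bounds for $\caR_1,\dots,\caR_4$ match the paper's route (the paper compresses the equalities into ``follow from the definitions,'' bounds lines 1--3 by the same operator H\"{o}lder estimates including $\|O\|_\infty^2\le \tfrac{d-1}{d}\|O\|_2^2$, and gets line 4 from \pref{pro:VTriangle}, whose content is exactly the \lref{lem:CharOrhoNorm} bound you invoke). The genuine divergence is the $\caR_5$ estimate. The paper does not compute $\caR_5$ at all: it cites Lemma~9 in the Appendix of \rcite{Helsen2023MultiShot}, which gives the uniform bound $|\tr[R(\caT)(O\otimes\rho)^{\otimes 2}]|\le\|O\|_2^2$ for \emph{every} $\caT\in\Sigma_{4,4}$, and then multiplies by $|\scrT_5|=4$. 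You instead make the bound self-contained: decomposing $\tr[\caR_5(O\otimes\rho)^{\otimes2}]=A+B+2C$, controlling $A$ by Pauli orthogonality and $\|OP\|_2^2=\|O\|_2^2$, $B$ by \lref{lem:CharOrhoNorm}, and the three-cycle term $C$ by the Hilbert--Schmidt Cauchy--Schwarz inequality together with the exact evaluation $\|K\|_2=d\|O\|_2^2$ via the Pauli twirl identity $\sum_P PBP=d\tr(B)\bbone$; I checked these computations (including $\tr[(POPOP)^2]=\tr(O^2PO^2P)$ and the equality of the two three-cycle contributions) and they are sound, reproducing the same constant $4$. What each approach buys: the paper's citation is shorter and bounds each of the four terms separately (so it would survive any regrouping of $\Sigma_{4,4}$), while your argument removes the dependence on the external reference at the cost of the extra bookkeeping, and it is the right instinct that the three-cycle term cannot be bounded pointwise --- your Cauchy--Schwarz recasting is essentially a special case of how such uniform bounds are proved in the first place.
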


\begin{lemma}\label{lem:TrRiOrhoF}
Suppose $\rho \in \caD(\caH)$ and $O=|\phi\>\<\phi|-\bbone/d$ with $|\phi\>\in\caH$. Then
\begin{equation}\label{eq:TrRiOrhoF}
\begin{aligned}
\tr\left[\caR_1 \Tensor{(O\otimes \rho)}{2}\right] & =\left(F-\frac{1}{d}\right)^2,\\
\tr\left[\caR_2 \Tensor{(O\otimes \rho)}{2}\right] &= \frac{d^2-d+4}{d^2} + 2F^2 + 4\left(1-\frac{2}{d}\right)F+2\left(1-\frac{4}{d}\right)\<\phi|\rho^2|\phi\> + \frac{4}{d^2}\wp(\rho),\\
\tr\left[\caR_3 \Tensor{(O\otimes \rho)}{2}\right] & = \left(F-\frac{1}{d}\right)^2 + 2\left(1-\frac{2}{d}\right)\<\phi|\rho^2|\phi\> + \frac{d^2-d+2}{d^2} \wp(\rho),\\
\tr\left[\caR_4 \Tensor{(O\otimes \rho)}{2}\right] & = \frac{1}{d}\left(\|\Xi_{\rho, \phi}\|_2^2 + \tXi_{\rho, \phi}\cdot\Xi_{\rho, \phi}-1-2F+\frac{1}{d}\right)< \frac{2}{d}\|\Xi_\phi^2\|_{[d]},
\end{aligned}
\end{equation}
where $F = \<\phi|\rho|\phi\>$ and $\wp(\rho) = \tr(\rho^2)$. If in addition $\rho = |\phi\>\<\phi|$, then
\begin{equation}\label{eq:TrRiOphiF}
\begin{aligned}
&\tr\left[\caR_1 \Tensor{(O\otimes \rho)}{2}\right]  =\frac{(d-1)^2}{d^2},&\quad&\tr\left[\caR_2 \Tensor{(O\otimes \rho)}{2}\right] = \frac{(d-1)(9d-8)}{d^2},\\
&\tr\left[\caR_3 \Tensor{(O\otimes \rho)}{2}\right] = \frac{(d-1)(4d-3)}{d^2},&\quad& \tr\left[\caR_4 \Tensor{(O\otimes \rho)}{2}\right] = \frac{2^{1-M_2(\phi)}d^2-3d+1}{d^2},\\
&\tr\left[\caR_5 \Tensor{(O\otimes \rho)}{2}\right] = \frac{2^{2-M_2(\phi)}d^2-7d+3}{d^2}.
\end{aligned}
\end{equation}
\end{lemma}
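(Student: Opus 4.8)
The plan is to reduce everything to the general trace formulas of \lref{lem:TrRiOrho} and specialize them to $O=|\phi\>\<\phi|-\bbone/d$, treating $\caR_5$ — the one operator for which \lref{lem:TrRiOrho} supplies only a bound — by a separate direct computation. First I would record the elementary identities that follow from $O=|\phi\>\<\phi|-\bbone/d$: namely $\tr(O\rho)=F-1/d$ with $F=\<\phi|\rho|\phi\>$, $O^2=\tfrac{d-2}{d}|\phi\>\<\phi|+\tfrac{1}{d^2}\bbone$, and hence $\tr(O^2)=\tfrac{d-1}{d}$, $\tr(O^2\rho)=\tfrac{d-2}{d}F+\tfrac{1}{d^2}$, $\tr(O^2\rho^2)=\tfrac{d-2}{d}\<\phi|\rho^2|\phi\>+\tfrac{1}{d^2}\wp(\rho)$, and $\tr(O\rho O\rho)=F^2-\tfrac{2}{d}\<\phi|\rho^2|\phi\>+\tfrac{1}{d^2}\wp(\rho)$ (the last obtained by expanding $O\rho O$ into four terms). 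Substituting these into the expressions for $\tr[\caR_i(O\otimes\rho)^{\otimes 2}]$ with $i=1,2,3$ in \lref{lem:TrRiOrho} and collecting powers of $d$ yields the first three lines of \eref{eq:TrRiOrhoF} directly.

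For $\caR_4$ I would instead start from $\tr[\caR_4(O\otimes\rho)^{\otimes 2}]=\tfrac1d(\|\Xi_{\rho,O}\|_2^2+\tXi_{\rho,O}\cdot\Xi_{\rho,O})$ and apply \lref{lem:CharOrhoNormF} to trade $\Xi_{\rho,O}$ for $\Xi_{\rho,\phi}$, which gives the stated value and, together with the Cauchy--Schwarz and concentration bounds already proved there, the inequality $<\tfrac2d\|\Xi_\phi^2\|_{[d]}$. Next I would specialize to $\rho=|\phi\>\<\phi|$, where $F=1$, $\wp(\rho)=1$, and $\<\phi|\rho^2|\phi\>=1$. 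The $i=1,2,3$ cases of \eref{eq:TrRiOphiF} then drop out of the formulas just obtained after factoring the resulting quadratics (for instance $9d^2-17d+8=(d-1)(9d-8)$), while for $\caR_4$ I would invoke the identity $\tXi_{\phi,\phi}\cdot\Xi_{\phi,\phi}=\|\Xi_{\phi,\phi}\|_2^2=2^{-M_2(\phi)}d$ from \eref{eq:CharOrhoNormFIdeal} to reach $\tr[\caR_4(O\otimes\rho)^{\otimes 2}]=(2^{1-M_2(\phi)}d^2-3d+1)/d^2$.

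The genuinely new step is $\caR_5$, and this is where I expect the main difficulty, since no general closed form is available. Here I would use the orbit description $\scrT_5=\{(13)\caT_4,(23)\caT_4,(123)\caT_4,(132)\caT_4\}$ underlying \eref{eq:DefcaRi} together with $R(\sigma\caT_4)=R(\sigma)R(\caT_4)$ and $R(\caT_4)=dP_n=\tfrac1d\sum_{P\in\bcaP_n}P^{\otimes4}$ from \eref{eq:FormRT4}. Writing $\tr[\caR_5(O\otimes\rho)^{\otimes2}]=\tfrac1d\sum_{\sigma}\sum_{P}\tr[R(\sigma)\,((PO)\otimes(P\rho)\otimes(PO)\otimes(P\rho))]$ and factoring each summand over the cycles of $\sigma$ (each of the four permutations fixes the fourth tensor slot), every factor becomes a trace of the form $\tr(PO)$, $\tr(P\rho)$, $\tr(POPO)$, $\tr(P\rho PO)$, or $\tr(POP\rho PO)$, which for $\rho=|\phi\>\<\phi|$ reduce to the characteristic function $\Xi_\phi(P)$ and to overlaps of the type $\<\phi|P\,|\phi\>\<\phi|\,P|\phi\>$. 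Summing over $P\in\bcaP_n$ and using $\sum_P\Xi_\phi^4(P)=\|\Xi_\phi\|_4^4=2^{-M_2(\phi)}d$ produces the stated value $(2^{2-M_2(\phi)}d^2-7d+3)/d^2$.

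An equivalent and perhaps cleaner route is to exploit the projector relations in \eref{eq:RelationPandRi}, which give $\caR_5=2d(2P_{[4]}^+-P_{[2,2],G}^+)$ and $\caR_4=2d(P_{[4]}^++P_{[2,2],G}^+)$; combined with the already-computed $\tr[\caR_4(O\otimes\rho)^{\otimes2}]=2d\,\tr[(P_{[4]}^++P_{[2,2],G}^+)(O\otimes\rho)^{\otimes2}]$, it then remains only to evaluate the single overlap $\tr[P_{[4]}^+(O\otimes\phi)^{\otimes2}]$ with $P_{[4]}^+=P_{[4]}P_n$, again via a symmetric-group average over $R(\tau)P^{\otimes4}$. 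I expect this final overlap — the one place where the CSS stabilizer structure of $\caT_4$ genuinely enters — to be the crux of the argument; the remaining manipulations are routine bookkeeping in powers of $d$.
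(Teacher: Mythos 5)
Your proposal is correct and takes essentially the same route as the paper's proof: the first four lines by specializing \lref{lem:TrRiOrho} (with \lref{lem:CharOrhoNormF} handling the $\caR_4$ entry and its upper bound), and the $\caR_5$ entry by expanding over the orbit $\scrT_5=\{(13)\caT_4,(23)\caT_4,(123)\caT_4,(132)\caT_4\}$ using $R(\sigma\caT_4)=R(\sigma)R(\caT_4)$ and $R(\caT_4)=\frac{1}{d}\sum_{P\in\bcaP_n}P^{\otimes 4}$, which reduces every factor to powers of $\Xi_\phi(P)$ and the sums $\sum_P\Xi_\phi^2(P)=d$, $\sum_P\Xi_\phi^4(P)=2^{-M_2(\phi)}d$ --- exactly the paper's computation. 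Your alternative route via $\caR_5=2d\bigl(2P_{[4]}^+-P_{[2,2],G}^+\bigr)$ is also sound, but it is not needed and is not the one the paper takes.
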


\subsubsection{Proofs of \lsref{lem:giHaar}-\ref{lem:TrRiOrhoF}}

\begin{proof}[Proof of \lref{lem:giHaar}]
\Lref{lem:giHaar} follows from 	\eref{eq:RelationPandRi} and \pref{pro:OmegaHaar}. 
\end{proof}

\begin{proof}[Proof of \lref{lem:giCl}]
When $n\ge2$,  \eref{eq:giCl} follows from  \pref{pro:OmegaCl} and \eref{eq:RelationPandRi}. 
When  $n=1$, by virtue of \pref{pro:OmegaClSingle}
and  \eref{eq:RelationPandRi} we can deduce that
\begin{equation}
\Omega(\Cl_1) =\frac{1}{9}\caR_1 -\frac{1}{72}\caR_2 +\frac{1}{36}\caR_3 + \frac{1}{36} \caR_4 + \frac{1}{36} \caR_5 = \frac{1}{12} (\caR_1+\caR_4),
\end{equation}
where  the second equality holds because $2\caR_1 - \caR_2 + 2\caR_3 = 4\caR_4 - 2\caR_5$ by  \eref{eq:RelationPandRi} given that
 $P_{[2,2], G} = P_{[2,2], G}^+$ when $n=1$. This observation confirms \eref{eq:giCl} and completes the proof of \lref{lem:giCl}. 
\end{proof}

\Lsref{lem:giUkl} and \ref{lem:gitUk} can be proved in a similar way as shown above, and the details are omitted for simplicity.

\begin{proof}[Proof of \lref{lem:TrRiOrho}]
We first prove \eref{eq:TrRiOrho}. The equalities in \eref{eq:TrRiOrho} follow from  the definitions of $\caR_i$ in \eref{eq:DefcaRi} and the definition of $V_\triangle(O,\rho)$ in \eref{eq:VTriangleDef}.  The inequalities in the first three lines in \eref{eq:TrRiOrho} follow from the inequalities below,
\begin{equation}
\begin{gathered}
\tr(O\rho) \le \|O\|_\infty \|\rho\|_1 = \|O\|_\infty,\quad \tr(O^2\rho) \le \|O^2\|_\infty \|\rho\|_1 = \|O\|_\infty^2,\quad 
\tr(O^2\rho^2) \le \|O^2\|_\infty \|\rho^2\|_1 \le \|O\|_\infty^2, \\
|\tr(O\rho O\rho)| \le \tr(O^2\rho^2) \le  \|O\|_\infty^2,\quad \|O\|_\infty^2 \le \frac{d-1}{d} \|O\|_2^2.
\end{gathered}
\end{equation}
Here the last inequality holds because $O$ is Hermitian and traceless, while the other inequalities 
follow from the operator H\"{o}lder inequality (including the operator Cauchy-Schwarz inequality). The inequality in the fourth line in \eref{eq:TrRiOrho} follows from \pref{pro:VTriangle}. The inequality in the last line holds because 
\begin{equation}\label{eq:trRiOrhoUpper}
\left|\tr\left[R(\caT)\Tensor{(O\otimes \rho)}{2}\right] \right| \le \|O\|_2^2 \quad \forall \caT\in\Sigma_{4,4}
\end{equation}
according to Lemma~9 in the Appendix of \rcite{Helsen2023MultiShot}. 

Finally, by virtue of \eref{eq:TrRiOrho} and \lref{lem:CharOrhoNorm} it is easy to verify that $\tr\left[\caR_i \Tensor{(O\otimes \rho)}{2}\right]\ge0$ for  $i=1,2,3,4$, which completes the proof of \lref{lem:TrRiOrho}.
\end{proof}

\begin{proof}[Proof of \lref{lem:TrRiOrhoF}]
The first three equalities in \eref{eq:TrRiOrhoF} follow from \lref{lem:TrRiOrho}, and the fourth one follows from  \lsref{lem:CharOrhoNormF} and~\ref{lem:TrRiOrho}. When $\rho=|\phi\>\<\phi|$, we have $F=1$ and $\|\Xi_{\rho, \phi}\|_2^2 = \tXi_{\rho, \phi}\cdot\Xi_{\rho, \phi}=2^{-M_2(\phi)}d$ by \lref{lem:CharOrhoNormF}, so the first four equalities in \eref{eq:TrRiOphiF} are simple corollaries of  \eref{eq:TrRiOrhoF}. To prove  the last equality in  \eref{eq:TrRiOphiF}, we determine the contribution of each term in the summation $\caR_5 = \sum_{\caT\in\scrT_5}R(\caT)$ separately, where $\scrT_5=\{(13)\caT_4, (23)\caT_4, (123)\caT_4, (132)\caT_4\}$. For example,
\begin{align}
\tr\left[R\bigl((13)\caT_4\bigr) \Tensor{(O\otimes \rho)}{2}\right]&=\frac{1}{d}
\sum_{P\in\bcaP_n}\tr\left[\left(|\phi\>\<\phi|-\frac{\bbone}{d}\right)P\left(|\phi\>\<\phi|-\frac{\bbone}{d}\right)P\right]\<\phi|P|\phi\>^2\nonumber\\
&=\frac{1}{d}\sum_{P\in\bcaP_n}\left[\<\phi|P|\phi\>^2-\frac{1}{d}\right]\<\phi|P|\phi\>^2=\frac{2^{-M_2(\phi)}d-1}{d},
\end{align}
where we have used the definition of $M_2(\phi)$ and the fact that $\sum_{P\in\bcaP_n}\<\phi|P|\phi\>^2=d$. The other three terms can be determined in a similar way, with the results
\begin{align}
\tr\left[R\bigl((23)\caT_4\bigr) \Tensor{(O\otimes \rho)}{2}\right]=\tr\left[R\bigl((123)\caT_4\bigr) \Tensor{(O\otimes \rho)}{2}\right]=\tr\left[R\bigl((132)\caT_4\bigr) \Tensor{(O\otimes \rho)}{2}\right]=\frac{2^{-M_2(\phi)}d^2-2d+1}{d^2}.
\end{align}
The above two equations together imply the  last equality in  \eref{eq:TrRiOphiF} and complete the proof of \lref{lem:TrRiOrhoF}. 
\end{proof}

\section{Proof of \thref{thm:VarHaar}}\label{app:ProofHaar}
\begin{proof}
According to  \eref{eq:DefV*gi} and \lref{lem:giHaar} we have
\begin{align}
&V_*(O,\rho) = (d+1)^2\tr\left[\Omega(\haar) \Tensor{(O\otimes \rho)}{2}\right]-[\tr(O\rho)]^2\nonumber\\
&= \frac{2}{d(d+2)(d+3)}\tr\left[\caR_1\Tensor{(O\otimes \rho)}{2}\right]- \frac{d+1}{d(d+2)(d+3)}\tr\left[\caR_2\Tensor{(O\otimes \rho)}{2}\right] + \frac{d+1}{d(d+3)}\tr\left[\caR_3\Tensor{(O\otimes \rho)}{2}\right]. \label{eq:CalVarStabHaar}
\end{align}
In conjunction with \lref{lem:TrRiOrho} we can deduce that
\begin{equation}
V_*(O) = \max_\rho V_*(O,\rho) \le \left[\frac{2(d-1)}{d^2(d+2)(d+3)}+\frac{(4d-3)(d+1)}{d^2(d+3)}\right] \|O\|_2^2 = \frac{4 d^3+9 d^2+d-8}{d^2 (d+2) (d+3)} \|O\|_2^2 \le \frac{4}{d} \|O\|_2^2,
\end{equation}
which confirms \eref{eq:VarFHaarBound}.

When $O=|\phi\>\<\phi|-\bbone/d$, by virtue of \eref{eq:CalVarStabHaar} and \lref{lem:TrRiOrhoF} we can deduce that
\begin{align}
V_*(O,\phi)&= \frac{4(d-1)}{(d+2)(d+3)}<\frac{4}{d},\\
V_*(O,\rho) &= \frac{1-d+(d^2+d+2)F^2-2(3d+1)F+2(d^2-1)\<\phi|\rho^2|\phi\>+(d+1)^2\wp(\rho)}{d(d+2)(d+3)}\nonumber\\
& \le \frac{1-d+(d^2+d+2)F^2+2(d^2-3d-2)F+(d+1)^2\wp(\rho)}{d(d+2)(d+3)}\nonumber\\
&\le \frac{1-d+(d^2+d+2)+2(d^2-3d-2)+(d+1)^2}{d(d+2)(d+3)}=\frac{4(d-1)}{(d+2)(d+3)}= V_*(O,\phi),
\end{align}
which imply \eref{eq:VarFHaar}. Here the first inequality holds because $\<\phi|\rho^2|\phi\> \le \<\phi|\rho|\phi\>=F$, and the second inequality holds because $0\leq F,\wp(\rho)\le 1$.

Incidentally, when $d\geq 4$, 
the upper bound $V_*(O,\phi)$ for
$V_*(O,\rho)$  is saturated  iff $\rho= |\phi\>\<\phi|$; in the special case $d=2$, the upper bound is saturated iff $\rho= |\phi\>\<\phi|$ or $\rho=\bbone- |\phi\>\<\phi|$.
\end{proof}

\section{Proofs of \thsref{thm:VarCl},~\ref{thm:VarFCl} and \coref{cor:VarCl}}\label{app:ProofCl}
In this appendix, we prove our main results on thrifty shadow based on the Clifford group, including \thsref{thm:VarCl},~\ref{thm:VarFCl} and \coref{cor:VarCl}.

\begin{proof}[Proof of \thref{thm:VarCl}]
By virtue of  \eref{eq:DefV*gi} and \lref{lem:giCl} we can deduce that
\begin{align}
V_*(O,\rho) &= -\frac{1}{d+2}\tr\left[\caR_1 \Tensor{(O\otimes \rho)}{2}\right]+ \frac{d+1}{d+2}\tr\left[\caR_4 \Tensor{(O\otimes \rho)}{2}\right]\nonumber\\
&=-\frac{1}{d+2}[\tr\left(O\rho\right)]^2 + \frac{d+1}{d+2}\tr\left[\caR_4 \Tensor{(O\otimes \rho)}{2}\right]\nonumber\\
&=V_\triangle(O,\rho)-\frac{1}{d+2}[\tr(O\rho)]^2 \le V_\triangle(O,\rho) \le \frac{2(d+1)}{d(d+2)} \|\Xi_{\rho, O}\|_2^2 \le  \frac{2(d+1)}{d+2} \|O\|_2^2,
\end{align}
which confirms \eref{eq:VarClUpper}. Here the last two inequalities follow from \pref{pro:VTriangle}.
\end{proof}

\begin{proof}[Proofs of \coref{cor:VarCl} and \thref{thm:VarFCl}]
The two conclusions are simple corollaries of  \thref{thm:VarCl} and \pref{pro:VTriangle}.
\end{proof}

\section{Proofs of \thsref{thm:VarUkl},~\ref{thm:VarFUkl} and \pref{pro:VarUklFphi}}\label{app:ProofUkl}
\begin{proof}[Proof of \thref{thm:VarUkl}]
When $kl=0$, \eref{eq:VarUklUpper} follows from \thref{thm:VarCl}.

Next, suppose $kl\ge1$. Then the coefficients
$g_i(\bbU_{k, l})$ presented in \lref{lem:giUkl} satisfy the following inequalities,
\begin{gather}
\frac{2}{d(d+1)^2(d+2)(d+3)} - \frac{\gamma^{kl}}{d(d+1)(d+3)} <g_1(\bbU_{k,l})-\frac{1}{(d+1)^2} < \frac{2}{d(d+1)^2(d+2)(d+3)},\label{eq:g1}\\
-\frac{1}{d(d+1)(d+2)(d+3)}<g_2(\bbU_{k, l})  \le 0,\label{eq:g2}\\
0<\frac{1-\gamma^{kl}}{d(d+1)(d+3)}<g_3(\bbU_{k,l}) < \frac{1}{d(d+1)(d+3)},\\
0<\frac{\gamma^{kl}}{(d+1)(d+2)}-\frac{(3d+1)kl\gamma^{kl}}{3(d+1)(d^2-1)(d+2)}<g_4(\bbU_{k,l}) < \frac{\gamma^{kl}}{(d+1)(d+2)},\\
-\frac{dkl\gamma^{kl}}{3(d^2-1)(d+1)(d+2)}\le g_5(\bbU_{k,l})< 0\label{eq:g5}.
\end{gather}
All these  inequalities  are simple corollaries of \lref{lem:PropertyAlphaBeta}, except for the second inequality in \eref{eq:g2}, which can be proved as follows. When $kl=1$, it is straightforward to verify that  $g_2(\bbU_{k, l})=0$. When $n=1$, which means $k=1$, $\alpha_k=1/6$, and $\beta_k=2/3$, we have
\begin{align}
g_2(\bbU_{k, l})=\frac{1}{360}\left[5\times \left(\frac{2}{3}\right)^l-3- \frac{2}{6^l}\right]\leq 0.
\end{align}
When  $n\ge2$ and $kl\ge2$, by virtue of \lref{lem:PropertyAlphaBeta} we can deduce that
\begin{equation}
g_2(\bbU_{k, l}) = \frac{(d+3)\beta_k^l -d\alpha_k^l -3}{3d(d+1)(d+2)(d+3)} < \frac{\left(3+\frac{d^2kl}{d^2-1}\right)\gamma^{kl}-3}{3d(d+1)(d+2)(d+3)} < 0,
\end{equation}
given  that $\left[3+(d^2kl)/(d^2-1)\right]\gamma^{kl}-3$ decreases monotonically with $d$ and $kl$ when $d\ge4$ and $kl\ge2$. This observation completes the proof of the second inequality in \eref{eq:g2}.

By virtue of  Eqs.~\eqref{eq:DefV*gi}, \eqref{eq:g1}-\eqref{eq:g5}, and \lref{lem:TrRiOrho} we can deduce that $V_*(O,\rho) = \gamma^{kl}V_\triangle (O,\rho) + \caO\bigl(d^{-1}\bigr) \|O\|_2^2$, which confirms \eref{eq:VarUklAppro} and highlights the leading contribution to the variance $V_*(O,\rho)$. Let $\xi_i=\tr\bigl[\caR_i\Tensor{(O\otimes \rho)}{2}\bigr]$ for $i=1,2,3,4,5$; then  we can further drive a  precise upper bound for $V_*(O,\rho)$ as follows,
\begin{align}
V_*(O,\rho) &\le \frac{2\xi_1}{d(d+2)(d+3)} + \frac{(d+1)\xi_3}{d(d+3)} + \frac{(d+1)\gamma^{kl}\xi_4}{d+2} +\frac{dkl\gamma^{kl}|\xi_5|}{3(d-1)(d+2)}\nonumber\\
&\le \gamma^{kl} V_\triangle(O,\rho) + \left[\frac{2(d-1)}{d^2(d+2)(d+3)}+\frac{(4d-3)(d+1)}{d^2(d+3)}+ \frac{4dkl\gamma^{kl}}{3(d-1)(d+2)} \right] \|O\|_2^2\nonumber\\
&\le \gamma^{kl} V_\triangle(O,\rho) + \left[\frac{2(d-1)}{d^2(d+2)(d+3)}+\frac{(4d-3)(d+1)}{d^2(d+3)}+ \frac{27d}{16(d-1)(d+2)} \right]\|O\|_2^2\nonumber\\
&\le \gamma^{kl} V_\triangle(O,\rho) + \frac{6}{d}\|O\|_2^2.
\end{align}
Here the first two inequalities follow from  Eqs.~\eqref{eq:DefV*gi}, \eqref{eq:g1}-\eqref{eq:g5}, and \lref{lem:TrRiOrho} as before, the third inequality holds because $kl\gamma^{kl}\le 81/64$, and the last inequality  is easy to verify. Following a similar reasoning,  we can drive a  lower bound for $V_*(O,\rho)$,
\begin{align}
V_*(O,\rho) &\ge \left[\frac{2-(d+1)\gamma^{kl}}{d(d+3)}\right]\xi_1 -\frac{(d+1)\xi_2}{d(d+2)(d+3)} + \left[\frac{(d+1)\gamma^k}{d+2}-\frac{(3d+1)kl\gamma^{kl}}{3(d-1)(d+2)} \right]\xi_4 -\frac{dkl\gamma^{kl}|\xi_5|}{3(d-1)(d+2)}\nonumber\\
&\ge \gamma^{kl} V_\triangle(O,\rho)- \left[\frac{(d^2-1)\gamma^{kl}}{d^2(d+3)}+\frac{(d+1)(9d-8)}{d^2(d+2)(d+3)}+\frac{2(3d+1)kl\gamma^{kl}}{3(d-1)(d+2)}+\frac{4dkl\gamma^{kl}}{3(d-1)(d+2)}\right]\|O\|_2^2\nonumber\\
&\ge\gamma^{kl} V_\triangle(O,\rho)- \left[\frac{3(d^2-1)}{4d^2(d+3)}+\frac{(d+1)(9d-8)}{d^2(d+2)(d+3)}+\frac{27(3d+1)}{32(d-1)(d+2)}+\frac{27d}{16(d-1)(d+2)}\right]\|O\|_2^2\nonumber\\
&\ge \gamma^{kl} V_\triangle(O,\rho) -\frac{6}{d}\|O\|_2^2. 
\end{align}
The above two equations together imply \eref{eq:VarUklUpper} and complete the proof of \thref{thm:VarUkl}.
\end{proof}

\begin{proof}[Proof of \pref{pro:VarUklFphi}]
	\Eref{eq:VarUklFphi} follows from \eref{eq:DefV*gi} and  \lsref{lem:giUkl}, \ref{lem:TrRiOrhoF}. When $l=0$, \eref{eq:VarUklFphi} also  follows from \thref{thm:VarFCl}. 
\end{proof}

\begin{proof}[Proof of \thref{thm:VarFUkl}]
\Eref{eq:VarFUklBound} follows from \thref{thm:VarUkl} and \pref{pro:VTriangle}, given that $\|O\|_2^2\leq 1$ by assumption. When 
$kl=0$, we have $\bbU_{k,l}=\Cl_n$, so  \eqsref{eq:VarFUklBound}{eq:VarFUklIdeal} follow from \thref{thm:VarFCl}.

It remains to prove \eref{eq:VarFUklIdeal} for the case  $kl\geq1$. By virtue of the formula for  $V_*(O,\phi)$ in \pref{pro:VarUklFphi}, we can derive the following upper bound for $V_*(O,\phi)$,
\begin{align}
V_*(O,\phi) &= \frac{2^{1-M_2(\phi)}\alpha_k^l(d+1)}{(d+2)} + \frac{4(d-1)-8(d+1)\alpha_k^l}{(d+2)(d+3)}< 2^{1-M_2(\phi)}\alpha_k^l + \frac{4}{d} < 2^{1-M_2(\phi)}\gamma^{kl} + \frac{4}{d}, 
\end{align}
where the inequalities  hold because $0<\alpha_k< \gamma^k$ by \lref{lem:PropertyAlphaBeta}. 
In addition, we can derive a lower bound  as follows,
\begin{align}
V_*(O, \phi) &= 2^{1-M_2(\phi)}\alpha_k^l + \frac{4(d-1)-\left[ 2^{1-M_2(\phi)}(d+3)+8(d+1)\right]\alpha_k^l}{(d+2)(d+3)}> 2^{1-M_2(\phi)}\gamma^{kl}-f,  \label{eq:VarUklLower}
\end{align}
where
\begin{align}
f =\frac{2(3d+1)kl\gamma^{kl}}{3(d^2-1)}+ \frac{(10d+14)\gamma^{kl}-4(d-1)}{(d+2)(d+3)},
\end{align}
and the inequality follows from  \lref{lem:PropertyAlphaBeta} and the fact that $M_2(\phi)\ge 0$. If  $kl=1$, then 
\begin{align}
f=\frac{10d^3+45d^2+16d-23}{2(d+2)(d+3)(d^2-1)}< \frac{5}{d}. 
\end{align}
If $kl\geq 2$, then $\gamma^{kl}\leq 9/16$, $kl\gamma^{kl} \le 81/64$, and 
\begin{align}
f\leq \frac{27(3d+1)}{32(d^2-1)}+\frac{13d+95}{8(d+2)(d+3)}< \frac{6}{d}. 
\end{align}
The above  equations together imply that $-6/d<V_*(O,\phi)-2^{1-M_2(\phi)}\gamma^{kl}<4/d$, which confirms  \eref{eq:VarFUklIdeal} and completes the proof of \thref{thm:VarFUkl}.
\end{proof}

\section{Proofs of \thref{thm:VartUk} and \pref{pro:VarFtUk}}\label{app:ProoftUk}
\begin{proof}[Proof of \thref{thm:VartUk}]
When $k=0$, \eref{eq:VartUkBound} follows from \thref{thm:VarCl} (see also \thref{thm:VarUkl}).

Next, let $\xi_i=\tr\bigl[\caR_i\Tensor{(O\otimes \rho)}{2}\bigr]$ for $i=1,2,3,4,5$ as in the proof of \thref{thm:VarUkl}; note that 
$\xi_1=[\tr(O\rho)]^2$ and $\xi_4=(d+2)V_\triangle(O,\rho)/(d+1)$ by \lref{lem:TrRiOrho}. 
When $k=1$, by virtue of Eqs.~\eqref{eq:DefV*gi} and \lref{lem:gitUk}  we can deduce that
\begin{align}
V_*(O,\rho)&=\frac{-(3d-5)\xi_1+(d+1)\xi_3+(d+1)(3d-5)\xi_4+(d+1)\xi_5}{4(d-1)(d+2)},\\
V_*(O,\rho)-\gamma^kV_\triangle(O,\rho) &=\frac{-(3d-5)\xi_1+(d+1)\xi_3-2(d+1)\xi_4+(d+1)\xi_5 }{4(d-1)(d+2)},
\end{align}
given that $\gamma=3/4$. In conjunction with  \lref{lem:TrRiOrho} we can deduce that
\begin{align}
V_*(O,\rho)-\gamma^kV_\triangle(O,\rho) &\leq \frac{2(d+1) }{(d-1)(d+2)}\|O\|_2^2\leq \frac{3}{d}\|O\|_2^2,\\
V_*(O,\rho)-\gamma^kV_\triangle(O,\rho)&\geq -\frac{11d+3 }{4(d-1)(d+2)}\|O\|_2^2\geq -\frac{25}{8d}\|O\|_2^2,
\end{align}
which imply \eref{eq:VartUkBound}.

Next, we assume that
$k\ge2$, which means $n\geq 2$ and $d\geq 4$. By virtue of  \lref{lem:gitUk} we can deduce that
\begin{gather}
- \frac{3d-5}{4(d-1)(d+1)^2(d+2)} \le g_1(\tbbU_k) - \frac{1}{(d+1)^2} = \frac{\left(-d^3-3 d^2-2 d\right) \gamma ^k + (4 d+4) \nu ^k + 2 d^2+8 d-12}{ (d-1) (d+1)^2 (d^2-4) (d+4)} < 0,\label{eq:g1tUk}\\
-\frac{d}{(d^2-4)(d^2-1)(d+4)}< g_2(\tbbU_k) = \frac{d \left(2 \gamma ^k-\nu ^k-1\right)}{(d^2-4) (d^2-1) (d+4)} \le 0,\\
0\le g_3(\tbbU_k) =\frac{d^2+2d-4 - (d^2+2d) \gamma ^k+4 \nu ^k}{(d^2-4)(d^2-1) (d+4)} < \frac{d^2+2 d-4}{(d^2-4)(d^2-1) (d+4)},\\
-\frac{2d^2- d-12}{(d^2-4)(d^2-1) (d+4)} < g_4(\tbbU_k)- \frac{\gamma^k}{(d+1)(d+2)}= \frac{2 \left(d^2+4 d-4\right) \gamma ^k - 2 (d+2) \nu ^k-2 \left(d^2+3 d-6\right)}{(d^2-4)(d^2-1) (d+4)}  < 0, \label{eq:g4tUk}\\
-\frac{1}{4(d^2-1)(d+2)}\le g_5(\tbbU_k) =\frac{-(d^2+2d) \gamma ^k+(d^2+2d-4) \nu ^k+4}{(d^2-4) (d^2-1) (d+4)} \le 0\label{eq:g5tUk}.
\end{gather}
Here the inequalities follow from the facts  $\gamma=3/4$, $\nu=1/2$, $\gamma^k/\nu^k=(3/2)^k\geq 9/4$,  $d\gamma^k\geq d\gamma^n=(3/2)^n\geq  9/4$ and direct calculation in a few special cases. In conjunction with  \eref{eq:DefV*gi} and \lref{lem:TrRiOrho} (cf. the proof of \thref{thm:VarUkl}) we can derive the following results,
\begin{align}
 &V_*(O, \rho) -\gamma^k V_\triangle(O,\rho) \le \frac{(d+1)(d^2+2d-4)}{(d^2-4)(d-1)(d+4)}\xi_3 + \frac{d+1}{4(d-1)(d+2)}\left|\xi_5\right|\nonumber\\
&\quad \le \left[\frac{(4d-3)(d+1)(d^2+2d-4)}{d(d^2-4)(d-1)(d+4)} + \frac{d+1}{(d-1)(d+2)}\right]\|O\|_2^2\le \frac{6}{d}\|O\|_2^2,\\
& \gamma^k V_\triangle(O,\rho) -V_*(O, \rho)  \le \biggl[\frac{(3d-5)\xi_1+(d+1)|\xi_5| }{4(d-1)(d+2)} + \frac{d(d+1)\xi_2+(d+1)(2d^2- d-12)\xi_4}{(d^2-4)(d-1)(d+4)} \biggr]\|O\|_2^2\nonumber\\
&\quad \leq \biggl[\frac{(3d-5)}{4d(d+2)}+\frac{d+1}{(d-1)(d+2)}+\frac{(d+1)(9d-8)}{(d^2-4)(d-1)(d+4)}+\frac{2(d+1) (2d^2- d-12)}{(d^2-4)(d-1) (d+4)}\biggr]\|O\|_2^2\le\frac{6}{d}\|O\|_2^2,
\end{align}
which imply \eref{eq:VartUkBound} and complete the proof of \thref{thm:VartUk}.
\end{proof}

\begin{proof}[Proof of \pref{pro:VarFtUk}]
\Eref{eq:VarFtUkBound} follows  from \thref{thm:VartUk} and \pref{pro:VTriangle} given that $\|O\|_2^2<1$ by assumption.
When $k=0$, \eqsref{eq:VarFtUkExact}{eq:VarFtUkIdealBound} follow from \thref{thm:VarFCl} (see also \thref{thm:VarFUkl}). It remains to prove \eqsref{eq:VarFtUkExact}{eq:VarFtUkIdealBound} for the case $k\geq 1$.

By virtue of \eref{eq:DefV*gi} and \lsref{lem:gitUk}, \ref{lem:TrRiOrhoF} we can deduce \eref{eq:VarFtUkExact} and the following results,
\begin{align}
V_*(O,\phi) - 2^{1-M_2(\phi)}\gamma^k &= \frac{2^{1-M_2(\phi)}\left[-\left(d^2-d-8\right)  \gamma ^k +\left(2 d^2+8 d+6\right)\nu^k-2 d^2-12 d-10\right] }{(d-1) (d+2) (d+4)}\nonumber\\
&\quad-\frac{4 \left[(2 d^2+2d) \gamma ^k+(4 d+4) \nu ^k- d^2-3 d-8\right]}{(d-1) (d+2) (d+4)}\nonumber\\
&< \frac{-8 \left(d^2+d\right) \gamma ^k-16 (d+1) \nu ^k + 4 \left(d^2+3 d+8\right)}{(d-1) (d+2) (d+4)}\nonumber\\
&<\frac{4d^2+4d+8}{(d-1)(d+2)(d+4)}<\frac{4}{d},\label{eq:VarFtUkUpper}\\
V_*(O,\phi) - 2^{1-M_2(\phi)}\gamma^k &\ge \frac{-2 (5 d+8) \gamma ^k + 4 (d+1) \nu ^k-12}{(d+2) (d+4)}\ge -\frac{11}{2 d+4} > -\frac{6}{d}.\label{eq:VarFtUkLower}
\end{align}
Here the first inequalities in \eqsref{eq:VarFtUkUpper}{eq:VarFtUkLower} hold because $V_*(O,\phi) - 2^{1-M_2(\phi)}\gamma^k$ is monotonically increasing in  $M_2(\phi)$. The second inequality in \eref{eq:VarFtUkUpper} holds because  $d\gamma^k>1$ and $(d+1)\nu^k>1$, while  the second inequality in \eref{eq:VarFtUkLower} holds because $k\geq 1$ by assumption and  $-2 (5 d+8) \gamma ^k + 4 (d+1) \nu ^k-12$ is monotonically increasing in $k$. This observation confirms  \eref{eq:VarFtUkIdealBound} and completes the proof of \pref{pro:VarFtUk}.
\end{proof}

\end{document}